\newcommand{\Cone}{C_1}
\newcommand{\rhozero}{\rho}
\newcommand{\ezero}{e_0}
\newcommand{\state}[2]{\langle {#1},{#2}{#1}\rangle}
\newcommand{\tinyyboxcontribution}{\mathcal{L}}
\newcommand{\Ca}{C_1}
\newcommand{\Cb}{C_2}
\newcommand{\Mloc}{\mathcal{M}}
\newcommand{\MMloc}{\mathcal{M}}
\newcommand{\quadpar}{\sigma}
\newcommand{\Rexponentialendpoint}{\eta}
\newcommand{\Rexponentialendpointvalue}{\mfr{1}{30}}
\newcommand{\locpsi}{\widetilde{\psi}}
\newcommand{\aprioripsi}{\Psi}
\newcommand{\LHYConstant}{\frac{128}{15\sqrt{\pi}}}
\newcommand{\Rexponential}{\mu}
\newcommand{\noell}{{}}
\newcommand{\positivepart}[1]{\off{{#1}}_+^2}
\newcommand{\simpleto}{\mapsto}
\newcommand{\epsilonTconst}[2]{{#2}}
\newcommand{\theepsilonTconst}{\delta}
\newcommand{\bigbox}{\abs{B}}
\newcommand{\intt}[2]{\int_{{#1}}^{{#2}}}
\newcommand{\inttt}[2]{\overset{{#2}}{\underset{{#1}}{\int}}}
\newcommand{\innerp}[2]{\langle {#1},{#2}\rangle}
\newcommand{\innerpp}[3]{\langle {#1}\vert {#2} \vert {#3}\rangle}
\newcommand{\innerppa}[3]{\langle {#1}, {#2}  {#3}\rangle}
\newcommand{\summ}[2]{\overset{{#2}}{\underset{{#1}}{\sum}}}
\newcommand{\limit}[1]{\underset{{#1}}{\lim}}
\newcommand{\dx}{\, \mathrm{d}x}
\newcommand{\dy}{\, \mathrm{d}y}
\newcommand{\dk}{\, \mathrm{d}k}
\newcommand{\dd}{\, \mathrm{d}}
\newcommand{\nn}{\nonumber}
\newcommand{\norm}[1]{\vert \vert {#1} \vert \vert }
\newcommand{\abs}[1]{\lvert {#1} \rvert }
\newcommand{\labs}[1]{\of{#1}}
\newcommand{\of}[1]{\left ( {#1}\right )}
\newcommand{\off}[1]{\left [ {#1}\right ]}
\newcommand{\offf}[1]{\left \{ {#1}\right \}}
\newcommand{\ad}{^{\ast}}
\newcommand{\inv}{^{-1}}
\newcommand{\SFACTORED}{(\sqrt{\rhozero a}ds\ell)^{-3}}
\newcommand{\SFINAL}{\mathcal{S}}
\newcommand{\Quaderror}{\mathcal{E}_{\mathrm{Quad}}}
\newcommand{\FF}{\mathcal{F}}
\newcommand{\MM}{\mathcal{M}}
\newcommand{\NN}{\mathcal{N}}
\theoremstyle{plain}
\newtheorem{thm}{THEOREM}[section]
\newtheorem{lm}[thm]{LEMMA}
\newtheorem{cl}[thm]{COROLLARY}
\newtheorem{prop}[thm]{PROPOSITION}
\theoremstyle{definition}
\theoremstyle{remark}
\newcounter{condition}\setcounter{condition}{0}
\newenvironment{condition}{\par\refstepcounter{condition}\smallskip\noindent
	{\bf CONDITION \thecondition:}\begingroup \it}{\endgroup\par} 
\newcounter{assumption}\setcounter{assumption}{0}
\newenvironment{assumption}{\par\refstepcounter{assumption}\smallskip\noindent
	{\bf ASSUMPTION \theassumption:}\begingroup \it}{\endgroup\par} 
\newcommand{\upchi}{\raise1pt\hbox{$\chi$}}
\newcommand{\R}{{\mathord{\mathbb R}}}
\newcommand{\C}{\mathbb{C}}
\newcommand{\Z}{{\mathord{\mathbb Z}}}
\newcommand{\mfr}[2]{{\textstyle\frac{#1}{#2}}}
\newcommand{\hcT}{\mathord{\widehat{\cal T}}}
\newcommand{\htheta}{\mathord{\widetilde\theta}}
\newcommand{\hchi}{\mathord{\widetilde\chi}}
\newcommand{\hQ}{\mathord{\widetilde Q}}
\newcommand{\hB}{\mathord{\widetilde B}}
\newcommand{\bn}{{\mathord{b}}^{\phantom{*}}}
\newcommand{\cA}{{\mathord{\cal A}}}
\newcommand{\cB}{{\mathord{\cal B}}}
\newcommand{\cT}{{\mathord{\cal T}}}
\newcommand{\cN}{{\mathord{\cal N}}}
\newcommand{\cF}{{\mathord{\cal F}}}
\newcommand{\cQ}{{\mathord{\mathcal Q}}}
\newcommand{\cU}{{\mathord{\mathcal U}}}
\newcommand{\la}{{\mathord{\lambda}}}
\newcommand{\one}{{\mathord{\mathbb 1}}}
\begin{document}
\title{THE SECOND-ORDER CORRECTION TO THE GROUND STATE ENERGY OF THE
  DILUTE BOSE GAS} %
\title{The Second Order Correction to the Ground State Energy of the Dilute Bose Gas}
\author{Birger Brietzke$^{1,\ast}$ \qquad Jan Philip Solovej$^{2,}$\thanks{This work was done in Copenhagen and was partially supported by the Villum Centre of Excellence for the Mathematics of Quantum Theory (QMATH) and the ERC Advanced grant 321029.}}
\date{\small{\begin{otherlanguage}{ngerman}{ 1. Institute of Applied Mathematics, University of Heidelberg, brietzke@math.uni-heidelberg.de\\
Im Neuenheimer Feld 205, 69120 Heidelberg, Germany}\end{otherlanguage}}\\
{2. Department of Mathematics, University of Copenhagen, solovej@math.ku.dk\\
Universitetsparken 5, 2100 Copenhagen, Denmark}}
                                \maketitle\vspace{-1 cm}
                                \thispagestyle{empty}
                                \begin{abstract}
                                	We establish the Lee--Huang--Yang formula for the ground state energy of a dilute
                                	Bose gas for a broad class of repulsive pair-interactions in 3D as a lower bound. Our result is valid in an appropriate parameter regime of soft potentials and confirms that the Bogolubov approximation captures the right second order correction to the ground state energy.
                                \end{abstract}
\vfill\eject

\setcounter{tocdepth}{1}
\tableofcontents
\section{Introduction}
\setcounter{page}{2}
Bogolubov's 1947 paper \cite{B} laid the foundation for our present
theories of the ground states of dilute Bose gases.  His approximate
theory was intended to explain the properties of liquid Helium, but it
is expected to be most accurate in the opposite regime of a dilute gas
of particles (e.g., atoms) interacting pairwise with a repulsive
potential $v(x_i-x_j) \geq 0$. 

\iffalse His main contribution was the concept of pairing of particles
with momenta $k$ and $-k$; these pairs are supposed to be the basic
constituents of the ground state (apart from the macroscopic fraction
of particles in the ``condensate", or $k=0$ state) and they are the
basic unit of the elementary excitations of the system.  The pairing
concept was later generalized to fermions, in which case the pairing
was between particles having opposite momenta and, at the same time,
opposite spin. Unfortunately, this appealing concept about the boson
ground state has neither been verified rigorously in a 3-dimensional
example, nor has it been conclusively verified experimentally (but
pairing has been verified experimentally for superconducting
electrons).  \fi

The simplest question that can be asked is the correctness of the
prediction for the ground state energy. This, of course, can
only be exact in a certain limit -- the `weak coupling' limit. In the
case of the charged Bose gas that one of the authors studied \cite{LS1,LS2}, in
which particles interact via Coulomb forces, the appropriate limit is the
{\it high density} limit. In this setting Bogolubov's
prediction, first elucidated in \cite{Foldy}, is correct to leading
order in the inverse density.

In gases with short range forces, which are the object of study here, 
the weak coupling limit corresponds to {\it low density} instead. The reader is 
referred to \cite{LSSY} for background information and more details.

Our system consists of $N$ three-dimensional particles in a large box 
$\Lambda$ 
of volume $|\Lambda|=L^3$. As usual, we are interested in the thermodynamic
limit $N\to \infty, \ |\Lambda|\to \infty$ with the ratio $N/|\Lambda|\to \rho$. %%the density $\rho=N/|\Lambda|$ fixed.
The Hamiltonian is
\begin{equation}\label{ham} 
H_N = -\nu \sum_{j=1}^N \Delta_j  \ + \sum_{1\leq i<j\leq N}v(x_i-x_j),
\end{equation}
with $\nu = \hbar^2/2m$, where $m$ is the mass of the particles. From Sect. \ref{sect.2} on we will set
$\nu=1$, but we leave it in place in this introduction in order to
emphasize that the scattering length of $v$ depends on $\nu$ as well
as on $v$. We assume that $v(x) \geq 0$ and that $v$ is spherically
symmetric, i.e., $v(x)=v(|x|)$.  We could assume that $v$ is
a function of sufficiently fast decay at infinity, but in order not to
overburden the paper, we assume that $v$ is of finite range, 
i.e., there is an $R_0$ such
that $v(x) =0 $ for $|x|>R_0$. The ground state energy is denoted by
$E_N$ and 
\begin{align}
e (\rho)= \lim_{L\to \infty,\ N/|\Lambda|\to \rho} E_N/|\Lambda|
\end{align}
denotes the ground state energy density.%per unit volume.
 
The {\it scattering length} $a$ is defined by the solution of the
equation $-\nu \Delta f(x) + \frac12 v(x) f(x)=0$ that goes to 1 as
$|x|\to\infty$. Such an $f$ must satisfy $f(x)= 1-a/|x|$ for $| x|>R_0$.
If $v$ is simply a hard core repulsion of radius $r$ then $a=r$.

Bogolubov's formula for the first two terms for $e$ in a small $\rho$
asymptotic expansion is
\begin{equation}\label{eq:energy_general}
\boxed{ \ \ e(\rho) = 4\pi \nu \rho^2 a \left(1+ \frac{128}{15\sqrt{\pi}}\sqrt{\rho a^3}
+o(\sqrt{\rho a^3})  \right).   }
\end{equation}
To be more exact, the leading $4\pi \nu \rho^2 a$ term was proposed by
Lenz \cite{Lenz}. Bogolubov derived $\frac12 \rho^2 \int v(x)\dx$ for the
leading term by his method but, realizing that this could not be
correct, noticed that $\int v(x)\dx $ is the first Born approximation
to $8\pi \nu a$. The second term, while inherent in Bogolubov's work
(see \cite{boulder,LSSY}) is credited to Lee, Huang and Yang who
actually found it \cite{LHY} for the hard core gas. Again,
Bogolubov's method has $\frac12 \int v(x)\dx$ in place of $4\pi \nu a$
in this term as well. It is worth noting that the naive
perturbation result, $\frac12 \int v(x)\dx$, does not depend on $\nu$,
which is absurdly incorrect. Other derivations that do not use
Bogolubov's setup or the momentum space formulation exist \cite{L1},
but no rigorous derivation of this second term other than \cite{GS}, which we discuss below, exists so far.\\
\indent The first term $4\pi \nu \rho^2 a $ was attacked rigorously by Dyson
\cite{D2} for the hard core case; he proved a variational upper bound
of this precise form (up to $o(\rho^2)$), as well as a lower bound that,
unfortunately, was 14 times too small. He also formulated an
inequality that gives a lower bound for the expectation value of $v$
in terms of that of a longer range, softer potential. This inequality
has been used in most subsequent rigorous investigations. In
particular, it was essential in the paper \cite{LY} that finally
proved that $4\pi \nu \rho^2 a $ is the correct leading term in
three-dimensions for any $v\geq 0$ and finite range -- including the
hard core case.

Our focus here is on the second term. From Bogolubov's perspective it
is a correlation effect and in his derivation it presupposes
Bose-Einstein condensation (BEC) in the ground state. But the fact
that his method gives the correct second term in {\it one}-dimension
\cite{LL}, despite the fact that there is no BEC in one-dimension,
suggests that BEC may not be completely relevant for this problem.

The second term is not merely a perturbation of the first, for it
involves new physics. The mean particle spacing is $\rho^{-1/3}$, which
is much greater than the size of a particle (which we may take to be
$a$, not $R$).  The uncertainty principle\label{heuristic argument} tells us that the energy
per particle,
which is 
$4\pi\nu\rho a$, defines a length $\lambda = (\rho a)^{-1/2} \gg
\rho^{-1/3}\gg a$ below which a particle cannot be localized without
seriously altering its energy.  Thus, it is totally impossible to
think of individual particles in the gas; their wave-functions overlap
considerably.  We can also think of $\lambda$ as the wavelength of the
disturbance caused by dropping a particle of size $a\ll\rho^{-1/3}$
into the 'sea' of particles.  The energy of this very long (on the
scale of $\rho^{-1/3}$) wave, relative to the main term $\nu\rho^2 a$, can
be understood heuristically from the perturbation it causes in the
scattering solution resulting in a change of the
(two-particle) density, $ \rho \to \rho (1+O(a/\lambda))$. This
gives rise to an energy shift $\rho^2 a \to \rho^2 a (1+O(a/\lambda)) \sim
\rho^2 a (1+O(\sqrt{\rho a^3})) $.

Until recently it seemed impossible to go beyond the methods of
\cite{LY} to derive the second term in \eqref{eq:energy_general} rigorously. Giuliani and Seiringer
\cite{GS} have made an important step forward in this quest, however,
by considering a situation in which a soft potential $v$ gets fatter
and thinner as $\rho \to 0$ in such a way that $\int v$ is kept
constant. In this limit the second Born approximation to the
scattering length is of order $v^2$, and if $v$ is soft enough one can
hope for sufficient accuracy to achieve $4\pi\nu\rho^2 a$ in place of
Bogolubov's $\frac12 \rho^2 \int v(x)\dx$. With the leading term
sufficiently under control one can then hope to see the $\sqrt{\rho a^3}$
term.  In this approach, in which the length scale of the potential is
adjustable, the potential has the form
\begin{equation}\label{eq:vscale}
v_R(r)  = R^{-3} v_1(r/R)     \qquad {\rm with}\  R\to \infty\ {\rm as} \ \rho\to 0.
\end{equation}
Here $v_1$ is a fixed, bounded and sufficiently smooth function with
finite (dimensionless) support which we take to be the unit ball such that $v_R$ has \emph{range} $R$. In \cite{GS},
$v_1(r) = a_0 e^{-r}$.

The interesting question is how $R$ depends on $\rho$ as $\rho \to 0$.
In \cite{GS} they take it to be $R\sim \rho^{-1/3 - 7/46} $, which implies
that each particle 'sees' infinitely many others via the interaction.
That is why \cite{GS} has ``high density'' in the title, even
though the gas is dilute ($a \ll \rho^{-1/3}$), and the leading
term is still $4\pi \nu \rho^2 a$. Nevertheless, this is the first time
that the famous $128\sqrt{\rho a^3}/15\sqrt{\pi}$ term was seen
rigorously as both a lower and an upper bound.  Partly relying on the
ideas in \cite{LS1} they achieve a proof of \eqref{eq:energy_general}. For the upper bound a variational trial state, following \cite{GA}, was used. 
Recent progress on the ground state energy and the excitation spectrum for confined gases in a translation invariant setting was made in the series of papers \cite{BBCS, BBCSII, BBCSIV, BBCSIII}.\label{Schlein references}\\
\indent Upper bounds corresponding to \eqref{eq:energy_general} were also established in \cite{ESY}, \cite{NRSII} respectively \cite{YY}, and the latter was extended to higher dimensions in \cite{Aaen}.\\
\indent Our goal is to improve the situation concerning the lower bound a bit. While we still utilize the
scaling in (\ref{eq:vscale}), our $R$ will also be allowed to be {\it less than} $\rho^{-1/3}$,
which is closer to the physical situation; a particle rarely 'sees'
another one now. This will require improving the methodology of
\cite{LS1}.
In addition we shall allow for a large class of $v_1$. 

In the accompanying paper \cite{BBSFJPS} we give a second order lower bound on the ground state energy for the unscaled setting ($R=a$), which is consistent with \eqref{eq:energy} but does not capture the sharp constant.

We use the convention 
$$
\widehat f(p)=\int_{\R^3}e^{-ipx}f(x) \dx
$$
for the Fourier transform. Our assumptions on $v_1$ and $R$ are the following:
\begin{assumption}\label{V1 assumption} We require that $v_1$ is non-negative, spherically symmetric, continuous with compact support within the unit ball, in particular, $v_1\in L^{1}(\R^3)$, and satisfies $v_1(0)>0$.
\end{assumption}
\begin{assumption}\label{R assumption}
We require, with $\Rexponentialendpoint<\Rexponentialendpointvalue$,
\begin{align}
\lim_{\rho\to0}\frac{R}{a}(\rho a^3)^{\frac{1}{2}}=0\qquad \text{and}\qquad
\lim_{\rho\to0}R\rho^{1/3}(\rho a^3)^{-\Rexponentialendpoint}=\infty.\label{R conditions}
\end{align}
\end{assumption}
Our main result is:
\begin{thm}\label{thm:main} 
	Consider a Bose gas with Hamiltonian (\ref{ham}) where $v$ is replaced
	by $v_R$ given in (\ref{eq:vscale}), with $v_1$ as in Assumption~\ref{V1 assumption} and $R$ as in Assumption~\ref{R assumption}.
	Then, after taking the thermodynamic limit, the energy density, $e(\rho)$, is bounded below by
	\begin{equation}\label{eq:energy}
    e(\rho) \geq 4\pi \nu \rho^2 a \left(1+ \frac{128}{15\sqrt{\pi}}\sqrt{\rho a^3}
		+o(\sqrt{\rho a^3})  \right) .
	\end{equation}
\end{thm}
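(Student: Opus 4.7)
The plan is to establish the lower bound box-by-box via a Bogolubov-style analysis, refining the methodology of Lieb--Solovej \cite{LS1} and Giuliani--Seiringer \cite{GS} so that it survives in the regime where $R$ may be less than $\rho^{-1/3}$ and particles rarely interact pairwise. First I would partition $\Lambda$ into cubes $B$ of side length $\ell$ satisfying $R\ll \ell \ll (\rho a)^{-1/2}$, i.e.\ much larger than the range of the potential but much smaller than the healing length. A sliding localization of the Laplacian (Neumann bracketing, or a magnetic translation as in \cite{LS1}) reduces the thermodynamic problem to a lower bound on the energy in a single box with fluctuating particle number $N_B$ concentrated around $\rho |B|$. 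Within the box, a Dyson-type lemma absorbs a small fraction of the Neumann kinetic energy to replace $v_R$ by a softer pair potential $U$ of somewhat larger range with $\int U = 8\pi\nu a(1+o(1))$, so that the scattering length enters already through the first Born approximation.

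I would then expand the box Hamiltonian in the plane-wave basis on $B$, separating the condensate mode $a_0$ from the excited modes $\{a_p\}_{p\neq 0}$. After a $c$-number substitution on $a_0$ and restriction to the ``Bogolubov part''
\[
\mathcal{H}_{\mathrm{Bog}} = 4\pi\nu\rho a\, N_B + \sum_{p\neq 0}\Bigl[(p^2 + 8\pi\nu\rho a)\, a_p^* a_p + 4\pi\nu\rho a\,(a_p^* a_{-p}^* + a_p a_{-p})\Bigr],
\]
the standard symplectic Bogolubov diagonalization produces the ground-state energy contribution
\[
\tfrac12\sum_{p\neq 0}\left(\epsilon(p)-p^2-8\pi\nu\rho a + \tfrac{(8\pi\nu\rho a)^2}{2p^2}\right),\qquad \epsilon(p) := \sqrt{p^4+16\pi\nu\rho a\, p^2},
\]
which, after approximating the Riemann sum by an integral in the thermodynamic limit, yields precisely $(128/(15\sqrt{\pi}))\cdot 4\pi\nu\rho^2 a\sqrt{\rho a^3}|\Lambda|$.

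The hard part, where most of the technical effort would go, is controlling all terms \emph{not} in $\mathcal{H}_{\mathrm{Bog}}$ by an error that is $o(\sqrt{\rho a^3})$ relative to the leading order. The troublesome pieces are the cubic exchange terms of the form $a_0^* a_{p+k}^* a_p a_{-k}$ and their adjoints, the quartic terms supported on excited modes, and the error in the replacement $a_0^*a_0 \mapsto N_B$. In the regime of \cite{GS}, where $R\gg \rho^{-1/3}$, these are harmless because $\widehat U$ is concentrated on momenta very small compared to the typical excitation scale; when $R$ can be substantially smaller than $\rho^{-1/3}$ that mechanism is lost. My plan is to compensate by combining (i) an a priori bound on the number of excitations outside the condensate in any near-ground state, provided by the companion paper \cite{BBSFJPS}, with (ii) commutator and completing-the-square estimates against the portion of the excitation kinetic energy not spent in the Dyson step, crucially exploiting that $\widehat U(p)$ still essentially vanishes for $|p|\gg R^{-1}$. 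The two conditions in Assumption \ref{R assumption} are calibrated for exactly this purpose: the upper bound $R/a\cdot \sqrt{\rho a^3}\to 0$ keeps $R$ below the healing length so that $\mathcal{H}_{\mathrm{Bog}}$ captures the right physics, while the lower bound $R\rho^{1/3}(\rho a^3)^{-\eta}\to\infty$ with $\eta<1/30$ forces every non-Bogolubov error to remain $o(\sqrt{\rho a^3})$. Summing the per-box lower bounds and taking the thermodynamic limit then yields \eqref{eq:energy}.
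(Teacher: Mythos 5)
There is a genuine gap, and it sits at the very center of your argument: the Dyson-lemma step. In the regime of this paper the potential is already soft ($a\ll R$) and the first Born term $a_1=(8\pi)^{-1}\int v_R$ differs from $a$ by roughly $a_2\sim -C a^2/R$. Assumption~\ref{R assumption} forces $\frac{R}{a}\sqrt{\rho a^3}\to 0$, i.e. $a/R\gg\sqrt{\rho a^3}$, so this difference is \emph{larger} than the LHY term throughout the allowed regime: to reach the claimed precision you must compute the second Born correction exactly, not absorb it into $o(\sqrt{\rho a^3})$. A Dyson-type replacement of $v_R$ by a softer $U$ with $\int U=8\pi\nu a(1+o(1))$ would need the relative error to be $o(\sqrt{\rho a^3})$; the standard Dyson construction carries a relative error of order $a/R_U$ in terms of the range $R_U$ of $U$, forcing $R_U\gtrsim(\rho a)^{-1/2}$. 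But then the kinetic energy sacrificed lives at momenta $\lesssim\sqrt{\rho a}$, exactly the momenta that produce the constant $\frac{128}{15\sqrt{\pi}}$, so the Bogolubov diagonalization no longer yields the sharp constant (this trade-off is precisely why the unscaled problem remained open, and why the companion paper \cite{BBSFJPS} gets the right order but not the sharp constant). The paper avoids this entirely: no Dyson lemma is used. Instead the quadratic part is written in terms of $b_k=a_0^*a(Q(e^{ikx}\chi_B))$ with the bare (localized) potential $\widehat W(k)\approx\widehat{v_R}(k)$, and the Bogolubov lower bound itself generates the term $-\frac14(2\pi)^{-3}\rho^2\frac1R\int\widehat{v}_1(k)^2|k|^{-2}\dd k$, which is exactly $4\pi\rho^2a_2$; adding back $\frac12\rho^2\int v_R=4\pi\rho^2a_1$ from the chemical-potential bookkeeping reconstitutes $4\pi\rho^2 a$ to the required accuracy. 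Your proposal, as written, has no mechanism that produces $a_2$ with this precision.

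Two further points. First, you attribute the a priori bound on the number of excitations to \cite{BBSFJPS}; no such input is available or used. The paper derives its own bounds on $n$ and $n_+$ through a double localization (large boxes of side $\ell\gg(\rho a)^{-1/2}$ \emph{and} small boxes of side $d\ell\ll(\rho a)^{-1/2}$, the latter supplying a Neumann gap), a bootstrap first on the small boxes and then on the large box, and a localization-of-large-matrices argument (Theorem~\ref{thm:Localizing large matrices}) to control the off-diagonal $\cQ_1$ and $\cQ_3$ terms; the cubic terms are ultimately handled by a Sobolev inequality against a reserved fraction $\varepsilon_0$ of the Neumann kinetic energy (Lemma~\ref{lm:Q3 bound}). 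Your single-scale localization with boxes $\ell\ll(\rho a)^{-1/2}$ would also not see the LHY momenta $|k|\sim\sqrt{\rho a}$ inside one box, whereas the paper's LHY integral is performed on the large box with $\ell\gg(\rho a)^{-1/2}$. Second, the reduction from the canonical Hamiltonian to a box problem is done in the paper through the grand-canonical background Hamiltonian \eqref{background Hamiltonian} and Theorem~\ref{thm:main1}, with the elementary trial-state comparison supplying the $\frac12\rho^2\int v_R=4\pi\rho^2a_1$ piece; some such bookkeeping is needed in your argument as well if the diagonalization is to be matched against $4\pi\rho^2 a$ rather than $4\pi\rho^2 a_1$.
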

Our error terms will depend on the dimensionless quantity 
$
a^{-1}\int  v_1=a^{-1}\int  v_R
$.\\
{\it Remark on the Born approximation:} The `Born approximation' or 'Born series', is a formula for 
$a$ as a power series in $v/8\pi\nu$. 
\begin{equation}\label{eq:borseries}
a= (8\pi\nu)^{-1}  \int_{\R^3}  v(x)\dx -\sum_{k=2}^\infty (-8\pi\nu)^{-k} 
\int_{\R^3} ({\mathcal L}_v )^{k-1}(v)(x) \dx \, =:\sum_{k=1}^\infty
\nu^{-k}  a_k ,
\end{equation}
where ${\mathcal L}_v $ is the operator given by ${\mathcal L}_v (g)(x) = v(x) \int_{\R^3}
|x-y|^{-1} g(y)\dy $. If each term in the series is finite for a given $v$, then,
upon replacing $v$ by $v_R$ as in (\ref{eq:vscale}), the $k^{\rm th}$
term in the sum, $\nu^{-k}a_k$, will be proportional to $R^{1-k}$. Thus, if the series converges for
some $R$, it will converge for all larger $R$. Convergence will hold for
large enough $R$ if $v\in L^1\cap L^\infty$, but milder conditions suffice.\\
\indent With the restrictions on $v_1$ in Assumption~\ref{V1 assumption} we have that the Born series for $a$ converges and may therefore write
\begin{align}\label{a_1+a_2}
	a=a_1+a_2+O(R^{-2})=(8\pi)\inv\widehat{v_R}(0)-(4\pi)\inv(2\pi)^{-3}\int\frac{1}{4}\widehat{v_R}(k)^2\abs{k}^{-2}\dd k+O(R^{-2}),
\end{align}
where we have used that $\int\frac{\abs{\widehat{v}_{1}(k)}^2}{\abs{k}^2}\dd k
=2\pi^2\int \int \frac{v_{1}(x)\overline{v_{1}(y)}}{\abs{x-y}}\dd x \dd y
$. 
The higher order corrections to $a$ will give contributions to $e$ that are higher
order than the term we seek, namely $\rho^2 a \sqrt{\rho a^3}$.
For an estimate on the error term in \eqref{eq:energy} and some additional background we refer to \cite{BB}.\label{ref to thesis}
\section{Background Potential and Chemical Potential}\label{sect.2}

In order to utilize the technical advantages of second quantization,
it is convenient for us to work in Fock space ${\cF}$ (where
$N$ takes all values $\geq 0$). On Fock space we
introduce a Hamiltonian $H_{  \rhozero  }$  that depends on a (density) parameter $  \rhozero  $. 
Its action on the $N$-particle sector of Fock space is given by
\begin{align}
H_{  \rhozero  ,N} =  \sum_{j=1}^N \left(-\Delta_j  -  \rhozero   \int\limits v_R(x)\dx\right)  + 
\sum_{1\leq i<j\leq N}v_R(x_i-x_j) +\frac12  \rhozero  ^2\abs{\Lambda}\int\limits v_R(x)\dx.\label{background Hamiltonian}
\end{align}
The parameter $\nu =1$ from now on. The box for the particles is
$\Lambda =[-L/2,\, L/2] \in \R^3$ with Dirichlet boundary conditions. 
The introduction of the parameter $  \rhozero  $ is equivalent to the more common
grand canonical approach of adding a term $-\mu N$ with the    
chemical potential being $\mu=  \rhozero  \int v_R$. The last term in \eqref{background Hamiltonian} is simply a constant
depending on $  \rhozero  $ and we add it for convenience.
It is well known \cite{DW} that this grand canonical formulation is equivalent to the canonical description
(fixed $N$) that we started with, but we will not use this
fact.
In this paper we will focus on the background Hamiltonian $H_{  \rhozero  }$
and its
thermodynamic ground state energy density
$$
\ezero(  \rhozero  )=\lim_{|\Lambda|\to\infty}|\Lambda|^{-1}\inf_{\Psi\in\cF,\|\Psi\|=1}\langle\Psi|H_{  \rhozero  }|\Psi\rangle.
$$
The ground state energy of $H_{  \rhozero  }$ is of course the same as the
ground state energy of $H_{  \rhozero  ,N}$ minimized over $N$. 
The Fock space may seem irrelevant since $H_{  \rhozero  }$ conserves
particle number, but will be convenient as we shall be localizing to regions where the particle number is not a priori known.
The main goal of our analysis on Fock space is to provide the lower bound on $\ezero(  \rhozero  )$ in Theorem~\ref{thm:main1}, which we prove in Section~\ref{sec: proof of Theorem 2.1}. As we will show momentarily, this implies our main result, Theorem~\ref{thm:main}.
\begin{thm}[Ground state of background Hamiltonian]\label{thm:main1}~\\
Let $v_1$ satisfy Assumption~\ref{V1 assumption} and let $R$ satisfy Assumption~\ref{R assumption}.
The thermodynamic ground state energy density %per volume 
of $H_{  \rhozero  }$ then satisfies the
asymptotics, as $\rhozero \to0$,
\begin{align}
\ezero(  \rhozero  )\geq 4\pi\rhozero ^2
\left(a_2+\frac{128}{15\sqrt{\pi}}a(\sqrt{\rho a^3}+o(\sqrt{\rho a^3}))\right).\label{main1:lower bound}
\end{align}
Here $a$ is the scattering length of $v_R$ and $a_2$ is the second
term in the Born series (\ref{eq:borseries}) for $a$.
\end{thm}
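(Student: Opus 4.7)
The plan is to establish the lower bound on $\ezero(\rho)$ by combining a spatial localization of $H_\rho$ into boxes of an intermediate scale $\ell$ with a Bogolubov-type analysis of the local Hamiltonian, carefully tracking how the bare coupling $\tfrac12\widehat{v_R}(0)$ is renormalized to $4\pi a$. Concretely, I would partition $\Lambda$ into cubes of side $\ell$ with $R\ll\ell\ll(\rho a)^{-1/2}$: the lower inequality makes the interaction essentially local to each box (up to boundary contributions handled by sliding Neumann localization of the kinetic energy), while the upper inequality keeps $\ell$ well below the healing length so the Bogolubov quadratic picture stays accurate. This reduces the problem to a lower bound, uniform in local states, on the expected energy per unit volume in a single box. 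The chemical-potential shift $-\rho\int v_R\cdot N$ together with the constant offset in \eqref{background Hamiltonian} ensures that this local bound can be organized around the deviation of the local particle number from $\rho\ell^3$ rather than around $N$ itself.

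\textbf{Condensate--excitation splitting and a priori bounds.} Within each box I would split the one-particle space into the zero-momentum condensate mode and its orthogonal complement, writing $a_p=a_0\delta_{p,0}+\widetilde a_p$. Expanding the quartic interaction produces the standard zero-, one-, two-, three- and four-$\widetilde a$ pieces. Using the positivity of $v_R$ together with the chemical-potential subtraction, I would derive a priori bounds on the number $N_+$ of excited particles of the form $\langle N_+\rangle\ll\rho\ell^3$, then bootstrap them against the renormalized quadratic Bogolubov form itself. The outcome is that, to leading order, only the quadratic terms contribute, while the remaining cubic and genuinely quartic contributions are controlled, in the spirit of \cite{LS1}, by Cauchy--Schwarz against the positive part of the quadratic form.

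\textbf{Scattering length renormalization.} The central difficulty is that the naive Bogolubov quadratic Hamiltonian contains $\widehat{v_R}(p)$ rather than $8\pi a$, so a direct diagonalization would yield $4\pi\rho^2 a_1$ at leading order and miss the LHY term entirely. To repair this, I would introduce a Jastrow-type correlation structure: certain $(a_0^\ast)^2\widetilde a\widetilde a$-type terms are completed into squares involving the zero-energy scattering solution of $v_R$, which reproduces the second Born correction $a_2$ and absorbs the higher-order Born terms into a positive remainder. Assumption~\ref{R assumption} enters precisely here: the condition $(R/a)\sqrt{\rho a^3}\to0$ makes the Born series \eqref{a_1+a_2} converge fast enough that truncation at $a_2$ costs only $O(R^{-2}\rho^2)=o(\rho^{5/2}a^{5/2})$, while $R\rho^{1/3}(\rho a^3)^{-\Rexponentialendpoint}\to\infty$ provides the infrared room needed for the momentum sums over the box to converge to the continuum integral producing the LHY constant.

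\textbf{Diagonalization and main obstacle.} Once the quadratic part carries the renormalized coefficient $8\pi a$, a standard Bogolubov transformation diagonalizes it and its ground state energy evaluates, in the thermodynamic limit, to $4\pi\rho^2 a_2 + 4\pi\rho^2 a\cdot\frac{128}{15\sqrt{\pi}}\sqrt{\rho a^3}$. I expect the main obstacle to be closing the operator inequalities controlling the cubic $(a_0^\ast\widetilde a^\ast\widetilde a^\ast\widetilde a + \text{h.c.})$ terms against the renormalized quadratic form in the regime where $R$ may be at or even below $\rho^{-1/3}$; this is outside the scope of \cite{LS1} and \cite{GS}, since a particle then ``sees'' only $O(1)$ other particles through the potential and the standard Cauchy--Schwarz bounds on the cubic terms become borderline. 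Refining those estimates, together with controlling the sliding-localization error, the $N_+$-tail and the truncation of the Born series, all to remainder order $o(\rho^{5/2}a^{5/2})$, is where the bulk of the technical work will lie.
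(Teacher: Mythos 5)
There are genuine gaps in your plan, and the most serious one is the choice of localization scale. You propose a single localization into boxes of side $\ell$ with $R\ll\ell\ll(\rho a)^{-1/2}$, on the grounds that staying below the healing length keeps the Bogolubov picture accurate. This is backwards for a lower bound that must capture the LHY term: the momenta responsible for the second-order correction are of order $(\rho a)^{1/2}$, so a box smaller than the healing length imposes an infrared cutoff $\sim\ell^{-1}\gg(\rho a)^{1/2}$ that removes precisely the momentum region producing the constant $\frac{128}{15\sqrt{\pi}}$; the modes you give up cost an error per unit volume of order $\rho a\,\ell^{-3}\gg\rho^2a\sqrt{\rho a^3}$, which destroys the bound. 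The paper instead localizes \emph{twice}, cf.\ \eqref{eq:scales2}: large boxes with $s\ell\gg(\rho a)^{-1/2}$ (so the LHY integral over $(s\ell)^{-1}<|k|<(d\ell)^{-1}$ is resolved, Lemma~\ref{lm: h0 part}) and small boxes $d\ell\ll(\rho a)^{-1/2}$ used only to obtain a priori control of particle number and local condensation (Lemmas~\ref{lm: napriori}--\ref{lm: energy on not too small boxes}), with smoothed kinetic cutoffs at scales $s\ell$, $ds\ell$ (Lemmas~\ref{lm:kinloc}--\ref{lm:Tuu'}) so that the two scales can be combined. Nothing in your sketch supplies a substitute for this double-scale structure, and with a single scale the argument cannot close.

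The second gap is the renormalization step. The paper does \emph{not} dress the state with a Jastrow/scattering-solution structure and does not replace $\widehat{v_R}$ by $8\pi a$: in the soft-potential regime of Assumptions~\ref{V1 assumption}--\ref{R assumption} one applies Bogolubov's method (Theorem~\ref{thm:bogolubov}, Lemma~\ref{lm:boglowerbound}) directly to the quadratic form with $\widehat W(k)\approx\widehat{v_R}(k)$; the term $-\tfrac14(2\pi)^{-3}\rho^2\int\widehat W(k)^2/(|k|^2)\,\dd k$ produced by the diagonalization \emph{is} the second Born term $4\pi\rho^2a_2$ appearing in \eqref{main1:lower bound}, while $a_1$ is restored only later from the chemical potential when deducing Theorem~\ref{thm:main}. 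Your proposal to first renormalize the coupling to $8\pi a$ and then claim the diagonalized energy still evaluates to $4\pi\rho^2a_2+\mathrm{LHY}$ double-counts $a_2$ (or at least leaves unexplained how the dressing and the Bogolubov integral share it), and the convergence/role you assign to the second condition in \eqref{R conditions} (``infrared room'') is not what it is for — it is a lower bound on $R$ ensuring error terms of the type $(a/R)^{5/3}$ and $\varepsilon_0 a/R$ are $o(\sqrt{\rho a^3})$. Finally, the control of the cubic and non-quadratic terms is not just ``Cauchy--Schwarz against the quadratic form'': the paper needs the localization of large matrices in $n_+$ (Theorem~\ref{thm:Localizing large matrices} with the bounds of Lemma~\ref{lm:d_1 and d_2}), and the reserved Neumann kinetic energy $-\varepsilon_0\Delta_u^{\cN}$ combined with a Sobolev inequality (Lemma~\ref{lm:Q3 bound}) to absorb $\cQ_3$ with $\varepsilon_3\gg1$; these devices, and the treatment of the thin boundary boxes (Lemma~\ref{lm:energy for tiny boxes on the boundary}), are absent from your outline.
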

We will now prove the main result Theorem~\ref{thm:main} from
Theorem~\ref{thm:main1}.\label{proof of main result}
\begin{proof}[Proof of Theorem~\ref{thm:main}]\label{proof: Main Theorem}
By choosing a trial state for $H_{\rhozero, N}$ corresponding to the ground state for $H_N$, we obtain in the thermodynamic limit that
\begin{align*}\label{eq:ee0new}
e(\rho)&\geq e_0(\rho)+\underset{N/\abs{\Lambda}\to \rho }{\underset{\abs{\Lambda}\to \infty}{\lim}}\rho \frac{N}{\abs{\Lambda}}\int v_R-\frac12\rho^2\int v_R={}e_0(\rho)+\frac12\rho^2\int v_R.
\end{align*}
If we recall that $\int v_R=8\pi a_1$, % and choose $\rhozero $ to be $\rho$ 
we find from
Theorem~\ref{thm:main1} that as $\rho\to0$
\begin{align*}
e(\rho)&\geq 4\pi\rho^2
\left(a_2+\frac{128}{15\sqrt{\pi}}a(\sqrt{\rho a^3}+o(\sqrt{\rho a^3}))\right)+4\pi
a_1\rho^2\\
&=4\pi\rho^2a
\left(1+\frac{128}{15\sqrt{\pi}}\sqrt{\rho a^3}+o(\sqrt{\rho a^3})\right),
\end{align*}
where we have used that $\frac{a_1+a_2}{a}=1+O(\frac{a^2}{R^2})=1+o(\sqrt{\rho a^3})$.
\end{proof}
{\it \indent Notation:} In our setup the ratio of the scattering length $a$ to
	$\int v_R$ is bounded above and below by constants. In all our
	error bounds there is therefore no point in distinguishing between $\int v_R$ and $a$. 
	We choose to write the estimates in terms of $a$.\\
The rest of the paper will be devoted to developing tools enabling us to prove Theorem \ref{thm:main1} in Section~\ref{sec: proof of Theorem 2.1}.\\
\indent In Section~\ref{sec:loc} we present a double localization procedure for the kinetic and the potential energy. The double localization should be understood as a method allowing us to estimate the energy on the larger length scale using results obtained on the smaller length scale.
In Section~\ref{Energy in a Single Box} we decompose our Hamiltonian into a number of terms by differentiating how it acts on excited particles and particles in the (box) condensate. Then we discuss how the formalism of second quantization can be used to estimate the quadratic part of the Hamiltonian. Effectively this means that we use a Bogoliubov diagonalization. In Section~\ref{sec: small box} we use a bootstrap argument to first obtain control over the number of particles in a small box, which then leads to a lower bound on the energy. 
Since the result from the small box is not strong enough to prove our main theorem directly, we transition to the large box in Section~\ref{sec: large box}. 
On the large box we continue to bootstrap to control the particle number and introduce Theorem~\ref{thm:Localizing large matrices}, Lemma~\ref{lm:d_1 and d_2} and Lemma~\ref{lm:Q3 bound} to obtain improved control over relevant error terms. Finally, in Section~\ref{sec: proof of Theorem 2.1} we put the pieces together and give a proof of Theorem~\ref{thm:main1}, which, as we already showed, implies our main result, Theorem~\ref{thm:main}.

\section{Localization}\label{sec:loc}
As usual in the rigorous theory of the ground state energy of the Bose
gas we find it necessary to localize the particles into boxes of a
certain definite size. This achieves two goals. One is the control of
the local fluctuations in particle number and the other is to create a
gap in the spectrum of the kinetic operator, which allows us to assert
that most particles are in the lowest state of the kinetic energy
operator, i.e., they are effectively Bose-condensed on the scale of the
box. Alas, this does not allow us to prove Bose-Einstein condensation in the
thermodynamic limit, but for the purpose of computing the ground state
energy local condensation suffices. 

Because there are several length scales, it will turn out to be necessary to localize twice into boxes of two different sizes. The physical length scales of the problem that we are interested in are the following
\begin{equation}\label{eq:scales}
a\ll R\ll \rho^{-1/3}\ll (\rho a)^{-1/2}
\end{equation}
and these have the following interpretations:
\begin{itemize}
\item $a$ is the scattering length of the two-body potential, $v_R$.
\item $R$ is the range of the potential in case it has compact support and in general it describes the length scale on which the potential vanishes. In our treatment $R$ will be required to be much larger than $a$.
\item $\rho^{-1/3}$ is the mean particle spacing.
\item $(\rho a)^{-1/2}$ is the distance determined by the uncertainty
  principle given that the energy per particle is approximately $\rho
  a$. In other words if one throws a particle into the gas it makes a
  splash of size $(\rho a)^{-1/2}$. In fact, $(\rho a)^{-1/2}$, sometimes called the healing length, is the typical distance between the particles in the virtual pairs in the Bogolubov Theory. Momenta of the order of $(\rho
  a)^{1/2}$ are responsible for the second term in \eqref{eq:energy}.
\end{itemize}
The theorem that we prove includes a bigger range than indicated by
(\ref{eq:scales}). If we write $R=\rho^{-1/3}Y^\Rexponential$ then, as
stated in Theorem~\ref{thm:main}, $\Rexponential$\label{Gamma condition}
can range in $(-\frac{1}{6},\eta)$ with $\Rexponentialendpoint=\Rexponentialendpointvalue$.

The box sizes we are concerned with for localization are $\ell$ and
$d\ell$, where $d\ll1$ in such a way that 
$\ell\gg(\rho a)^{-1/2}\gg d\ell\gg\rho^{-1/3}$. Below we will also 
introduce a small parameter $s>0$ and the length scales 
$s\ell$ and $ds\ell$. This will give 
the complete list of length scales\footnote{Note that we allow but not require $R$ to be smaller than $\rho^{-1/3}$.}
\begin{equation}\label{eq:scales2}
a\ll R, \rho^{-1/3}\ll ds\ell\ll d\ell\ll(\rho a)^{-1/2}\ll s\ell\ll\ell.\\
\end{equation}
Although in Theorem~\ref{thm:main1} we also allow $R$ to be much larger than
$\rho^{-1/3}$, the physically interesting case is, of course, 
$R\ll \rho^{-1/3}$. To be precise, we will in the rest of the paper assume that the
following conditions are satisfied.
\begin{condition}\label{cond:1} There is a sufficiently small constant
  $0<\delta<1$ (to be specified in the course of the paper) such that 
  $a,R,s,d,\ell,\rho>0$ satisfy
\begin{align*}
a/R&< \delta, &\rho a^3&<\delta, &\rho^{-\frac{1}{3}}(ds\ell)\inv&<\delta ,& s<\delta,\\
d\ell(\rho a)^{1/2}&< \delta,&R(ds\ell)\inv&<\delta, &(\rho a)^{-1/2}(s\ell)\inv&<\delta.
\end{align*}
In particular $d<s\delta^2$.
\end{condition}
More (and stronger) conditions will be added later. As explained, $\delta$ will be chosen in the course of the paper. It will 
depend on $v_1$ and on the integer $M$, which we introduce in 
\eqref{zeta new and eq:defchi} %\eqref{eq:defchi} 
below. The integer $M$ will however be chosen at the
end and then $\delta$ really depends only on $v_1$.

For $u\in\R^3$ we introduce the notation $\Gamma_u=u+[-1/2,1/2]^3$ for
the unit cube centered at $u$.
There are three kinds of boxes to be considered. The first is
$B(u)=\ell\,\Gamma_u$, which is a cube of side length $\ell$ and center
$\ell u$. The second kind is the smaller cube
$\widetilde{B}(u')=(d\ell)\,\Gamma_{u'}$ of side length $d\ell$ and
center $d\ell u'$. Finally, we have the rectangles
$B(u,u')=B(u)\cap \widetilde{B}(u')$, which occur when the smaller box is only partially inside the larger box. 
The second kind is really just a special case of the third kind, so we
will not introduce a name for it at this point. Generically, we will let $B$ denote any of these boxes. We denote the side lengths of $B$ by $\la_1\leq\la_2\leq\la_3$.\\
We now introduce a localization function $0\leq \chi\in C^{M}_0(\R^3)$ 
where $M$ is an integer to be determined in this paper and finally chosen in Lemma~\ref{lm: Chosen parameters are allowed}. We let
\begin{align}\label{zeta new and eq:defchi}
\zeta(y)=\left\{\begin{array}{ll}\cos(\pi y),&\hbox{if } \abs{y}\leq 1/2\\
0,&\hbox{if } \abs{y}\geq 1/2
\end{array}\right . \qquad \textrm{and define} \qquad 
\chi(x)=C_M\big (\,\zeta(x_1)\,\zeta(x_2)\,\zeta(x_3)\,\big )^{M+1}.
\end{align}
We choose $C_M$ such that $\int\chi^2=1$ and note that $\max \chi =\chi(0)=C_M$.\\
\indent It is important not to choose $\chi$ to be infinitely differentiable since the proof of Lemma~\ref{lm:wBUB} exploits that $\zeta$ is concave on its support.
In the following constants may depend on $M$, but we shall mostly omit this fact.

For $u\in\R^3$ we write \label{localization function for the big box}$\chi_u(x)=\chi(x\ell^{-1}-u)$ for the
localization function corresponding to the box $B(u)$.

The localization function for the box
$B(u,u')$ is $\chi_u(x)\chi_{u'}(x/d)$.  We  introduce the
notation 
\begin{equation}\label{eq:chiB}
  \chi_B(x)=\left\{\begin{array}{ll}\chi_u(x),&\hbox{if } B=B(u)\\
     \chi_u(x)\chi_{u'}(x/d),&\hbox{if } B=B(u,u')
 \end{array}\right.\qquad 
\end{equation}
and note that
\begin{align}
	\int\chi_u^2(x)\dd u=1,\qquad\int\chi_u^2(x)\dx=\ell^3\qquad\textrm{and}\qquad \int \chi^2_{B(u,u')}(x)\dd u'=\chi^2_{B(u)}(x).\label{Chisquared_integrals}
\end{align}
If $B\!=\!B(u,u')$ is a small box with smallest side length $\lambda_1<d\ell$, we have the bound
\begin{align}
	\max \chi_B^2\leq C\off{\of{\frac{\lambda_1}{\ell}}^{M+1}\of{\frac{\abs{B}}{(d\ell)^3}}^{M+1}}^2\leq C\of{\frac{\lambda_1}{d\ell}}^{4(M+1)},\label{eq:maxchi_B^2 bound}
	\end{align}
which becomes useful in situations where $\lambda_1$ is small.
\subsection{Localization of the Potential}
Corresponding to the interaction potential $v_R$, we define two new potentials
\begin{eqnarray}
W_{\rm b}(x)&:=&\frac{v_R(x)}{(\chi*\chi)(x/\ell)}\label{eq:defw}
\\
\noalign{\hbox{and}}
W_{\rm s}(x)&:=&\frac{W_b(x)}{[(\chi*\chi)(x/(d\ell))]}=
\frac{v_R(x)}{[(\chi*\chi)(x/\ell)]\,[(\chi*\chi)(x/(d\ell))]}\ .\label{eq:defwtilde}
\end{eqnarray}
Here the subscripts b and s refer to the size of the box (big or
small). We will mostly omit this subscript as long as the context is clear. Note that $W_{\rm s,b}$ is well defined, since by
Condition~\ref{cond:1} the range $R$ of $v_R$ is smaller than the
scaled range of $\chi$, which is at least of order $d\ell$. Thus whenever the denominator
vanishes, then the numerator is already zero.
Since $\chi\ast\chi$ is a symmetric $C^{2M}$ function and because $(\chi\ast \chi)(0)=\int\chi^2=1$, we get the estimates
\begin{align}
v_R(x)\leq W_b(x)&\leq (1+C(\mfr{R}{\ell})^2)v_R(x)\label{eq: W_b estimate},\\
v_R(x)\leq W_s(x)&\leq (1+C(\mfr{R}{d\ell})^2)v_R(x)\label{eq: W_s estimate}.
\end{align}
We introduce localized potentials
\begin{equation}\label{eq:wBu}
  w_B(x,y):=\chi_B(x)W_{\rm b,s}(x-y)\chi_B(y)=\chi_B(x)W(x-y)\chi_B(y),
\end{equation}
where $b$ is used if the box $B$ is big, i.e., of the form $B(u)$ and $s$ is
used if $B$ is small, i.e., of the form $B(u,u')$. As indicated on the
right, we will often omit b and s. 
Recall that also the form of the localization functions depends on whether the box is
big or small. 
The potential $w_B$ is localized to the box $B$.\\

Because we do not want to have to consider boxes at the boundary of $\Lambda$ throughout this paper, we introduce $\Lambda':=\Lambda+[-\frac{\ell}{2},\frac{\ell}{2}]^3$\label{def: Lambda'} and $\Lambda'':=\Lambda+[-\ell,\ell]^3$. Note that $B(u)$ intersects $\Lambda$ exactly if $u\ell\in  \Lambda'$. Replacing the last term in \eqref{background Hamiltonian} by $\frac{1}{2}\rhozero^2\abs{\Lambda''}\int v_R(x)\dd x$ does not change the ground state energy of $H_\rhozero$ in the thermodynamic limit. We may therefore use the following localization for the potential energy.
\begin{prop}[Potential localization]\label{prop:vloc}
For all $x_1,\ldots,x_N\in\Lambda$ we have 
\begin{align*}
&-\sum_{j=1}^N \rhozero  \int\limits_{} v_R(x_j-y)\dy  + 
\sum_{1\leq i<j\leq N}v_R(x_i-x_j)+\frac12\rhozero ^2\abs{\Lambda''}\int v_R(x)\dx\\
=&\inttt{\R^3}{}\of{-\sum_{j=1}^N \rhozero  \int\limits_{} \omega_{B(u)}(x_j,y)\dy  + 
	\sum_{1\leq i<j\leq N}\omega_{B(u)}(x_i,x_j)+\frac12\rhozero ^2\iint\limits_{\R^3\times\Lambda''}\omega_{B(u)}(x,y)\dx\dy }\dd u\\
\geq&\inttt{\ell\inv \Lambda'}{}\left( -\sum_{j=1}^N \rhozero  \inttt{{}}{}
\omega_{B(u)}(x_j,y)\dy  + 
\sum_{1\leq i<j\leq N}\omega_{B(u)}(x_i,x_j)+\frac12\rhozero ^2\iint\limits_{}\omega_{B(u)}(x,y)\dx\dy\right)\dd u,
\end{align*}
where $\Lambda':=\Lambda+[-\mfr{\ell}{2},\mfr{\ell}{2}]^3$ and $\Lambda'':=\Lambda+[-\ell,\ell]^3$. Moreover, for all $u\in \R^3$,
\begin{align*}
&-\sum_{j=1}^N \rhozero  \intt{}{} \omega_{B(u)}(x_j,y)\dy  + 
\sum_{1\leq i<j\leq N}\omega_{B(u)}(x_i,x_j)+\frac12\rhozero ^2\iint{}{} \omega_{B(u)}(x,y)\dx\dy\\
=&\inttt{\R^3}{}\Bigg( -\sum_{j=1}^N \rhozero  \intt{}{} \omega_{B(u,u')}(x_j,y)\dy +
\sum_{1\leq i<j\leq N}\omega_{B(u,u')}(x_i,x_j)+\frac12\rhozero ^2\iint{}{} \omega_{B(u,u')}(x,y)\dx\dy\Bigg)\dd u'.
\end{align*}
\end{prop}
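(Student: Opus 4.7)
The plan is to unfold the unlocalized potential energy into a $u$-integral of the localized versions. This reduces to two convolution identities that invert the definitions (\ref{eq:defw}) and (\ref{eq:defwtilde}), combined with Fubini's theorem, and a positivity argument for the inequality part.

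First I would establish the big-box unfolding identity
\[
\int_{\R^3}\omega_{B(u)}(x,y)\,du=v_R(x-y)\qquad\text{for all }x,y\in\R^3.
\]
Unpacking the definitions, $\omega_{B(u)}(x,y)=\chi(x/\ell-u)\,\chi(y/\ell-u)\,W_{\mathrm b}(x-y)$. A change of variable $w=x/\ell-u$, together with the evenness of $\chi$, converts the $u$-integral into $(\chi\ast\chi)((x-y)/\ell)\,W_{\mathrm b}(x-y)$, which by the definition of $W_{\mathrm b}$ equals $v_R(x-y)$. Here Condition~\ref{cond:1} enters: since $R\ll\ell$, the factor $(\chi\ast\chi)(\cdot/\ell)$ stays close to $(\chi\ast\chi)(0)=1$ on the support of $v_R$, so the quotient defining $W_{\mathrm b}$ is meaningful wherever the numerator does not vanish. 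The analogous calculation, now using $R\ll d\ell$, gives the small-box unfolding identity $\int_{\R^3}\omega_{B(u,u')}(x,y)\,du'=\omega_{B(u)}(x,y)$.

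With the two identities in place, the first (equality) line of the proposition follows by Fubini applied termwise. The one-body term integrates $\omega_{B(u)}(x_j,y)$ pointwise in $x_j$; the two-body term pointwise in each pair $(x_i,x_j)$; and for the constant term I additionally use translation invariance $\iint_{\R^3\times\Lambda''}v_R(x-y)\,dx\,dy=\abs{\Lambda''}\int v_R$ to reproduce the constant in (\ref{background Hamiltonian}) with $\abs{\Lambda}$ replaced by $\abs{\Lambda''}$, a harmless modification in the thermodynamic limit as already noted just above the proposition. The second equality of the proposition is obtained in the same way, by Fubini plus the small-box unfolding identity applied to each of the three summands.

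The remaining inequality I would prove by showing that the integrand is nonnegative for $u\in\R^3$ with $u\ell\notin\Lambda'$, so that restricting the outer integration to $\ell\inv\Lambda'$ discards only a nonnegative contribution. When $u\ell\notin\Lambda'=\Lambda+[-\ell/2,\ell/2]^3$ the cube $B(u)$ does not intersect $\Lambda$, so $\chi_{B(u)}(x_j)=0$ for every $x_j\in\Lambda$; both the one-body and two-body sums therefore vanish, leaving only the term $\tfrac12\rhozero^2\iint_{\R^3\times\Lambda''}\omega_{B(u)}(x,y)\,dx\,dy$, which is nonnegative because $v_R\geq0$ and $\chi\geq0$ imply $\omega_{B(u)}\geq0$. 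I do not anticipate any substantive obstacle: the proposition is essentially a careful bookkeeping exercise, and the only technical point requiring attention is the nonvanishing of $(\chi\ast\chi)(\cdot/\ell)$ and $(\chi\ast\chi)(\cdot/(d\ell))$ on the support of $v_R$, which is precisely what Condition~\ref{cond:1} was put in place to guarantee.
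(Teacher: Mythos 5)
Your proposal is correct and takes essentially the same route as the paper, whose entire proof consists of the convolution identity $(\chi\ast\chi)(x-y)=\int\chi(x-u)\chi(y-u)\,\mathrm{d}u$ — precisely your unfolding identities — with the Fubini and positivity steps you spell out left implicit. One small point worth stating explicitly: for $u\ell\in\Lambda'$ the box $B(u)$ is contained in $\Lambda''$, so the background term $\iint_{\R^3\times\Lambda''}\omega_{B(u)}(x,y)\,\mathrm{d}x\,\mathrm{d}y$ coincides with the unrestricted $\iint\omega_{B(u)}(x,y)\,\mathrm{d}x\,\mathrm{d}y$ appearing in the lower bound, which is needed so that restricting the $u$-integration really compares identical integrands before discarding the nonnegative contribution from $u\ell\notin\Lambda'$.
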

\begin{proof} This follows from the identity
	$
	(\chi*\chi)(x-y)=\int\chi(x-u)\chi(y-u)\dd u
	$.
\end{proof}

The background self-energy appears so frequently that we shall denote
it $\rhozero ^2|B|^2\,\cU_B$, i.e., we introduce the
symbol
\begin{equation}\label{eq:UB}
  \boxed{\ \ \cU_B=
  \frac12|B|^{-2}\iint\limits_{}^{\phantom{\Lambda}}w_B(x,y)\dx\dy.\ \ }
\end{equation}
On the large box $B(u)$ we use that $\chi$ and $\chi\ast \chi$ are even such that for all $u\in \R^3$
\begin{align}
\cU_B=&\frac{1}{2}\ell^{-6}\iint_{\R^6} w_{B(u)}(x,y)\dd x\dd y=\frac{1}{2}\ell^{-6}\iint_{\R^6} \chi(-\frac{x}{\ell})\chi(\frac{y}{\ell})W(x-y)\dd x\dd y\nn\\
=&\frac{1}{2}\iint_{\R^6} \chi(-x-y)\chi(y)W(\ell x)\dd x\dd y=\frac{1}{2}\int_{\R^3} v_R(\ell x)\dd x = \frac{4\pi a_1}{\ell^3}.
\end{align}
In a small (possibly rectangular) box $B$, $\cU_B$ may be significantly different. It will be important to know the following facts.
\begin{lm}\label{lm:wBUB} If $B$ is either a large or a small box (side lengths $\lambda_1\leq\lambda_2\leq\lambda_3$),
there is a  constant $C$ that depends only on $M$ used in the definition
of $\chi$ and on the potential $v_1$ in (\ref{eq:vscale}) such that  
\begin{equation}\label{eq:wBUB1}
\max_x\underset{}{\int} w_B(x,y)\dy\leq \frac12 C
|B|^{-1}\iint\limits_{} w_B(x,y)\dy\dx= C|B|\,\cU_B
\end{equation}
\begin{equation}\label{eq:wBUB2}
C^{-1}\frac{a}{|B|R^3}\max\chi_B^2
\prod_{j=1}^3\min\{\lambda_j,R\}\leq \cU_B\leq C\frac{a}{R^3}\max\chi_B^2
\end{equation}
\begin{equation}
\cU_B\leq C\frac{ a}{\abs{B}}\max \chi_B^2.\label{eq:wBUB3}
\end{equation}
(The scattering length $a$, which obviously is bounded above and below
by constants depending only on $v_1$, has been included in this
inequality for dimensional reasons.)
\end{lm}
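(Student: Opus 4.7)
My plan is to obtain all three bounds from two basic estimates on $W$: from \eqref{eq: W_b estimate}--\eqref{eq: W_s estimate} together with $\int v_R=\int v_1\asymp a$ one has $\int W\leq Ca$, and since $v_R(z)=R^{-3}v_1(z/R)$ with $v_1$ bounded one has $\|W\|_\infty\leq CaR^{-3}$; the assumption $v_1(0)>0$ (Assumption~\ref{V1 assumption}) will provide the matching lower bound on $v_R$ near $0$. Throughout I keep the notation $\lambda_1\leq\lambda_2\leq\lambda_3$ for the side lengths of $B$.

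For the two upper bounds I argue as follows. For \eqref{eq:wBUB3}, pull out the supremum of $\chi_B^2$ and use $\int W\leq Ca$:
\begin{align*}
\iint w_B(x,y)\dx\dy\leq(\max\chi_B^2)\iint 1_B(x)1_B(y)W(x-y)\dx\dy\leq Ca|B|\max\chi_B^2.
\end{align*}
For the upper half of \eqref{eq:wBUB2}, use instead $\|W\|_\infty\leq CaR^{-3}$ together with the crude bound $(\int\chi_B)^2\leq|B|^2\max\chi_B^2$ to get $\iint w_B\leq CaR^{-3}|B|^2\max\chi_B^2$. Division by $2|B|^2$ yields the two claims.

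The lower half of \eqref{eq:wBUB2} is the substantive step. Continuity of $v_1$ at $0$ with $v_1(0)>0$ gives $r_0\in(0,1)$ and $c>0$ with $v_R(z)\geq cR^{-3}$ for $|z|\leq r_0R$, and $W\geq v_R$ gives
\begin{align*}
\iint w_B\geq cR^{-3}\iint\chi_B(x)\chi_B(y)\,1_{|x-y|\leq r_0 R}\dx\dy.
\end{align*}
Now I would exploit the concavity of $\zeta^{M+1}$ on its support to show that the super-level set $U=\{x\in B:\chi_B(x)\geq c'\max\chi_B\}$ contains an axis-aligned rectangle whose side lengths are a fixed fraction (depending only on $M$) of $\lambda_1,\lambda_2,\lambda_3$: on a product of concave factors, the quantitative concavity prevents a rapid collapse from the maximum, and in the small-box case $\chi_B=\chi_u\,\chi_{u'}(\cdot/d)$ each factor contributes independently in each coordinate direction. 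A direct volume computation in the rectangle then gives
\begin{align*}
\iint_{U\times U}1_{|x-y|\leq r_0 R}\dx\dy\geq c''|B|\prod_{j=1}^3\min(\lambda_j,R),
\end{align*}
whence $\iint w_B\geq C^{-1}aR^{-3}|B|\max\chi_B^2\prod_j\min(\lambda_j,R)$, which is the required lower bound on $\cU_B$.

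Finally I derive \eqref{eq:wBUB1}. Writing $g(x):=\int w_B(x,y)\dy$, I refine the trivial estimate by exploiting that $W$ is concentrated in $\{|z|\leq R\}$:
\begin{align*}
g(x)\leq \chi_B(x)\,(\max\chi_B)\,\|W\|_\infty\,|B\cap B(x,R)|\leq CaR^{-3}\chi_B(x)(\max\chi_B)\prod_j\min(\lambda_j,R),
\end{align*}
so $\max_x g\leq CaR^{-3}\max\chi_B^2\prod_j\min(\lambda_j,R)$, which is of the same order as $|B|^{-1}\iint w_B$ by the lower bound of \eqref{eq:wBUB2} just proved. The main obstacle is the concavity/geometry step inside the lower bound of \eqref{eq:wBUB2}: for a thin rectangular box $B=B(u,u')$ with $\lambda_1\ll d\ell$ the maximum of $\chi_B$ may sit near the boundary of $B$, and one has to quantify, using precisely that $\zeta$ is concave on its support (this is the reason $\chi$ is not taken to be infinitely differentiable), that $\chi_B$ does not crater away from this maximum, so that $U$ does contain a rectangle of the same aspect ratio as $B$ and the geometric factor $\prod_j\min(\lambda_j,R)$ appears with the correct sign in the lower bound.
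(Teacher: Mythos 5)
Your proposal is correct and essentially the paper's own argument: the two upper bounds come from $\int v_R\leq Ca$ and $\|v_R\|_\infty\leq CaR^{-3}$, and the lower bound in \eqref{eq:wBUB2} (and from it \eqref{eq:wBUB1}, by comparing $\max_x\int w_B(x,y)\dy$ with $aR^{-3}\max\chi_B^2\prod_j\min\{\lambda_j,R\}$) comes from $v_1(0)>0$ plus the concavity-based fact that $\chi_B$ is comparable, with an $M$-dependent constant, to its maximum on an inner rectangle of proportional side lengths, which is exactly the paper's middle-third estimate \eqref{eq:chiiprop}. One small precision: it is $\zeta$ itself, not $\zeta^{M+1}$, that is concave on its support ($\zeta^{M+1}$ is convex near its zeros), but your argument goes through unchanged because the comparability on the middle third is first established for each concave $\zeta$-factor of $\chi_i$ and then passed through the power $M+1$ and the product over factors and coordinates, costing only a constant depending on $M$.
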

\begin{proof}
The difficult case is a (possibly) rectangular box $B=B(u,u')$, 
for the large cubic boxes $B=B(u)$ the argument is the same
just simpler.
Recall that for a small box we have
$w_B(x,y)=\chi_B(x)W_{\rm s}(x-y)\chi_B(y)$
and
$\chi_B(x)=\chi_1(x_1)\chi_2(x_2)\chi_3(x_3)$, where
$$
\chi_i(x_i)=C_M^{2/3}\zeta(|(x_i/\ell)-u_i|)^{M+1}\zeta(|(x_i/(d\ell))-u'_i|)^{M+1}
$$
for $i=1,2,3$. The function $\chi_i$ is supported on an interval $I_i$
of length $\la_i$ corresponding to a side length of the box.  We have
$0<\la_i\leq d\ell $. Let $I_i'$ denote the middle third of this
interval. 
Since $\zeta$ is positive and concave on its support, it is a straightforward exercise to check that\\
\begin{equation}
\inf_{x_i\in I_i'}\chi_i(x_i)\geq c\,\underset{x_i\in I_i}{\max}\, \chi_i(x_i)\label{eq:chiiprop}
\end{equation}
for $i=1,2,3$, where $c>0$ depends only on $M$. 
It is important here that $\chi$ is not infinitely differentiable. 

By \eqref{eq:vscale}, \eqref{eq: W_s estimate} and the fact that $v_1$ is continuous, has 
compact support and $v_1(0)>0$ there exist constants $\Ca,\Cb>0$
(depending only on $v_1$) such that 
$$
\Ca\prod_{i=1}^3{\one}_{[-\Ca R,\Ca R]}(x_i)\leq a^{-1}R^3 W_{\rm s}(x)\leq \Cb \prod_{i=1}^3
{\one}_{[-\Cb R,\Cb R]}(x_i),
$$
where ${\one}_I$ is the characteristic function of the interval
$I$.
To prove the inequality (\ref{eq:wBUB1}),
it is therefore enough to prove the 1-dimensional versions:
$$
\max_{x_i\in \R}
\int \chi_i(x_i){\one}_{[-\Cb R,\Cb R]}(x_i-y_i)\chi_i(y_i)\dy_i\leq C
\la_i^{-1}\iint \chi_i(x_i) {\one}_{[-\Ca R,\Ca R]}(x_i-y_i)\chi_i(y_i)\dy_i\dx_i,
$$
for $i=1,2,3$, where $C$ is allowed to depend only on $v_1$ and $M$.
In view of (\ref{eq:chiiprop}) this follows from
$$
\max_{x_i{\in \R}}\int_{I_i} {\one}_{[-\Cb R,\Cb R]}(x_i-y_i)\dy_i\leq C
\la_i^{-1}\iint_{I_i'\times I_i'} {\one}_{[-\Ca R,\Ca R]}(x_i-y_i)\dy_i\dx_i.
$$
This is obvious since both sides can be estimated above and below by 
constants times $\min\{\la_i, R\}$. 

The lower bound in (\ref{eq:wBUB2}) is proved in a similar fashion.
The upper bound in \eqref{eq:wBUB2} follows from $v_R(x)\leq C\frac{a}{R^3}$.\\
\indent For the bound in \eqref{eq:wBUB3} we note that $\cU_B\leq C\abs{B}\inv \int_{\R^3}v_R(x)\max \chi_B^2\dx\leq C\frac{a}{\abs{B}}\max \chi_B^2$, since $\omega_B(x,y)\leq Cv_R(x-y)\max\chi_B^2$ and that by Condition \ref{cond:1} we have $\int v_R(x)\dx\leq C a$.
\end{proof}

\subsection{Localization of the Kinetic Energy}\label{sec:kinloc}
Let $\theta$ be the characteristic function of the cubic box $[-1/2,1/2]^3$.
For $u\in\R^3$ we denote the corresponding characteristic function of
the box $B(u)$ by $\theta_u(x)=\theta(x\ell^{-1}-u)$. We shall also
use the localization function
$\chi_u(x)=\chi(x\ell^{-1}-u)$ introduced on page~\ref{localization function for the big box}.

We define the operator $Q_u$ to be the orthogonal projection on $L^2(\R^3)$
defined by 
\begin{equation}\label{eq:Qu}
  Q_uf=\theta_u f-\ell^{-3}\langle\theta_u|f\rangle\theta_u.
\end{equation}
In other words $Q_uf$ is a function in 
$L^2(\R^3)$ that is zero outside the box 
$B(u)$ and is orthogonal to
the constant functions in the box.

\begin{lm}[Abstract kinetic energy localization]\label{lm:abskinloc}
Let $\mathcal{K}:\R^3\to[0,\infty)$ be a symmetric, continuous function, which is bounded by a polynomial of degree at most $2M$, where $M$ is the integer introduced in \eqref{zeta new and eq:defchi}.
 We use it to define an operator on $L^2(\R^3)$ by
\begin{equation}\label{eq:abskinloc}
T=\int_{\R^3} Q_u\chi_u \mathcal{K}(-i\ell\nabla)\chi_uQ_u \dd u,
\end{equation}
where $\chi_u$ is considered here as a multiplication operator in
configuration space.
This $T$ is translation invariant, i.e., a multiplication operator in
Fourier space $T=F(-i\ell\nabla)$, with 
\begin{equation}\label{eq:absF}
F(p)=(2\pi)^{-3}\mathcal{K}*|\widehat\chi|^2(p)
-2(2\pi)^{-3}\widehat\theta(p)\widehat\chi*(\mathcal{K}\widehat\chi)(p)+(2\pi)^{-3}\left(\int
  \mathcal{K}|\widehat\chi|^2\right)\widehat\theta(p)^2. 
\end{equation}
In particular, we have $F(0)=0$, $F\geq 0$ and $\nabla F(0)=0$. 
\end{lm}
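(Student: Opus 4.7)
My plan is to show first that $T$ is translation-invariant and thus a Fourier multiplier $F(-i\ell\nabla)$, then to extract the symbol $F$ by letting $T$ act on a plane wave, and finally to verify the three pointwise assertions. For translation invariance, denote translation by $a\in\R^3$ by $\tau_a$. The identities $\tau_a\chi_u\tau_a\inv=\chi_{u+a/\ell}$ and $\tau_a\theta_u\tau_a\inv=\theta_{u+a/\ell}$ (and hence $\tau_aQ_u\tau_a\inv=Q_{u+a/\ell}$) follow directly from $\chi_u(x)=\chi(x/\ell-u)$ and $\theta_u(x)=\theta(x/\ell-u)$. Since $\mathcal{K}(-i\ell\nabla)$ commutes with every translation, substituting $u\mapsto u+a/\ell$ in \eqref{eq:abskinloc} gives $\tau_aT\tau_a\inv=T$, so $T=F(-i\ell\nabla)$ for some function $F$ on $\R^3$.

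To read off $F$, I apply $T$ to $e^{ipx}$ and use $Te^{ipx}=F(\ell p)e^{ipx}$. A direct computation exploiting $\int\theta_u(y)e^{ipy}\dy=\ell^3 e^{ip\ell u}\widehat\theta(\ell p)$ together with the crucial containment $\mathrm{supp}\,\chi\subset[-\frac{1}{2},\frac{1}{2}]^3=\mathrm{supp}\,\theta$ (which gives $\chi_u\theta_u=\chi_u$) yields $\chi_uQ_ue^{ipx}(x)=\chi_u(x)[e^{ipx}-e^{ip\ell u}\widehat\theta(\ell p)]$. Taking the Fourier transform in $x$ and substituting $z=x/\ell-u$ isolates the $u$-dependence, giving $\widehat{\chi_uQ_ue^{ipx}}(q)=\ell^3e^{i(p-q)\ell u}[\widehat\chi(\ell(q-p))-\widehat\theta(\ell p)\widehat\chi(\ell q)]$. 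Applying $\mathcal{K}(-i\ell\nabla)$ multiplies by $\mathcal{K}(\ell q)$ in Fourier space; transforming back, multiplying by $\chi_u$, applying $Q_u$ a second time, and integrating over $u$ (where the position-space integrals collapse under the same substitution $z=x/\ell-u$) produces four contributions. Assembling them and using that $\chi$ and $\theta$ are real and even, so $\widehat\chi$ and $\widehat\theta$ are real and even, recovers exactly \eqref{eq:absF}.

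The remaining assertions are immediate. Setting $p=0$ and using $\widehat\theta(0)=\int\theta=1$ together with $(\mathcal{K}*|\widehat\chi|^2)(0)=\widehat\chi*(\mathcal{K}\widehat\chi)(0)=\int\mathcal{K}|\widehat\chi|^2$ (by evenness) yields $F(0)=(1-2+1)(2\pi)^{-3}\int\mathcal{K}|\widehat\chi|^2=0$. The bound $F\geq 0$ follows because $\mathcal{K}\geq 0$ forces $\mathcal{K}(-i\ell\nabla)\geq 0$, so $T=\int A_u^*\mathcal{K}(-i\ell\nabla)A_u\,\dd u$ with $A_u=\chi_uQ_u$ is non-negative as a quadratic form, hence its Fourier symbol is non-negative. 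Finally, $F$ is smooth (thanks to the polynomial bound on $\mathcal{K}$ and the Schwartz decay of $\widehat\chi$) and attains its minimum value $0$ at $p=0$, so $\nabla F(0)=0$.

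I expect the main technical content to lie in the symbol computation, specifically the bookkeeping of the four terms arising from expanding $Q_u\chi_u\mathcal{K}(-i\ell\nabla)\chi_uQ_u$ and carefully tracking how the $u$-integral collapses each of them via the substitution $z=x/\ell-u$. Everything else is either a translation-invariance change of variables or a pointwise verification on the explicit formula.
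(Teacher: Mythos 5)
Your proposal is correct and follows essentially the same route as the paper: the paper computes the integral kernel of $T$ directly and observes that, after the $u$-integration, it is a convolution kernel, while you obtain the same symbol via a translation-invariance argument plus plane-wave testing — the four-term bookkeeping and the endgame ($F(0)=0$ from $\widehat\theta(0)=1$ and evenness, $F\geq 0$ from positivity of the quadratic form, $\nabla F(0)=0$ from differentiability at the minimum) coincide with the paper's. One small correction: $\widehat\chi$ does not have Schwartz decay, since the paper deliberately takes $\chi\in C_0^{M}$ and not $C^\infty$; what makes $F$ differentiable is the polynomial decay of $\widehat\chi$ (of order roughly $M$) combined with the assumption that $\mathcal{K}$ is bounded by a polynomial of degree at most $2M$, so your conclusion stands but the stated reason should be adjusted.
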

\begin{proof} By a simple scaling it is enough to consider $\ell=1$.
This is a straightforward calculation. Note that $Q_u$ has the integral kernel $\theta_u(y)\off{\delta(y-x)-\one}\theta_u(x)$.
If we denote by $\check{\mathcal{K}}$ the inverse Fourier transform of $\mathcal{K}$ in the sense of a tempered distribution, then the integral kernel of the operator
$
Q_u\chi_u \mathcal{K}(-i\nabla)\chi_uQ_u
$
is given by 
\begin{eqnarray*}
&&\chi_u(x)\check{\mathcal{K}}(x-y)\chi_u(y)-\chi_u(x)[\check{\mathcal{K}}*\chi_u](x)\theta_u(y)\\
&&-\theta_u(x)[\check{\mathcal{K}}*\chi_u](y)\chi_u(y)+
\theta_u(x)\langle \chi_{u}|\mathcal{K}(-i\nabla)\chi_{u}\rangle\theta_u(y).
\end{eqnarray*}

Thus the integral kernel of $\int Q_u\chi_u \mathcal{K}(-i\nabla)\chi_uQ_u \dd u $ is
given by 
\begin{align*}
([\chi*\chi]\check{\mathcal{K}})(x-y)-2\left(\chi[\check{\mathcal{K}}*\chi\right])*\theta(x-y)
+(2\pi)^{-3}\left(\int \mathcal{K}(p)\widehat\chi(p)^2\dd p\right)\theta*\theta(x-y),
\end{align*}
where we used that $\int \mathcal{K}(p)\widehat\chi(p)^2\dd p$ is finite by the choice of $\mathcal{K}$.
We arrive at the expression for $F$ by calculating the inverse Fourier
transform. The fact that $F(0)=0$ follows since $\widehat\theta(0)=\int\theta=1$ and
\begin{align*}
(2\pi)^3F(0)=2\left(\int K\widehat \chi^2\right)(1-\widehat\theta(0))^2=0.
\end{align*}
That $F\geq 0$ is a direct consequence of \eqref{eq:abskinloc} since $\mathcal{K}$ is positive. Because $F$ is differentiable it follows that $\nabla F(0)=0$.
\end{proof}With $\ell=1$ this lemma is similar to the generalized IMS localization formula
$$
\int_{\R^3} \chi_u \mathcal{K}(-i\nabla)\chi_u\dd u=(2\pi)^{-3}\mathcal{K}*|\widehat\chi|^2,
$$
where $\mathcal{K}(p)=p^2$ gives the standard IMS formula since 
$(2\pi)^{-3}\mathcal{K}*|\widehat\chi|^2=p^2+\int|\nabla\chi|^2$.

\begin{cl}\label{eq:Quav} With the same notation as above we have
that
\begin{equation}\label{eq:averagedneumangap}
\int_{\R^3}Q_u \dd u=1-\widehat\theta(-i\ell\nabla)^2,
\end{equation}
i.e., the operator $\int_{\R^3}Q_u \dd u$ is the multiplication operator
in Fourier space given by $1-\widehat\theta(\ell p)^2$.
\end{cl}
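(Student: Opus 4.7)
The plan is to compute the integral kernel of $\int_{\R^3} Q_u\,\dd u$ directly from the definition \eqref{eq:Qu} and then identify the resulting convolution operator via its Fourier symbol. The claim is essentially a bookkeeping specialization of the calculation already performed in the proof of Lemma~\ref{lm:abskinloc}, so the main work is just to keep track of the two pieces of $Q_u$ under averaging in $u$.

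First I would write $Q_u$ as an integral operator. From \eqref{eq:Qu} its kernel is
\[
Q_u(x,y) = \theta_u(x)\delta(x-y) - \ell^{-3}\theta_u(x)\theta_u(y).
\]
Integrating in $u$ and using the translation covariance $\theta_u(x) = \theta(x\ell^{-1} - u)$, the first term contributes $\delta(x-y)\int \theta(x\ell^{-1}-u)\,\dd u = \delta(x-y)$, because $\int\theta = 1$. For the second term, the substitution $v = x\ell^{-1} - u$ together with the evenness of $\theta$ produces
\[
\ell^{-3}\int \theta_u(x)\theta_u(y)\,\dd u = \ell^{-3}(\theta * \theta)\of{(x-y)\ell^{-1}}.
\]

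Thus $\int Q_u\,\dd u$ is translation invariant with convolution kernel $K(z) = \delta(z) - \ell^{-3}(\theta*\theta)(z\ell^{-1})$, and in Fourier space it acts as multiplication by
\[
\widehat K(p) = 1 - \int (\theta*\theta)(w)\, e^{-ip\ell w}\,\dd w = 1 - \widehat{\theta * \theta}(\ell p) = 1 - \widehat\theta(\ell p)^2,
\]
where I used the change of variable $w = z\ell^{-1}$ inside the Fourier integral and the convolution theorem. This is exactly the symbol of the operator $1 - \widehat\theta(-i\ell\nabla)^2$, which proves the claim.

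There is no serious obstacle: the corollary is morally the specialization of Lemma~\ref{lm:abskinloc} to $\chi \rightsquigarrow \theta$ and $\mathcal{K}\equiv 1$, reading off the constant-in-momentum symbol. The only mild point is that $\theta$ is not smooth, but because $\theta \in L^1 \cap L^\infty$ with $\widehat\theta$ bounded and $\theta * \theta$ continuous and compactly supported, all Fourier manipulations are justified pointwise on $\R^3$ (or, equivalently, as identities of tempered distributions), and Fubini applies to exchanging the $u$-integral with the action on $f \in L^2(\R^3)$.
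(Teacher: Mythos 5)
Your proposal is correct and is essentially the paper's argument: the paper's proof simply says ``take $\mathcal{K}=1$ and $\chi=\theta$ in Lemma~\ref{lm:abskinloc}'', and your direct computation of the kernel $\delta(x-y)-\ell^{-3}(\theta*\theta)\big((x-y)\ell^{-1}\big)$ followed by taking the Fourier symbol reproduces exactly the calculation inside that lemma's proof in this special case. If anything, your self-contained version has the small merit of not formally invoking the lemma for $\chi=\theta$, which does not satisfy its $C_0^M$ hypothesis, a point you correctly note is harmless here.
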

\begin{proof} Simply take $\mathcal{K}=1$ and $\chi=\theta$ in the above lemma.
\end{proof}

We will use Lemma~\ref{lm:abskinloc} for the function
$\mathcal{K}(p)=((|p|-s^{-1})_+)^2$, where $s>0$ is the parameter introduced in
Condition~\ref{cond:1}. Here $u_+=\max\{u,0\}$ denotes the positive part of $u$ and we will henceforth write $u_+^2$ instead of $(u_+)^2$. Note that
$|-i\nabla|=\sqrt{-\Delta}$.

\begin{lm}\label{lm:kinloc}
There exist constants
$C>0$ and $s^\ast>0$ (depending on the integer $M$ in the definition \eqref{zeta new and eq:defchi} of $\chi$)
such that for $0<s\leq s^{\ast}$ and any $\ell>0$ we have the inequality for all $\varphi\in H^1(\R^3)$
\begin{equation}
\langle \varphi,F_s(\sqrt{-\Delta})\varphi\rangle\geq \int\langle \varphi, Q_u\chi_u\positivepart{\sqrt{-\Delta}-(s\ell)^{-1}}\chi_u Q_u \varphi\rangle \dd u
,
\end{equation}
where
\begin{equation}\label{eq:Fs}
F_s(|p|)=\left\{\begin{array}{lr}
(|p|-\frac12(s\ell )^{-1})^2,&\hbox{if }|p|\geq \frac56 (s\ell )^{-1}\\
Cs^{M-2}p^2,&\hbox{if }|p|<\frac56 (s \ell )^{-1}
\end{array}\right.,
\end{equation}
(assuming $M\geq 2$)
\end{lm}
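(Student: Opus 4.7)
The strategy is to apply Lemma~\ref{lm:abskinloc} with the choice $\mathcal{K}(p)=(|p|-s^{-1})_+^{2}$, which is continuous and polynomially bounded of degree $2$. As a Fourier multiplier, $\mathcal{K}(-i\ell\nabla)=\ell^{2}(\sqrt{-\Delta}-(s\ell)^{-1})_+^{2}$, so the lemma yields
\begin{equation*}
\int Q_u\chi_u(\sqrt{-\Delta}-(s\ell)^{-1})_+^{2}\chi_uQ_u\,\dd u\;=\;\ell^{-2}F(-i\ell\nabla),
\end{equation*}
with $F$ as in Lemma~\ref{lm:abskinloc}. Writing $q=\ell p$, the desired operator inequality reduces to the pointwise multiplier bounds $F(q)\leq(|q|-\tfrac{1}{2}s^{-1})^{2}$ for $|q|\geq\tfrac{5}{6}s^{-1}$ and $F(q)\leq Cs^{M-2}|q|^{2}$ for $|q|<\tfrac{5}{6}s^{-1}$. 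A useful preliminary observation is that, since $\chi$ and $\theta$ are real and even, the three summands in the formula for $F$ combine into the manifestly nonnegative form
\begin{equation*}
F(q)=(2\pi)^{-3}\int\mathcal{K}(p)\bigl[\widehat\chi(q-p)-\widehat\theta(q)\widehat\chi(p)\bigr]^{2}\dd p,
\end{equation*}
which also makes the cancellations $F(0)=0$ and $\nabla F(0)=0$ transparent.

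\textbf{Large-$q$ regime.} For $|q|\geq\tfrac{5}{6}s^{-1}$, the dominant contribution is $(2\pi)^{-3}\int\mathcal{K}(q-r)|\widehat\chi(r)|^{2}\dd r$. For $|q|\geq 2s^{-1}$, where $\mathcal{K}$ is smooth on the relevant range, Taylor-expanding $\mathcal{K}$ in $r$, using evenness of $|\widehat\chi|^{2}$ to kill the linear term and the normalization $(2\pi)^{-3}\int|\widehat\chi|^{2}=1$, gives the leading value $(|q|-s^{-1})^{2}$ plus a quadratic correction controlled by $\|\nabla\chi\|_{L^{2}}^{2}$. In the transitional strip $\tfrac{5}{6}s^{-1}\leq|q|<2s^{-1}$ the support $\{|p|\geq s^{-1}\}$ of $\mathcal{K}$ together with the decay $|\widehat\chi(r)|\lesssim(1+|r|)^{-M}$ keeps this summand bounded. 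The two $\widehat\theta$-dependent summands are $O(1)$ in $|q|$ since $|\widehat\theta(q)|\lesssim\min(1,|q|^{-1})$. Altogether $F(q)\leq(|q|-s^{-1})^{2}+C(\chi,M)$, and the algebraic slack
\begin{equation*}
(|q|-\tfrac{1}{2}s^{-1})^{2}-(|q|-s^{-1})^{2}=s^{-1}|q|-\tfrac{3}{4}s^{-2}\geq\tfrac{1}{12}s^{-2}\quad(|q|\geq\tfrac{5}{6}s^{-1})
\end{equation*}
absorbs $C(\chi,M)$ whenever $s\leq s^{\ast}$ is chosen small enough in terms of $M$ and $\chi$.

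\textbf{Small-$q$ regime.} For $|q|<\tfrac{5}{6}s^{-1}$, I would Taylor-expand $g(q,p):=\widehat\chi(q-p)-\widehat\theta(q)\widehat\chi(p)$ in $q$ about $0$. Evenness gives $g(0,p)=0$ and $\nabla_qg(0,p)=-\nabla\widehat\chi(p)$ (using $\nabla\widehat\theta(0)=0$), so the leading contribution to $F(q)$ is
\begin{equation*}
(2\pi)^{-3}|q|^{2}\int\mathcal{K}(p)|\nabla\widehat\chi(p)|^{2}\dd p\;\leq\;C|q|^{2}s^{2M-2}\|x\chi\|_{H^{M}}^{2},
\end{equation*}
using the inequality $|p|^{2}\leq s^{2M-2}|p|^{2M}$ on the support $\{|p|\geq s^{-1}\}$ of $\mathcal{K}$ (valid for $M\geq 1$) combined with the $H^{M}$-regularity of $x\chi$ inherited from $\chi\in C_{0}^{M}$. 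Since $s^{2M-2}\leq s^{M-2}$ for $s\leq 1$ and $M\geq 2$, this is $\leq Cs^{M-2}|q|^{2}$. Higher-order Taylor terms are handled analogously: parity-in-$p$ arguments kill cross products after integration against the even weight $\mathcal{K}$, and each surviving term of total order $k_{1}+k_{2}$ in $q$ contributes $|q|^{k_{1}+k_{2}}s^{2M-2}\leq|q|^{2}s^{2M-k_{1}-k_{2}}$, which remains $\leq s^{M-2}|q|^{2}$ while $k_{1}+k_{2}\leq M+2$. The residual remainder is controlled by invoking again the support of $\mathcal{K}$ together with the pointwise bound $|\xi-p|\geq\tfrac{1}{6}s^{-1}$ (for $|p|\geq s^{-1}$, $|\xi|\leq|q|\leq\tfrac{5}{6}s^{-1}$), which forces the integrand into the tail region where the regularity of $\chi$ provides the missing decay.

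\textbf{Obstacle.} The tightest step is the large-$q$ analysis near the threshold $|q|\approx\tfrac{5}{6}s^{-1}$, where the slack is only of order $s^{-2}$ and must strictly dominate the $\chi$-dependent $O(1)$ correction constants; matching these determines the admissible threshold $s^{\ast}$ as a function of $M$ and the specific localization function $\chi$ fixed in \eqref{zeta new and eq:defchi}. The small-$q$ analysis is more routine once one recognizes that the support of $\mathcal{K}$ in $\{|p|\geq s^{-1}\}$ converts the $C_{0}^{M}$-regularity of $\chi$ into the desired $s^{M-2}$ prefactor via the elementary interpolation above.
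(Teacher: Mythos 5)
Your reduction (apply Lemma~\ref{lm:abskinloc} with $\mathcal{K}(p)=(|p|-s^{-1})_+^2$, pass to pointwise bounds on the multiplier $F$, and for $|q|\geq\frac56 s^{-1}$ absorb the $O(1)$ corrections into the slack $s^{-1}|q|-\frac34 s^{-2}\geq\frac1{12}s^{-2}$) is exactly the paper's, and the identity $F(q)=(2\pi)^{-3}\int\mathcal{K}(p)\bigl[\widehat\chi(q-p)-\widehat\theta(q)\widehat\chi(p)\bigr]^2\dd p$ is a nice way to see $F\geq0$. The genuine gap is in your small-$q$ regime, which is precisely where the paper has to work and which you dismiss as "more routine". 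There $|q|$ ranges up to $\frac56 s^{-1}$, so a Taylor expansion of $g(q,p)$ in $q$ about $0$ is an expansion in a \emph{large} variable: after squaring, a cross term of total order $k_1+k_2$ carries $|q|^{k_1+k_2}$, and your own bookkeeping shows it is only controlled when $k_1+k_2\leq M+2$, while an expansion to order $K$ produces cross terms up to order $2K$ and a remainder whose contribution (estimated, as you propose, via the pointwise decay $|\nabla^{k}\widehat\chi(z)|\lesssim(1+|z|)^{-M}$ on $|z|\geq\frac16 s^{-1}$) costs extra powers of $s^{-1}$ from the non-integrable weight $\mathcal{K}$. If one tries to close this scheme one is forced to take $M$ strictly larger than $2$ (roughly $M\geq5$), and in any case you never fix the expansion order, never check that all generated terms satisfy your constraint, and never carry out the remainder estimate; as written the step does not prove the lemma at the stated generality $M\geq2$. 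The paper avoids all of this by working with $F$ itself: since $F(0)=0$, $\nabla F(0)=0$, it suffices to show $|\partial_i\partial_jF(q)|\leq Cs^{M-2}$ on $|q|<\frac56 s^{-1}$, which follows because $\partial_i\partial_j\mathcal{K}$ is bounded and supported in $\{|p|\geq s^{-1}\}$ (so $(\partial_i\partial_j\mathcal{K})\ast\widehat\chi^2(q)\leq C\int_{|r|>(6s)^{-1}}\widehat\chi^2\leq Cs^{2M}$), together with Cauchy--Schwarz bounds of the type \eqref{eq:F2nd}--\eqref{eq:F3rd} for the $\widehat\theta$-terms; Taylor's theorem for $F$ then gives $F(q)\leq Cs^{M-2}q^2$.

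Two smaller points. In the transitional strip your intermediate claim that the convolution term is "bounded" is false for $s^{-1}<|q|<2s^{-1}$: taking $r$ near $0$, where $\widehat\chi$ is not small, gives a contribution of size $(|q|-s^{-1})^2$, which is of order $s^{-2}$ near $|q|=2s^{-1}$; the decay of $\widehat\chi$ is irrelevant there. What is true, and what your "altogether" inequality needs, is $(2\pi)^{-3}\mathcal{K}\ast\widehat\chi^2(q)\leq(|q|-s^{-1})^2+C$ for \emph{all} $q$, which follows either from the paper's algebraic expansion of $(|q-r|-s^{-1})^2$ or from your Taylor argument once you note that the Hessian of $\mathcal{K}$ is bounded globally (uniformly in $s$), making the case split at $|q|=2s^{-1}$ unnecessary. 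Finally, your "Obstacle" paragraph identifies the wrong pressure point: the large-$q$ threshold matching is routine ($O(1)$ constants against an $s^{-2}$ slack), whereas the small-$q$ bound with the precise factor $s^{M-2}$ is where the proof actually lives.
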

\begin{proof} 
We may again by a simple scaling argument assume $\ell=1$. Since we defined $\chi$ in \eqref{zeta new and eq:defchi} 
as a $C_0^M$ function, we have for $n\leq 2M$ and $C$ only depending on $M$ that
\begin{align}
\int_{\abs{q}>s\inv} \abs{q}^{n}\widehat{\chi}^2(q)\dd q\leq s^{2M-n}\int_{\abs{q}>s\inv} \abs{q}^{2M}\widehat{\chi}^2(q)\dd q\leq Cs^{2M-n}.\label{eq:q^rchi^2}
\end{align}
We use (\ref{eq:abskinloc}) and (\ref{eq:absF}) with
$\mathcal{K}(p)=\positivepart{|p|-s^{-1}}$.  
For the first term in (\ref{eq:absF}) we find
\begin{align*}
	(2\pi)^{-3}\mathcal{K}*\widehat\chi^2(p)\leq{}&(2\pi)^{-3}\int (|p-q|-s^{-1})^2\widehat\chi^2(q)\dd q\\
	\leq{}&(2\pi)^{-3}\int 
	(p^2-2pq+q^2-2s^{-1}(|p|-|q|)+s^{-2})\widehat\chi^2(q)\dd q\\
	={}& (|p|-s^{-1})^2 +C+Cs^{-1},
\end{align*}
where we have used $(2\pi)^{-3}\int q^2\widehat\chi(q)^2\dd q=
\int |\nabla\chi|^2$ and that $\chi^2$ is even.\\
If $|p|\geq\frac56s\inv$ we thus find
\begin{eqnarray}
(2\pi)^{-3}\mathcal{K}*\widehat\chi^2(p)\leq
(|p|-\frac12s^{-1})^2-\frac1{12}s^{-2}+C+Cs^{-1}.\label{eq: Ffirst}
\end{eqnarray}
For the second term in (\ref{eq:absF}) we find since
$\widehat\theta\leq 1$ that 
\begin{eqnarray}
|\widehat\theta(p)\widehat\chi*(\mathcal{K}\widehat\chi)(p)|\leq
\norm{\widehat{\chi}}_2\norm{\mathcal{K}\widehat{\chi}}_2\leq
C\left(\int_{|q|\geq s^{-1}}|q|^4|\widehat\chi(q)|^2\dd q\right)^{1/2}\leq Cs^{M-2}.\label{eq:F2nd}
\end{eqnarray}
For the third term in (\ref{eq:absF}) we have similarly
\begin{eqnarray}
|\widehat\theta(p)|^2\int \mathcal{K}|\widehat\chi|^2\leq
\int_{|q|\geq s^{-1}}|q|^2\widehat\chi(q)^2\dd q\leq
Cs^{2M-2}.
\label{eq:F3rd}
\end{eqnarray}
For $|p|\geq \frac56 s^{-1}$ we therefore have that the
function $F$ in (\ref{eq:absF}) satisfies
$$
F(p)\leq (|p|-\frac12s^{-1})^2-\frac1{12}s^{-2} +Cs^{-1}+Cs^{M-2}+Cs^{2M-2}.
$$
With $s^\ast$ sufficiently small we arrive at 
the first line in (\ref{eq:Fs}).\\
\indent We turn to the proof of the second line in (\ref{eq:Fs}). We 
know that $F(0)=\nabla F(0)=0$. The lemma follows from Taylor's formula if we can show that 
for $|p|<\frac56s^{-1}$, we have
\begin{equation}\label{eq:F2der}
|\partial_i\partial_jF(p)|\leq Cs^{M-2}.
\end{equation} 
It is straightforward to see that all second derivatives of
$\mathcal{K}(p)=\positivepart{|p|-s^{-1}}$ are bounded independently of $s$.
For the first term in (\ref{eq:absF}) we thus find for 
$|p|<\frac56s^{-1}$
\begin{align*}
|\partial_i\partial_j (\mathcal{K}*\widehat\chi^2)(p)|={}&
|(\partial_i\partial_j \mathcal{K})*\widehat\chi^2(p)|\leq C\int_{|p-q|>s^{-1}}
\widehat\chi(q)^2\dd q\\
\leq{}&C\int_{|q|>(6s)^{-1}}\widehat\chi(q)^2\dd q\leq Cs^{2M}.
\end{align*}
For the second and third term in (\ref{eq:absF}) we use the fact that for all $i,j=1,2,3$ the numbers
$$
\|\widehat\theta\|_\infty,\
\|\partial_i\widehat\theta\|_\infty,\ 
\|\partial_i\partial_j\widehat\theta\|_\infty,\ 
\int|\widehat\chi|^2,\
\int|\partial_i\widehat\chi|^2,\
\int|\partial_i\partial_j\widehat\chi|^2
$$are bounded by a constant. The same estimates that led to (\ref{eq:F2nd}) and (\ref{eq:F3rd})
then imply \eqref{eq:F2der}.
\end{proof}

\begin{cl}\label{cl:Tusimple}
If $M\geq3$ there exist constants $b>0$ and $s'>0$ (depending on the integer $M$ in the definition \eqref{zeta new and eq:defchi} of $\chi$) such that for $0<s\leq s'$ and any $\ell>0$ we have for all $\varphi \in H_0^1(\Lambda)$%the operator inequality
$$
\langle \varphi, -\Delta\varphi \rangle \geq \int_{\R^3}\langle \varphi, Q_u\left\{\chi_u\positivepart{\sqrt{-\Delta}-\mfr12 (s\ell)^{-1}}\chi_u
  +b\ell^{-2}\right\}Q_u\varphi \rangle \dd u
.
$$
\end{cl}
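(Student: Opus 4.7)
The plan is to reduce the statement to Lemma~\ref{lm:kinloc} by rescaling $s\to 2s$, which turns the positive-part argument $(s\ell)^{-1}$ of the lemma into $\mfr{1}{2}(s\ell)^{-1}$, matching the corollary exactly. The additional $b\ell^{-2}$ term is then obtained by comparing $-\Delta$ to $F_{2s}(\sqrt{-\Delta})$ pointwise as a Fourier multiplier and identifying the excess with $\int Q_u\,\dd u$ via Corollary~\ref{eq:Quav}.

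Concretely, Lemma~\ref{lm:kinloc} applied with $2s$ in place of $s$ (valid for $s\leq s^{\ast}/2$) yields
$$\langle\varphi,F_{2s}(\sqrt{-\Delta})\varphi\rangle\;\geq\;\int\langle\varphi,Q_u\chi_u\positivepart{\sqrt{-\Delta}-\mfr{1}{2}(s\ell)^{-1}}\chi_uQ_u\varphi\rangle\,\dd u.$$
Since $Q_u$ is an orthogonal projection, $Q_u^{2}=Q_u$, and Corollary~\ref{eq:Quav} gives $\int Q_u\,\dd u=1-\widehat\theta(-i\ell\nabla)^{2}$; hence it suffices to establish the pointwise Fourier-multiplier bound
$$p^{2}\;\geq\;F_{2s}(|p|)+b\ell^{-2}\of{1-\widehat\theta(\ell p)^{2}}\qquad(p\in\R^{3})$$
for a constant $b>0$ and all sufficiently small $s$. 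Indeed, this inequality applied to $\widehat\varphi$ supplies the extra $b\ell^{-2}$ term after sandwiching it between two $Q_u$'s under the integral.

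I will check this pointwise bound in the two regimes defining $F_{2s}$. For $|p|\geq\mfr{5}{12}(s\ell)^{-1}$, $F_{2s}(|p|)=(|p|-\mfr{1}{4}(s\ell)^{-1})^{2}$, so $p^{2}-F_{2s}(|p|)=\mfr{|p|}{2s\ell}-\mfr{1}{16}(s\ell)^{-2}\geq\mfr{7}{48}(s\ell)^{-2}$, which dominates $b\ell^{-2}$ once $s$ is small relative to $b$. For $|p|<\mfr{5}{12}(s\ell)^{-1}$, $F_{2s}(|p|)=C(2s)^{M-2}p^{2}$, a vanishing multiple of $p^{2}$ as $s\to0$ thanks to the hypothesis $M\geq 3$; hence for $s$ small, $p^{2}-F_{2s}(|p|)\geq\mfr{1}{2}p^{2}$, and after the substitution $q=\ell p$ the inequality reduces to $q^{2}\geq 2b(1-\widehat\theta(q)^{2})$, satisfied by any
$$b\;\leq\;\mfr{1}{2}\inf_{q\in\R^{3}}\frac{q^{2}}{1-\widehat\theta(q)^{2}}.$$

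The delicate point is thus that this last infimum is strictly positive. Using the Taylor expansion $\widehat\theta(q)=1-\mfr{|q|^{2}}{24}+O(|q|^{4})$ near the origin, the ratio $q^{2}/(1-\widehat\theta(q)^{2})$ extends continuously to a positive value at $q=0$; it is strictly positive on $\R^{3}\setminus\{0\}$ because $|\widehat\theta(q)|<1$ there (each factor of $\widehat\theta$ is a sinc); and it tends to $+\infty$ as $|q|\to\infty$ because $\widehat\theta$ decays. Its infimum is therefore a positive constant that furnishes a valid $b$, after choosing $s'>0$ small enough to enforce all the smallness requirements above. The strengthening from $M\geq 2$ (as in Lemma~\ref{lm:kinloc}) to $M\geq 3$ enters precisely to make the low-frequency correction $C(2s)^{M-2}p^{2}$ negligible compared to $p^{2}$, which is what allows the sharp comparison between $p^{2}$ and $1-\widehat\theta(\ell p)^{2}$ to go through.
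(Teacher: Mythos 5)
Your proof is correct and follows essentially the same route as the paper: apply Lemma~\ref{lm:kinloc} with $s$ replaced by $2s$, handle the $b\ell^{-2}$ term via Corollary~\ref{eq:Quav}, and verify a pointwise Fourier-multiplier inequality in the two regimes of $F_{2s}$. The only cosmetic difference is that you work directly with $1-\widehat\theta(q)^2$ and show $\inf_{q}\,q^{2}/(1-\widehat\theta(q)^{2})>0$, whereas the paper first inserts the intermediate bound $\int Q_u\,\dd u\leq\beta^{-1}\frac{-\Delta}{-\Delta+\beta}$ before making the same two-regime comparison.
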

\begin{proof} We again consider $\ell=1$.
Note that by Corollary~\ref{eq:Quav} and a Taylor expansion at $p=0$ we have 
\begin{equation}\label{eq:QuAv}
  \int Q_u \dd u\leq \beta^{-1}\frac{-\Delta}{-\Delta+\beta}
\end{equation}for a universal constant $0<\beta<1$. We use the previous lemma with $s$ replaced by
$2s$. We then find that 
\begin{align*}
\int_{\R^3}Q_u\chi_u\positivepart{\sqrt{-\Delta}-\mfr12 s^{-1}}\chi_uQ_u \dd u
+b\int_{\R^3} Q_u\dd u\leq F_{2s}(\sqrt{-\Delta})+b\beta^{-1}\frac{-\Delta}{-\Delta+\beta}.
\end{align*}
For $|p|< (5/12)s^{-1}$ Lemma~\ref{lm:kinloc} gives
$$
F_{2s}(p)+b\beta^{-1}\frac{p^2}{p^2+\beta}
\leq C sp^2+b\beta^{-1}\frac{p^2}{p^2+\beta}\leq 
(Cs+b\beta^{-2})p^2\leq p^2
$$
for $s$ sufficiently small. For $|p|\geq(5/12)s^{-1}$ we find from Lemma~\ref{lm:kinloc} that 
\begin{align*}
  F_{2s}(p)+b\beta^{-1}\frac{p^2}{p^2+\beta}\leq
  (|p|-\frac14 s^{-1})^2+b\beta^{-1}\frac{p^2}{p^2+\beta}
  \leq p^2-\frac5{24}s^{-2}+\frac1{16}s^{-2}+b\beta^{-1}\leq p^2,
\end{align*}
for $s$ sufficiently small.
\end{proof}
This corollary will allow us to localize the kinetic energy to
boxes. What will be left as the kinetic energy in the box $B(u)$
centered at $\ell u$, is the operator
\begin{equation}\label{eq:Tusimple}
  \cT_u=Q_u\left\{\chi_u\positivepart{\sqrt{-\Delta}-\mfr12 (s\ell )^{-1}}\chi_u+
    b\ell^{-2}\right\}Q_u.
\end{equation}
Note that $\cT_u$ vanishes on constant functions. The last term in
$\cT_u$ will control the gap in the kinetic energy, i.e., on functions
orthogonal to constants in the box, $\cT_u$ is bounded below by at
least $b\ell^{-2}$.\\
In \cite{BBSFJPS} we modify the localization of the kinetic energy, thereby avoiding an error term originating from \eqref{W^correction}, which would not be compatible with the LHY-order in the unscaled setting of \cite{BBSFJPS}. We expect that a corresponding modification in the present paper would allow us to state Assumption~\ref{R assumption} with $\Rexponentialendpoint=\Rexponentialendpointvalue$ instead of $\Rexponentialendpoint<\Rexponentialendpointvalue$.\\
\indent As explained above, we need to localize even further to smaller boxes
whose size is a factor $d\ll1$ smaller than the larger box. In these
smaller boxes we also need a term in the kinetic energy that gives a
gap. Unfortunately, the above expression \eqref{eq:Tusimple} for the
kinetic energy does not immediately allow for such further localization. For this
reason we must introduce a more complicated kinetic energy in the
larger box.

We will use
\begin{eqnarray}
  \hcT_u&=&\varepsilon_{T} (d\ell
  )^{-2}\frac{-\Delta_u^\cN}{-\Delta_u^\cN+(d\ell
    )^{-2}}+b\ell^{-2}Q_u
  \label{eq:Tauhat}\\&&+
  Q_u\chi_u\left\{(1-\varepsilon_{T})\;\positivepart{\sqrt{-\Delta}-\mfr12
    (s\ell )^{-1}}\;
    +\varepsilon_{T}\;\positivepart{\sqrt{-\Delta}-\mfr12
    (ds\ell )^{-1}}\;\right\}\chi_uQ_u,\nonumber
\end{eqnarray}
where $0<\varepsilon_T<1$ is a parameter.
The operator \label{def: Neumann Laplacian}$\Delta_u^\cN$ is the Neumann Laplacian on the box 
$B(u)$.
As usual $\Delta$ is the Laplacian on $\R^3$.
Let us be clear about the action of
$\Delta_u^\cN$ as an operator on $L^2(\R^3)$. It is the operator
associated with the
quadratic form 
$$
(f,-\Delta_u^\cN f)=\int_{B(u)}|\nabla f(x)|^2\dx,
$$ 
which is defined for all functions $f\in L^2(\R^3)$ whose restriction to
the cube is an $H^1$ function, i.e., functions for which
the above integral is finite. Note that by the operator
$(-\Delta_u^\cN+(d\ell )^{-2})^{-1}$ we mean the inverse of
$-\Delta_u^\cN+(d\ell )^{-2}$ on the space $L^2(B(u))$ extended to
all of $L^2(\R^3)$ by letting it be 
0 on the orthogonal complement, i.e., on functions that live outside 
$B(u)$.
 
Note that if $\varepsilon_T=0$ then $\hcT_u$ equals $\cT_u$.
For the kinetic energy $\hcT_u$ we have a result similar to 
Corollary~\ref{cl:Tusimple}.
For the following theorem we note that on the domain $H_0^1\of{\Lambda}$ we have
\begin{align}\label{Neumann Laplacian upper bound}
\int -\Delta^\NN_u\dd u= -\Delta_D
\end{align}
in the sense of quadratic forms.
\begin{lm}[Large-box kinetic energy localization]
\label{lm:Tu} If $M\geq 5$, and $b,d,s,\varepsilon_T>0$ are smaller than
    some universal constant, then 
\begin{equation}
  \int_{\R^3}\hcT_u \dd u\leq -\Delta.
\end{equation}
\end{lm}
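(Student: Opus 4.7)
By scaling I may take $\ell=1$. I decompose
\[
\hcT_u = C_u + bQ_u + (1-\varepsilon_T)\,Q_u A_u Q_u + \varepsilon_T\,Q_u B_u Q_u,
\]
where $C_u := \varepsilon_T d^{-2}\frac{-\Delta_u^\cN}{-\Delta_u^\cN+d^{-2}}$ and $A_u, B_u$ are the momentum-cutoff operators with thresholds $\frac{1}{2}s^{-1}$ and $\frac{1}{2}(ds)^{-1}$ respectively. The plan is to apply Lemma~\ref{lm:kinloc} at both scales $s$ and $ds$, use the bound \eqref{eq:QuAv} on $\int Q_u\dd u$, and control the extra Neumann contribution $C_u$ by two complementary bounds.

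The Neumann piece admits the two elementary operator bounds $C_u \leq \varepsilon_T(-\Delta_u^\cN)$ (from $x/(x+d^{-2}) \leq d^2 x$) and $C_u \leq \varepsilon_T d^{-2}$ (from $x/(x+d^{-2}) \leq 1$). Integrating the first in $u$ and invoking \eqref{Neumann Laplacian upper bound} yields $\int C_u\dd u \leq \varepsilon_T(-\Delta)$ on $H_0^1(\Lambda)$, while the second integrates to the scalar bound $\int C_u\dd u \leq \varepsilon_T d^{-2}$. Since $\int C_u\dd u$ is translation invariant and hence a Fourier multiplier with some symbol $G(p)$, these two bounds combine to $G(p) \leq \varepsilon_T \min(p^2, d^{-2})$. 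For the three remaining summands, Lemma~\ref{lm:kinloc} applied with $s\to 2s$ and with $s\to 2ds$ gives the Fourier bounds $F_{2s}(|p|)$ and $F_{2ds}(|p|)$, and \eqref{eq:QuAv} gives $b\int Q_u\dd u \leq b\beta^{-1}\frac{p^2}{p^2+\beta}$. It therefore suffices to verify the pointwise inequality
\[
G(p) + b\beta^{-1}\tfrac{p^2}{p^2+\beta} + (1-\varepsilon_T)F_{2s}(|p|) + \varepsilon_T F_{2ds}(|p|) \;\leq\; p^2 \qquad (\star)
\]
for all $p\in\R^3$.

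I would verify $(\star)$ by a case split on $|p|$ relative to the thresholds $s^{-1}$, $d^{-1}$, $(ds)^{-1}$. For $|p| < \tfrac{5}{12}s^{-1}$ both $F$-terms are majorized by $Csp^2$ and $G(p)\leq\varepsilon_Tp^2$, so $(\star)$ reduces to $(Cs+b\beta^{-2}+\varepsilon_T)p^2\leq p^2$, valid for small parameters. In the intermediate range $\tfrac{5}{12}s^{-1}\leq|p|<\tfrac{5}{12}(ds)^{-1}$ one has $F_{2s}(|p|)=(|p|-\tfrac{1}{4}s^{-1})^2$, furnishing a linear-in-$|p|$ slack of order $s^{-1}|p|$ against $p^2$, while $F_{2ds}(|p|)\leq C(ds)^{M-2}p^2$ is negligible for $M\geq 5$; that slack comfortably absorbs $G(p)$, the constant $b\beta^{-1}$, and the tiny $F_{2ds}$ contribution. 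For $|p|\geq\tfrac{5}{12}(ds)^{-1}$ the convex combination $(1-\varepsilon_T)F_{2s}(|p|)+\varepsilon_T F_{2ds}(|p|)$ yields slack of order $\varepsilon_T(ds)^{-2}$ below $p^2$, which absorbs the scalar bound $G(p)\leq\varepsilon_Td^{-2}$ via the identity $d^{-2}=s^{2}(ds)^{-2}$. The main obstacle is organizing the case analysis so that the correct of the two bounds on $G(p)$ is used in each regime---the $\varepsilon_Tp^2$ bound below $d^{-1}$ and the $\varepsilon_Td^{-2}$ bound above, since neither suffices over the full range of $|p|$---and in particular handling the intermediate regime, where the convex combination built into $\hcT_u$ is used essentially. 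Fixing the parameters $\varepsilon_T,b,d,s$ sufficiently small (in that order) then closes the estimate.
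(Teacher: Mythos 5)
Your plan follows the paper's proof in all essentials: scale to $\ell=1$, apply Lemma~\ref{lm:kinloc} at the two scales $2s$ and $2ds$, control $b\int Q_u\,\dd u$ by \eqref{eq:QuAv}, and reduce the statement to a pointwise Fourier inequality that holds once $b,d,s,\varepsilon_T$ are small. The one step you treat differently is the Neumann piece: the paper proves the operator inequality \eqref{eq:Neumannineq}, $\int\frac{-\Delta_u^\cN}{-\Delta_u^\cN+d^{-2}}\dd u\leq\frac{-\Delta}{-\Delta+d^{-2}}$, via the direct-sum resolvent identity and averaging over translates, whereas you take the two crude bounds $C_u\leq\varepsilon_T(-\Delta_u^\cN)$ and $C_u\leq\varepsilon_Td^{-2}\one_{B(u)}$ (note that the localization to $B(u)$ is needed before integrating over the non-compact parameter $u$; the bare scalar bound does not integrate) and combine them through translation invariance into $G(p)\leq\varepsilon_T\min\{p^2,d^{-2}\}$. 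This is legitimate and agrees with the paper's symbol $\varepsilon_Td^{-2}p^2/(p^2+d^{-2})$ up to a factor $2$; your version is more elementary, the paper's gives the interpolating bound in one stroke.

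One piece of bookkeeping in your intermediate regime needs repair. For $\frac{5}{12}s^{-1}\leq|p|<\frac{5}{12}(ds)^{-1}$ the linear slack $(1-\varepsilon_T)\tfrac14 s^{-1}|p|$ coming from $F_{2s}$ cannot by itself absorb $G(p)$: near the top of this range $s^{-1}|p|\sim s^{-2}d^{-1}$ while your bound on $G(p)$ there is $\varepsilon_Td^{-2}$, and $\varepsilon_Td^{-2}\leq Cs^{-2}d^{-1}$ would force $\varepsilon_T\leq Cd/s$, which is not implied by the parameters being individually small (Condition~\ref{cond:1} allows $d\ll s\,\varepsilon_T$). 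The absorption must instead use the $\varepsilon_Tp^2$ portion of the budget, which is available in this regime precisely because $\varepsilon_TF_{2ds}(|p|)\leq C\varepsilon_T(ds)^{M-2}p^2$ is negligible there: since $\min\{p^2,d^{-2}\}\leq p^2$, one has $G(p)\leq\varepsilon_Tp^2$ throughout the intermediate range, and then only $b\beta^{-1}$ and the tiny $F_{2ds}$ term remain to be absorbed by the linear slack. This is exactly how the paper's final display closes, where $\varepsilon_Td^{-2}p^2/(p^2+d^{-2})\leq\varepsilon_Tp^2$ sits inside the $\varepsilon_Tp^2$ headroom for $|p|\lesssim(ds)^{-1}$. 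With that adjustment your case analysis is complete; the low regime and the regime $|p|\geq\frac{5}{12}(ds)^{-1}$ (where $\varepsilon_T\tfrac14(ds)^{-1}|p|\geq\varepsilon_T\tfrac{5}{48}d^{-2}s^{-2}$ indeed dominates $\varepsilon_Td^{-2}$ for small $s$) are handled correctly as you describe.
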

\begin{proof} As usual we set $\ell=1$. 
The first step in the proof is to show that for all $d>0$ 
\begin{equation}\label{eq:Neumannineq}
\int_{\R^3}\frac{-\Delta_u^\cN}{-\Delta_u^\cN+d^{-2}}\dd u\leq
\frac{-\Delta}{-\Delta+d^{-2}}.
\end{equation}
To show this, we recall that in the sense of quadratic forms
$\sum_{u\in\Z^3}-\Delta_u^\cN\leq-\Delta$. Thus 
\begin{align}
(-\Delta+d^{-2})^{-1}\leq
\left(\sum_{u\in\Z^3}-\Delta_u^\cN+d^{-2}\right)^{-1}=\sum_{u\in\Z^3}(-\Delta_u^\cN+d^{-2})^{-1}.\label{eq: Laplace inequality}
\end{align}
The last equality looks odd, but it is just the identity 
$\left(\bigoplus_u A_u\right)^{-1}=\bigoplus_u A_u^{-1}$ applied to the
operators $A_u=-\Delta_u^\cN+d^{-2}{\one}_{L^2(B(u))}$ acting on
$L^2(B(u))$
recalling that $L^2(\R^3)=\bigoplus_{u\in\Z^3}L^2(B(u))$.

Since $1-d^{-2}(-\Delta+d^{-2})^{-1}=-\Delta(-\Delta+d^{-2})^{-1}$, it follows from \eqref{eq: Laplace inequality} that 
\begin{align}
\sum_{u\in\Z^3}\frac{-\Delta_u^\cN}{-\Delta_u^\cN+d^{-2}}\leq\frac{-\Delta}{-\Delta+d^{-2}}.
\end{align}
This will also hold if we replace the sum over $\Z^3$ by a sum over $v+\Z^3$ for any
$v\in[0,1]^3$. An integration over $v\in [0,1]^3$ gives \eqref{eq:Neumannineq}.

The second step is to observe that from Lemma~\ref{lm:kinloc}, e.g.,
with $M\geq 5$ we find that
\begin{eqnarray*}
\lefteqn{\int Q_u\chi_u\left[(1-\varepsilon_T)\positivepart{\sqrt{-\Delta}-\mfr12s^{-1}}
+\varepsilon_T\positivepart{\sqrt{-\Delta}-\mfr12(ds)^{-1}}\right]\chi_uQ_u\dd u}&&\\
&\leq&(1-\varepsilon_T)\positivepart{\sqrt{-\Delta}-\mfr14s^{-1}}
+\varepsilon_T\positivepart{\sqrt{-\Delta}-\mfr14(ds)^{-1}}
+Cs\frac{-\Delta}{-\Delta+\beta}
\end{eqnarray*}
for some universal constant $C$ and where $\beta$ is the same constant
as in \eqref{eq:QuAv}. The proof is completed using \eqref{eq:QuAv} and observing that 
if $s,b,d,\varepsilon_T$ are all smaller than some universal
constant, then, for all $p\in\R^3$,
\begin{align*}
%\begin{multline}
(b\beta^{-1}+Cs)
\frac{p^2}{p^2+\beta}+\varepsilon_Td^{-2}\frac{p^2}{p^2+d^{-2}}
+(1-\varepsilon_T)\positivepart{|p|-\mfr14s^{-1}}+\varepsilon_T\positivepart{|p|-\mfr14(ds)^{-1}}
\leq p^2.
\end{align*}
\end{proof}
We now discuss the further localization into smaller boxes of
relative size $d\ll 1$. As in the previous subsection 
we index these boxes by a parameter
$u'\in\R^3$. The small box is \label{second kind}$\hB(u')=d\ell\Gamma_{u'}
=d\ell{u'}+[-d\ell/2,d\ell/2]^3$, 
whose center is at $d\ell{u'}$.
We denote the corresponding characteristic function and localization
function
by
$$
\htheta_{u'}(x)=\theta((x/d\ell)-u'),\quad
\hchi_{u'}(x)=\chi((x/d\ell)-u').
$$
The corresponding orthogonal projection onto functions orthogonal to
constants in $L^2(\hB(u'))$ is $\hQ_{u'}$ given by 
$$
 \hQ_{u'}f=\htheta_{u'}f-(d\ell)^{-3}\langle\htheta_{u'}|f\rangle\htheta_{u'}.
$$
When localizing in to the smaller boxes, we are forced to consider 
the situation of overlap between the large boxes and small boxes,
i.e.,  $B(u,u')=B(u)\cap \hB(u')$. 
The corresponding characteristic function is $\theta_u\htheta_{u'}$, 
the corresponding localization function is $\chi_u\hchi_{u'}$, and the
corresponding orthogonal projection is 
$$
Q_{uu'}f=\theta_u\htheta_{u'}f-|B(u,u')|^{-1}
\langle\theta_u\htheta_{u'}|f\rangle\theta_u\htheta_{u'}.
$$
Our first result is that when we localize
the large box kinetic energy $\hcT_u$ into smaller boxes we
will get a gap in the localized energy spectrum.
This is a consequence of the next result.

\begin{lm}\label{lm:hQuuav}With $Q_{uu'}$ as defined above we have for all $d>0$ and $\ell>0$
\begin{equation}\label{eq:Quuslide}
\int Q_{uu'}\dd u'\leq (1+\pi^{-2})\frac{-\Delta_u^\cN}{-\Delta_u^\cN+(d\ell)^{-2}}.
\end{equation}
\end{lm}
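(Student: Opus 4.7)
\textbf{Plan for the proof of Lemma~\ref{lm:hQuuav}:}

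My plan is to establish the operator inequality by combining two elementary quadratic form bounds on $\int Q_{uu'}\dd u'$ via a spectral argument in the Neumann eigenbasis.

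\emph{Reduction.} First I would note that $Q_{uu'}$ maps into $L^2(B(u,u'))\subset L^2(B(u))$, so both sides vanish on $L^2(\R^3\setminus B(u))$, and it suffices to prove the inequality as a quadratic form on $L^2(B(u))$. There $\theta_u$ acts as the identity, and $-\Delta_u^\cN$ is the Neumann Laplacian with its natural domain. I would also rewrite
$$
\langle f,Q_{uu'}f\rangle = \int_{B(u,u')}\bigl|f(x)-\overline{f}_{B(u,u')}\bigr|^2\dd x,
$$
so that $\int \langle f, Q_{uu'}f\rangle\dd u'$ is an average of local variances of $f$.

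\emph{Two fundamental bounds.} I would then prove:
\begin{enumerate}
\item[\emph{(a)}] \emph{Trivial bound.} Since $Q_{uu'}\leq \theta_u\hchi_{u'}^{\,2}$ as orthogonal projections onto $L^2(B(u,u'))$-subspaces, and since $\int \htheta_{u'}(x)\dd u' = 1$ for every $x$, we get
$$
\int_{\R^3} Q_{uu'}\dd u' \leq \int_{\R^3}\theta_u\htheta_{u'}\dd u' = \theta_u.
$$
\item[\emph{(b)}] \emph{Neumann--Poincar\'e bound.} Each rectangular subbox $B(u,u')$ has all side lengths at most $d\ell$, so its first nonzero Neumann eigenvalue is at least $\pi^2(d\ell)^{-2}$, giving the sharp Poincar\'e inequality
$$
\langle f,Q_{uu'}f\rangle \leq \pi^{-2}(d\ell)^2\int_{B(u,u')}|\nabla f|^2\dd x.
$$
Integrating over $u'$ and interchanging the order of integration using $\int\theta_u(x)\htheta_{u'}(x)\dd u' = \theta_u(x)$ yields
$$
\int_{\R^3}\langle f,Q_{uu'}f\rangle\dd u' \leq \pi^{-2}(d\ell)^2\langle f,-\Delta_u^\cN f\rangle.
$$
\end{enumerate}

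\emph{Spectral combination.} On $L^2(B(u))$, the operators $\theta_u = I$ and $-\Delta_u^\cN$ share the Neumann eigenbasis $\{\psi_n\}$ with eigenvalues $\lambda_n = \pi^2|n|^2/\ell^2$. The elementary real-variable estimate
$$
\min\bigl\{1,\,\pi^{-2}(d\ell)^2\lambda\bigr\} \leq (1+\pi^{-2})\,\frac{\lambda}{\lambda+(d\ell)^{-2}}, \qquad \lambda\geq 0,
$$
verified by a case split at $\lambda = \pi^2(d\ell)^{-2}$ (with equality there), is what produces the constant $(1+\pi^{-2})$ in the statement. Combining (a) and (b) via this inequality yields the desired bound.

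\emph{Main obstacle.} The delicate point is passing from the two separate operator bounds $\int Q_{uu'}\dd u'\leq I$ and $\int Q_{uu'}\dd u'\leq \pi^{-2}(d\ell)^2(-\Delta_u^\cN)$ to a single sharp inequality with the constant $(1+\pi^{-2})$, since a na\"\i ve pointwise ``minimum of the two bounds'' argument assumes that $\int Q_{uu'}\dd u'$ is diagonal in the Neumann eigenbasis, which is not a priori clear. To close this gap I would exploit the symmetries of the cube $B(u)$ (reflections about the coordinate axes and permutations of coordinates), under which both $\int Q_{uu'}\dd u'$ and $-\Delta_u^\cN$ are manifestly invariant, to conclude that $\int Q_{uu'}\dd u'$ preserves each Neumann eigenspace; alternatively, use the explicit representation $\int Q_{uu'}\dd u' = \theta_u - \int |B(u,u')|^{-1}|\theta_u\htheta_{u'}\rangle\langle\theta_u\htheta_{u'}|\dd u'$ to bound the last (positive) ``averaging'' operator from below by a function of $-\Delta_u^\cN$, which is the form in which the mode-wise estimate promotes to the required operator inequality.
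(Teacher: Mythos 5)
Your preliminary bounds (a) and (b) are correct, and so is the scalar inequality $\min\{1,\pi^{-2}(d\ell)^2\lambda\}\leq(1+\pi^{-2})\lambda/(\lambda+(d\ell)^{-2})$, but the step you yourself flag as the ``main obstacle'' is a genuine gap, and neither of your proposed repairs closes it. From the two operator inequalities $\int Q_{uu'}\dd u'\leq \theta_u$ and $\int Q_{uu'}\dd u'\leq\pi^{-2}(d\ell)^2(-\Delta_u^\cN)$ one cannot conclude $\int Q_{uu'}\dd u'\leq f(-\Delta_u^\cN)$ with $f(\lambda)=\min\{1,\pi^{-2}(d\ell)^2\lambda\}$: an operator dominated by two commuting operators need not be dominated by their spectral minimum (simple $2\times2$ counterexamples exist). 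The symmetry argument does not rescue this: invariance under the cube's reflection/permutation group only says $\int Q_{uu'}\dd u'$ commutes with that finite group, which is far from forcing it to preserve Neumann eigenspaces (degenerate eigenspaces such as $|n|^2=9$, spanned by modes $(3,0,0)$ and $(2,2,1)$, contain eigenfunctions unrelated by any cube symmetry), and in fact $\int Q_{uu'}\dd u'$ does \emph{not} commute with $-\Delta_u^\cN$ because of boundary effects in the kernel $\int|B(u,u')|^{-1}\theta_u\htheta_{u'}(x)\,\theta_u\htheta_{u'}(y)\dd u'$. Your second alternative (``bound the averaging operator below by a function of $-\Delta_u^\cN$'') is precisely the statement to be proved, not an argument.

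The missing idea is to perform the spectral comparison \emph{before} integrating over $u'$, at the level of the small box, where it is legitimate: since $B(u,u')\subseteq\hB(u')$ has all sides at most $d\ell$, the Neumann Laplacian $-\Delta_{uu'}^\cN$ of $B(u,u')$ has gap $\geq\pi^2(d\ell)^{-2}$, and $Q_{uu'}$ is exactly its spectral projection onto nonzero eigenvalues, so
\begin{equation*}
Q_{uu'}\leq(1+\pi^{-2})\,\frac{-\Delta_{uu'}^\cN}{-\Delta_{uu'}^\cN+(d\ell)^{-2}}
\end{equation*}
holds as an operator inequality for each fixed $u'$ with no commutation issue. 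The nonlinearity is then carried through the $u'$-integration by the resolvent argument used for \eqref{eq:Neumannineq} in the proof of Lemma~\ref{lm:Tu}: $\sum_{u'}-\Delta_{uu'}^\cN\leq-\Delta_u^\cN$ in the form sense, inverting reverses the inequality and respects the direct-sum structure, and translating the lattice and averaging yields $\int\frac{-\Delta_{uu'}^\cN}{-\Delta_{uu'}^\cN+(d\ell)^{-2}}\dd u'\leq\frac{-\Delta_u^\cN}{-\Delta_u^\cN+(d\ell)^{-2}}$, which finishes the proof. In short: compare $Q_{uu'}$ with the small-box Neumann Laplacian pointwise in $u'$, not with the large-box one after integration.
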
 
\begin{proof} It is enough to consider $\ell=1$. 
Let $-\Delta_{uu'}^\cN$ denote the Neumann Laplacian in
the box $B(u,u')$. Observe that since
$B(u,u')\subseteq\hB(u')$, the Neumann Laplacian $-\Delta_{uu'}^\cN$
has a gap of at least $\pi^2d^{-2}$, i.e., 
$-\Delta_{uu'}^\cN\geq \pi^2d^{-2}Q_{uu'}$. Thus
$$
Q_{uu'}\leq (1+\pi^{-2})\frac{-\Delta_{uu'}^\cN}{-\Delta_{uu'}^\cN+d^{-2}}.
$$
The same argument that led to \eqref{eq:Neumannineq} gives 
$$
\int\frac{-\Delta_{uu'}^\cN}{-\Delta_{uu'}^\cN+d^{-2}}\dd u'\leq 
\frac{-\Delta_{u}^\cN}{-\Delta_{u}^\cN+d^{-2}},
$$
which concludes the proof of the lemma.
\end{proof}
This lemma shows that the first term in (\ref{eq:Tauhat}), after
localization, leads to a gap in the small boxes.
The full kinetic energy localization is given in the next lemma. 
\begin{lm}[Small-box kinetic energy localization] \label{lm:Tuu'} Let
  $\hcT_u$ be the replacement for the kinetic energy given in (\ref{eq:Tauhat})
  in terms of the parameters $s,d,\varepsilon_T$ and the constant $b$. Let
\begin{equation}\label{eq:Tuu}
    \cT_{uu'}=\varepsilon_T(1+\pi^{-2})^{-1} (d\ell )^{-2}Q_{uu'}+
    Q_{uu'}\chi_u\hchi_{u'}
    \positivepart{\sqrt{-\Delta}
        -(ds\ell )^{-1}}\hchi_{u'}\chi_u
    Q_{uu'}.
  \end{equation}
  Our assertion is that if $(s^{-2}+d^{-3})(ds)^{-2}s^{M}\leq \delta$, then,
  for all $0<\varepsilon_T<1$,
  $$
  \hcT_u-\frac{b}{2}\ell^{-2}Q_{u}\geq\int \cT_{uu'}\dd u'.
  $$ 
  The first term in \eqref{eq:Tuu} yields a gap above zero of 
  size $\varepsilon_T(1+\pi^{-2})^{-1}
  (d\ell )^{-2}$ in the spectrum of
  $\cT_{uu'}$, which we will refer to as the Neumann gap. 
\end{lm}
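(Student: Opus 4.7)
The plan is to verify the operator inequality $\hcT_u - \tfrac{b}{2}\ell^{-2}Q_u \geq \int \cT_{uu'}\dd u'$ by pairing each summand of $\int\cT_{uu'}\dd u'$ against a matching piece of $\hcT_u$ and absorbing the residual errors into the reserved $\tfrac{b}{2}\ell^{-2}Q_u$ using the smallness hypothesis $(s^{-2}+d^{-3})(ds)^{-2}s^M \leq \delta$. The Neumann-gap assertion about $\cT_{uu'}$ is immediate: the second summand of \eqref{eq:Tuu} is nonnegative, and $Q_{uu'}$ is a projection, so on its range $\cT_{uu'}\geq\varepsilon_T(1+\pi^{-2})^{-1}(d\ell)^{-2}$.

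The first summand $\varepsilon_T(1+\pi^{-2})^{-1}(d\ell)^{-2}Q_{uu'}$ integrates, via Lemma \ref{lm:hQuuav}, into the first term of $\hcT_u$. For the kinetic summand I would apply Lemma \ref{lm:kinloc} at the smaller scale $d\ell$ (with the same parameter $s$), evaluated on $\chi_u Q_u\varphi$, to obtain
\begin{equation*}
\int \bigl\langle\chi_u Q_u\varphi,\;\hQ_{u'}\hchi_{u'}\positivepart{\sqrt{-\Delta}-(ds\ell)^{-1}}\hchi_{u'}\hQ_{u'}\chi_u Q_u\varphi\bigr\rangle\dd u' \leq \bigl\langle\chi_u Q_u\varphi,\,F_s^{(d\ell)}(\sqrt{-\Delta})\chi_u Q_u\varphi\bigr\rangle,
\end{equation*}
where $F_s^{(d\ell)}$ denotes the function \eqref{eq:Fs} with $\ell$ replaced by $d\ell$. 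The pointwise bound $F_s^{(d\ell)}(|p|)\leq \positivepart{|p|-\tfrac12(ds\ell)^{-1}}+Cs^{M-2}(ds\ell)^{-2}$ controls the right side by $\bigl\langle\chi_u Q_u\varphi,\positivepart{\sqrt{-\Delta}-\tfrac12(ds\ell)^{-1}}\chi_u Q_u\varphi\bigr\rangle + Cs^{M-4}d^{-2}\ell^{-2}\langle\varphi, Q_u\varphi\rangle$; the first term is dominated by the full kinetic piece of $\hcT_u$, since $(1-\varepsilon_T)\positivepart{\sqrt{-\Delta}-\tfrac12(s\ell)^{-1}}+\varepsilon_T\positivepart{\sqrt{-\Delta}-\tfrac12(ds\ell)^{-1}} \geq \positivepart{\sqrt{-\Delta}-\tfrac12(ds\ell)^{-1}}$ as operators (using $d<1$ and monotonicity of $t\mapsto t_+^2$).

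What remains is to relate the quantity actually appearing in $\cT_{uu'}$, namely $\hchi_{u'}\chi_u Q_{uu'}\varphi$, to the quantity appearing in the lemma, namely $\hchi_{u'}\hQ_{u'}\chi_u Q_u\varphi$. A direct computation using $\chi_u\theta_u=\chi_u$ and $\hchi_{u'}\htheta_{u'}=\hchi_{u'}$ yields
\begin{equation*}
\hchi_{u'}\chi_u Q_{uu'}\varphi - \hchi_{u'}\hQ_{u'}\chi_u Q_u\varphi = \alpha_{u,u'}\chi_u\hchi_{u'} + \beta_{u,u'}\hchi_{u'},
\end{equation*}
with scalars $\alpha_{u,u'}=-|B(u,u')|^{-1}\langle\theta_u\htheta_{u'},Q_u\varphi\rangle$ and $\beta_{u,u'}=-(d\ell)^{-3}\langle\htheta_{u'},\chi_u Q_u\varphi\rangle$. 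Writing $g_{u'}=\hchi_{u'}\hQ_{u'}\chi_u Q_u\varphi$, $h_{u'}=\alpha_{u,u'}\chi_u\hchi_{u'}+\beta_{u,u'}\hchi_{u'}$, expanding $\|X^{1/2}(g_{u'}+h_{u'})\|^2$ with $X=\positivepart{\sqrt{-\Delta}-(ds\ell)^{-1}}$, and applying a weighted Cauchy--Schwarz together with Fubini, the cross and $h$-contributions are bounded, after $u'$-integration, by $Cs^{M-2}d^{-5}\ell^{-2}\langle\varphi,Q_u\varphi\rangle$; this uses the Fourier-moment estimate $\int_{|q|>s^{-1}}|q|^n|\widehat\chi(q)|^2\dd q\leq Cs^{2M-n}$ that underlies Lemma \ref{lm:kinloc} to control $\langle\hchi_{u'},X\hchi_{u'}\rangle$ and the analogue with $\chi_u\hchi_{u'}$, combined with a Young-type bound $\int|\langle\htheta_{u'},f\rangle|^2\dd u'\leq (d\ell)^3\|f\|_2^2$ for the scalar coefficients.

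The total error is $O\bigl((s^{M-4}d^{-2}+s^{M-2}d^{-5})\ell^{-2}\bigr)\langle\varphi,Q_u\varphi\rangle$, precisely the combination encoded in the hypothesis $(s^{-2}+d^{-3})(ds)^{-2}s^M=d^{-2}s^{M-4}+d^{-5}s^{M-2}\leq\delta$, and is absorbed by the reserved $\tfrac{b}{2}\ell^{-2}Q_u$ for $\delta$ sufficiently small. The main obstacle is the last step: careful accounting of the rank-one corrections $\alpha_{u,u'}$ and $\beta_{u,u'}$ so that, after integration in $u'$, they really contribute only errors of size $s^{M-2}d^{-5}\ell^{-2}$ and no larger, thus fitting inside the stated smallness condition.
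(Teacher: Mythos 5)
Your overall plan is essentially the paper's: the Neumann term is summed via Lemma~\ref{lm:hQuuav}, the kinetic term via Lemma~\ref{lm:kinloc} rescaled to the scale $d\ell$, the convex-combination inequality you invoke is the one the paper uses, your identity for $\hchi_{u'}\chi_uQ_{uu'}\varphi-\hchi_{u'}\hQ_{u'}\chi_uQ_u\varphi$ is (up to the sign of $\beta_{u,u'}$) exactly the paper's rank-two formula for $(\chi_u\hQ_{u'}-Q_{uu'}\chi_u)\hchi_{u'}$, and your two target error sizes match the two summands of $(s^{-2}+d^{-3})(ds)^{-2}s^{M}$.

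The gap is in the last step, precisely where you flag an obstacle. Writing $X=\positivepart{\sqrt{-\Delta}-(ds\ell)^{-1}}$ as you do, if the cross term in $\|X^{1/2}(g_{u'}+h_{u'})\|^2$ is handled by a weighted Cauchy--Schwarz that splits $X^{1/2}$ between the two factors, then the piece $\epsilon\|X^{1/2}g_{u'}\|^2$ (or the geometric mean $\|X^{1/2}g_{u'}\|\,\|X^{1/2}h_{u'}\|$) integrates over $u'$ to a quantity of the order of the localized kinetic energy of $\varphi$, not of order $\ell^{-2}\langle\varphi,Q_u\varphi\rangle$, and it cannot be re-absorbed into the kinetic piece of $\hcT_u$: the inequality $(1-\varepsilon_T)\positivepart{|p|-\mfr12(s\ell)^{-1}}+\varepsilon_T\positivepart{|p|-\mfr12(ds\ell)^{-1}}\geq\positivepart{|p|-\mfr12(ds\ell)^{-1}}$ has no uniform multiplicative slack, since the difference of the two sides grows only linearly in $|p|$ while $\epsilon\positivepart{|p|-\mfr12(ds\ell)^{-1}}$ grows quadratically, so $(1+\epsilon)$ times the main term is not dominated for any fixed $\epsilon>0$. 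The cure is to place the full multiplier $X$ on the smooth functions: bound $2|\langle g_{u'},Xh_{u'}\rangle|\leq 2\|g_{u'}\|\,\|Xh_{u'}\|$, use $\int\|g_{u'}\|^2\dd u'\leq C\langle\varphi,Q_u\varphi\rangle$ (Corollary~\ref{eq:Quav} at scale $d\ell$) and the $C^{M}$-smoothness of $\hchi_{u'}$ and $\chi_u\hchi_{u'}$ through \eqref{eq:q^rchi^2} to make $\|Xh_{u'}\|$ small; for the $\alpha_{u,u'}$-part you also need $\langle\chi_u\hchi_{u'}|(-\Delta)^{M}|\chi_u\hchi_{u'}\rangle\leq C(d\ell)^{-2M}|B(u,u')|$ to cancel the factor $|B(u,u')|^{-1}$ on degenerate boxes. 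This is exactly what the paper does in operator form: using $Q_uQ_{uu'}=Q_{uu'}$ and the rank-two identity it bounds the operator norm of the difference $D$ by $C(ds)^{-2}s^{M}$, with the full $X$ landing on $\hchi_{u'}$ and $\chi_u\hchi_{u'}$, and then integrates over the set of relevant $u'$, of volume $Cd^{-3}$. With that replacement your argument closes and the total error is indeed absorbed by $\frac{b}{2}\ell^{-2}Q_u$ under the stated smallness condition.
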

\begin{proof} We again take $\ell=1$. The integral over the first term in
$\cT_{uu'}$ is bounded above by the first term in
$\hcT_u$ by Lemma~\ref{lm:hQuuav}. We concentrate on the second term in $\cT_{uu'}$.
  By a unitary transformation ($x\simpleto x/d$ and $p\simpleto pd$) of the
  result in Lemma~\ref{lm:kinloc} we obtain
  \begin{eqnarray*}
  \lefteqn{\int\hQ_{u'}\hchi_{u'}\positivepart{\sqrt{-\Delta}-(ds)^{-1}}
  \hchi_{u'}\hQ_{u'}\dd u'\leq d^{-2}F_{s}(d\sqrt{-\Delta})}&&\\
 &\leq&
  (1-\varepsilon_T)\positivepart{\sqrt{-\Delta}-\mfr12s^{-1}}
  +\varepsilon_T\positivepart{\sqrt{-\Delta}-\mfr12(ds)^{-1}}
  +C(ds)^{-2}s^{M-2},
\end{eqnarray*}
where the function $F_{s}$ is given in \eqref{eq:Fs} and we used that $ \positivepart{\sqrt{-\Delta}-\mfr12(ds)^{-1}}\leq\positivepart{\sqrt{-\Delta}-\mfr12s^{-1}}$.
Thus the proof would be complete if the operator appearing as the integrand in the second term in
$\cT_{uu'}$ would have been, instead,
$$
Q_u\chi_u\hQ_{u'}\hchi_{u'}\positivepart{\sqrt{-\Delta}-(ds)^{-1}}
  \hchi_{u'}\hQ_{u'}\chi_uQ_u.
$$
We will estimate the difference between these operators, which is
\begin{eqnarray*}
D&=&Q_u\chi_u\hQ_{u'}\hchi_{u'}\positivepart{\sqrt{-\Delta}-(ds)^{-1}}
  \hchi_{u'}\hQ_{u'}\chi_uQ_u\\
&&-
Q_{uu'}\chi_u\hchi_{u'}
    \positivepart{\sqrt{-\Delta}
        -(ds)^{-1}}\hchi_{u'}\chi_u
    Q_{uu'}\\
&=&Q_u\left(\chi_u\hQ_{u'}-Q_{uu'}\chi_u\right)\hchi_{u'}\positivepart{\sqrt{-\Delta}-(ds)^{-1}}
  \hchi_{u'}\hQ_{u'}\chi_uQ_u\\
&&+Q_{uu'}\chi_u\hchi_{u'}
    \positivepart{\sqrt{-\Delta}
        -(ds)^{-1}}\hchi_{u'}\left(\hQ_{u'}\chi_u-\chi_uQ_{uu'}\right)Q_u,
\end{eqnarray*}
where we have used that $Q_uQ_{uu'}=Q_{uu'}$.
We observe that (using Dirac notation)
\begin{eqnarray*}
\left(\chi_u\hQ_{u'}-Q_{uu'}\chi_u\right)\hchi_{u'}&=&\chi_u\htheta_{u'}\hchi_{u'}
-d^{-3}\chi_u|\htheta_{u'}\rangle\langle\htheta_{u'}|\hchi_{u'}
-\theta_u\htheta_{u'}\chi_u\hchi_{u'}\\&&{}+|B(u,u')|^{-1}|\theta_u\htheta_{u'}\rangle\langle\theta_u\htheta_{u'}|
\chi_u\hchi_{u'}\\
&=&|B(u,u')|^{-1}|\theta_u\htheta_{u'}\rangle\langle\chi_u\hchi_{u'}|
-d^{-3}|\chi_u\htheta_{u'}\rangle\langle\hchi_{u'}|,
\end{eqnarray*}
exploiting the facts that $\htheta_{u'}\hchi_{u'}=\hchi_{u'}$ and
$\theta_u\chi_u=\chi_u$. It is now simple to estimate the operator norm of $D$
\begin{eqnarray*}
  \|D\|&\leq& C\left\|\left(\chi_u\hQ_{u'}-Q_{uu'}\chi_u\right)\hchi_{u'}\positivepart{\sqrt{-\Delta}
        -(ds)^{-1}}\right\|\\&\leq&
  C|B(u,u')|^{-1/2}\left(\int_{|p|>(ds)^{-1}}|p|^4\,|\widehat{\chi_u\hchi_{u'}}(p)|^2\dd p\right)^{1/2}\\&&{}
+Cd^{-3/2}\left(\int_{|p|>(ds)^{-1}}|p|^4\,|\widehat{\hchi_{u'}}(p)|^2\dd p\right)^{1/2}\\
&\leq&C(ds)^{M-2}|B(u,u')|^{-1/2}\langle\chi_u\hchi_{u'}|(-\Delta)^M\chi_u\hchi_{u'}\rangle^{1/2}
+C(ds)^{M-2}d^{-3/2}\langle\hchi_{u'}|(-\Delta)^M\hchi_{u'}\rangle^{1/2}\\
 &\leq& C(ds)^{M-2}d^{-M}\leq C (ds)^{-2}s^M.
\end{eqnarray*}
Hence $D=Q_uDQ_u\geq -C (ds)^{-2}s^{M} Q_u$ and
\begin{eqnarray*}
  \lefteqn{\int Q_{uu'}\chi_u\hchi_{u'}
    \positivepart{\sqrt{-\Delta}
        -(ds)^{-1}}\hchi_{u'}\chi_u
    Q_{uu'}\dd u'}&&\\
  &=&\int Q_u\chi_u\hQ_{u'}\hchi_{u'}\positivepart{\sqrt{-\Delta}-(ds)^{-1}}
    \hchi_{u'}\hQ_{u'}\chi_uQ_u\dd u' -\int_{\{u' |B(u)\cap\hB(u')\not=\emptyset\}}D\, \dd u'\\
  &\leq&Q_u\chi_u\left((1-\varepsilon_T)\positivepart{\sqrt{-\Delta}-\mfr12s^{-1}}
  +\varepsilon_T\positivepart{\sqrt{-\Delta}-\mfr12(ds)^{-1}}\right)\chi_uQ_u
  \\&&+C(s^{-2}+d^{-3})(ds)^{-2}s^{M}Q_u.
\end{eqnarray*}
We have here used the fact that the volume of $\{u'
| B(u)\cap\hB(u')\not=\emptyset\}$ is bounded by $C d^{-3}$.
\end{proof}
We shall throughout the rest of the paper assume that the conditions in Lemmas~\ref{lm:Tu} and \ref{lm:Tuu'} are satisfied.
\begin{condition}\label{cond:2}
In terms of the integer $M$ appearing in the definition \eqref{zeta new and eq:defchi} of $\chi$ we have
\begin{equation}\label{eq:Condloc}
 0<\varepsilon_T<\delta,\quad 
\hbox{and}\quad
  (s^{-2}+d^{-3})(ds)^{-2}s^{M}\leq \delta ,
\end{equation}
where $\delta$ was introduced in 
Condition~\ref{cond:1} to ensure adequate smallness of relevant
parameters. This shows that we need $M\geq 8$.
\end{condition}
A stronger upper bound on $\varepsilon_T$ will be required in Condition~\ref{cond:epsilon3lower bound}.
\subsection{Localization of the Total Energy}
We define the localized Hamiltonian, $H_{u}$, in a ``large'' box $B(u)$ to be
\begin{align}
H_{u}=&\sum_{i=1}^N\hspace{-0.1 cm}\bigg (\hspace{-0.1 cm}(1-\varepsilon_0)\widehat{\cT}_{u,i}
-\rhozero \hspace{-0.1 cm}\int\limits_{}\hspace{-0.1 cm}w_{B(u)}(x_i,y)\dd y\bigg )
+\hspace{-0.3 cm}\sum_{1\leq i<j\leq N}\hspace{-0.4 cm}
w_{B(u)}(x_i,x_j)+\frac12\rhozero ^2\hspace{-0.15 cm}\iint\limits_{}\hspace{-0.1 cm}w_{B(u)}(x,y)\dx\dy, 
\label{eq:Hu}
\end{align}
where $0\leq\varepsilon_0\leq 1/2$ is a parameter to be determined later. The subscript $i$, as usual, refers to the $i^{\rm th}$ particle.  
Recall that $\hcT_u$ was defined in \eqref{eq:Tauhat} and $w_{B(u)}$
was defined in \eqref{eq:wBu}.
The Hamiltonian $H_u$ is defined as a quadratic form on 
the Bosonic Fock space $\FF_B (H_0^1(\Lambda))$.

The localized Hamiltonian in a ``small'' box $B(u,u')$ is 
\begin{eqnarray}
  H_{uu'}&=&\sum_{i=1}^N\left((1-\varepsilon_0)\cT_{uu',i}
    -\rhozero \int\limits_{}w_{B(u,u')}(x_i,y)\dy\right)
  +\sum_{1\leq i<j\leq N}
  w_{B(u,u')}(x_i,x_j)\nonumber\\&&\label{eq:Huu}
  +\frac12\rhozero ^2\iint\limits_{}w_{B(u,u')}(x,y)\dx\dy,
\end{eqnarray}
where $w_{B(u,u')}$ was defined in \eqref{eq:wBu} and 
$\cT_{uu'}$ was defined in \eqref{eq:Tuu}.
The results of Proposition~\ref{prop:vloc}, \eqref{Neumann Laplacian upper bound} and Lemmas~\ref{lm:Tu} and \ref{lm:Tuu'} can be combined to give our final localization estimate.
\begin{thm}[Main localization inequalities]\label{thm: Main localization}~\\If $M\geq 8$, Condition~\ref{cond:1} and Condition~\ref{cond:2} are satisfied, then we have for all 
  $0\leq\varepsilon_0\leq1/2$
  \begin{equation}\label{eq:HrhoHu}
    H_{\rhozero}\geq \int_{\ell\inv \Lambda '} \Big(-\varepsilon_0\Delta_{u}^{\cN}+H_u
    \Big)\dd u,\quad
    \mbox{and, for all $u\in \R^3$,}\quad
    H_u-\frac{b}{2}\ell^{-2}Q_u\geq \int_{\R^3} H_{uu'} \dd u',
  \end{equation}
where $\Lambda'=\Lambda+[-\mfr{\ell}{2},\mfr{\ell}{2}]^3$ and $H_\rho$ was introduced in \eqref{background Hamiltonian}.
\end{thm}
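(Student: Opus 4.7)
The plan is to assemble the theorem as a routine combination of the three localization ingredients developed so far: the potential localization in Proposition~\ref{prop:vloc}, the kinetic identity \eqref{Neumann Laplacian upper bound} together with the large-box kinetic localization in Lemma~\ref{lm:Tu}, and the small-box kinetic localization in Lemma~\ref{lm:Tuu'}. The key bookkeeping step is to split the one-particle kinetic energy $-\Delta$ using the parameter $\varepsilon_0$ so that one piece supplies the Neumann Laplacian appearing on the right-hand side of \eqref{eq:HrhoHu} and the other piece supplies the localized kinetic operator inside each $H_u$.

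For the first inequality I would start from the definition \eqref{background Hamiltonian} of $H_{\rhozero}$ on $\FF_B(H_0^1(\Lambda))$ and write, as quadratic forms,
\begin{align*}
\sum_{j=1}^N(-\Delta_j)
&=\varepsilon_0\sum_{j=1}^N(-\Delta_j)+(1-\varepsilon_0)\sum_{j=1}^N(-\Delta_j)\\
&\geq\varepsilon_0\int_{\R^3}\sum_{j=1}^N(-\Delta_{u,j}^{\cN})\,\dd u
+(1-\varepsilon_0)\int_{\R^3}\sum_{j=1}^N\hcT_{u,j}\,\dd u,
\end{align*}
where the first integral uses \eqref{Neumann Laplacian upper bound} on $H_0^1(\Lambda)$ and the second uses Lemma~\ref{lm:Tu}. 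For $\varphi$ supported in $\Lambda$ the operators $-\Delta_{u}^{\cN}$ and $\hcT_u$ (built out of $\chi_u$ and $Q_u$, each supported in $B(u)$) both vanish whenever $B(u)\cap\Lambda=\emptyset$, i.e., whenever $u\notin\ell^{-1}\Lambda'$, so the integration range may be restricted to $\ell^{-1}\Lambda'$ without loss. Replacing the constant $\frac12\rhozero^2|\Lambda|\int v_R$ by $\frac12\rhozero^2|\Lambda''|\int v_R$ (which is harmless in the thermodynamic limit and already noted on page~\pageref{def: Lambda'}) and applying the first assertion of Proposition~\ref{prop:vloc} to the remaining potential and chemical-potential terms localizes the whole potential part of $H_{\rhozero}$ as an integral over $\ell^{-1}\Lambda'$ of the potential part of $H_u$. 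Matching kinetic and potential pieces recovers the first inequality of the theorem.

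For the second inequality I would argue particle-wise. Proposition~\ref{prop:vloc}'s second identity turns the potential part of $H_u$ into an integral over $u'\in\R^3$ of the potential part of $H_{uu'}$. For the kinetic part I multiply the bound in Lemma~\ref{lm:Tuu'} by $(1-\varepsilon_0)$, sum over particles, and obtain
\[
(1-\varepsilon_0)\sum_{j=1}^N\hcT_{u,j}
\;\geq\;(1-\varepsilon_0)\tfrac{b}{2}\ell^{-2}\sum_{j=1}^N Q_{u,j}
+(1-\varepsilon_0)\int_{\R^3}\sum_{j=1}^N\cT_{uu',j}\,\dd u'.
\]
Combining with the localized potential and using that $(1-\varepsilon_0)\geq\tfrac12$ absorbs the prefactor on the $Q_u$-term into $\tfrac{b}{2}\ell^{-2}Q_u$ (up to a harmless redefinition of $b$, which is the sense in which the constant $\tfrac{b}{2}$ appears in the statement) yields $H_u-\tfrac{b}{2}\ell^{-2}Q_u\geq\int H_{uu'}\,\dd u'$.

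I do not anticipate any substantial obstacle: the two input lemmas have already done the analytic work (the Taylor/Fourier control of $\chi\ast\chi$, the Neumann-gap comparison, and the operator-norm bound on the commutator-type error $D$). The only points that require care in writing up are (i) justifying that the $u$-integrand of the kinetic localization really vanishes off $\ell^{-1}\Lambda'$ for functions in $H_0^1(\Lambda)$, which is a support argument for $\theta_u,\chi_u,Q_u$, and (ii) correctly booking the constants in the passage from Lemma~\ref{lm:Tuu'} to the $\tfrac{b}{2}\ell^{-2}Q_u$ term after the $(1-\varepsilon_0)$ factor has been included.
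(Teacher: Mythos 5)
Your proposal is correct and follows exactly the route the paper intends: the paper gives no separate proof of this theorem but simply combines Proposition~\ref{prop:vloc}, the identity \eqref{Neumann Laplacian upper bound}, Lemma~\ref{lm:Tu} and Lemma~\ref{lm:Tuu'} with the $\varepsilon_0$-split of the kinetic energy and the support argument restricting the $u$-integral to $\ell^{-1}\Lambda'$, precisely as you do. The only cosmetic point is that your second inequality emerges with gap coefficient $(1-\varepsilon_0)\tfrac{b}{2}\geq\tfrac{b}{4}$ rather than literally $\tfrac{b}{2}$, which, as you note, is immaterial since only the order of the gap term $\ell^{-2}n_+$ is used later in the paper.
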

We introduce the notation
\begin{align}\label{eq:H_B and T_B}
H_B=\left\{\begin{array}{ll} H_u,&\hbox{if } B=B(u)\\
H_{uu'},&\hbox{if } B=B(u,u')
\end{array}\right.
\qquad \textrm{and} \qquad
\cT_B=\left\{\begin{array}{ll}(1-\varepsilon_0)\hcT_u,&\hbox{if
}B=B(u)\\
(1-\varepsilon_0)\cT_{uu'},&\hbox{if
}B=B(u,u').
\end{array}\right. 
\end{align}
The reason for the above $(1-\varepsilon_0)$-term is that we still need some kinetic energy in the end of this paper when we want to apply Lemma \ref{lm:Q3 bound}. \label{epsilon0mentioned:1}With the corresponding notations for $w_B$ in \eqref{eq:wBu} the box Hamiltonian can be written 
\begin{equation}
  H_B=\sum_{i=1}^N\left(\cT_{B,i}-\rhozero \int_{}w_B(x_i,y)\dy\right)+\sum_{1\leq
    i<j\leq N}
  w_B(x_i,x_j)+\frac12\rhozero ^2\iint\limits_{}w_B(x,y)\dx\dy.
\end{equation}

\section{Energy in a Single Box}\label{Energy in a Single Box}

In this section we will study the energy in a single box $B$, i.e.,
the ground state energy of the Hamiltonian $H_B$.  We denote by $P_B$
the orthogonal projection onto the characteristic function of $B$ and
by $Q_B$ the projection orthogonal to constant functions in $B$. In particular, $P_B+Q_B=\one_B$ is the projection onto the subspace of functions
supported on $B$. 
We define the operators 
$$
n=\sum_{i=1}^N{\one}_{B,i},\quad n_0=\sum_{i=1}^N P_{B,i},\quad n_+=\sum_{i=1}^NQ_{B,i}.
$$
Here $n$ represents the number of particles in the box $B$, $n_0$ the
number of particles in the constant function, which we will refer to as the \emph{condensate particles}, and $n_+$ the number of particles not
in the condensate, which we will refer to as the {\it excited particles}.  We have $n=n_++n_0$.

The particle number operator $n$ commutes with the box operator
$H_B$, but $n_+$ and $n_0$ do not commute with $H_B$.  In our
discussion below we may assume that $n$ is a parameter, i.e., we restrict
to eigenspaces for the operator $n$. We shall not distinguish
between the operator $n$ and its eigenvalues.

We give a simple a priori bound on $n_+$, which will be improved later.
\begin{lm}[Simple bound on the ground state energy of $H_B$ and on $n_+$]\label{lm: simmple bound on H_B and n_+}~\\
The ground state energy $E_B$ of $H_B$ satisfies
\begin{align}
    0>E_B\geq -Cn\rhozero |B|\cU_B\label{eq: simmple bound on H_B and n_+}
\end{align}
with $\cU_B$ given in (\ref{eq:UB}).
Moreover, if $B=B(u,u')$, for any $n$-particle state with expectation
value $\langle
H_B\rangle\leq
\frac12\rhozero ^2\iint_{}w_B(x,y)\dx\dy=\rhozero ^2|B|^2\cU_B$
we 
have
\begin{align}
\langle n_+\rangle\leq C(1-\varepsilon_0)^{-1}\varepsilon_T^{-1}\rhozero  a(d\ell)^2  n\max \chi_B^2.\label{n_+ is small}
\end{align}
\end{lm}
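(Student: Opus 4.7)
The proof splits into three independent pieces: (i) the operator lower bound $H_B\geq -Cn\rhozero\abs{B}\cU_B$, (ii) strict negativity of $E_B$, and (iii) the estimate on $\langle n_+\rangle$ in a small box. All three rely only on tools already in place: the pointwise-to-average comparison $\max_x\int w_B(x,y)\dy\leq C\abs{B}\cU_B$ from \eqref{eq:wBUB1}, the Neumann gap built into $\cT_{uu'}$ in \eqref{eq:Tuu}, and the dimensional estimate $\cU_B\leq Ca\abs{B}\inv\max\chi_B^2$ from \eqref{eq:wBUB3}.

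For the lower bound I would simply discard the nonnegative contributions from the kinetic energy and the two-body interaction (recall $\cT_{B,i}\geq 0$ and $w_B\geq 0$), leaving
\[
H_B\geq -\rhozero\sum_{i=1}^N\int w_B(x_i,y)\dy+\rhozero^2\abs{B}^2\cU_B.
\]
Applying \eqref{eq:wBUB1} particle by particle makes the sum $\geq -Cn\rhozero\abs{B}\cU_B$, and the positive background term may then be dropped. For strict negativity of $E_B$ I would test against the trial state $\Psi_n=(n!)^{-1/2}\bigl(a\ad(\abs{B}^{-1/2}\one_B)\bigr)^n\Omega$ describing $n$ condensed particles in the normalized constant mode $\phi$ on $B$. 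Since $Q_B\phi=0$ and $-\Delta_u^\cN\phi=0$, every summand of $\cT_B$ annihilates $\phi$, so the kinetic contribution vanishes. A direct Wick computation, using $\iint w_B\,\dx\dy=2\abs{B}^2\cU_B$, then gives
\[
\langle H_B\rangle_{\Psi_n}=\cU_B\bigl[(n-\rhozero\abs{B})^2-n\bigr],
\]
which is strictly negative for the positive integer $n$ closest to $\rhozero\abs{B}$ (or for $n=1$ when $\rhozero\abs{B}<1$), and hence $E_B<0$.

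For the bound on $\langle n_+\rangle$ on a small box $B=B(u,u')$ I would exploit the Neumann gap built into \eqref{eq:Tuu}: its first summand is precisely $\varepsilon_T(1+\pi^{-2})\inv(d\ell)^{-2}Q_B$, so
\[
(1-\varepsilon_0)\sum_{i=1}^N\cT_{B,i}\geq c(1-\varepsilon_0)\varepsilon_T(d\ell)^{-2}n_+
\]
for some universal $c>0$. Combining this with the discarding procedure of step (i), but this time retaining the positive background self-energy, yields the operator inequality
\[
H_B\geq c(1-\varepsilon_0)\varepsilon_T(d\ell)^{-2}n_+ -Cn\rhozero\abs{B}\cU_B+\rhozero^2\abs{B}^2\cU_B.
\]
Taking expectation values and using the hypothesis $\langle H_B\rangle\leq\rhozero^2\abs{B}^2\cU_B$ cancels the background, leaving $\langle n_+\rangle\leq C(1-\varepsilon_0)\inv\varepsilon_T\inv(d\ell)^2\rhozero n\abs{B}\cU_B$. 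Inserting $\abs{B}\cU_B\leq Ca\max\chi_B^2$ from \eqref{eq:wBUB3} produces exactly the asserted estimate \eqref{n_+ is small}.

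There is no substantive obstacle; each step reduces to a single application of an inequality already established. The only minor point deserving explicit verification is the elementary existence of a positive integer $n$ for which $(n-\rhozero\abs{B})^2-n<0$, which never fails when $\rhozero\abs{B}>0$.
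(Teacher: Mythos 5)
Your proposal is correct and follows essentially the same route as the paper: drop the nonnegative kinetic and pair terms and use \eqref{eq:wBUB1} for the lower bound, test with the all-condensate trial state (your unified formula $[(n-\rhozero\abs{B})^2-n]\,\cU_B$ agrees with the paper's two cases) for negativity, and extract $\langle n_+\rangle$ from the Neumann-gap term of $\cT_{uu'}$ combined with $\max_x\int w_B(x,y)\dy\leq Ca\max\chi_B^2$. No gaps; the only differences are cosmetic (Fock-space notation for the trial state and applying \eqref{eq:wBUB3} after rather than before the cancellation of the background term).
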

\begin{proof}
The upper bound on $E_B$ follows by using a trial state in which all
particles are in the condensate, i.e., an eigenstate of $n_+$ with eigenvalue $0$. 
If there are $n$ particles, we find that the expectation
value in this state is 
$$
\langle H_B\rangle =\begin{cases}  [-\abs{B}^{-1}\rho+\frac{1}{2}\rho^2] \iint w_B(x,y)\dx\dy&\text{for }n=1,\\
\frac12\left[(n-\rhozero |B|)^2-n\right]|B|^{-2} 
\iint\limits_{}w_B(x,y)\dx\dy & \text{for }n\geq 2.
\end{cases}
$$

We can choose $n=1$ if $\rho \abs{B}<2$ and otherwise choose $n\geq2$ such that $|n-\rhozero |B||<1$. Hence $\langle
H_B\rangle<0$ and thus the ground state energy of $H_B$ is negative.
The lower bound on $E_B$ follows immediately from \eqref{eq:wBUB1} since $\cT_B$ and $w_B$ are non-negative.

To obtain \eqref{n_+ is small}: If an $n$-particle state satisfies $\langle
H_B\rangle\leq
\frac12\rhozero ^2\iint\limits_{}w_B(x,y)\dx\dy$, we have
\begin{align*}
0\,\geq&\, \sum_{i=1}^N\big\langle \cT_{B,i}-\rhozero \int\limits_{}
w_B(x_i,y)\dy\big\rangle
\geq (1-\varepsilon_0)\varepsilon_T(1+\pi^{-2})^{-1}(d\ell)^{-2}\langle
n_+\rangle-C\rhozero  a n\max\chi_B^2,
\end{align*}
where we have used (\ref{eq:Tuu}) and that $\underset{x}{\max}\intt{}{}\omega_B(x,y)\dd y\leq Ca\max\chi_B^2$ by Lemma~\ref{lm:wBUB}.
\end{proof}
\subsection{The Negligible (Non-Quadratic) Parts of the Potential}
We treat the potential energy terms in $H_B$ according to how many
excited particles they involve. We write $\one_B=P_B+Q_B$ and we expand
and classify the terms according to the number of $Q$-factors, no-$Q$,
$1$-$Q$, \ldots, $4$-$Q$. In the following we will simply write
$P_B=P$ and $Q_B=Q$.\\
\bigskip
\vbox{\noindent{\bf no-$\cQ$ terms:} 
\begin{eqnarray}
\cQ_0&:=&-\sum_i\rhozero  P_i\int\limits_{} w_B(x_i,y)\dy P_i+
\sum_{i<j}P_iP_jw_B(x_i,x_j)P_iP_j+\frac12\rhozero ^2\iint\limits_{}w_B(x,y)\dx\dy
\nonumber\\
&\,=&\left[(n_0-\rhozero |B|)^2-n_0\right]\,\cU_B
=\left[(n-\rhozero |B|)^2-2(n-\rhozero |B|)n_++n_+^2-n_0\right]\,\cU_B,
\label{eq:noQ}
\end{eqnarray}
where we have used the notation (\ref{eq:UB}).}
\bigskip
\noindent{\bf 1-$\cQ$ terms:}
\begin{eqnarray}
\cQ_1&:=&\sum_{i,j}P_iP_jw_B(x_i,x_j)Q_iP_j-\sum_i\rhozero P_i\int w_B(x_i,y)\dy Q_i+
\hbox{h.c.}\nonumber\\
&\,=&\sum_iP_i\int w_B(x_i,y)\dy Q_i(n_0|B|^{-1}-\rhozero )+
\hbox{h.c.}\nonumber\\
&\,=&\cQ'_1+\cQ''_1, \label{eq:Q1}
\end{eqnarray}
where
\begin{equation}\label{eq:Q1'}
\cQ'_1:=(n-\rhozero |B|)|B|^{-1}\left(\sum_iP_i\int w_B(x_i,y)\dy Q_i+
\sum_iQ_i\int w_B(x_i,y)\dy P_i\right)
\end{equation}
and
\begin{equation}\label{eq:Q1''}
\cQ''_1:=-|B|^{-1}\sum_iP_i\int
w_B(x_i,y)\dy Q_i n_+-
|B|^{-1}n_+\sum_iQ_i\int
w_B(x_i,y)\dy P_i.
\end{equation}
\begin{lm}[Estimates on $\cQ_1$ ]\label{lm:Q1}
For all $\varepsilon_1',\varepsilon_1''>0$
\begin{align*}
\cQ'_1\geq -|n-\rhozero |B||(\varepsilon_1'n_0+\varepsilon'^{-1}_1Cn_+)\,\cU_B\qquad \textrm{and} \qquad
\cQ''_1\geq -(\varepsilon_1''(n_++1)n_0+C\varepsilon_1''^{-1}n_+^2)\,\cU_B.
\end{align*}
\end{lm}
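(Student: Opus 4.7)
For both estimates the central tool is an operator Cauchy--Schwarz inequality applied separately in the two ``channels'' of $\cQ_1$. For $\cQ_1'$ I plan to use, for each particle $i$ and each $\varepsilon>0$, the one-body estimate
\begin{equation*}
\pm(P_iT_iQ_i + Q_iT_iP_i) \leq \varepsilon P_iT_iP_i + \varepsilon^{-1}Q_iT_iQ_i,\qquad T_i:=\int w_B(x_i,y)\dy \geq 0,
\end{equation*}
which follows from expanding $(\sqrt{\varepsilon}P_i\pm\varepsilon^{-1/2}Q_i)T_i(\sqrt{\varepsilon}P_i\pm\varepsilon^{-1/2}Q_i)^*\geq 0$. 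Because $P_i$ projects onto the constant $|B|^{-1/2}\one_B$, we have $P_iT_iP_i = (|B|^{-1}\iint w_B)P_i = 2|B|\cU_B P_i$, so summing over $i$ gives $2|B|\cU_B\, n_0$. Lemma~\ref{lm:wBUB}, Eq.~\eqref{eq:wBUB1}, gives $T_i\leq C|B|\cU_B$, hence $\sum_iQ_iT_iQ_i\leq C|B|\cU_B\, n_+$. Multiplying by the scalar $(n-\rhozero|B|)|B|^{-1}$ (which may be of either sign on the fixed-$n$ sector), replacing it by its absolute value, and renaming $2\varepsilon\mapsto \varepsilon_1'$ yields the stated lower bound on $\cQ_1'$.

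For $\cQ_1''$ I introduce $A:=\sum_iP_iT_iQ_i$, so that $\cQ_1''=-|B|^{-1}(An_+ + n_+A^*)$. The Cauchy--Schwarz estimate
\begin{equation*}
-(UV+V^*U^*)\geq -\varepsilon UU^* - \varepsilon^{-1}V^*V
\end{equation*}
with $U=A$ and $V=|B|^{-1}n_+$ gives $\cQ_1''\geq -\varepsilon AA^* - \varepsilon^{-1}|B|^{-2}n_+^2$. The crux is then to show the operator bound
\begin{equation*}
AA^* \leq C|B|^2\cU_B^{2}\, n_0(n_++1).
\end{equation*}
For this I would pass to second quantization: $A$ is the one-body operator associated with $PTQ$, where $T$ is multiplication by $\int w_B(\cdot,y)\dy$ and $u_0=|B|^{-1/2}\one_B$, so $A=a_0^{*}\,a(\phi)$ with $\phi:=QTu_0$. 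Since $\phi\perp u_0$, $a_0^{(*)}$ commutes with $a(\phi)^{(*)}$, and the CCR $a(\phi)a(\phi)^*=a(\phi)^*a(\phi)+\|\phi\|^2$ yield $AA^* = n_0\,a(\phi)^*a(\phi) + \|\phi\|^2 n_0$. Combined with $a(\phi)^*a(\phi)\leq \|\phi\|^2 n_+$ and the estimate $\|\phi\|^2\leq \|Tu_0\|^2 = |B|^{-1}\int_B T(x)^2\dx \leq C|B|^2\cU_B^{2}$ (the last step again by Lemma~\ref{lm:wBUB}), this produces the asserted bound on $AA^*$. Choosing $\varepsilon = \varepsilon_1''/(C|B|^2\cU_B)$ then gives exactly the stated form.

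I expect the main obstacle to be the clean extraction of the factor $n_0(n_++1)$ from $AA^*$: one must carefully arrange the commutators so that the two number operators $n_0$ and $a(\phi)^*a(\phi)$ appear as a product (which requires $\phi\perp u_0$) and that the residual commutator contributes only a further factor of $n_0$ with coefficient $\|\phi\|^2$. Tracking the correct power of $\cU_B$ through $\|Tu_0\|^2\leq C|B|^2\cU_B^{2}$ is essential to match the single factor $\cU_B$ appearing on the right-hand side of the lemma; the remaining manipulations are standard.
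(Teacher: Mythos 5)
Your argument is correct. For $\cQ_1'$ it coincides with the paper's own proof: the same weighted Schwarz inequality with $T_i=\int w_B(x_i,y)\dy\geq 0$, the identity $P_iT_iP_i=2|B|\,\cU_B P_i$, and the bound \eqref{eq:wBUB1} for the $Q$-part. For $\cQ_1''$ you take a genuinely (if modestly) different route. The paper first pulls the number operator through, using $\sum_iP_iT_iQ_i\, n_+=(n_++1)\sum_iP_iT_iQ_i$, writes $\cQ_1''=-|B|^{-1}\sqrt{n_++1}\sum_iP_iT_iQ_i\sqrt{n_+}+\mathrm{h.c.}$, and then applies a Schwarz inequality with the potential itself as weight, which yields $-\bigl(\varepsilon_1''(n_++1)\sum_iP_i+C\varepsilon_1''^{-1}n_+\sum_iQ_i\bigr)\cU_B$ directly; only the linear bound $\max_x\int w_B(x,y)\dy\leq C|B|\,\cU_B$ is needed. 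You instead separate $A=\sum_iP_iT_iQ_i$ from $n_+$ by an operator Cauchy--Schwarz, which obliges you to control $AA^*$; your second-quantized computation $A=a_0^*a(\phi)$ with $\phi=QTu_0\perp u_0$, giving $AA^*=n_0\bigl(a(\phi)^*a(\phi)+\|\phi\|^2\bigr)\leq\|\phi\|^2\,n_0(n_++1)$, is valid and plays exactly the role of the paper's shift identity (both encode that $A$ lowers $n_+$ by one, which is where the $+1$ comes from). The remaining ingredient, $\|\phi\|^2\leq\|Tu_0\|^2=|B|^{-1}\int_B T^2\leq C|B|^2\cU_B^2$, again follows from \eqref{eq:wBUB1} together with $\int_B T=2|B|^2\cU_B$, and your choice $\varepsilon=\varepsilon_1''/(C|B|^2\cU_B)$ (legitimate, since $\cU_B>0$ because $v_1(0)>0$) reproduces the stated constants. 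The paper's version is slightly shorter and stays first-quantized; yours is more modular and meshes naturally with the $b_k$-formalism used later in the paper, at the price of needing the quadratic bound on $T$ rather than only the sup bound.
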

\begin{proof}
We prove first the bound on $\cQ''_1$. We have 
$$
\cQ''_1=-\sqrt{n_++1}\sum_iP_i\int
w_B(x_i,y)\dy Q_i \sqrt{n_+}\,|B|^{-1}+\hbox{h.c.}
$$
since for any self-adjoint operator $A$ we get $ \sum_iP_iA_i Q_i n_+=(n_++1)\sum_iP_iA_iQ_i$ and hence
$\sum_iP_iA_i Q_i \sqrt{n_+}=\sqrt{n_++1}\sum_iP_iA_iQ_i$.

Since $w_B\geq0$, we obtain from a Schwarz inequality\label{application of Young's inequality} and
Lemma~\ref{lm:wBUB} that 
$$
\cQ''_1\geq
-\left(\varepsilon_1'' (n_++1)\summ{i}{}P_i+C\varepsilon_1''^{-1}n_+\summ{i}{}Q_i\right)\cU_B .
$$
The estimate on $\cQ'_1$ follows by applying a similar Schwarz inequality. 
\end{proof}
\medskip

\noindent{\bf 2-$\cQ$ terms:}
There are two kinds of 2-$\cQ$ terms. There are terms which contribute to the energy to the order of interest. They will primarily be treated together with the quadratic Hamiltonian later on. The remaining 2-$\cQ$ terms are negligible error terms that we will estimate here.\\
\indent The 2-$\cQ$  terms that will later appear in the quadratic Hamiltonian are
\begin{eqnarray}\label{def:Q2'}
\hspace{-0.4 cm}\cQ'_2&:=&\sum_{i,j}P_iQ_jw_B(x_i,x_j)P_jQ_i+\sum_{i<j}\left(
Q_iQ_jw_B(x_i,x_j)P_jP_i+P_jP_iw_B(x_i,x_j)Q_iQ_j\right)\!.
\end{eqnarray}
We shall however give an a priori estimate on these terms already
now. This estimate will be used in Case II in the proof of Lemma~\ref{lm: napriori}. Using the estimate $0\leq w_B\leq
CaR^{-3}\max\chi_B^2$ and the Schwarz inequality twice gives that  
\begin{eqnarray}
  \cQ'_2&\geq&-\sum_{i,j}P_iQ_jw_B(x_i,x_j)P_iQ_j-\sum_{i<j}\left(
    2Q_iQ_jw_B(x_i,x_j)Q_jQ_i+\frac12P_jP_iw_B(x_i,x_j)P_iP_j\right)
  \nonumber\\
  &\geq& -n_0|B|^{-1}\sum_iQ_i\int w_B(x_i,y)\dy Q_i
  -Cn_+^2aR^{-3}\max\chi_B^2
  -\frac12\sum_{i<j}P_jP_iw_B(x_i,x_j)P_iP_j\nonumber\\
  &\geq&-Cnn_+\,\cU_B-Cn_+^2aR^{-3}\max\chi_B^2-\frac12n^2\,\cU_B.
  \label{eq:Q2apriori}
\end{eqnarray}
In the last inequality we have used that $\summ{i}{}Q_i\int \omega_B(x_i,y)\dd y Q_i\leq C\abs{B}\cU_Bn_+$ by \eqref{eq:wBUB1} since $\summ{i}{}Q_i\int \omega_B(x_i,y)\dd y Q_i$ commutes with $n_+$.
The negligible 2-$\cQ$  terms are estimated in the same way
\begin{align}%\label{eq:Q2''}
\hspace{-0.2 cm}\cQ''_2&:=-\sum_i\rhozero Q_i\int
w_B(x_i,y)\dy Q_i+\sum_{i,j}Q_iP_jw_B(x_i,x_j)P_jQ_i\nonumber\\
&\,=(n_0-\rhozero |B|)|B|^{-1}\sum_iQ_i\int w_B(x_i,y)\dy Q_i\nonumber\\
&\,=(n-\rhozero |B|-n_+)|B|^{-1}\sum_iQ_i\int w_B(x_i,y)\dy Q_i
\geq -C\left([\,\rhozero |B|-n\,]_+n_++n_+^2\right)\cU_B.\label{eq:Q2second}
\end{align}

\bigskip

\noindent{\bf 3-$\cQ$ terms:} For all $\varepsilon_3>0$
\begin{eqnarray}
\cQ_3&:=&\sum_{i,j}P_jQ_iw_B(x_i,x_j)Q_iQ_j+\hbox{ h.c. }\nonumber\\
&\,\geq& -\sum_{i\ne
	j}\left(2\varepsilon_3^{-1}P_jQ_iw_B(x_i,x_j)Q_iP_j
+\frac{\varepsilon_3}{2}Q_jQ_iw_B(x_i,x_j)Q_iQ_j\right)\nonumber\\
&\,\geq&-C\varepsilon_3^{-1}nn_+\,\cU_B\,-\varepsilon_3\sum_{i<
	j}Q_jQ_iw_B(x_i,x_j)Q_iQ_j.\label{eq:Q3}
\end{eqnarray}
The first inequality uses a Schwarz inequality, while the second
inequality uses Lemma~\ref{lm:wBUB} and the fact that $n_0\leq n$. 
Note that the above estimates are given as lower bounds, but that we of course also could have stated them as two-sided bounds.
The last term above can be absorbed into the positive 4-$\cQ$ term if $\varepsilon_3\leq 1$.
We will do this initially, but at a later stage in the proof we have to control the first term in \eqref{eq:Q3} by choosing $\varepsilon_3\gg 1$. At that time we will have good control over the number of excited particles, $n_+$, in the large box. The second term in \eqref{eq:Q3} will then be controlled by applying Lemma \ref{lm:Q3 bound}\label{explanation: Q4 bound}, where we apply the kinetic energy term $-\varepsilon_0\Delta_{u}^{\cN}$ that we have saved in Theorem~\ref{thm: Main localization} for exactly this purpose.\\\\
\bigskip
\noindent{\bf 4-$\cQ$ term:} This is the positive term 
\begin{equation}
\cQ_4:=\sum_{i<j}Q_jQ_iw_B(x_i,x_j)Q_iQ_j\label{eq:Q4}
\end{equation}
and for a lower bound
it can be ignored or used to control other errors.
We will only need an estimate on the 4-$\cQ$ term in a large box
$B=B(u)$ as explained above. 
\subsection{The Quadratic Hamiltonian} 
We can write the box Hamiltonian as
\begin{equation}\label{eq: box decomposition}
H_B=\sum_{i=1}^N\cT_{B,i}+\cQ_0+\cQ'_1+\cQ''_1+\cQ'_2+\cQ''_2+\cQ_3+\cQ_4.
\end{equation}
We have estimated all terms except the {\it quadratic part}
$\sum_{i=1}^N\cT_{B,i}+\cQ'_2$.

We first consider the localized kinetic energy.\\
\indent For $B=B(u)$, i.e., a large box we have from \eqref{eq:Tauhat} and \eqref{eq:H_B and T_B} that 
\begin{align}\label{eq:TB1}
\cT_B={}&(1-\varepsilon_0)\varepsilon_{T} (d\ell
)^{-2}\frac{-\Delta_u^\cN}{-\Delta_u^\cN+(d\ell
	)^{-2}}+(1-\varepsilon_0)b\ell^{-2}Q+
Q\chi_B\tau_B(-\Delta)\chi_B Q
\intertext{with}
\label{eq:tauB1}
\tau_B(k^2)={}&(1-\varepsilon_0)(1-\varepsilon_{T})\positivepart{|k|-\mfr12
(s\ell )^{-1}}\;
+(1-\varepsilon_0)\varepsilon_{T}\positivepart{|k|-\mfr12
(ds\ell )^{-1}}.
\end{align}
For $B=B(u,u')$, i.e., a small box we have from \eqref{eq:Tuu} and \eqref{eq:H_B and T_B} that
\begin{equation}\label{eq:TB2}
\cT_B= (1-\varepsilon_0)\varepsilon_T(1+\pi^{-2})^{-1}(d\ell  )^{-2}Q+
Q\chi_B\tau_B(-\Delta)\chi_B Q
\end{equation}
with \label{epsilon0mentioned:2}
\begin{equation}\label{eq:tauB2}
\tau_B(k^2)=(1-\varepsilon_0)\positivepart{|k|-
(ds\ell )^{-1}}.
\end{equation}
The interesting part of $\cT_B$ is the term of form
$Q\chi_B\tau_B(-\Delta)\chi_BQ$. The other terms, which are positive,
are only used to control errors. We put them aside for the moment and define the
{\it quadratic Hamiltonian}
\begin{equation}\label{eq:Hquad}
  H_{\rm Quad}=\sum_{i=1}^{N}(Q\chi_B\tau_B(-\Delta)\chi_BQ)_i+\cQ'_2.
\end{equation}
\noindent At the end of this paper it will be useful to treat the term $\cQ'_1$
	together with $H_{\rm Quad}$.\\
To handle $H_{\rm Quad}$, we use the formalism of second quantization. 
For all $k\in\R^3$ we define the operator 
\begin{equation}\label{eq:bannihilation}
b_k=a^*_0a(Q(e^{ikx}\chi_B)),
\end{equation}
where $a(Q(e^{ikx}\chi_B))$ is the operator that annihilates an exited particle
in the state given by the function
$Q(e^{ikx}\chi_B)\in L^2(B)$, and $a_0$ is the
operator that creates a particle in the condensate. 
These two operators commute. Note that $b_k$ is a bounded operator when restricted to a subspace of finite $n$.
Its adjoint is 
\begin{equation}
\label{eq:bcreation}
b_k^*=a(Q(e^{ikx}\chi_B))^*a_0.
\end{equation}
Since $a_0^*$ commutes with $a(Q(e^{ikx}\chi_B))$, we have the commutation relations
\begin{equation}\label{eq:bkbk'com}
[b_k,b_{k'}]=0,\quad [b_k,b^*_{k'}]=a_0^*a_0
\big\langle Q(e^{ikx}\chi_B)\big|Q(e^{ik'x}\chi_B)\big\rangle-a(Q(e^{ik'x}\chi_B))^*a(Q(e^{ikx}\chi_B)),
\end{equation}
for all $k,k'\in\R^3$.
In particular, 
\begin{align}\label{eq:bkbkcom}
[b_k,b^*_{k}]&\leq a_0^*a_0\int \chi_B^2=n_0\int\chi_B^2.
\end{align}
The term $Q\chi_B\tau_B(-\Delta)\chi_BQ$ and its
second quantization can be written
\begin{eqnarray}
  Q\chi_B\tau_B(-\Delta)\chi_BQ&=&(2\pi)^{-3}\int_{\R^3}\tau_B(k^2)\left|Q(\chi_Be^{ikx})\right\rangle
  \left\langle Q(\chi_Be^{ikx})\right| \dd k\nonumber\\
  \stackrel{\rm 2^{nd}\,quant}
  {\longrightarrow}&&(2\pi)^{-3}\int_{\R^3}\tau_B(k^2)a(Q(\chi_Be^{ikx}))^*
  a(Q(\chi_Be^{ikx}))\dd k\nonumber\\
  &\geq&  (2\pi)^{-3}\int_{\R^3}\tau_B(k^2)a(Q(\chi_Be^{ikx}))^*\frac{a_0a_0^*}{n}
  a(Q(\chi_Be^{ikx}))\dd k\nonumber\\&=&(2\pi)^{-3}n^{-1}\int_{\R^3}\tau_B(k^2)b_k^*b_k\dd k.\label{eq:qchitauchiq}
\end{eqnarray}
Here we used that $b_k\psi=0$ if $\psi$ is in the condensate allowing us to assume \label{nplus assumption}that $n_+\geq 1$ such that in fact $a_0a_0\ad\leq n$. Likewise we may write
\begin{eqnarray}
\cQ'_2&=&\frac12(2\pi)^{-3}|B|^{-1}\int
\widehat{W}(k)\left(b_k^*b_k+b_{-k}^*b_{-k}+b_k^*b^*_{-k}+b_kb_{-k}\right)\dd k
\nonumber\\&&
-(2\pi)^{-3}|B|^{-1}\int\widehat{W}(k)
a(Q(\chi_Be^{ikx}))^*a(Q(\chi_Be^{ikx}))\dd k\label{eq:Q2'2nd}
\end{eqnarray}
and
\begin{eqnarray}\label{eq:Q1'2nd}
  \cQ'_1&=&(n-\rhozero |B|)(2\pi)^{-3}|B|^{-3/2}\int\widehat{W}(k)\left(\overline{\widehat{\chi}_B(k)}b_k
  +\widehat{\chi}_B(k)b_k^*\right)\dd k.
\end{eqnarray}
The last term in $\cQ'_2$ may be written
\begin{equation}\label{eq:QZQ}
(2\pi)^{-3}|B|^{-1}\int\widehat{W}(k)
a(Q(\chi_Be^{ikx}))^*a(Q(\chi_Be^{ikx}))\dd k=
\sum_{i=1}^NQ_iZ_iQ_i,
\end{equation}
where $Z$ is the operator  
with integral kernel
\begin{equation}\label{eq:Zintegralkernel}
  k_Z(x,y)=|B|^{-1}\chi_B(x)W(x-y)\chi_B(y).
\end{equation}
\begin{lm}\label{lm:Z estimate}
The operator $Z$ on $L^2(\R^3)$ 
with integral kernel \eqref{eq:Zintegralkernel}
satisfies the bound
$$
\|Z\|\leq Ca\min\{R^{-3},|B|^{-1}\}\max\chi_B^2.
$$
In particular, if $B=B(u)$, we have
\begin{align}\label{large box Z bound}
	\norm{Z}\leq Ca\abs{B}\inv.
\end{align}
\end{lm}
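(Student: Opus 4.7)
The plan is to prove the bound by a standard Schur (or Young-type) estimate on the integral kernel $k_Z(x,y)$, noting that $\chi_B$ is supported on $B$ and then invoking two complementary pointwise/integral controls on $W$.

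First, since $k_Z$ is symmetric and non-negative, Schur's test gives
\[
\|Z\|\leq \max_x\int|k_Z(x,y)|\dy = |B|^{-1}\max_x\chi_B(x)\int W(x-y)\chi_B(y)\dy.
\]
I would then bound the inner integral in two different ways and take the minimum.

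For the $|B|^{-1}$ bound, estimate $\chi_B(y)\leq \max\chi_B$ and use $\int W\leq Ca$. The latter follows from \eqref{eq: W_b estimate}/\eqref{eq: W_s estimate} together with Condition~\ref{cond:1} (which guarantees $R\ll d\ell$, so the correction factors in the definition of $W$ are bounded), and the fact that $\int v_R = 8\pi a_1 \leq Ca$ from \eqref{a_1+a_2} and the notation remark. This yields
\[
\int W(x-y)\chi_B(y)\dy\leq Ca\max\chi_B,
\]
and hence $\|Z\|\leq Ca|B|^{-1}\max\chi_B^2$.

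For the $R^{-3}$ bound, use instead the pointwise estimate $W(x-y)\leq CaR^{-3}$, which follows because $v_R(x)=R^{-3}v_1(x/R)\leq C a R^{-3}$ by Assumption~\ref{V1 assumption} (with $\|v_1\|_\infty<\infty$) together with the comparability of $a$ and $\int v_R$, and again because \eqref{eq: W_b estimate}/\eqref{eq: W_s estimate} show $W$ is comparable to $v_R$. Combined with $\int\chi_B(y)\dy\leq|B|\max\chi_B$ (since $\chi_B$ is supported on $B$), this gives
\[
\int W(x-y)\chi_B(y)\dy\leq CaR^{-3}|B|\max\chi_B,
\]
so that $\|Z\|\leq CaR^{-3}\max\chi_B^2$. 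Taking the minimum of the two estimates yields the claimed bound.

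Finally, the statement \eqref{large box Z bound} for $B=B(u)$ is immediate: on a large box $|B|=\ell^3$ and $R\ll \ell$ by Condition~\ref{cond:1}, so $\min\{R^{-3},|B|^{-1}\}=|B|^{-1}$, while $\max\chi_B^2=\max\chi^2=C_M^2$ is bounded by a constant depending only on $M$. No step here is subtle; the only minor point to be careful about is that $\chi_B$ is compactly supported on $B$ (so the cut-off to $\one_B$ costs nothing), which was arranged already in \eqref{zeta new and eq:defchi} and \eqref{eq:chiB}.
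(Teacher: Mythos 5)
Your proposal is correct. The $|B|^{-1}$ bound is obtained exactly as in the paper: bound the kernel by $|B|^{-1}\max\chi_B^2\,W(x-y)$ and use $\int W\leq C\int v_R\leq Ca$. The only difference is in the $R^{-3}$ bound: the paper passes through the Hilbert--Schmidt norm, $\|Z\|\leq\|Z\|_{\rm HS}\leq |B|^{-1}\max W\int\chi_B^2$, whereas you use Schur's test for the symmetric non-negative kernel together with the pointwise bound $W\leq CaR^{-3}$ (which indeed follows from $v_R=R^{-3}v_1(\cdot/R)$, the boundedness of $v_1$, \eqref{eq: W_b estimate}--\eqref{eq: W_s estimate} and the comparability of $a$ with constants depending only on $v_1$, just as in the proof of Lemma~\ref{lm:wBUB}). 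Both routes are equally elementary and give the same constants, so this is a cosmetic rather than substantive difference; your observation for \eqref{large box Z bound} that $\max\chi_{B(u)}^2=C_M^2$ is bounded and $\min\{R^{-3},|B|^{-1}\}\leq|B|^{-1}$ is also exactly what is needed.
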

\begin{proof} It is clear that
$$\|Z\|\leq
|B|^{-1}\max\chi_B^2\int W\leq C|B|^{-1}\max\chi_B^2\int v_1\leq
Ca|B|^{-1}\max\chi_B^2.
$$
If we use that the Hilbert-Schmidt norm is greater than the operator norm,
$\|Z\|\leq\|Z\|_{\rm HS}$,
 we find
$$
\|Z\|\leq|B|^{-1}\hspace{-0.1 cm}\left(\iint
  \chi_B(x)^2W(x-y)^2\chi_B(y)^2\dx\dy\right)^{1/2}\hspace{-0.1 cm}
\leq |B|^{-1}\hspace{-0.05 cm}\max W\hspace{-0.15 cm}\int\chi_B^2\leq C aR^{-3}\hspace{-0.05 cm}\max\chi_B^2.
$$
\end{proof}

Combining the above lemma with \eqref{eq:Hquad},
\eqref{eq:qchitauchiq},
\eqref{eq:Q2'2nd},
and \eqref{eq:Q1'2nd} gives the following result.
\begin{lm}\label{lm:quadratic}
For all $\quadpar\in\R$ we have the following estimate
\begin{equation}
H_{\rm
    Quad}+\quadpar\cQ'_1\geq\frac12(2\pi)^{-3}\int_{\R^3} h_{\quadpar}(k) \dd k
-Cn_+a\min\{R^{-3},|B|^{-1}\}\max\chi_B^2\label{eq:HQuad inequlaity}
\end{equation}
on each $n$-particle sector (if $n=0$ then $h_\quadpar(k)=0$), where %\marginpar{\tiny The case $n=1$ works.}
\begin{eqnarray}\label{def:h_sigma}
  h_{\quadpar}(k)&=&
  n^{-1}\tau_B(k^2)(b_k^*b_k+b_{-k}^*b_{-k})
  +\widehat{W}(k)|B|^{-1}(b_k^*b_k+b_{-k}^*b_{-k}+b_k^*b^*_{-k}+b_kb_{-k})\nn
  \\&&{}+
  \quadpar(n-\rhozero |B|)\widehat{W}(k)|B|^{-3/2}
  \left(\overline{\widehat{\chi}_B(k)}(b_k+b_{-k}^*)
    +\widehat{\chi}_B(k)(b_k^*+b_{-k})
  \right).
\end{eqnarray}
\end{lm}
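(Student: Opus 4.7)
The inequality follows by assembling the second-quantized expressions for the three ingredients of $H_{\rm Quad}+\sigma\cQ'_1$, namely $\sum_i(Q\chi_B\tau_B(-\Delta)\chi_BQ)_i$, $\cQ'_2$, and $\sigma\cQ'_1$, already written out in \eqref{eq:qchitauchiq}, \eqref{eq:Q2'2nd}, and \eqref{eq:Q1'2nd}. The task reduces to symmetrizing each $k$-integral under $k\mapsto -k$ and absorbing the single non-quadratic-in-$b$ remainder coming from \eqref{eq:QZQ}.

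First, from \eqref{eq:qchitauchiq} the kinetic piece is bounded below by $(2\pi)^{-3}n^{-1}\int\tau_B(k^2)b_k^*b_k\,\dd k$. Since $\tau_B$ is a function of $|k|^2$ only, averaging with the substitution $k\mapsto -k$ converts this into $\frac12(2\pi)^{-3}n^{-1}\int\tau_B(k^2)(b_k^*b_k+b_{-k}^*b_{-k})\,\dd k$, precisely the first summand in $\frac12(2\pi)^{-3}\int h_\sigma\,\dd k$. On sectors with $n_+=0$ all of the $b_k$ annihilate the state, so the bound \eqref{eq:qchitauchiq} trivializes and the inequality still holds (cf.\ the remark on page \pageref{nplus assumption}).

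Second, \eqref{eq:Q2'2nd} already displays $\cQ'_2$ as the desired symmetric quadratic expression $\frac12(2\pi)^{-3}|B|^{-1}\int\widehat{W}(k)(b_k^*b_k+b_{-k}^*b_{-k}+b_k^*b_{-k}^*+b_kb_{-k})\,\dd k$ minus $\sum_iQ_iZ_iQ_i$. The trivial operator bound $\sum_iQ_iZ_iQ_i\leq\|Z\|\,n_+$ combined with Lemma~\ref{lm:Z estimate} produces exactly the error term $Cn_+a\min\{R^{-3},|B|^{-1}\}\max\chi_B^2$ that appears on the right-hand side of \eqref{eq:HQuad inequlaity}.

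Finally, for $\sigma\cQ'_1$ I would invoke $\widehat{W}(-k)=\widehat{W}(k)$ (since $W$ is real and even) together with $\widehat{\chi}_B(-k)=\overline{\widehat{\chi}_B(k)}$ (since $\chi_B$ is real) to average the integrand of \eqref{eq:Q1'2nd} against its reflection $k\mapsto -k$; this converts the linear piece into $\frac12\int\widehat{W}(k)[\overline{\widehat{\chi}_B(k)}(b_k+b_{-k}^*)+\widehat{\chi}_B(k)(b_k^*+b_{-k})]\,\dd k$, matching the last summand of $h_\sigma(k)$ with the correct prefactor. Summing the three symmetrized contributions yields $\frac12(2\pi)^{-3}\int h_\sigma(k)\,\dd k$, and the $Z$-remainder from step two is the stated error. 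I expect no serious obstacle here: the only quantitative input is Lemma~\ref{lm:Z estimate}, and the only point demanding care is tracking the factor $\frac12$ consistently through each of the three symmetrizations.
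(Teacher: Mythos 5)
Your proposal is correct and follows essentially the same route as the paper, which obtains Lemma~\ref{lm:quadratic} precisely by combining the second-quantized identities \eqref{eq:qchitauchiq}, \eqref{eq:Q2'2nd}, \eqref{eq:Q1'2nd} with the bound $\sum_iQ_iZ_iQ_i\leq\|Z\|n_+$ from \eqref{eq:QZQ} and Lemma~\ref{lm:Z estimate}. The $k\mapsto-k$ symmetrization using $\widehat W(-k)=\widehat W(k)$ and $\overline{\widehat\chi_B(k)}=\widehat\chi_B(-k)$ is exactly the bookkeeping implicit in the paper's statement of $h_\sigma$.
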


Note that if $B$ is a large box, then the last term in \eqref{eq:HQuad inequlaity} is small compared to the Neumann gap. The same is true if $B$ is a small box with smallest side length $\lambda_1\geq \rhozero ^{-\frac{1}{3}}$, since then $Cn_+a\min\{R^{-3},|B|^{-1}\}\max\chi_B^2\leq Cn_+\rhozero a$, which, in view of Condition~\ref{cond:epsilon3lower bound} below, is smaller than the Neumann gap on the small box.\\
\indent We shall now give an estimate on $h_{\quadpar}(k)$, which is based on a simple version of Bogolubov's treatment of quadratic Hamiltonians.
This estimate requires, however, assumptions which will not be fulfilled 
in all our situations.
The following result is Theorem 6.3 in \cite{LS1} except that we state
it here a bit more generally. In the original \cite{LS1} it was required $\cA\geq\cB>0$, but this is not
needed. 
The operators $b_\pm$ can, for example, be any commuting pair of bounded operators (the case we will use here) or they can be
annihilation operators in Fock space (the original Bogolubov case).
\begin{thm}[Simple case of Bogolubov's method]\label{thm:bogolubov}\hfill\\
For arbitrary $\cA,\cB\in \R$ satisfying $-\cA<\cB\leq\cA$ and $\kappa\in\C$ we have 
the operator inequality
\begin{eqnarray*}
\cA(b^*_+\bn_++b^*_{-}\bn_{-})+\cB(b^*_+b^*_{-}+\bn_+\bn_{-})+
\kappa(b^*_++\bn_{-})+\overline{\kappa}(\bn_++b^*_{-})\\ 
\geq-\mfr{1}{2}(\cA-\sqrt{\cA^2-\cB^2})
([\bn_{+},b^*_{+}]+[\bn_{-},b^*_{-}])-\frac{2|\kappa|^2}{\cA+\cB},\hspace{1cm}
\end{eqnarray*}
where $b_\pm$ are operators (possibly unbounded) on a Hilbert space satisfying $[b_+,b_-]=0$.
\end{thm}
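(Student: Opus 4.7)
The plan is to complete the square in two stages: first a c-number shift to eliminate the linear terms proportional to $\kappa$, reducing to the case $\kappa=0$ (which is essentially the original Theorem~6.3 of \cite{LS1}); then a Bogolubov-type completion of the square on the purely quadratic remainder.

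For Stage 1, the hypothesis $-\cA<\cB\leq\cA$ forces $\cA>0$ and $\cA+\cB>0$, so the scalar $c:=\kappa/(\cA+\cB)$ is well defined. Introduce the shifted operators $\tilde{b}_+:=\bn_++c$ and $\tilde{b}_-:=\bn_-+\bar c$. Since the shifts are by c-numbers, all relevant commutators are preserved:
$[\tilde{b}_+,\tilde{b}_-]=0$ and $[\tilde{b}_\pm,\tilde{b}_\pm^*]=[\bn_\pm,\bn_\pm^*]$. A direct expansion of $\cA(\tilde{b}_+^*\tilde{b}_++\tilde{b}_-^*\tilde{b}_-)+\cB(\tilde{b}_+^*\tilde{b}_-^*+\tilde{b}_+\tilde{b}_-)$ and a comparison of coefficients shows that the value $c=\kappa/(\cA+\cB)$ is precisely what is needed to reproduce the linear $\kappa$-terms of the LHS, while leaving behind the additive constant $2|\kappa|^2/(\cA+\cB)$. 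This reduces the problem to proving the $\kappa=0$ version of the inequality with $\tilde{b}_\pm$ in place of $\bn_\pm$.

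For Stage 2, set
$$\gamma:=\frac{\cB}{\cA+\sqrt{\cA^2-\cB^2}},\qquad D:=\tfrac{1}{2}\bigl(\cA+\sqrt{\cA^2-\cB^2}\bigr),$$
so that $D>0$, $|\gamma|\leq 1$, $D(1+\gamma^2)=\cA$, $2D\gamma=\cB$, and $D\gamma^2=\tfrac12(\cA-\sqrt{\cA^2-\cB^2})$; these identities follow from $\cA^2-\cB^2\geq 0$, guaranteed by the hypothesis, and hold uniformly whether $\cB$ is positive, zero, or negative (with $\gamma$ inheriting the sign of $\cB$). Consider the manifestly non-negative operator
$$S:=D\bigl[(\tilde{b}_++\gamma\tilde{b}_-^*)^*(\tilde{b}_++\gamma\tilde{b}_-^*)+(\tilde{b}_-+\gamma\tilde{b}_+^*)^*(\tilde{b}_-+\gamma\tilde{b}_+^*)\bigr]\geq 0.$$
Expanding the products and using $[\tilde{b}_+,\tilde{b}_-]=0$ to symmetrize the cross terms $\tilde{b}_+^*\tilde{b}_-^*$ and $\tilde{b}_+\tilde{b}_-$, one obtains the algebraic identity
$$S=\cA\bigl(\tilde{b}_+^*\tilde{b}_++\tilde{b}_-^*\tilde{b}_-\bigr)+\cB\bigl(\tilde{b}_+^*\tilde{b}_-^*+\tilde{b}_+\tilde{b}_-\bigr)+D\gamma^2\bigl([\tilde{b}_+,\tilde{b}_+^*]+[\tilde{b}_-,\tilde{b}_-^*]\bigr).$$
Rearranging and using $S\geq 0$ yields the $\kappa=0$ inequality with constant $-D\gamma^2=-\tfrac12(\cA-\sqrt{\cA^2-\cB^2})$, and combining with Stage 1 produces the claim.

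The proof has no genuine obstacle: the whole argument is a purely algebraic identity that uses only $[\bn_+,\bn_-]=0$, with no domain or functional-analytic issues despite the possible unboundedness of $b_\pm$. The only care needed is in the edge cases, where one must verify that $-\cA<\cB\leq\cA$ is sharp enough to guarantee $\cA+\cB>0$ (so that $c$ is defined), $\cA^2-\cB^2\geq 0$ (so the square root is real), $D>0$ (so $S$ is non-negative), and that the boundary $\cB=\cA$ (giving $\gamma=1$, $D\gamma^2=\cA/2$) and $\cB=0$ (giving the trivial bound $0$) produce the correct limits.
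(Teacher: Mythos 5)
Your proof is correct: the shift $c=\kappa/(\cA+\cB)$ exactly absorbs the linear terms at the cost of $2|\kappa|^2/(\cA+\cB)$, and the identities $D(1+\gamma^2)=\cA$, $2D\gamma=\cB$, $D\gamma^2=\tfrac12(\cA-\sqrt{\cA^2-\cB^2})$ make the completed square $S\geq0$ reproduce the stated bound, with the hypothesis $-\cA<\cB\leq\cA$ giving precisely $\cA+\cB>0$, $\cA^2-\cB^2\geq0$ and $D>0$. This is essentially the same completion-of-squares argument as the proof the paper refers to (Theorem 6.3 of \cite{LS1}), so nothing further is needed.
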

\begin{proof}
The proof is essentially the same as in the original \cite{LS1}. See also \cite{BB}.
\end{proof}
When applying Theorem~\ref{thm:bogolubov} with $\kappa=0$, we can replace $\cB$ by $-\cB$, even if $\cB=\cA$, without changing the lower bound. This is easily seen by replacing $b_\pm$ by $ib_\pm$. Hence
		\begin{align}\label{eq:quadratic term bound}
		\abs{ \cB(b_{+}\ad b_{-}\ad +b_+b_{-})} \leq  \cA(b^*_+\bn_++b^*_{-}b_{-})+ \mfr{1}{2}(\cA-\sqrt{\cA^2-\abs{\cB}^2})
		([ b_{+},b^*_{+}]+[b_{-},b^*_{-}]),
		\end{align}
		which we will use on page~\pageref{apply:quadratic term bound}.
When applying this theorem to estimate $h_\quadpar(k)$, we will take
$b_+=b_k$, $b_-=b_{-k}$, restricted to the appropriate $n$-particle sector,
\begin{align}\label{def: A, B and kappa}
\cA=n^{-1}\tau_B(k^2)+\widehat{W}(k)|B|^{-1},\quad
\cB=\widehat{W}(k)|B|^{-1},\quad \kappa=\quadpar(n-\rhozero |B|)\widehat{W}(k)|B|^{-3/2}
\widehat{\chi}_B(k).
\end{align}
\indent This choice of $\cA$ and $\cB$ does not necessarily satisfy the conditions in Theorem~\ref{thm:bogolubov}.
We will now give conditions for when these are satisfied. 
We first observe that $\widehat W(0)=\int W(x)\dx>0$ and thus
\begin{align}
\widehat{W}(k)=\int \cos(kx) W(x)\dx\geq \int(1-\frac12(kx)^2)W(x)\dx>0\label{W>0}
\end{align}
if $|k|<R^{-1}$ (using that $W$ has the same range as $v_R$, i.e.,
$R$).  
Hence $\cB>0$
for these values of $k$, and the conditions in the theorem are
satisfied since $\tau_B\geq0$.

To ensure the condition for $|k|\geq R^{-1}$, 
we will use that $R<\delta ds\ell$ by
Condition~\ref{cond:1}.
We may then from the definitions \eqref{eq:tauB1} and
\eqref{eq:tauB2} of $\tau_B$ assume that $\tau_B(k^2)\geq \frac12 k^2$ 
for $|k|\geq R^{-1}\geq \delta^{-1}(ds\ell)^{-1}$.
Given $C'>0$ we have for these $k$, since $|\cB|=|B|^{-1}|\widehat W(k)|\leq Ca/|B|$, that
$$
\cA\geq \frac12 n^{-1}R^{-2}-Ca|B|^{-1}>C' a|B|^{-1}
$$ if $n|B|^{-1}\leq c (a R^2)^{-1}$ for $c$ sufficiently small (depending only on $v_1$ and $C'$). In particular, for $\abs{k}\geq R^{-1}$ we then have $-\mathcal{A}<\abs{\mathcal{B}}\leq \mathcal{A}$ and are therefore allowed to use Theorem~\ref{thm:bogolubov} to bound $h_\quadpar(k)$.

This implies that $\cA$ is positive and that $\cA\geq|\cB|$. 
In this case we are therefore allowed to use Theorem~\ref{thm:bogolubov}
to bound $h_\quadpar(k)$ and in fact we may assume that $\cA+\cB\geq 2\abs{\cB}$ if $c$ is sufficiently small. When the condition $n\abs{B}\inv\leq c(aR^2)\inv$ can not be satisfied, which only happens on page \pageref{no Bogolubov bound} in Case 2 of the proof of Lemma~\ref{lm: napriori}, we use \eqref{eq:Q2apriori} instead.
\begin{lm}\label{lm:boglowerbound} There exists $c>0$ (depending only on $v_1$) such that if
	$n|B|^{-1}\leq c (a R^2)^{-1}$ then we have on each $n$-particle sector (if $n=0$ then $h_\quadpar(k)=0$) for all $k\in\R^3$
	\begin{align}
	h_\quadpar(k)\geq& -\left(n^{-1}\tau_B(k^2)+|B|^{-1}\widehat{W}
	(k)
	-\sqrt{n^{-2}\tau_B(k^2)^2
		+2n^{-1}|B|^{-1}\tau_B(k^2)\widehat{W}(k)}\right)n_0
	\int\chi_B^2\nn\\ & -\quadpar^2(n-\rhozero |B|)^2|B|^{-2}|\widehat\chi_B(k)|^2\abs{\widehat{W}(k)}\label{eq:h_lambda estimate}.
	\end{align}
\end{lm}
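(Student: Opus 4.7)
The plan is to apply Theorem~\ref{thm:bogolubov} directly to each Fourier slice $h_\quadpar(k)$ separately, choosing $b_\pm = b_{\pm k}$ (restricted to the fixed $n$-particle sector, where they are bounded commuting operators) and the parameters $\cA$, $\cB$, $\kappa$ as specified in \eqref{def: A, B and kappa}. The expression for $h_\quadpar(k)$ in \eqref{def:h_sigma} is precisely of the form handled by that theorem, so the main work is (i) verifying that $-\cA<\cB\leq \cA$ holds under the hypothesis $n|B|^{-1}\leq c(aR^2)^{-1}$, and (ii) simplifying the two error terms produced by the theorem.

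For (i), I would split on whether $|k|<R^{-1}$ or $|k|\geq R^{-1}$. When $|k|<R^{-1}$, the Taylor expansion \eqref{W>0} shows $\widehat W(k)>0$, so $\cB>0$ and $\cA\geq \cB$ is automatic from $\tau_B(k^2)\geq 0$. When $|k|\geq R^{-1}$, the definitions \eqref{eq:tauB1}--\eqref{eq:tauB2} of $\tau_B$ together with Condition~\ref{cond:1} (which gives $R<\delta ds\ell$) yield $\tau_B(k^2)\geq \tfrac12 k^2\geq \tfrac12 R^{-2}$; combined with $|\widehat W(k)|\leq C\int v_R \leq Ca$ and the hypothesis $n|B|^{-1}\leq c(aR^2)^{-1}$ with $c$ small enough, this gives $n^{-1}\tau_B(k^2)\geq 2|\widehat W(k)||B|^{-1}$, i.e., $\cA\geq 2|\cB|$, which in particular implies both $|\cB|\leq \cA$ and $\cA+\cB\geq 2|\cB|$. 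In fact, the same inequality $\cA+\cB\geq 2|\cB|=2\cB$ also holds trivially when $\widehat W(k)\geq 0$ (take $c$ smaller if needed), so uniformly in $k$ we have $\cA+\cB\geq 2|\cB|$.

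For (ii), Theorem~\ref{thm:bogolubov} now bounds $h_\quadpar(k)$ below by
\[
-\tfrac12\bigl(\cA-\sqrt{\cA^2-\cB^2}\bigr)\bigl([b_k,b_k^*]+[b_{-k},b_{-k}^*]\bigr)-\frac{2|\kappa|^2}{\cA+\cB}.
\]
Inserting $\cA=n^{-1}\tau_B(k^2)+\widehat W(k)|B|^{-1}$ and $\cB=\widehat W(k)|B|^{-1}$ gives
\[
\cA-\sqrt{\cA^2-\cB^2}=n^{-1}\tau_B(k^2)+\widehat W(k)|B|^{-1}-\sqrt{n^{-2}\tau_B(k^2)^2+2n^{-1}|B|^{-1}\tau_B(k^2)\widehat W(k)},
\]
and this quantity is non-negative (since $\cB^2\geq 0$). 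Applying the commutator bound \eqref{eq:bkbkcom}, $[b_{\pm k},b^*_{\pm k}]\leq n_0\int\chi_B^2$, to each of the two commutators yields exactly the first term on the right-hand side of \eqref{eq:h_lambda estimate}.

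For the $\kappa$-term, I use the bound $\cA+\cB\geq 2|\cB|=2|\widehat W(k)||B|^{-1}$ established in step (i):
\[
\frac{2|\kappa|^2}{\cA+\cB}\leq \frac{|\kappa|^2}{|\cB|}=\frac{\quadpar^2(n-\rhozero|B|)^2\widehat W(k)^2|B|^{-3}|\widehat\chi_B(k)|^2}{|\widehat W(k)||B|^{-1}}=\quadpar^2(n-\rhozero|B|)^2|B|^{-2}|\widehat\chi_B(k)|^2|\widehat W(k)|,
\]
which matches the second term in \eqref{eq:h_lambda estimate}. The main subtlety is really only in (i): one must carefully track that the smallness condition on $n|B|^{-1}$ gives both the applicability of Theorem~\ref{thm:bogolubov} and the quantitative bound $\cA+\cB\geq 2|\cB|$ that makes the $\kappa$-term come out clean; everything else is an unpacking of notation.
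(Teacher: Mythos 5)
Your proposal follows essentially the same route as the paper: apply Theorem~\ref{thm:bogolubov} with the choices \eqref{def: A, B and kappa}, verify the hypotheses by splitting $|k|<R^{-1}$ (where \eqref{W>0} gives $\cB>0$) from $|k|\geq R^{-1}$ (where $\tau_B(k^2)\geq\tfrac12k^2$ and $|\widehat W(k)|\leq Ca$ combined with $n|B|^{-1}\leq c(aR^2)^{-1}$ give the needed inequalities), then use $[b_{\pm k},b_{\pm k}^*]\leq n_0\int\chi_B^2$ and $\cA+\cB\geq2|\cB|$ to produce the two terms of \eqref{eq:h_lambda estimate}, exactly as in the paper. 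One small slip: when $\widehat W(k)<0$ the implication ``$\cA\geq2|\cB|$ hence $\cA+\cB\geq2|\cB|$'' is false (one only gets $\cA+\cB\geq0$ from $n^{-1}\tau_B(k^2)\geq2|\cB|$), but this is repaired by shrinking $c$ so that $n^{-1}\tau_B(k^2)\geq4|\cB|$ for $|k|\geq R^{-1}$, which is precisely how the paper secures $\cA+\cB\geq2|\cB|$.
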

\begin{proof}
	As we just saw, we can choose $c>0$ such that we may use Bogolubov's
	method from Theorem~\ref{thm:bogolubov} with $\cA+\cB\geq 2 \abs{\cB}$. 
	This gives the estimate\marginpar{\tiny }
	$$2|\kappa|^2(\cA+\cB)^{-1}\leq \quadpar^2(n-\rhozero |B|)^2
	|B|^{-2} |\widehat{\chi}_B(k)|^2\abs{\widehat{W}(k)}.
	$$
	Now \eqref{eq:h_lambda estimate} follows, since we have already seen in \eqref{eq:bkbkcom} that $[b_k,b_k\ad]\leq n_0\int\chi_B^2$ and in \eqref{W>0} that $\widehat{W}(k)>0$ for $\abs{k}<R\inv$.
\end{proof}
We shall primarily use the above lemma with $\quadpar=0$. On page \pageref{applying lambda=1} we will also use it with $\quadpar=1$\label{lambda=1} on the large box. If we could replace $\abs{\widehat{W}(k)}$ by $\widehat{W}(k)$ in the last term in \eqref{eq:h_lambda estimate} an integration over $k$ gives 
\begin{align}
-\frac12(2\pi)^{-3}\quadpar^2(n-\rhozero |B|)^2|B|^{-2}\int
|\widehat\chi_B(k)|^2\widehat{W}(k)\dd k=  
-\quadpar^2(n-\rhozero |B|)^2\cU_B,\label{eq: sigmsa=1 term leading order}
\end{align}
using the notation \eqref{eq:UB}, which then exactly cancels the first positive term in \eqref{eq:noQ}.\\
Now we estimate the error in approximating the integral over the last term in \eqref{eq:h_lambda estimate} by \eqref{eq: sigmsa=1 term leading order}. We recall that $\widehat{W}(k)>0$ for $\abs{k}< R^{-1}$ by \eqref{W>0}, that $|\widehat{W}(k)|\leq C a$ and that ${\chi_u(x)=\chi(x\ell\inv-u)}$. Then we use \eqref{eq:q^rchi^2} to obtain the bound 
\begin{align}
C\quadpar^2\abs{B}^{-1}a(n-\rhozero \abs{B})^2
(R/\ell)^{2M}\int_{\abs{k}\geq R^{-1}} |k|^{2M}|\widehat{\chi}(k)|^2\dd k
\leq C\quadpar^2\abs{B}^{-1}a(n-\rhozero \abs{B})^2
(R/\ell)^{2M},\label{sigma=1 small error term}
\end{align}
where $M$ is the integer in the definition \eqref{zeta new and eq:defchi} of $\chi$.
We have by Condition~\ref{cond:1} that $R\ll \ell$ and this will be enough to
control the last term in \eqref{eq:h_lambda estimate} if $M$ is sufficiently large.
\section{A Priori Bounds on the Non-Quadratic Part of the Hamiltonian and on $n$}
\indent We shall eventually prove that the lowest energy of the box Hamiltonian $H_B$ will be achieved when the particle number $n$ is
close in an appropriate sense to $\rhozero |B|$. In this subsection we will give a much weaker a priori bound on
$n$. The main difficulty lies in treating the (possibly) rectangular small boxes $B(u,u')$ of side lengths 
$\lambda_1\leq\lambda_2\leq\lambda_3\leq d\ell$.
\begin{lm}[Estimates on the non-quadratic part of $H_B$]~\\\label{lm:HNonQuad}On each $n$-particle sector, if $B$ is a small box, we have
	\begin{align}
	H_B-H_{\rm Quad}\geq (1-\varepsilon_0)\varepsilon_T\widetilde{b}(d\ell)^{-2}n_++\off{\frac{7}{8}\abs{n-\rhozero \abs{B}}^2-C\abs{n-\rhozero \abs{B}}n_+-Cn-Cnn_+}\cU_B\label{eq:HNonQuad}
	\end{align}
and, if $B$ is a large box,
	\begin{align}
	H_B-H_{\rm Quad}\geq (1-\varepsilon_0)b\ell^{-2}n_++\off{\frac{7}{8}\abs{n-\rhozero \abs{B}}^2-C\abs{n-\rhozero \abs{B}}n_+-Cn-Cnn_+}\cU_B.\label{eq:HNonQuadbig}
	\end{align}
	Here $\widetilde{b}=(1+\pi^{-2})^{-1}$ and $b$ is the universal constants appearing in \eqref{eq:Tuu}.
\end{lm}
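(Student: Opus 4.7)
The plan is to start from the term-by-term decomposition \eqref{eq: box decomposition}, subtract $H_{\rm Quad}$ as in \eqref{eq:Hquad}, and then estimate each of the remaining non-quadratic pieces using the bounds assembled earlier in this section, exploiting that $H_B$ conserves the total particle number $n$. We may therefore restrict to an $n$-eigenspace on which both $n$ and $m:=n-\rhozero\abs{B}$ are c-numbers (the case $n=0$ is trivial, since there $H_B=m^2\cU_B$ and $H_{\rm Quad}=0$, so $H_B-H_{\rm Quad}\geq \tfrac{7}{8}m^2\cU_B$ automatically). After subtracting the second-quantization of $Q\chi_B\tau_B(-\Delta)\chi_BQ$ from $\sum_i\cT_{B,i}$, the kinetic residue is read off directly from \eqref{eq:TB1}-\eqref{eq:TB2}: for a small box it is exactly $(1-\varepsilon_0)\varepsilon_T(1+\pi^{-2})^{-1}(d\ell)^{-2}n_+$, and for a large box it is $(1-\varepsilon_0)b\ell^{-2}n_+$ plus the positive Neumann-type contribution in \eqref{eq:Tauhat}, which we drop as a lower bound.

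The interaction terms are handled as follows. From \eqref{eq:noQ}, using $n_+^2\geq 0$ and $n_0\leq n$, one obtains $\cQ_0\geq[m^2-2\abs{m}n_+-n]\cU_B$. From \eqref{eq:Q2second}, combined with $[\rhozero\abs{B}-n]_+\leq\abs{m}$ and $n_+^2\leq nn_+$, we get $\cQ''_2\geq -C\abs{m}n_+\cU_B-Cnn_+\cU_B$. Choosing $\varepsilon_3=1$ in \eqref{eq:Q3} absorbs the last term into the non-negative $\cQ_4$, yielding $\cQ_3+\cQ_4\geq -Cnn_+\cU_B$. Finally, taking $\varepsilon''_1=1$ in Lemma~\ref{lm:Q1} and using again $n_0\leq n$ and $n_+^2\leq nn_+$ produces $\cQ''_1\geq -Cnn_+\cU_B-Cn\cU_B$.

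The only delicate contribution is $\cQ'_1$. If $m=0$ then $\cQ'_1=0$ by \eqref{eq:Q1'}; otherwise, we apply Lemma~\ref{lm:Q1} with the \emph{state-dependent} constant $\varepsilon'_1=\abs{m}/(8n+1)$, which is admissible precisely because $n$ and $m$ are c-numbers on the chosen $n$-sector. Using $n_0\leq n$ and the elementary inequality $n/(8n+1)\leq 1/8$, the first term is bounded by $-\tfrac{1}{8}m^2\cU_B$, and the second by $-C(8n+1)n_+\cU_B\geq -Cnn_+\cU_B-Cn\cU_B$ (since $n_+\leq n$). Adding everything, the positive $m^2\cU_B$ from $\cQ_0$ and the negative $-\tfrac{1}{8}m^2\cU_B$ from $\cQ'_1$ combine to the announced $\tfrac{7}{8}m^2\cU_B$, while the remaining negative contributions collapse into $-C\abs{m}n_+\cU_B-Cn\cU_B-Cnn_+\cU_B$; this gives \eqref{eq:HNonQuad} and \eqref{eq:HNonQuadbig} simultaneously.

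The main obstacle is exactly the choice of $\varepsilon'_1$ above. Any constant, state-independent choice forces a Young estimate of the shape $\abs{m}n_0\leq\alpha m^2+\alpha^{-1} n_0^2$ and thereby leaves an $n^2\cU_B$ residue that cannot be dominated by $n\cU_B$ or $nn_+\cU_B$ alone. The conservation of $n$ under $H_B$, which allows us to tune $\varepsilon'_1$ sector by sector, is what makes the clean constant $\tfrac{7}{8}$ appear and confines the errors to the forms $\abs{m}n_+\cU_B$, $n\cU_B$, and $nn_+\cU_B$ allowed by the statement.
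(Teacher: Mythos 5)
Your proposal is correct and follows essentially the same route as the paper: decompose via \eqref{eq: box decomposition}, read off the kinetic residue from \eqref{eq:TB1}--\eqref{eq:TB2} (dropping the positive Neumann term on the large box), bound $\cQ_0,\cQ''_1,\cQ''_2,\cQ_3+\cQ_4$ with $n_0\leq n$, $n_+\leq n$, and treat $\cQ'_1$ via Lemma~\ref{lm:Q1} with the sector-dependent choice of $\varepsilon'_1$ (the paper takes $\varepsilon'_1=\abs{n-\rhozero\abs{B}}/(8n)$, yours $\abs{n-\rhozero\abs{B}}/(8n+1)$, which is an immaterial variation). No gaps.
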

\begin{proof}
We use Equation \eqref{eq:noQ}, Lemma~\ref{lm:Q1} (with $\varepsilon_1'=\frac{\abs{n-\rhozero B}}{8n}$ if $n\neq 0$),
\eqref{eq:Q2second}, \eqref{eq:Q3}, \eqref{eq:Q4}, the respective Neumann gaps and the fact that $n_+\leq n$ to obtain \eqref{eq:HNonQuad} and \eqref{eq:HNonQuadbig}.
\end{proof}
The constant $\frac{7}{8}$ is of course not optimal and has been chosen for notational purposes only.
To prove the next lemma, we would like to use that $n_+$ is much smaller than $n$. This follows from Lemma~\ref{lm: simmple bound on H_B and n_+} in view of the following condition, which we henceforth assume to hold.
\begin{condition}\label{cond:epsilon3lower bound}
	We require
	\begin{align}
		(\sqrt{\rhozero a}d\ell)^2\leq\epsilonTconst{\theepsilonTconst}{\delta} \varepsilon_T,\qquad \varepsilon_T\ln\bigg (\frac{ds\ell}{R}\bigg )<\delta ds\ell \sqrt{\rho a}\qquad\text{and}\qquad\varepsilon_0\frac{a}{R}<\delta \sqrt{\rho a^3}.
	\end{align}
for $0<\delta <\delta_\Rexponentialendpoint<1$, where $\Rexponentialendpoint$ is as in Assumtion~\ref{R assumption}. This shows that we need $M\geq 13$.
\end{condition}

\begin{lm}[A priori bound on $n$]~\\\label{lm: napriori}There is a constant $C_0>0$ such that for any $n$-particle state on a small box satisfying
		$\langle \psi \vert H_B\vert \psi\rangle\leq0$, we have 
		\begin{align}\label{eq:napriori}
		n\leq C_0|B|\max\left\{\prod_{j=1}^3(\min\{\lambda_j,R\})^{-1},\rhozero \right\}=:K_B.
		\end{align}
\end{lm}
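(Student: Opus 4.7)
The plan is to proceed by contradiction: assume the $n$-particle state $\psi$ satisfies $\langle\psi|H_B|\psi\rangle \leq 0$ but $n > K_B$, with the constant $C_0$ in the definition of $K_B$ to be chosen large relative to the universal constants of Lemmas~\ref{lm:wBUB}--\ref{lm:boglowerbound} and of $C_M$. The goal is to combine the bound on $H_B - H_{\rm Quad}$ from Lemma~\ref{lm:HNonQuad} with a rough lower bound on $H_{\rm Quad}$ to show that the positive quadratic term $\tfrac{7}{8}(n-\rho|B|)^2\mathcal{U}_B$ dominates, forcing $\langle\psi|H_B|\psi\rangle > 0$.

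First I would invoke Lemma~\ref{lm: simmple bound on H_B and n_+}, which applies since $\langle H_B\rangle \leq 0 \leq \rho^2|B|^2\mathcal{U}_B$, to obtain $\langle n_+\rangle \leq C\varepsilon_T^{-1}\rho a(d\ell)^2 n \max\chi_B^2 \leq C'\delta n$ by Condition~\ref{cond:epsilon3lower bound} and $\max\chi_B^2 \leq C_M^2$. Lemma~\ref{lm:HNonQuad} then supplies
\[
\langle H_B - H_{\rm Quad}\rangle \geq (\mathrm{gap})\langle n_+\rangle + \bigl[\tfrac{7}{8}(n-\rho|B|)^2 - C|n-\rho|B||\langle n_+\rangle - Cn - C\langle n n_+\rangle\bigr]\mathcal{U}_B.
\]
For a lower bound on $\langle H_{\rm Quad}\rangle$ I would split on whether the Bogolubov hypothesis $n|B|^{-1} \leq c(aR^2)^{-1}$ of Lemma~\ref{lm:boglowerbound} holds. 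In Case~I (Bogolubov applies) I apply Lemmas~\ref{lm:quadratic} and \ref{lm:boglowerbound} with $\sigma = 0$; the resulting correction is of much smaller order than $n^2\mathcal{U}_B$, so the positive term $\tfrac{7}{8}(n-\rho|B|)^2\mathcal{U}_B \geq \tfrac{1}{4}n^2\mathcal{U}_B$ (which holds because $n > C_0\rho|B|$) dominates. In Case~II (Bogolubov fails, $n > c|B|/(aR^2)$) I drop the non-negative kinetic part of $H_{\rm Quad}$ and use \eqref{eq:Q2apriori}, obtaining
\[
H_{\rm Quad} \geq -Cnn_+\mathcal{U}_B - Cn_+^2 aR^{-3}\max\chi_B^2 - \tfrac{1}{2}n^2\mathcal{U}_B.
\]
The $-\tfrac12 n^2$ combines with $\tfrac78(n-\rho|B|)^2$ to yield $\tfrac38 n^2 - \tfrac74 n\rho|B| + \tfrac78\rho^2|B|^2 \geq \tfrac14 n^2$ under $n > C_0\rho|B|$ with $C_0$ large, and the $\langle nn_+\rangle$-type cross terms are controlled by $\langle n_+\rangle \leq C'\delta n$.

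The main obstacle is taming the residual term $C\langle n_+^2\rangle aR^{-3}\max\chi_B^2$ in Case~II. I would estimate $\langle n_+^2\rangle \leq n\langle n_+\rangle \leq C\delta n^2 \max\chi_B^2$ and combine this with the lower bound \eqref{eq:wBUB2} rewritten as $aR^{-3}\max\chi_B^2 \leq C|B|\mathcal{U}_B/\prod_j\min\{\lambda_j,R\}$. Since by construction $|B|/\prod_j\min\{\lambda_j,R\} \leq K_B/C_0$, the error is bounded by $(C\delta\max\chi_B^2 \cdot |B|/\prod_j\min\{\lambda_j,R\})\, n^2 \mathcal{U}_B$; for thin rectangular boxes the decay $\max\chi_B^2 \leq C(\lambda_1/(d\ell))^{4(M+1)}$ from \eqref{eq:maxchi_B^2 bound} compensates for the potentially large factor $|B|/\prod_j\min\{\lambda_j,R\}$, while for cubic small boxes the smallness is built into the scales of Condition~\ref{cond:1} and Condition~\ref{cond:epsilon3lower bound}. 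With $C_0$ chosen accordingly (depending only on $M$ and $v_1$), this error is absorbed into $\tfrac{1}{8}n^2\mathcal{U}_B$. Altogether $\langle H_B\rangle > 0$, contradicting the hypothesis and establishing $n \leq K_B$.
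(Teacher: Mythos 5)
There is a genuine gap: you have dropped the paper's key first step, the superadditivity/grouping argument, and your argument does not survive without it. The paper does not try to get a contradiction directly from the state with $n>K_B$; instead it splits the $n$ particles into $m\approx n/K_B$ groups with group sizes $n'\in[\tfrac12K_B,2K_B]$, discards the non-negative inter-group interaction, and reduces the lemma to showing $\langle\psi'|H_B|\psi'\rangle-\rhozero^2|B|^2\mathcal{U}_B\geq 0$ for particle numbers comparable to $K_B$. This cap on $n'$ is what makes the subsequent case distinction purely geometric: Case~I/II is decided by whether $K_B<(c/2)|B|(aR^2)^{-1}$, i.e.\ by the shape of the box, not by the (uncontrolled) particle number. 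In Case~II that geometric condition, together with $a\ll R$ and $\lambda_j,R\leq d\ell$, forces $|B|\leq Ca(d\ell)^2$, hence $\lambda_1\ll d\ell$ and $\max\chi_B^2$ is extremely small by \eqref{eq:maxchi_B^2 bound}; this smallness is exactly what absorbs the residual term $Cn_+^2 aR^{-3}\max\chi_B^2$ from \eqref{eq:Q2apriori}.

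In your version the case split is on $n$ itself, so your Case~II ($n>c|B|(aR^2)^{-1}$, Bogolubov unavailable) contains perfectly regular cubic small boxes with $\lambda_1=\lambda_2=\lambda_3=d\ell$ carrying an enormous particle number. For such a box $\max\chi_B^2=C_M^2$ is of order one and $\mathcal{U}_B\simeq a|B|^{-1}\max\chi_B^2$, so your residual term is of size
\begin{align*}
\langle n_+^2\rangle\, aR^{-3}\max\chi_B^2\;\lesssim\;\delta\, n^2\,\frac{a}{R^3}\;\simeq\;\delta\Big(\frac{d\ell}{R}\Big)^3 n^2\,\mathcal{U}_B ,
\end{align*}
and $(d\ell/R)^3\to\infty$ in the dilute limit, so it cannot be absorbed into $\tfrac18 n^2\mathcal{U}_B$; neither Condition~\ref{cond:1} nor Condition~\ref{cond:epsilon3lower bound} supplies the missing factor $(R/d\ell)^3$, so the sentence ``for cubic small boxes the smallness is built into the scales'' is not substantiated and is in fact false. (The kinetic Neumann gap $\varepsilon_T(d\ell)^{-2}n_+$ cannot rescue this either, since controlling $n\,n_+\,aR^{-3}$ by it would require $n\lesssim\varepsilon_T R^3 a^{-1}(d\ell)^{-2}$, far below $|B|(aR^2)^{-1}$.) Your Case~I is essentially the paper's Case~I and is fine, but to close Case~II you need the grouping step (or some other mechanism capping $n$ at $O(K_B)$), after which cubic and all non-thin boxes automatically fall into Case~I because $2K_B<c|B|(aR^2)^{-1}$ there, and Case~II is reduced to the thin boxes where $\max\chi_B^2$ decays rapidly.
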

\begin{proof}
Assume $C_0\geq 2$. Then $K_B\geq C_0|B|\prod_{j=1}^3\lambda_j^{-1}=C_0\geq2$. If $n\geq K_B$, let $m$ be the integer part of $nK_B^{-1}$. 
We can then divide the particles into $m$ groups of particles consisting of $n_1,\ldots,n_m$ particles where $\sum_{j=1}^mn_j=n$ and
\begin{align}
\frac12 K_B\leq n_j\leq 2K_B,\quad j=1,\ldots,m.\label{eq: grouped particles}
\end{align}

We now use that the interaction $\omega_B$ between the particles is
non-negative and thereby get the following lower bound if we ignore the
interactions between the groups and correct for the background self-energy term
\begin{eqnarray*}
\langle \psi \vert H_B\vert \psi \rangle-\rhozero ^2|B|^2\cU_B
\geq
m\inf\Big\{\langle\psi'|H_B|\psi'\rangle-\rhozero ^2|B|^2\cU_B\Big|
\psi' \hbox{ has $n'$ particles in }B,\ \frac{1}{2}K_B\leq n'\leq 2K_B\Big\}.
\end{eqnarray*}

Our aim is to prove that if $C_0$ is large enough, then
$$
\langle\psi'|H_B|\psi'\rangle-\rhozero ^2\abs{B}^2\cU_B\geq 0
$$
if $\psi'$ has particle number $n'$ satisfying $K_B/2\leq n'\leq
2K_B$. 
We have
\begin{equation}\label{eq:n'app}
  \rhozero ^2|B|^2\leq C_0^{-2}K_B^2\leq 4C_0^{-2} n'^2.
\end{equation}
Thus we have $\abs{n'-\rhozero \abs{B}}\geq (1-\frac{2}{C_0})n'$ and, using that $n'\geq \frac{C_0}{2}$, that $n'\leq \frac{2}{C_0}n'^2$. With $n_+'=\langle \psi'\mid n_+\mid \psi'\rangle $ we obtain from Lemma~\ref{lm: simmple bound on H_B and n_+}, Condition~\ref{cond:epsilon3lower bound}, Lemma~\ref{lm:HNonQuad} and \eqref{eq:n'app} that
\begin{align}
\langle \psi'\vert H_B-H_\textrm{Quad}\vert \psi'\rangle-\rhozero ^2\abs{B}^2\cU_B
\geq{}&(1-\varepsilon_0)\varepsilon_T(1+\pi^{-2})^{-1}(d\ell )^{-2}
n_+'+\frac{3}{4}n'^2\cU_B,\label{eq:H_B-H_Quad}
\end{align}
if $C_0$ is sufficiently large and $\delta$ sufficiently small.
It remains to bound $H_{\textrm{Quad}}$. To do this, we differentiate between whether or not we are allowed to apply Bogolubov's method.
With $c>0$ being the constant in Lemma~\ref{lm:boglowerbound} which will allow us to use the Bogolubov bound, we first treat boxes where the side length and the parameter $R$ are not too small in the following sense.\\
\indent \underline{{\bf Case I:} $K_B< (c/2) |B|(a R^{2})^{-1}$}\\
Here we are allowed to use Bogolubov's method, i.e., Lemma~\ref{lm:quadratic} and Lemma~\ref{lm:boglowerbound}. Combining \eqref{h_0estimate1}, \eqref{W^3error}, \eqref{W^3estimate} and \eqref{W^2estimate}, we get
\begin{align}
\langle \psi'\vert H_{\rm{Quad}}\vert \psi '\rangle&\geq \frac{1}{2}(2\pi)^{-3}\int_{\R^3}h_0(k)\dk
-Cn_+'a\min\offf{R^{-3},\abs{B}\inv}\max \chi_B^2\nn\\
&\geq  -Cn'a(ds\ell)^{-3}\max \chi_B^2 -Cn'\frac{n'}{\abs{B}}a\frac{a}{R}\max \chi_B^2 -Cn'\frac{n'^2}{\abs{B}^2}a^3R\max\chi_B^2-Cn'\frac{a}{R^3}\max\chi_B^2\nn\\
&\geq -Cn'\frac{a}{R^3}\max\chi_B^2-Cn'\frac{K_B}{\abs{B}}a\frac{a}{R}\max \chi_B^2-Cn'\frac{K_B^2}{\abs{B}^2}a^3R\max \chi_B^2\nn\\
&\geq -Cn'\frac{a}{R^3}\max \chi_B^2.\label{eq:HQUAD'}
\end{align}
Combining \eqref{eq: grouped particles} and \eqref{eq:H_B-H_Quad} and noting that \eqref{eq:wBUB2} implies $\frac{a}{R^3}\max \chi_B^2\,\cU_B^{-1}\leq \frac{C}{C_0}K_B$ gives
\begin{align*}
\langle\psi'|H_B|\psi'\rangle-\rhozero ^2\abs{B}^2\cU_B&\geq Cn'^2\cU_B-Cn'\frac{a}{R^3}\max \chi_B^2\\
&=Cn'\cU_B(n'-C\cU_B\inv \frac{a}{R^3}\max\chi_B^2)\\
&\geq Cn'\cU_B(n'-CC_0\inv n'),
\end{align*}
which is positive if $C_0$ is sufficiently large.\\
\indent \underline{{\bf Case II:} $K_B\geq (c/2) |B|(a R^{2})^{-1}$}\\
By Condition~\ref{cond:1} and the Case II assumption we have 
\begin{align}
	\frac{K_B}{\abs{B}}=C_0\prod_{j=1}^3(\min\{\lambda_j,R\})^{-1}\geq \frac{c}{2}(aR^2)\inv.
\end{align}
Since $R,\lambda_j\leq d\ell$ we have $\abs{B}\leq Ca(d\ell)^2$ such that $\max\chi_B^2\leq C\of{\frac{a}{d\ell}}^{4(M+1)}$ by \eqref{eq:maxchi_B^2 bound}. By Lemma~\ref{lm: simmple bound on H_B and n_+} and Condition~\ref{cond:epsilon3lower bound}, we have 
$n_+\leq C\theepsilonTconst n\max\chi_B^2$. From the lower bound in \eqref{eq:wBUB2} we get 
\begin{align*}
Cn_+'^2\frac{a}{R^3}\max\chi_B^2&\leq C\epsilonTconst{\theepsilonTconst}{\delta} n'^2\frac{a}{R^3}\max\chi_B^4\\
&\leq C\epsilonTconst{\theepsilonTconst}{\delta} n'^2\abs{B}\prod_{j=1}^3(\min\{\lambda_j,R\})^{-1}\max\chi_B^2\,\cU_B\\
& \leq C\epsilonTconst{\theepsilonTconst}{\delta} n'^2\of{\frac{d\ell}{R}}^3\of{\frac{a}{d\ell}}^{4(M+1)}\cU_B.
\end{align*}
Together with equation \eqref{eq:H_B-H_Quad} and the estimate 
\begin{align*}
\langle \psi' |H_{\textrm{Quad}}|\psi'\rangle\geq \langle \psi' |\cQ_2'|\psi'\rangle\geq -\frac{5}{8}n'^2\,\cU_B-Cn_+^2 \frac{a}{R^{3}}\max \chi_B^2,
\end{align*}
which follows from \eqref{eq:Q2apriori}\label{no Bogolubov bound}, this yields
\begin{align}
\langle \psi'\vert H_B\vert \psi'\rangle-\rhozero ^2\abs{B}^2\,\cU_B\geq\frac{1}{8}n'^2\,\cU_B-Cn_+'^2 aR^{-3}\max \chi_B^2\geq 0,\nn
\end{align}
provided $\delta$ is sufficiently small.
\end{proof}
When applying the above lemma, we will assume that the box $B=B(u,u')$ has either smallest side length $\lambda_1\leq \rhozero ^{-\frac{1}{3}}$ or $\lambda_1>\rhozero ^{-\frac{1}{3}}$. Note that if $\lambda_1>\rho^{-\frac{1}{3}}$, we get $n\leq C_0\abs{B}\max \offf{R^{-3},\rhozero }$ and may apply Lemma~\ref{lm:boglowerbound}.
\subsection{A Priori Bounds on the Energy in the Small Box}\label{sec: small box}
\indent Small boxes at the boundary of the large box may be arbitrarily small. We first consider the case of boxes which are so small that Bogolubov's method can not be applied. By the lemma below these boxes only contribute to $e_0(\rhozero )$ by an amount, which is of higher order than the LHY-term.
\begin{lm}[Lower bound on the energy on small boxes with $\lambda_1\leq\rhozero^{-\frac{1}{3}}$]~\label{lm:energy for tiny boxes on the boundary}\\
For any $n$-particle state $\psi$ on a small box $B$ with smallest side length $\lambda_1\leq \rhozero ^{-\frac{1}{3}}$, we have
	\begin{align}
	\langle \psi \vert H_B\vert \psi\rangle \geq 
	-C\abs{B}\max\offf{\rhozero ,R^{-3}}\frac{a}{R^{3}}\of{\frac{\rhozero ^{-\frac{1}{3}}}{d\ell}}^{4M+2},\label{eq: HB estimate lambda_1 small}
	\end{align}
	where $M$ is the integer in the definition \eqref{zeta new and eq:defchi} of $\chi$. For all $u\in \R^3$
	\begin{align}
	\intt{\lambda_1(B(u,u'))\leq \rhozero ^{-\frac{1}{3}}}{}H_{uu'}\dd u'\geq
	-C\tinyyboxcontribution,\label{eq:smallboxenergy}
	\end{align}
	with $\tinyyboxcontribution=\abs{B}\max\offf{\rhozero ,R^{-3}}\frac{a}{R^{3}}\of{\frac{\rhozero ^{-\frac{1}{3}}}{\ell}}^{2M}$.
\end{lm}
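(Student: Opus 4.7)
My plan is to first establish the pointwise bound \eqref{eq: HB estimate lambda_1 small}, and then integrate it to obtain \eqref{eq:smallboxenergy}.

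For \eqref{eq: HB estimate lambda_1 small}, I would argue as follows. If $\langle\psi|H_B|\psi\rangle\geq 0$, the bound is trivial since its right-hand side is non-positive. Otherwise $\langle H_B\rangle<0$, so Lemma~\ref{lm: napriori} applies and gives the a priori particle-number bound $n\leq K_B$. Substituting this into the simple lower bound from Lemma~\ref{lm: simmple bound on H_B and n_+} yields
\begin{align*}
\langle\psi|H_B|\psi\rangle \;\geq\; -Cn\rhozero|B|\cU_B \;\geq\; -CK_B\rhozero|B|\cU_B.
\end{align*}
I would then estimate $\cU_B\leq C(a/R^3)\max\chi_B^2$ via \eqref{eq:wBUB2} and $\max\chi_B^2\leq C(\lambda_1/(d\ell))^{4(M+1)}$ via \eqref{eq:maxchi_B^2 bound}, and use $\lambda_1\leq\rhozero^{-1/3}$. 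Since $\rhozero^{-1/3}/(d\ell)<\delta s<1$ by Condition~\ref{cond:1}, two of the four factors can be absorbed into the constant, leaving the exponent $4M+2$. A short case distinction on $K_B=C_0|B|\max\{\prod_j\min\{\lambda_j,R\}^{-1},\rhozero\}$---distinguishing which term dominates the maximum and using that $R\gg\rhozero^{-1/3}$ by Assumption~\ref{R assumption}---then allows one to bound $K_B\rhozero|B|$ by $C|B|\max\{\rhozero,R^{-3}\}$, which yields \eqref{eq: HB estimate lambda_1 small}.

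For \eqref{eq:smallboxenergy}, the pointwise bound is an operator inequality (its right-hand side is a scalar) and can therefore be integrated against the indicator of $U:=\{u':\lambda_1(B(u,u'))\leq\rhozero^{-1/3}\}$. By Condition~\ref{cond:1} ($d\ell\gg\rhozero^{-1/3}$) the set $U$ is a union of thin slabs of $u'$-width $\rhozero^{-1/3}/(d\ell)$ attached to each of the six faces of $B(u)$. Parameterizing near one face by the normal overlap length $t\in[0,\rhozero^{-1/3}]$, so $du'_{\perp}=dt/(d\ell)$, and observing that $\int\lambda_j(u'_j)\,du'_j=\ell$ in each tangential direction, one obtains
\begin{align*}
\int_U|B(u,u')|\,du' \;\leq\; C\,\frac{\ell^2\,\rhozero^{-2/3}}{d\ell}.
\end{align*}
Multiplying this by the pointwise bound and simplifying then gives $\int_U H_{uu'}\,du'\geq -C\tinyyboxcontribution$.

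The main obstacle is the final bookkeeping step, where one must verify that the residual quantity comparing the pointwise exponent $4M+2$ on $\rhozero^{-1/3}/(d\ell)$ with the target exponent $2M$ on $\rhozero^{-1/3}/\ell$ appearing in $\tinyyboxcontribution$ is absorbable into the constant. Concretely, this amounts to checking a monomial inequality of the form $(\rhozero^{1/3}\ell)^{-(2M+4)}d^{-(4M+3)}\leq C$, which can be arranged by choosing $M$ sufficiently large (to be fixed in Lemma~\ref{lm: Chosen parameters are allowed}) so that Conditions~\ref{cond:1} and~\ref{cond:2} supply enough smallness in $d\ell\rhozero^{1/3}$ and related quantities to dominate the powers of $d^{-1}$ produced by the boundary integration.
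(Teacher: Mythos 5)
Your skeleton is the same as the paper's (the a priori bound $n\leq K_B$ from Lemma~\ref{lm: napriori}, the simple bound $\langle H_B\rangle\geq -Cn\rho|B|\,\cU_B$ from Lemma~\ref{lm: simmple bound on H_B and n_+}, then the upper bounds on $\cU_B$ and $\max\chi_B^2$), but your bookkeeping for \eqref{eq: HB estimate lambda_1 small} breaks at the last step. The inequality you need after discarding two factors, namely $K_B\,\rho|B|\leq C|B|\max\{\rho,R^{-3}\}$, is false: take $\lambda_1=\rho^{-1/3}$, $\lambda_2=\lambda_3=d\ell$; then $K_B\rho|B|$ exceeds $|B|\max\{\rho,R^{-3}\}$ by a factor of order $(\rho^{1/3}d\ell)^2\gg 1$. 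The two powers of $\lambda_1/(d\ell)$ that you "absorb into the constant" are exactly what is needed to kill this factor — that is why the exponent in the statement is $4M+2$ rather than $4M+4$ — so the case distinction must be run with $K_B\,\rho\,(\lambda_1/(d\ell))^2$ (or $(\rho^{-1/3}/(d\ell))^2$) on the left, not with $K_B\rho$ alone. Moreover your justification "$R\gg\rho^{-1/3}$ by Assumption~\ref{R assumption}" is wrong: that assumption only requires $R\rho^{1/3}\gg(\rho a^3)^{\eta}$ and explicitly permits $R\ll\rho^{-1/3}$ (this is the main point of the paper), and in any case the claimed comparison fails even when $R\geq\rho^{-1/3}$, as the example above shows.

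For \eqref{eq:smallboxenergy} there is a second, independent gap. Your integration geometry ($du'_\perp=dt/(d\ell)$, $\int\lambda_j\,du'_j=\ell$, hence $\int_U|B(u,u')|\,du'\leq C\ell^2\rho^{-2/3}/(d\ell)$) is fine, but integrating the uniform pointwise bound \eqref{eq: HB estimate lambda_1 small} cannot produce $\tinyyboxcontribution$: the residual monomial you identify, $(\rho^{-1/3}/\ell)^{2M+4}d^{-(4M+3)}$, is not controlled by Conditions~\ref{cond:1}--\ref{cond:epsilon3lower bound} (one can choose admissible $d,s,\ell,\rho$ making it arbitrarily large), and increasing $M$ makes it worse, not better, since each unit of $M$ costs $d^{-4}$ against a gain of only $(\rho^{-1/3}/\ell)^2\leq(\delta ds)^2$. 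The paper avoids this by never passing through the uniform statement: it integrates the sharper intermediate bound $-C\frac{|B|^2}{\min\{\lambda_1,R\}^3}\rho\frac{a}{R^3}\max\chi_B^2$, estimating $\max\chi_B^2$ by the first inequality in \eqref{eq:maxchi_B^2 bound}, which retains the factor $(\lambda_1/\ell)^{2(M+1)}$ coming from the \emph{large-box} localization function $\chi_u$ — this is precisely where the $(\rho^{-1/3}/\ell)^{2M}$ in $\tinyyboxcontribution$ originates, with the large scale $\ell$ rather than $d\ell$ in the denominator. Your route, which converts everything into powers of $\rho^{-1/3}/(d\ell)$, loses this information and cannot recover it afterwards.
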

\begin{proof}
	We use Lemma \ref{lm: simmple bound on H_B and n_+} to get the bound $\langle H_B\rangle\geq -Cn\rhozero \abs{B}\cU_B$. Since we may assume that $\langle H_B\rangle \leq 0$, we use Lemma \ref{lm: napriori}, which together with $\lambda_1\leq \rhozero ^{-\frac{1}{3}}$ gives \mbox{$n\leq C\abs{B}(\min\offf{\lambda_1,R})^{-3}$}. Using the upper bound in \eqref{eq:wBUB2} followed by \eqref{eq:maxchi_B^2 bound}, we arrive at
	\begin{align}
	\langle H_B\rangle\geq &-C\frac{\abs{B}^2}{\min\offf{\lambda_1^3,R^3}}\rhozero \frac{a}{R^3}\max\chi_B^2\nn\\
	\geq& -C\frac{\lambda_1(d\ell)^2\abs{B}}{\min\offf{\lambda_1^3,R^3}}\rhozero  \frac{a}{R^3}\of{\frac{\lambda_1}{\ell}}^{2(M+1)}.
	\end{align}
	The estimate in \eqref{eq:smallboxenergy} is obtained by integrating over $u'$ such that $B(u,u')$ has $\lambda_1\leq \rhozero ^{-\frac{1}{3}}$, which gives a volume smaller than $Cd^{-2}\frac{\rhozero ^{-\frac{1}{3}}}{d\ell}$, and using that $\lambda_1\leq \rhozero^{-\frac{1}{3}}<d\ell$.
\end{proof}

Now we turn to the case of small boxes which have smallest side length larger than $\rhozero^{-\frac{1}{3}}$ and where Bogolubov's method, i.e., Lemma~\ref{lm:boglowerbound}, therefore is applicable.
\begin{lm}[$2^{nd}$ a priori bound on $n$ for small boxes with $\lambda_1\geq\rhozero ^{-\frac{1}{3}}$]\label{L:n bound}~\\
If $B$ is a small box with $\lambda_1\geq \rhozero ^{-\frac{1}{3}}$, then there exists a constant $C_1>1$ such that for any $n$-particle state satisfying $\langle \psi \vert  H_B\vert \psi \rangle \leq 0 $ we have
	\begin{align*}
	n\leq \Cone\rhozero  \abs{B}.
	\end{align*}
\end{lm}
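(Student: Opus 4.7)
The plan is to argue by contradiction: I assume there exists an $n$-particle state $\psi$ with $\langle\psi|H_B|\psi\rangle\leq 0$ but $n>C_1\rho|B|$ for some $C_1\geq 3$ to be chosen, and then show $\langle\psi|H_B|\psi\rangle>0$. The whole argument hinges on the positive quadratic piece $(n-\rho|B|)^2\cU_B$ coming from $\cQ_0$: under the assumption $n>C_1\rho|B|$ we have $(n-\rho|B|)^2\geq (1-C_1^{-1})^2 n^2\geq \tfrac{4}{9}n^2$, so this term is comparable to $n^2\cU_B$. Everything else must be shown to be negligible against it.

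First I would use Lemma~\ref{lm: napriori} together with the hypothesis $\lambda_1\geq\rho^{-1/3}$ to conclude $n\leq C_0|B|\max\{R^{-3},\rho\}$, which in turn verifies the Bogolubov hypothesis $n|B|^{-1}\leq c(aR^2)^{-1}$ of Lemma~\ref{lm:boglowerbound}: in the regime $R^{-3}\geq\rho$ this follows from $a/R<\delta$ in Condition~\ref{cond:1}, and in the regime $R^{-3}<\rho$ from the consequence $\rho a R^2\to 0$ of Assumption~\ref{R assumption}. Next I apply Lemma~\ref{lm: simmple bound on H_B and n_+} (valid because $\langle H_B\rangle\leq 0\leq \rho^2|B|^2\cU_B$) together with Condition~\ref{cond:epsilon3lower bound} to deduce $\langle n_+\rangle\leq C\delta n$.

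Now I combine the small-box version of Lemma~\ref{lm:HNonQuad} with Lemma~\ref{lm:quadratic} applied at $\quadpar=0$ and Lemma~\ref{lm:boglowerbound}, following the same template as Case~I of the proof of Lemma~\ref{lm: napriori}. This delivers
\[
\langle H_B\rangle\geq \Bigl[\tfrac{7}{8}(n-\rho|B|)^2-C(n-\rho|B|)\langle n_+\rangle-Cn-Cn\langle n_+\rangle\Bigr]\cU_B+\tfrac{1}{2}(2\pi)^{-3}\!\!\int h_0(k)\,dk-Cn_+a\min\{R^{-3},|B|^{-1}\}\max\chi_B^2.
\]
The $n_+$-containing error terms contribute at most $C\delta n^2\cU_B$ by Step~2, and the $Z$-operator correction is bounded by $Cn_+\cU_B\leq C\delta n\cU_B$ via the lower bound on $\cU_B$ in Lemma~\ref{lm:wBUB}. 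The remaining Bogolubov integral splits into a small-momentum piece (where $\tau_B\equiv 0$) bounded by $Cna(ds\ell)^{-3}\max\chi_B^2$, which is $\ll n^2\cU_B$ because $(ds\ell)^{-3}\leq \delta^3\rho$ by Condition~\ref{cond:1} while $n\geq C_1\rho|B|\gg\rho$, and a high-momentum LHY-type contribution of size $n^{5/2}a^{5/2}/|B|^{3/2}$, whose ratio to $n^2\cU_B$ is $\sqrt{(n/|B|)a^3}\leq\max\{(a/R)^{3/2},\sqrt{\rho a^3}\}\ll 1$ by Step~1, Condition~\ref{cond:1} and Assumption~\ref{R assumption}.

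Collecting everything gives $\langle H_B\rangle\geq \tfrac16 n^2\cU_B-Cn\cU_B$; since $\lambda_1\geq\rho^{-1/3}$ forces $|B|\geq\rho^{-1}$ and hence $n>C_1\rho|B|\geq C_1$, choosing $C_1$ larger than all implicit constants makes this strictly positive, delivering the contradiction. The main obstacle I expect is the bookkeeping of the error terms uniformly across the two sub-regimes $R\lessgtr\rho^{-1/3}$: in the case $R<\rho^{-1/3}$ the first-level bound of Lemma~\ref{lm: napriori} is only $n\leq C_0|B|R^{-3}$, which is weaker than $\rho|B|$, so one must argue that the Bogolubov corrections scale with $n/|B|$ and inherit the smallness $(a/R)^{3/2}$ rather than the smallness $\sqrt{\rho a^3}$; consolidating these estimates into a single clean bound dominated by $n^2\cU_B$ is the part requiring the most care.
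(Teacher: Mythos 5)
Your proposal is correct and follows essentially the same route as the paper's proof: assume $n\geq C_1\rhozero\abs{B}$, use Lemma~\ref{lm: napriori} (plus $\lambda_1\geq\rhozero^{-1/3}$) to justify Bogolubov's method, extract the positive $cn^2\,\cU_B$ from Lemma~\ref{lm:HNonQuad} while absorbing the $n_+$-errors via Lemma~\ref{lm: simmple bound on H_B and n_+} and Condition~\ref{cond:epsilon3lower bound}, and control $\int h_0$ with Lemma~\ref{lm:boglowerbound} and the appendix bounds to reach a contradiction, the paper merely streamlining matters by first reducing to $R\leq\rhozero^{-1/3}$ (the case $R\geq\rhozero^{-1/3}$ being immediate from Lemma~\ref{lm: napriori}), which makes $\cU_B\sim a\abs{B}^{-1}\max\chi_B^2$ two-sided. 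One minor imprecision: the dominant high-momentum error in the Bogolubov integral is the second-Born-type term $C\frac{n^2}{\abs{B}^2}\frac{a^2}{R}\int\chi_B^2$ (relative size $a/R<\delta$), together with the $\widehat{W}^3$-term, rather than the LHY-type $n^{5/2}a^{5/2}\abs{B}^{-3/2}$ contribution you quote, but this is harmless since the $a/R$ smallness you already invoke controls it, exactly as in the paper's estimate.
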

\begin{proof}
	We may assume that $R\leq \rhozero ^{-\frac{1}{3}}$, since otherwise the lemma follows from the first a priori bound on $n$ in Lemma~\ref{lm: napriori}.
	Hence the estimates \eqref{eq:wBUB2} and \eqref{eq:wBUB3} give 
	\begin{align}
	C\inv\frac{a}{\abs{B}}\max \chi_B^2\leq \cU_B\leq C\frac{a}{\abs{B}}\max \chi_B^2.\qquad \label{U_Border}
	\end{align}
	Assume $n\geq \Cone\rhozero  \abs{B}$ with $\Cone>1$. Note here that $\rhozero  \abs{B}\geq \rhozero  \rhozero \inv =1$ since $\lambda_1\geq \rhozero ^{-\frac{1}{3}}$ and we therefore by assumption have that $n\geq \Cone$. 
	Now Lemma~\ref{lm:quadratic}, Lemma~\ref{lm:HNonQuad} and \eqref{U_Border} imply
	\begin{align}
	\langle H_{B}\rangle &\geq Cn^2\cU_B+\frac{1}{2}(2\pi)^{-3}\int h_0(k)\dd k-Cn_+\frac{a}{\abs{B}}\max\chi_B^2\nn\\
	&\geq Cn^2\frac{a}{\abs{B}}\max\chi_B^2+\frac{1}{2}(2\pi)^{-3}\int h_0(k)\dd k,
	\end{align}
	provided $\Cone$ is sufficiently large and $\delta$ is sufficiently small.
	Combining \eqref{h_0estimate1}, \eqref{W^3error}, \eqref{W^3estimate} and \eqref{W^2estimate} in the appendix, we get
	\begin{align}
	\inttt{}{}h_0(k)\dd k
	\geq & -C\frac{n}{\abs{B}}a(ds\ell)^{-3}\int \chi_B^2 -C\frac{n^2}{\abs{B}^2}\frac{a^2}{R}\int \chi_B^2 -C\frac{n^3a^3}{\abs{B}^3}R\int\chi_B^2.
	\end{align}
	Since $n\leq C_0\abs{B}R^{-3}$ by Lemma~\ref{lm: napriori} the assumption $n\geq \Cone\rhozero  \abs{B}$ leads to
	\begin{align*}
	0\geq\langle  H_B\rangle \geq \of{Cn^2-Cn^2\rhozero \inv (ds\ell)^{-3}-Cn^2\frac{a}{R}-Cn^2\frac{a^2}{R^2}}\frac{a}{\abs{B}}\max \chi_B^2,
	\end{align*}
	which contradicts that 
	$C-C(\frac{\rhozero ^{-\frac{1}{3}}}{ds\ell})^{3}-C\frac{a}{R}-C\frac{a^2}{R^2}>0$ by Condition~\ref{cond:1} if $\delta$ is sufficiently small.\\
\end{proof}
\begin{lm}[Lower bound on the energy on small boxes with $\lambda_1\geq\rhozero^{-\frac{1}{3}}$]\label{lm: energy on not too small boxes}~\\
	If $B$ is a small box with $\lambda_1\geq \rhozero ^{-\frac{1}{3}}$ and $\psi$ is an $n$-particle state satisfying $\langle \psi \vert  H_B\vert \psi \rangle \leq 0 $, then
	\begin{align}
	\langle H_B\rangle\geq{}& C\varepsilon_T(d\ell)^{-2}n_++\frac{3}{4}\abs{n-\rhozero \abs{B}}^2\cU_B-C\rhozero a\nn\\
	&-\frac{1}{4}(2\pi)^{-3}\rhozero ^2\frac{1}{R}\inttt{}{}\frac{\widehat{v}_{1}(k)^2}{\abs{k}^2}\dd k \int \chi_B^2-C\rhozero ^2a(\rhozero a^3)^{\frac{1}{2}} \int \chi_B^2(\sqrt{\rhozero  a}ds\ell)^{-3}.\label{eq:HB not too small box}
	\end{align}
\end{lm}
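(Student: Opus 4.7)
First I would verify the Bogolubov hypothesis. Since $\lambda_1\geq\rhozero^{-1/3}$ and $\langle\psi|H_B|\psi\rangle\leq 0$, Lemma~\ref{L:n bound} gives $n\leq\Cone\rhozero|B|$, hence $n|B|^{-1}\leq\Cone\rhozero$; Assumption~\ref{R assumption} forces $\rhozero a R^2\to 0$, so the condition $n|B|^{-1}\leq c(aR^2)^{-1}$ of Lemma~\ref{lm:boglowerbound} is satisfied once $\delta$ is small. Applying Lemma~\ref{lm:HNonQuad} to the non-quadratic part of $H_B$, and Lemma~\ref{lm:quadratic} with $\sigma=0$ together with Lemma~\ref{lm:boglowerbound} to the quadratic part, yields
\begin{align*}
\langle H_B\rangle\geq{}& (1-\varepsilon_0)\varepsilon_T\widetilde b(d\ell)^{-2}\langle n_+\rangle+\Big[\tfrac{7}{8}\langle\abs{n-\rhozero\abs{B}}^2\rangle-C\langle\abs{n-\rhozero\abs{B}}n_+\rangle-Cn-C\langle nn_+\rangle\Big]\cU_B\\
&+\tfrac{1}{2}(2\pi)^{-3}\int h_0(k)\dd k-C\langle n_+\rangle a\min\{R^{-3},|B|^{-1}\}\max\chi_B^2.
\end{align*}

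Next I would absorb the various cross-terms. A Cauchy--Schwarz splits $C\langle\abs{n-\rhozero\abs{B}}n_+\rangle\cU_B$ into $\tfrac{1}{16}\langle\abs{n-\rhozero\abs{B}}^2\rangle\cU_B+C\langle n_+^2\rangle\cU_B$, shaving the coefficient $\tfrac{7}{8}$ down to $\tfrac{3}{4}$. Using $\cU_B\leq Ca|B|^{-1}\max\chi_B^2$, the bound $n\leq\Cone\rhozero|B|$, and the first inequality of Condition~\ref{cond:epsilon3lower bound} (which gives $(d\ell)^2\rhozero a\ll\varepsilon_T$), each of $C\langle n_+^2\rangle\cU_B$, $C\langle nn_+\rangle\cU_B$, and the stray $C\langle n_+\rangle a\min\{R^{-3},|B|^{-1}\}\max\chi_B^2$ is dominated by a small fraction of the Neumann-gap term $\varepsilon_T(d\ell)^{-2}\langle n_+\rangle$, so that a positive multiple $C\varepsilon_T(d\ell)^{-2}\langle n_+\rangle$ survives on the right-hand side. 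The leftover $Cn\cU_B$ is bounded by $C\rhozero a$, accounting for the slack $-C\rhozero a$ in the statement.

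Finally I would estimate $\tfrac{1}{2}(2\pi)^{-3}\int h_0(k)\dd k$ by expanding the Bogolubov square root of Lemma~\ref{lm:boglowerbound}. With $\tau=n^{-1}\tau_B(k^2)$ and $V=|B|^{-1}\widehat W(k)$, the Taylor expansion $\tau+V-\sqrt{\tau^2+2\tau V}=V^2/(2\tau)-V^3/(2\tau^2)+\ldots$ produces as leading order the second Born contribution; using $n_0,n\approx\rhozero|B|$, replacing $\tau_B(k^2)$ by $|k|^2$, and substituting $\widehat W\approx\widehat{v_R}=\widehat{v_1}(R\,\cdot\,)$ followed by the change of variables $k\to k/R$ converts this into exactly $-\tfrac{1}{4}(2\pi)^{-3}\rhozero^2 R^{-1}\int\widehat{v_1}(k)^2/|k|^2\dd k\int\chi_B^2$. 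The cubic Born remainder, the infrared discrepancy between $\tau_B(k^2)$ and $|k|^2$ on momenta $|k|\lesssim(ds\ell)^{-1}$, and the difference $W-v_R$ are then controlled by $C\rhozero^2 a(\rhozero a^3)^{1/2}(\sqrt{\rhozero a}ds\ell)^{-3}\int\chi_B^2$, while the positive LHY piece inherent in the square root is simply discarded (only a lower bound is sought). The main obstacle lies exactly in this last expansion: one must isolate the second Born term with the precise constant $-\tfrac{1}{4}(2\pi)^{-3}$ and simultaneously control both the cubic Born remainder and the infrared cutoff error in $\tau_B$, ensuring that every residual fits into the single LHY-scale error term $C\rhozero^2 a(\rhozero a^3)^{1/2}(\sqrt{\rhozero a}ds\ell)^{-3}\int\chi_B^2$ without eating into the negative Born term one wants to extract.
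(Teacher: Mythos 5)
Your overall architecture is the right one and matches the paper's: bound $H_B-H_{\rm Quad}$ by Lemma~\ref{lm:HNonQuad}, absorb the terms linear in $n_+$ into the Neumann gap via Condition~\ref{cond:epsilon3lower bound}, and bound $H_{\rm Quad}$ by Lemma~\ref{lm:quadratic} with $\quadpar=0$ plus Lemma~\ref{lm:boglowerbound} and the appendix-type expansion of the Bogolubov square root. The infrared region $\abs{k}\lesssim(ds\ell)^{-1}$, the cubic Born remainder, and the difference $W-v_R$ are indeed absorbed into the $(\sqrt{\rho a}\,ds\ell)^{-3}$ error exactly as you say.

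However, there is a genuine gap at the point where you write ``using $n_0,n\approx\rho\abs{B}$'' to convert the Born term into $-\tfrac14(2\pi)^{-3}\rho^2R^{-1}\int\widehat v_1^2\abs{k}^{-2}\,\int\chi_B^2$. On a small box the only information available is $n\leq C_1\rho\abs{B}$ with $C_1>1$ (Lemma~\ref{L:n bound}); there is no bound forcing $n$ to be close to $\rho\abs{B}$. The Bogolubov expansion produces the negative Born term with prefactor $(n/\abs{B})^2$, and since this term is of the same order as the quantity being extracted (it is comparable to $\rho^2a\cdot a/R$, which by Condition~\ref{conditions start large box} can dominate the LHY order), replacing $(n/\abs{B})^2$ by $\rho^2$ is not a lower-bound-preserving step and the discrepancy is not an admissible error. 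The paper resolves this by \emph{not} spending the margin between $\tfrac78$ and $\tfrac34$ on the cross term $\abs{n-\rho\abs{B}}n_+\,\cU_B$ (that term is simply $\leq C\rho a\,n_+$ and goes into the Neumann gap), and instead trading a fraction $C_1\abs{n-\rho\abs{B}}^2\cU_B$ of the positive quadratic term against the $(n/\abs{B})^2$-Born term, using the \emph{lower} bound on $\cU_B$ from \eqref{eq:wBUB2} and optimizing over $n$; this is the step \eqref{eq:n optimization2firstline}--\eqref{eq:n optimization2}, and it generates the additional error $C\rho^2a\frac{a^2}{R^2}\max\{\rho R^3,1\}\int\chi_B^2$, which must then be checked to fit under $(\sqrt{\rho a}\,ds\ell)^{-3}$ via Conditions~\ref{cond:1} and \ref{cond:epsilon3lower bound}. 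In your version this mechanism is absent: after your Cauchy--Schwarz (which, incidentally, leaves $\tfrac{13}{16}$, not $\tfrac34$) you have reserved no part of $\abs{n-\rho\abs{B}}^2\cU_B$ for the density replacement, so the stated inequality with $\rho^2$ in the Born term and coefficient $\tfrac34$ does not follow from your argument as written.
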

\begin{proof}
	We start by estimating $H_B-H_{\textrm{Quad}}$. Recall that $n\leq C\rhozero \abs{B}$ by Lemma~\ref{L:n bound} and $\cU_B\leq C\frac{a}{\abs{B}}$ by \eqref{eq:wBUB3}. Now Lemma~\ref{lm:HNonQuad} and that that $(d\ell\sqrt{\rhozero a})^2<\theepsilonTconst \varepsilon_T $ by Condition~\ref{cond:epsilon3lower bound} yields
\begin{align}
H_B-H_{\textrm{Quad}}
\geq{}& C\varepsilon_T(d\ell)^{-2}n_++\frac{7}{8}\abs{n-\rhozero |B|}^2\cU_B-C\rhozero a-C\rhozero an_+\nn\\
\geq {}& C\varepsilon_T(d\ell)^{-2}n_++\frac{7}{8}\abs{n-\rhozero |B|}^2\cU_B-C\rhozero a.\label{eq:NonQuad bound improved2}
\end{align}
We use Lemma~\ref{lm:quadratic} with $\quadpar=0$ to estimate the quadratic part
\begin{align}
H_{\rm
	Quad}\geq\frac12(2\pi)^{-3}\int_{\R^3} h_{0}(k) \dd k
-Cn_+a\min\{R^{-3},|B|^{-1}\}\max\chi_B^2.\label{eq:HQuad inequlaity2}
\end{align}
Using the bounds \eqref{h_0estimate1}, \eqref{W^3error}, \eqref{W^3estimate} and \eqref{W^2estimate} in the appendix, we obtain
\begin{align}
\frac{1}{2}(2\pi)^{-3}\int h_0(k)\dd k\geq & -\frac{1}{4}(2\pi)^{-3}(1+C\varepsilon_0)\frac{1}{R}\int\of{\frac{n}{\abs{B}}}^2\frac{\widehat{v}_{1}(k)^2}{\abs{k}^2}\dd k\int \chi_B^2\nn\\
&-C\rhozero ^2a\frac{a}{ds\ell}\ln\labs{\frac{ds\ell}{R}}\int \chi_B^2-C(\rhozero a)^3R\int \chi_B^2\nn\\
&-C\rhozero a(ds\ell)^{-3}\int \chi_B^2.
\end{align}
The second term in \eqref{eq:HQuad inequlaity2} is smaller than $Cn_+\rhozero a$, since we assumed that $\lambda_1\geq \rhozero^{-\frac{1}{3}}$, and may therefore also be absorbed into the Neumann gap.
Instead of estimating the term $\abs{n-\rhozero \abs{B}}$, we can use that if $C_1>0$, then
\begin{align}
&-\frac{1}{4}(2\pi)^{-3}\of{\frac{n}{\abs{B}}}^2\frac{1}{R}\inttt{}{} \frac{\widehat{v}_1(k)^2}{\abs{k}^2}\dd k\int \chi_B^2+C_1\abs{n-\rhozero \abs{B}}^2\cU_B&\label{eq:n optimization2firstline}\\
\geq& -\frac{1}{4}(2\pi)^{-3}\rhozero ^2\frac{1}{R}\inttt{}{} \frac{\widehat{v}_1(k)^2}{\abs{k}^2}\dd k\int \chi_B^2-C\rhozero ^2a\frac{a^2}{R^2}\max \{\rhozero R^3,1\}\int \chi_B^2.\label{eq:n optimization2}
\end{align}
To see this we insert the following estimate, which follows from \eqref{eq:wBUB2},
\begin{align*}
\cU_B\geq C\min\offf{\frac{\rhozero \inv}{R^2},R}\frac{1}{R}\frac{a}{\abs{B}}\max\chi_B^2.
\end{align*}
The $n^2$-term in \eqref{eq:n optimization2firstline} is then positive if $\delta$ is sufficiently small and we can verify \eqref{eq:n optimization2} by optimizing over $n$ and noticing that the optimal particle number satisfies $n\leq \rhozero \abs{B}(1+C\frac{a}{R}\max \{\rhozero R^3,1\})$.
After noting that by Condition~\ref{cond:1} and Condition~\ref{cond:epsilon3lower bound}
\begin{align*}
(\sqrt{\rhozero  a}ds\ell)^{-3}\geq &\,\varepsilon_0(\rhozero a^3)^{-\frac{1}{2}}\frac{a}{R}
+(\sqrt{\rhozero  a}ds\ell)^{-1}\ln\of{\frac{ds\ell}{R}}
+\sqrt{\rhozero a}R
%&+(\sqrt{\rhozero  a}ds\ell)^{-3}
+(\rhozero a^3)^{-\frac{1}{2}}\frac{a^2}{R^2}\max \{\rhozero R^3,1\}
\end{align*}
the lemma follows.
\end{proof}
\section{Estimates on the Large Box}\label{sec: large box}
From now on we will only focus on the large box, where we have $\int\chi_B^2=\abs{B}=\ell^3$. 
The following lemma gives a lower bound for the operator $H_B$ on the large box and is the starting point for the bounds on $n_+$ and $\abs{n-\rhozero \abs{B}}$ on the large box.
\begin{lm}On a large box we have\label{lm: energy bound on large box}
\begin{align}
H_B\geq & -\frac{1}{4}(2\pi)^{-3}\rhozero ^2\frac{1}{R}\intt{}{} \frac{\widehat{v}_{1}(k)^2}{\abs{k}^2}\dd k\abs{B}\,+\,Cb\ell^{-2}n_+-C\rhozero ^2a\abs{B}\sqrt{\rhozero a^3}(\sqrt{\rhozero a}ds\ell)^{-3}.
\end{align}
\end{lm}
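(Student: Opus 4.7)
The plan is to combine the non-quadratic estimates of Lemma~\ref{lm:HNonQuad} (in its large-box form \eqref{eq:HNonQuadbig}) with the Bogolubov bound Lemma~\ref{lm:boglowerbound}, the key modification being that I apply Lemma~\ref{lm:quadratic} with $\quadpar=1$ rather than $\quadpar=0$, as flagged in the discussion following Lemma~\ref{lm:boglowerbound}. This is dictated by the fact that on the large box $\cU_B=4\pi a_1/\ell^{3}$, so the positive $|n-\rhozero\abs{B}|^{2}\cU_B$ contribution to $\cQ_0$ visible in \eqref{eq:HNonQuadbig} is of the order of the leading Born energy and must be cancelled; the $\quadpar^{2}$-term in the Bogolubov bound \eqref{eq:h_lambda estimate} provides exactly this cancellation via \eqref{eq: sigmsa=1 term leading order}, at the price of the small remainder \eqref{sigma=1 small error term}.

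Concretely, I split $H_B=(H_B-H_{\mathrm{Quad}}-\cQ'_1)+(H_{\mathrm{Quad}}+\cQ'_1)$, keeping $\cQ'_1$ outside Lemma~\ref{lm:HNonQuad} so that it can be incorporated into the Bogolubov estimate through the choice $\quadpar=1$. For the non-quadratic remainder I proceed term by term: \eqref{eq:noQ} for $\cQ_0$ (retaining in full the positive $|n-\rhozero\abs{B}|^{2}\cU_B$), Lemma~\ref{lm:Q1} for $\cQ''_1$, \eqref{eq:Q2second} for $\cQ''_2$, and \eqref{eq:Q3} with $\varepsilon_3\leq 1$ for $\cQ_3$ so that its residual $4$-$\cQ$ tail is absorbed by $\cQ_4\geq 0$ from \eqref{eq:Q4}, together with the large-box kinetic gap $(1-\varepsilon_0)b\ell^{-2}n_+$ from \eqref{eq:TB1}. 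For the quadratic part I apply Lemmas~\ref{lm:quadratic} and~\ref{lm:boglowerbound} with $\quadpar=1$; the hypothesis $n\abs{B}^{-1}\leq c(aR^{2})^{-1}$ is satisfied because $\rhozero aR^{2}\ll 1$ by Conditions~\ref{cond:1}--\ref{cond:epsilon3lower bound} together with an a priori bound $n\leq C\rhozero\abs{B}$ for negative-energy states (the argument of Lemma~\ref{L:n bound} applies verbatim since trivially $\ell\gg\rhozero^{-1/3}$). Integration of the $\quadpar^{2}$-term in \eqref{eq:h_lambda estimate} yields $-|n-\rhozero\abs{B}|^{2}\cU_B$ modulo the error \eqref{sigma=1 small error term}, cancelling the corresponding term from $\cQ_0$; the main Bogolubov piece is estimated via the appendix inequalities \eqref{h_0estimate1}, \eqref{W^3error}, \eqref{W^3estimate}, \eqref{W^2estimate} used in the small-box analogue, producing the announced leading contribution $-\frac14(2\pi)^{-3}\rhozero^{2}R^{-1}\int\abs{k}^{-2}\widehat{v_1}(k)^{2}\dd k\,\abs{B}$ and residuals that reorganize, via Conditions~\ref{cond:1} and~\ref{cond:epsilon3lower bound}, into $C\rhozero^{2}a\abs{B}\sqrt{\rhozero a^{3}}(\sqrt{\rhozero a}ds\ell)^{-3}$.

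The principal obstacle, and the reason the statement retains only a fraction $Cb\ell^{-2}n_+$ of the Neumann gap, is the control of the cross terms $|n-\rhozero\abs{B}|n_+\cU_B$, $nn_+\cU_B$, $n_+^{2}\cU_B$ surviving after the $\quadpar=1$ cancellation. Unlike the small-box Lemma~\ref{lm: energy on not too small boxes}, where $\varepsilon_T(d\ell)^{-2}$ dominates $\rhozero a$ by Condition~\ref{cond:epsilon3lower bound}, the large-box gap $b\ell^{-2}$ is much smaller than $\rhozero a$, so these cross terms cannot simply be swallowed by the gap. Instead one must keep a small fraction of $|n-\rhozero\abs{B}|^{2}\cU_B$ uncancelled (equivalently, apply Lemma~\ref{lm:boglowerbound} effectively with $\quadpar=1-\alpha$ for a suitable $\alpha$) and exploit Cauchy--Schwarz together with the a priori estimate $\langle n_+\rangle\leq C\varepsilon_T^{-1}\rhozero a(d\ell)^{2}n$ from Lemma~\ref{lm: simmple bound on H_B and n_+} to push these cross terms into the stated error. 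The residual $\abs{\widehat{W}}$-versus-$\widehat{W}$ discrepancy in the cancellation is supported on $\abs{k}\geq R^{-1}$ by \eqref{W>0} and is controlled by $Ca\abs{B}(R/\ell)^{2M}$ via the rapid decay bound \eqref{eq:q^rchi^2}, which is of lower order than the target error once $M$ is chosen large enough in accordance with Condition~\ref{cond:epsilon3lower bound}.
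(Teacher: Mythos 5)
Your route is not the paper's, and it has a genuine gap. The paper proves this lemma without any large-box Bogolubov analysis: it applies the second localization inequality of Theorem~\ref{thm: Main localization}, $H_u-\frac{b}{2}\ell^{-2}Q_u\geq\int H_{uu'}\,\dd u'$, and then simply sums the small-box lower bounds already established — Lemma~\ref{lm: energy on not too small boxes} for the at most $Cd^{-3}$ boxes with $\lambda_1>\rhozero^{-1/3}$ and Lemma~\ref{lm:energy for tiny boxes on the boundary} for the tiny boundary boxes — using $\int\chi^2_{B(u,u')}\dd u'=\chi^2_{B(u)}$ and $\int\chi^2_{B(u)}=\ell^3$ from \eqref{Chisquared_integrals}. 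The term $Cb\ell^{-2}n_+$ in the statement is just the leftover half of the large-box gap from Theorem~\ref{thm: Main localization}, not the remnant of a cancellation in a $\quadpar=1$ Bogolubov bound.

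The gap in your direct approach is the control of $n_+$ (and of $n$) on the large box, which at this stage of the paper is not available and cannot be obtained the way you claim. The bound \eqref{n_+ is small} of Lemma~\ref{lm: simmple bound on H_B and n_+} is explicitly a small-box statement: it comes from the Neumann gap $\varepsilon_T(d\ell)^{-2}$ in \eqref{eq:Tuu}, which dominates $\rhozero a$ by Condition~\ref{cond:epsilon3lower bound}. On the large box the only gap is of order $b\ell^{-2}$ (the first term of \eqref{eq:Tauhat} yields only about $\varepsilon_T\ell^{-2}$ on states orthogonal to constants), and since $\ell\gg(\rhozero a)^{-1/2}$ this is far smaller than $\rhozero a$; negativity of the energy therefore gives no useful bound on $n_+$, and Lemma~\ref{L:n bound} does not apply ``verbatim'' because its proof rests on Lemma~\ref{lm: napriori} and on \eqref{n_+ is small}, both small-box results. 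Without $n_+\ll n$, the cross terms you must absorb — $Cnn_+\cU_B\sim\rhozero a\,n_+$, $n_+^2\cU_B$, $|n-\rhozero|B||\,n_+\cU_B$, and the $\varepsilon_1''$-terms of Lemma~\ref{lm:Q1} — can be of the leading order $\rhozero^2a|B|$ and fit neither into $Cb\ell^{-2}n_+$ nor into the stated error $C\rhozero^2a|B|\sqrt{\rhozero a^3}(\sqrt{\rhozero a}ds\ell)^{-3}$. In the paper the large-box bounds $\langle n_+\rangle\leq C\rhozero|B|\sqrt{\rhozero a^3}\SFINAL$ and $n\leq C\rhozero|B|$ (Lemmas~\ref{lm: n+ bound Lemma} and \ref{lm: HQuadbig}) are consequences of the present lemma, so invoking them here would be circular; likewise the $\quadpar=1$ cancellation you describe is indeed used, but only in Lemma~\ref{lm: locpsi main part}, after the $n_+$-localization of Theorem~\ref{thm:Localizing large matrices} has supplied the missing control. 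Note finally that the statement is an operator inequality, so restricting to negative-energy states does not suffice either, because of the positive term $Cb\ell^{-2}n_+$ on the right-hand side.
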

\begin{proof}
We use Lemma~\ref{thm: Main localization}, together with \eqref{Chisquared_integrals}, and sum the contribution of the small boxes. We use Lemma~\ref{lm: energy on not too small boxes} for boxes with $\lambda_1>\rhozero^{-\frac{1}{3}}$, of which we have less then $Cd^{-3}$. For the small boxes boxes with $\lambda_1\leq \rhozero^{-\frac{1}{3}}$ we use Lemma~\ref{lm:energy for tiny boxes on the boundary} and note that $\frac{\mathcal{L}}{\abs{B}}\leq (\rho a)^{\frac{5}{2}}$ by Condition~\ref{cond:1} if $M\geq 3$.
\end{proof}
States satisfying the condition below will play an important role in our following estimates.
\begin{condition}\label{energy condition}
	We require that that $B$ is a large box and that $\psi$ is a state with fixed particle number, which satisfies 
	\begin{align}\label{energy assumption}
	\state{\psi}{H_B} \leq -\frac{1}{4}(2\pi)^{-3}\rhozero ^2\frac{1}{R}\int \frac{\widehat{v}_1(k)^2}{\abs{k}^2}\dd k 
	\bigbox+C\rhozero ^2a\abs{B}\sqrt{\rhozero  a^3}\SFACTORED.
	\end{align}
\end{condition}
\begin{lm}[Control of $\langle n_+ \rangle$ in the large box]\label{lm: n+ bound Lemma}
	Assume that $\psi$ satisfies Condition~\ref{energy condition}, then
	 we have, with \label{def: SFINAL}$\SFINAL:=\rhozero a\ell^2 \SFACTORED$,
	\begin{align}
	\state{\psi}{n_+} \leq C\rhozero  \abs{B} \sqrt{\rhozero  a^3}\SFINAL.\vspace{-0.3 cm}\label{lm: n_+ bound}
	\end{align}
\end{lm}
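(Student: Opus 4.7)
The plan is straightforward because the preceding Lemma on large-box energy and Condition~\ref{energy condition} are essentially tailored to each other: the two bounds involve identical leading terms (the Born-type integral) and identical error terms, with a positive Neumann-gap contribution $Cb\ell^{-2}n_+$ sitting in between. The strategy is to sandwich the energy between these two bounds and read off the bound on $\langle n_+\rangle$ from the gap coefficient.

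Concretely, I would proceed as follows. First, apply the lower bound from Lemma~\ref{lm: energy bound on large box} to $\langle\psi\vert H_B\vert\psi\rangle$:
\begin{align*}
\langle\psi\vert H_B\vert \psi\rangle \,\geq\, -\tfrac14(2\pi)^{-3}\rhozero^2\tfrac1R\!\int\!\tfrac{\widehat{v}_1(k)^2}{\abs{k}^2}\dk\,\abs{B}+Cb\ell^{-2}\langle n_+\rangle -C\rhozero^2 a\abs{B}\sqrt{\rhozero a^3}\SFACTORED.
\end{align*}
Second, combine this with the upper bound for $\langle\psi\vert H_B\vert\psi\rangle$ coming from Condition~\ref{energy condition}. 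The two appearances of the Born integral $\tfrac14(2\pi)^{-3}\rhozero^2 R^{-1}\int\widehat{v}_1^2\abs{k}^{-2}\dk\abs{B}$ cancel exactly, and the two remainder terms of size $\rhozero^2 a\abs{B}\sqrt{\rhozero a^3}\SFACTORED$ add, leaving
\begin{align*}
Cb\ell^{-2}\langle n_+\rangle \,\leq\, C\rhozero^2 a\abs{B}\sqrt{\rhozero a^3}\SFACTORED.
\end{align*}

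Dividing by $Cb\ell^{-2}$ (where $b>0$ is the universal constant from \eqref{eq:Tuu}, and thus is absorbed into the generic constant $C$) yields
\begin{align*}
\langle n_+\rangle \,\leq\, C\rhozero^2 a\ell^2\abs{B}\sqrt{\rhozero a^3}\SFACTORED \,=\, C\rhozero\abs{B}\sqrt{\rhozero a^3}\bigl(\rhozero a\ell^2\SFACTORED\bigr) \,=\, C\rhozero\abs{B}\sqrt{\rhozero a^3}\,\SFINAL,
\end{align*}
which is the desired estimate. There is no real obstacle here; the only mild subtlety is to verify that the error terms appearing in the large-box energy lower bound and in Condition~\ref{energy condition} are genuinely of the same form so that cancellation of the leading terms is clean—but this is how Condition~\ref{energy condition} has been set up. The effectiveness of the whole argument ultimately rests on the presence of the Neumann-gap term $b\ell^{-2}Q_u$ baked into $\hcT_u$ in \eqref{eq:Tauhat}, which is precisely what is harvested here to control $\langle n_+\rangle$.
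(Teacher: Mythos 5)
Your proposal is correct and is exactly the paper's argument: the paper's proof is the one-line "Simply apply Lemma~\ref{lm: energy bound on large box}," i.e., combine that lower bound with the upper bound in Condition~\ref{energy condition}, cancel the Born-term contributions, and harvest the Neumann-gap term $Cb\ell^{-2}\langle n_+\rangle$, which after multiplying by $\ell^{2}$ produces precisely the factor $\SFINAL=\rhozero a\ell^{2}\SFACTORED$. Your arithmetic and the identification of the gap term as the source of the bound match the paper.
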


\begin{proof}
	Simply apply Lemma~\ref{lm: energy bound on large box}.
\end{proof}
Note that Condition~\ref{conditions start large box} below in particular ensures that the second term in \eqref{energy assumption}, while being larger than the LHY-order, is of higher order than the leading order term. We therefore have that $\state{\psi}{H_B}\leq 0$ if the requirement in Lemma~\ref{lm: n+ bound Lemma} is satisfied and $\delta$ is sufficiently small.
In this section we will introduce new error terms and these will be smaller than the LHY-order.
\begin{condition}\label{conditions start large box}
	We require, with $\SFINAL$ as in Lemma~\ref{lm: n+ bound Lemma},
	\begin{align*}
	\sqrt{\rhozero a^3}\SFINAL<\delta \frac{a}{R}.
	\end{align*}
\end{condition}
\begin{lm}\label{lm: HQuadbig}%
Assume that $\psi$ satisfies Condition~\ref{energy condition}. Then $n\leq C\rhozero \abs{B}$ and\vspace{-0.2 cm}
	\begin{align}
	\state{\psi}{H_{\mathrm{Quad}}}\geq&-\frac{1}{4}(2\pi)^{-3}\of{\frac{n}{\abs{B}}}^2\frac{1}{R}\int\frac{\widehat{v}_1(k)^2}{\abs{k}^2}\dd k\bigbox-C \rhozero ^2a\sqrt{\rhozero a^3} \bigbox \Quaderror,\vspace{-0.9 cm}\label{eq:quadratic bound on the bog box2}
	\end{align}
	\vspace{-0.6 cm}where
	\begin{align*}\vspace{-0.2 cm}
	\Quaderror={}&(\sqrt{\rhozero a}s\ell)^{-3}+(\rhozero a^3)^{\frac{1}{2}}\frac{R}{a}+ \varepsilon_0(\rhozero a^3)^{-\frac{1}{2}}\frac{a}{R}\nn\\
	& +(\sqrt{\rhozero a}s\ell)\inv \ln\of{\frac{s\ell}{R}}+\varepsilon_T(\sqrt{\rhozero a}ds\ell)\inv \ln\of{\frac{ds\ell}{R}}.
	\end{align*}
\end{lm}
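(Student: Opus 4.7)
The plan is to proceed in two steps: first, establish the a priori bound $n\leq C\rhozero\abs{B}$; second, derive the announced lower bound on $\state{\psi}{H_{\mathrm{Quad}}}$. Throughout, $B=B(u)$ is a large box, so $\max\chi_B^2\leq C$, $\int\chi_B^2=\abs{B}=\ell^3$, and $\min\{R^{-3},\abs{B}^{-1}\}=\abs{B}^{-1}$ by Condition~\ref{cond:1}.

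For the particle-number bound, I would mimic the proof of Lemma~\ref{L:n bound} on the large box. Here $\lambda_1=\ell\gg\rhozero^{-1/3}$ automatically, and Condition~\ref{cond:1} gives $\rhozero aR^2\leq\delta^2 s^2\ll1$, so whenever $n\lesssim\rhozero\abs{B}$ the hypothesis $n\abs{B}^{-1}\leq c(aR^2)^{-1}$ of Lemma~\ref{lm:boglowerbound} is satisfied. Arguing by contradiction, assume $n\geq C'\rhozero\abs{B}$ with $C'$ large. Lemma~\ref{lm:HNonQuad} then gives
\begin{align*}
H_B-H_{\mathrm{Quad}}\geq (1-\varepsilon_0)b\ell^{-2}n_++\Big(\tfrac{7}{8}\abs{n-\rhozero\abs{B}}^2-C\abs{n-\rhozero\abs{B}}n_+-Cn(1+n_+)\Big)\cU_B,
\end{align*}
while a crude lower bound $\state{\psi}{H_{\mathrm{Quad}}}\geq -C\rhozero^2 a\abs{B}$ obtained from Lemma~\ref{lm:quadratic} with $\quadpar=0$, Lemma~\ref{lm:boglowerbound}, and the appendix estimates \eqref{h_0estimate1}, \eqref{W^3error}, \eqref{W^3estimate}, \eqref{W^2estimate} forces $\state{\psi}{H_B}$ to grow at least like $\abs{n-\rhozero\abs{B}}^2\cU_B$. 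For $C'$ large this contradicts Condition~\ref{energy condition}, whose right-hand side is of order $\rhozero^2 a\abs{B}$ in absolute value.

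For the main bound, I would apply Lemma~\ref{lm:quadratic} with $\quadpar=0$ to obtain
\begin{align*}
\state{\psi}{H_{\mathrm{Quad}}}\geq\tfrac{1}{2}(2\pi)^{-3}\int\state{\psi}{h_0(k)}\dd k - C\state{\psi}{n_+}\,a\abs{B}^{-1}.
\end{align*}
Lemma~\ref{lm: n+ bound Lemma} bounds $\state{\psi}{n_+}\leq C\rhozero\abs{B}\sqrt{\rhozero a^3}\SFINAL$, so the last term is at most $C\rhozero a\sqrt{\rhozero a^3}\SFINAL$, which Condition~\ref{conditions start large box} permits to absorb into $C\rhozero^2 a\sqrt{\rhozero a^3}\abs{B}\Quaderror$. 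To estimate the integral, I would invoke Lemma~\ref{lm:boglowerbound}, whose hypothesis is guaranteed by the just-established particle-number bound. Expanding the square root in the Bogolubov estimate as
\begin{align*}
\sqrt{\tau^2/n^2+2\tau\widehat{W}/(n\abs{B})}=\tau/n+\widehat{W}/\abs{B}-\tfrac{n\widehat{W}^2}{2\tau\abs{B}^2}+O\!\left(\tfrac{n^2\widehat{W}^3}{\tau^2\abs{B}^3}\right),
\end{align*}
valid in the regime where $\widehat{W}\abs{B}/\tau$ is small, the integrand of $h_0$ becomes approximately $-nn_0\widehat{W}(k)^2/(2\tau_B(k^2)\abs{B})$. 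Replacing $\widehat{W}(k)$ by $\widehat{v}_R(k)=\widehat{v}_1(kR)$, $\tau_B(k^2)$ by $k^2$, and $n_0$ by $n$, then substituting $p=kR$, yields the leading term $-\tfrac{1}{4}(2\pi)^{-3}(n/\abs{B})^2R^{-1}\int\widehat{v}_1(p)^2/\abs{p}^2\dd p\cdot\abs{B}$.

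The hard part will be the careful bookkeeping of the remainders assembled in $\Quaderror$. The terms $(\sqrt{\rhozero a}s\ell)^{-3}$ and $(\rhozero a^3)^{1/2}R/a$ originate from the cubic term in the square-root expansion and from the low-$\abs{k}$ regime where Lemma~\ref{lm:boglowerbound} is not tight. The logarithmic contributions $(\sqrt{\rhozero a}s\ell)^{-1}\ln(s\ell/R)$ and $\varepsilon_T(\sqrt{\rhozero a}ds\ell)^{-1}\ln(ds\ell/R)$ come from the annuli where $\tau_B(k^2)$ is quadratic with the reduced thresholds $\tfrac{1}{2}(s\ell)^{-1}$ and $\tfrac{1}{2}(ds\ell)^{-1}$ respectively; integrating $k^{-2}$ over such an annulus produces a logarithm. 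Finally, the $\varepsilon_0(\rhozero a^3)^{-1/2}a/R$ term tracks the $(1-\varepsilon_0)$-deficit in $\tau_B$. These contributions are exactly what the appendix estimates \eqref{h_0estimate1}, \eqref{W^3error}, \eqref{W^3estimate}, \eqref{W^2estimate} collect, and assembling them yields the claimed inequality.
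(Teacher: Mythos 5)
Your route is the paper's own: prove $n\leq C\rhozero\abs{B}$ by playing a crude lower bound on the quadratic part against Lemma~\ref{lm:HNonQuad}, then bound $\state{\psi}{H_{\mathrm{Quad}}}$ via Lemma~\ref{lm:quadratic} with $\quadpar=0$, Lemma~\ref{lm:boglowerbound} and the appendix estimates. However, two steps do not go through as written. In the contradiction argument the crude bound $\state{\psi}{H_{\mathrm{Quad}}}\geq -C\rhozero^2a\abs{B}$ is not available: what the Bogolubov/appendix estimates give is a bound of the form $\state{\psi}{H_{\mathrm{Quad}}}\geq -Cn^2\frac{a}{R}\frac{a}{\abs{B}}-Cn\,a(s\ell)^{-3}$, and this only collapses to $-C\rhozero^2a\abs{B}$ once one already knows $n\leq C\rhozero\abs{B}$ --- exactly what is being proved, so your formulation is circular (the same circularity appears in your verification of the hypothesis of Lemma~\ref{lm:boglowerbound}, which you check ``whenever $n\lesssim\rhozero\abs{B}$'' although inside the contradiction one assumes $n\geq C'\rhozero\abs{B}$). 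The paper keeps the $n$-dependent crude bound and observes that, by Condition~\ref{cond:1} ($a/R<\delta$, $(s\ell)^{-3}\rhozero\inv\ll1$) and the smallness of $n_+$ from Lemma~\ref{lm: n+ bound Lemma} and Condition~\ref{conditions start large box}, these errors are dominated by the positive term $Cn^2\frac{a}{\abs{B}}$ coming from Lemma~\ref{lm:HNonQuad}, giving $0\geq\state{\psi}{H_B}\geq\of{Cn^2-Cn^2\frac{a}{R}-Cn^2(s\ell)^{-3}\rhozero\inv}\frac{a}{\abs{B}}$, the desired contradiction.

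Second, for the main estimate you cite the small-box appendix bounds \eqref{h_0estimate1}, \eqref{W^3error}, \eqref{W^3estimate}, \eqref{W^2estimate}, which are computed with the small-box dispersion \eqref{eq:tauB2} (threshold $(ds\ell)^{-1}$, no $\varepsilon_T$-splitting). Those produce errors of size $\rhozero a(ds\ell)^{-3}\abs{B}$ and $\rhozero^2a\frac{a}{ds\ell}\ln\of{\frac{ds\ell}{R}}\abs{B}$ without the $\varepsilon_T$ prefactor, and these are \emph{not} contained in $\Quaderror$, whose corresponding entries are $(\sqrt{\rhozero a}s\ell)^{-3}$ and $\varepsilon_T(\sqrt{\rhozero a}ds\ell)\inv\ln\of{\frac{ds\ell}{R}}$. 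On the large box one must work with $\tau_B$ from \eqref{eq:tauB1} and use \eqref{h_0estimateBIG}, \eqref{W^3errorBIG}, \eqref{W^3estimateBIG} and \eqref{W^2estimateBIG}; your verbal description of the thresholds $\mfr12(s\ell)^{-1}$, $\mfr12(ds\ell)^{-1}$ and of the $\varepsilon_0$, $\varepsilon_T$ deficits does match the large-box structure, so this is presumably a citation slip, but as assembled your chain of inequalities does not yield \eqref{eq:quadratic bound on the bog box2}. The remaining ingredients (expansion of the square root in Lemma~\ref{lm:boglowerbound}, replacing $n_0$ by $n$, $\widehat{W}$ by $\widehat{v_R}$ and $\tau_B(k^2)$ by $\abs{k}^2$, and the scaling $\widehat{v_R}(k)=\widehat{v}_1(Rk)$) agree with the paper.
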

\begin{proof}
	We start with the bound on $n$ and assume that $n\geq C\rhozero \abs{B}$. From Lemma~\ref{lm:quadratic}, Lemma~\ref{lm:boglowerbound}, \eqref{sqrtbound1}, \eqref{h_0estimateBIG} and \eqref{W^2estimateBIG} we obtain (using $n_+\leq Cn(s\ell)^{-3}\abs{B}$ by Lemma~\ref{lm: n+ bound Lemma} and Condition~\ref{cond:1})
	\begin{align*}
	\state{\psi}{ H_{\textrm{Quad}}} \geq -Cn^2\frac{a}{R}\frac{a}{\abs{B}}-Cn(s\ell)^{-3}\abs{B}\frac{a}{\abs{B}}.
	\end{align*}
	For $\delta$ sufficiently small we have $\state{\psi}{H_B}\leq 0$ by Condition~\ref{conditions start large box} and that $n_+\leq C\sqrt{\rhozero a^3}\SFINAL n<C\delta \frac{a}{R}n$ by Lemma~\ref{lm: n+ bound Lemma} . 
	It then follows from Lemma~\ref{lm:HNonQuad} that $\state{\psi}{(H_B-H_{\textrm{Quad}})}\geq Cn^2\frac{a}{\abs{B}}$ if $\delta$ is sufficiently small and we obtain the contradiction
	\begin{align*}
	0\geq \state{\psi}{H_{B}}\geq \of{Cn^2-Cn^2\frac{a}{R}-Cn^2(s\ell)^{-3}\rhozero \inv}\frac{a}{\abs{B}}.
	\end{align*}
	Hence $n\leq C\rhozero \abs{B}$. Applying % \eqref{sqrtbound2} and \eqref{sqrtbound3} together with 
	\eqref{h_0estimateBIG} to \eqref{W^3estimateBIG} and \eqref{W^2estimateBIG} we obtain the lower bound on $\state{\psi}{H_{\textrm{Quad}}}$.
\end{proof}
Clearly $\Quaderror<\SFINAL$ and, in contrast to the estimate on the small box, once we choose our parameters as in Lemma~\ref{lm: Chosen parameters are allowed} we will have $\Quaderror\ll 1$.
\begin{lm}[Improved bound on $n$ on the large box]~\label{lm: n- rho B}\\
	Assume that $\psi$ is an $n$-particle state which satisfies Condition~\ref{energy condition}. Then
	\begin{align}
	\abs{n-\rhozero  \abs{B}}\leq C\rhozero  \abs{B}\of{\rhozero  a^3}^{\frac{1}{4}}{\SFINAL}^{\frac{1}{2}}.\label{eq: n- rho B}
	\end{align}
\end{lm}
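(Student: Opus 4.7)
The plan is to combine $\state{\psi}{H_B}=\state{\psi}{H_B-H_{\mathrm{Quad}}}+\state{\psi}{H_{\mathrm{Quad}}}$, bounded below by the large-box form of Lemma~\ref{lm:HNonQuad} and by Lemma~\ref{lm: HQuadbig}, with the upper bound on $\state{\psi}{H_B}$ from Condition~\ref{energy condition}. Setting $y:=n-\rhozero\abs{B}$ and $I_1:=\int\abs{\widehat{v_1}(k)}^2\abs{k}^{-2}\dd k$, the two $\tfrac14(2\pi)^{-3}\rhozero^2 R^{-1}I_1\abs{B}$ contributions partially cancel. Discarding the non-negative Neumann-gap piece $(1-\varepsilon_0)b\ell^{-2}\state{\psi}{n_+}$ from the LHS and collecting subordinate error terms (using $\Quaderror\le\SFINAL$ together with Condition~\ref{cond:1} to compare the $\SFACTORED$-term from Condition~\ref{energy condition} with $\SFINAL$), I arrive at
\[
\tfrac{7}{8}y^2\,\cU_B\;\le\;\tfrac14(2\pi)^{-3}\bigl((n/\abs{B})^2-\rhozero^2\bigr)R^{-1}I_1\abs{B}+C\abs{y}n_+\cU_B+Cn\,\cU_B+Cnn_+\cU_B+C\rhozero^2a\abs{B}\sqrt{\rhozero a^3}\,\SFINAL.
\]

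The remaining cross term is next reduced to a linear contribution in $\abs{y}$. Since $n\le C\rhozero\abs{B}$ by Lemma~\ref{lm: HQuadbig}, one has $n+\rhozero\abs{B}\le C\rhozero\abs{B}$; combined with the identity $\tfrac14(2\pi)^{-3}R^{-1}I_1=-4\pi a_2$ (from \eqref{a_1+a_2}) and the Born-series bound $\abs{a_2}\le Ca^2/R$, one obtains $\tfrac14(2\pi)^{-3}\abs{(n/\abs{B})^2-\rhozero^2}R^{-1}I_1\abs{B}\le C\rhozero\abs{y}\,a^2/R$. Lemma~\ref{lm: n+ bound Lemma} likewise gives $n_+\le C\rhozero\abs{B}\sqrt{\rhozero a^3}\,\SFINAL$, so $C\abs{y}n_+\cU_B\le C\abs{y}\rhozero a\sqrt{\rhozero a^3}\,\SFINAL$.

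Applying Young's inequality to each of the two linear-in-$\abs{y}$ terms absorbs $\tfrac{1}{16}y^2\cU_B$ into the LHS and produces residuals $C\rhozero^2a\abs{B}(a/R)^2$ and $Cn_+^2\cU_B\le C\rhozero^2a\abs{B}\rhozero a^3\SFINAL^2$. Both are $\le C\rhozero^2a\abs{B}\sqrt{\rhozero a^3}\,\SFINAL$: the second because $\sqrt{\rhozero a^3}\,\SFINAL<1$ by Condition~\ref{conditions start large box}, and the first because the structural conditions (using the explicit form $\sqrt{\rhozero a^3}\,\SFINAL=a/((ds)^3\ell)$ together with Conditions~\ref{cond:1} and~\ref{conditions start large box}) force $(a/R)^2\le C\sqrt{\rhozero a^3}\,\SFINAL$. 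The $Cn\,\cU_B\le C\rhozero a$ and $Cnn_+\cU_B$ pieces are similarly subordinate. Dividing the surviving $\tfrac{3}{4}y^2\cU_B\le C\rhozero^2a\abs{B}\sqrt{\rhozero a^3}\,\SFINAL$ by $\cU_B\sim a/\abs{B}$ yields the claimed $\abs{n-\rhozero\abs{B}}\le C\rhozero\abs{B}(\rhozero a^3)^{1/4}\SFINAL^{1/2}$.

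The hardest step is the final bookkeeping of error scales, in particular verifying that the $(a/R)^2$ residual from the second-Born cross-term is indeed dominated by the target $\sqrt{\rhozero a^3}\,\SFINAL$; this is where the delicate interplay of the structural smallness conditions on $a,R,\rhozero,d,s,\ell$ is essential. Everything else is routine quadratic-inequality manipulation.
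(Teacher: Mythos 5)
Your proposal is correct and follows essentially the same route as the paper: you lower-bound $H_B$ via Lemma~\ref{lm:HNonQuad} and Lemma~\ref{lm: HQuadbig} (with the $n_+$ input from Lemma~\ref{lm: n+ bound Lemma}), compare with the upper bound in Condition~\ref{energy condition} so that the two second-Born integrals cancel up to the $(n/\abs{B})^2-\rhozero^2$ difference, absorb that difference and the $\abs{n-\rhozero\abs{B}}\,n_+$ cross term into a fraction of the positive $\abs{n-\rhozero\abs{B}}^2\,\cU_B$, and divide by $\cU_B\sim a/\abs{B}$ --- your factor-and-Young treatment of the Born cross term is just a rewording of the paper's appeal to the optimization step \eqref{eq:n optimization2} and yields the identical residual $C\rhozero^2a\abs{B}(a/R)^2$. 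One small caveat: the comparison $(a/R)^2\lesssim\sqrt{\rhozero a^3}\,\SFINAL$ does not follow from Condition~\ref{conditions start large box} (which bounds $\sqrt{\rhozero a^3}\,\SFINAL$ from above by $\delta a/R$, i.e.\ the wrong direction) but rather needs the lower bound on $R$ from Assumption~\ref{R assumption} (or the eventual parameter choices) together with Condition~\ref{cond:1}; since the paper's own proof is equally terse here, this is an imprecision shared with the original rather than a new gap.
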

\begin{proof} We use the bound on the non-quadratic part in Lemma~\ref{lm:HNonQuad} and the bound on $H_{\mathrm{Quad}}$ in Lemma~\ref{lm: HQuadbig}, the bound on $n_+$ in Lemma~\ref{lm: n+ bound Lemma}. Then we use $\Quaderror<\SFINAL$ and, analogously to \eqref{eq:n optimization2}, a part of the positive term $\abs{n-\rhozero\abs{B}}$ to control the integral corresponding to the second Born term and obtain
	\begin{align*}
	\state{\psi}{H_{B}}\geq {}&C\ell^{-2}n_++\off{\frac{7}{8}\abs{n-\rhozero \abs{B}}^2-C\abs{n-\rhozero \abs{B}}n_+-Cn-Cnn_+}\cU_B\\
	&-\frac{1}{4}(2\pi)^{-3}\of{\frac{n}{\abs{B}}}^2 \frac{1}{R}\int \frac{\widehat{v}_{1}(k)^2}{\abs{k}^2}\dd k \bigbox- C\rhozero ^2a\sqrt{\rhozero a^3}\abs{B}\Quaderror\\
	\geq {}&\off{C\abs{n-\rhozero \abs{B}}^2-C\rhozero ^2\abs{B}^2\sqrt{\rhozero a^3}\SFINAL}\frac{a}{\abs{B}}\\
	&-\frac{1}{4}(2\pi)^{-3}\rhozero ^2 \frac{1}{R}\int \frac{\widehat{v}_{1}(k)^2}{\abs{k}^2}\dd k \bigbox
	%-C \rhozero ^2a\sqrt{\rhozero a^3}\abs{B}\Quaderror
	-C\rhozero ^2a\frac{a^2}{R^2}\abs{B}.
	\end{align*}
	Hence $\abs{n-\rhozero  \abs{B}}\leq C\rhozero  \abs{B}\of{\rhozero  a^3}^{\frac{1}{4}}{\SFINAL}^{\frac{1}{2}}$, by Condition~\ref{cond:1}.
\end{proof}
\noindent We will apply the following theorem, whose proof can be found in \cite{LS1}.
	\begin{thm}[Localization of large matrices]~\\\label{thm:Localizing large matrices}Suppose that $\cA$ is an $(N+1)\times (N+1)$ Hermitean matrix and let $\cA^k$, with $k=0,1,\dots ,N$, denote the matrix consisting of the $k^{\mathrm{th}}$ supra- and infra-diagonal of $\cA$. Let $\psi\in \C^{N+1}$ be a normalized vector and set $d_k=\innerp{\psi}{\cA^k\psi}$ and $\lambda=\innerp{\psi}{\cA \psi}=\summ{k=0}{N}d_k$ ($\psi$ need not be an eigenvector of $\cA$). Choose some positive integer $\Mloc\leq N+1$. Then, with $\Mloc$ fixed, there is some $n\in [0,N+1-\Mloc]$ and some normalized vector $\phi\in \C^{N+1}$ with the property that $\phi_j=0$ unless $n+1\leq j\leq n+\MM$ (i.e., $\phi$ has length $\Mloc$) and such that 
		\begin{align}
		\innerp{\phi}{\cA\phi}\leq \lambda +\frac{C}{\Mloc^2}\summ{k=1}{\Mloc-1}k^2\abs{d_k}+C\summ{k=\Mloc}{N}\abs{d_k},\label{thm:Localization of large matrices}
		\end{align}
		where $C>0$ is a universal constant. (Note that the first sum starts at $k=1$.)
	\end{thm}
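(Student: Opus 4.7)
My plan is a sliding-window averaging argument. Fix once and for all a smooth non-negative profile $F\in C^2(\R)$ supported in $[0,1]$ with $F\not\equiv 0$, and set $f(m):=F(m/(\MM+1))$, so that $f$ is supported on $\{1,\ldots,\MM\}$. For each integer shift $n$ define the cut-off vector $\phi^{(n)}\in\C^{N+1}$ by $\phi^{(n)}_j:=f(j-n)\psi_j$. These are the candidates for the localized $\phi$ required by the theorem: for $n\in\{0,1,\ldots,N+1-\MM\}$ the support of $\phi^{(n)}$ lies in the admissible window $\{n+1,\ldots,n+\MM\}$, while for edge values of $n$ the window sticks out of $\{1,\ldots,N+1\}$, but since $\psi$ vanishes on the overshoot the actual support of $\phi^{(n)}$ still fits into some length-$\MM$ block of $\{1,\ldots,N+1\}$.

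The core identity is that, for all $k\geq 0$ and all $i,j$ with $|i-j|=k$,
\begin{equation*}
\sum_{n\in\Z}f(i-n)f(j-n)\;=\;\sum_{m\in\Z}f(m)f(m+k)\;=:\;g_k,
\end{equation*}
independent of $i,j$. Inserting this into $\sum_n\langle\phi^{(n)},\cA\phi^{(n)}\rangle$ and separating the diagonals $\cA^k$ gives
\begin{equation*}
\sum_{n\in\Z}\langle\phi^{(n)},\cA^k\phi^{(n)}\rangle \;=\; g_k\,d_k,\qquad
\sum_{n\in\Z}\|\phi^{(n)}\|^2 \;=\; g_0\|\psi\|^2 \;=\; g_0.
\end{equation*}
Since the ratio $\sum_n\langle\phi^{(n)},\cA\phi^{(n)}\rangle / \sum_n\|\phi^{(n)}\|^2$ is a weighted average of the Rayleigh quotients $\langle\phi^{(n)},\cA\phi^{(n)}\rangle/\|\phi^{(n)}\|^2$, some $n$ satisfies
\begin{equation*}
\frac{\langle\phi^{(n)},\cA\phi^{(n)}\rangle}{\|\phi^{(n)}\|^2} \;\leq\; \sum_{k=0}^{N}\frac{g_k}{g_0}\,d_k \;=\; \lambda+\sum_{k=1}^{N}\!\left(\frac{g_k}{g_0}-1\right)\! d_k,
\end{equation*}
using $\lambda=\sum_{k=0}^N d_k$.

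The remaining task is to bound the error term. For $k\geq\MM$ the sum defining $g_k$ is empty, so $g_k=0$ and the contribution is at most $\sum_{k=\MM}^N|d_k|$. For $1\leq k<\MM$ the key quadratic cancellation comes from symmetrizing: using $g_k=\sum_m f(m)f(m+k)=\sum_m f(m)f(m-k)$, one writes
\begin{equation*}
2(g_k-g_0)\;=\;\sum_m f(m)\bigl[f(m+k)+f(m-k)-2f(m)\bigr],
\end{equation*}
and uses the $C^2$-regularity of $F$ to bound the symmetric second difference by $C_F\,k^2/\MM^2$ uniformly in $m$. Combined with $\sum_m f(m)=O(\MM)$ and $g_0\sim c_F\,\MM$, this yields $|g_k/g_0-1|\leq C\,k^2/\MM^2$ with a universal constant $C$, producing the first sum in the theorem's bound.

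The main technical point is precisely this quadratic cancellation via symmetrization — a naive first-order discrete Taylor estimate would only give $|g_k-g_0|=O(k)$, insufficient to yield a $k^2/\MM^2$ factor — so it is essential to group $+k$ and $-k$ shifts together and invoke the symmetric second difference. A minor bookkeeping point is embedding the selected $\phi^{(n)}$ into a legal length-$\MM$ window when $n$ is at the edge; this is handled because the support of $\phi^{(n)}$ is automatically consecutive and contained in $\{1,\ldots,N+1\}$ with at most $\MM$ elements, so any length-$\MM$ block of $\{1,\ldots,N+1\}$ containing that support can be taken as the window $\{n'+1,\ldots,n'+\MM\}$ in the theorem.
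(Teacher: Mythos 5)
Your sliding-window argument is essentially the proof of this statement given in \cite{LS1} (the paper itself only cites that reference and gives no proof): the same translated cutoffs $\phi^{(n)}_j=f(j-n)\psi_j$, the same identity $\sum_n\langle\phi^{(n)},\cA^k\phi^{(n)}\rangle=g_kd_k$ with an averaging/pigeonhole step, and the same symmetric second-difference cancellation producing the $k^2/\MM^2$ weights, so the approach is correct and not new relative to the cited source. The only point to tighten is that ``$F\not\equiv0$'' is too weak: you need $g_0>0$ (and in fact $g_0\geq c_F\,\MM$, which you assert but do not justify) for \emph{every} window length $\MM\geq1$, which can fail for small $\MM$ if the support of $F$ misses the grid points $m/(\MM+1)$; fix it by taking $F>0$ on $(0,1)$ and handling bounded $\MM$ with the trivial bound $0\leq g_k\leq g_0$, after which the constant is universal once $F$ is fixed.
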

	We apply the theorem in the following way. Let $\aprioripsi$ be a (normalized) $n$-particle wave function. Since the $n$-particle sector of Fock space is spanned by $n_+$-eigenfunctions, we can write $\aprioripsi=\summ{m=0}{n}c_m\aprioripsi_m$, with $\aprioripsi_m$ normalized and $n_+\aprioripsi_m=m\aprioripsi_m$ for $m\in \{0,1,\dots ,n\}$.
	Since we aim to show a lower bound on the ground state energy density it is natural, in view of Theorem~\ref{thm:main1} and Theorem~\ref{thm: Main localization}, to assume that
	\begin{align}
	\innerp{\aprioripsi}{\Big (-\varepsilon_0\Delta_{u}^{\cN}+H_B\Big )\aprioripsi}\leq & -\frac{1}{4}(2\pi)^{-3}\rhozero ^2 \frac{1}{R}\int\frac{\widehat{v}_{1}(k)^2}{\abs{k}^2}\dd k\bigbox +4\pi\rhozero ^2a\abs{B_\noell}\sqrt{\rhozero  a^3}\LHYConstant\nn\\
	= &\, 4\pi \rho^2\Big (a_2+a\sqrt{\rhozero  a^3}\LHYConstant\Big )\abs{B}.\label{energyassumption}
	\end{align}
	This lets us consider the $(n+1)\times (n+1)$ Hermitean matrix $\cA_{m,m'}=\innerp{\aprioripsi_m}{(-\varepsilon_0\Delta_{u}^{\cN}+H_B)\aprioripsi_{m'}}$ and the vector $\psi=(c_0,c_1,\dots,c_n)$.
	From the form of $H_B$ we obtain that $d_k=0$ for $k\geq 3$ and thus Theorem~\ref{thm:Localizing large matrices} yields a (normalized) state, $\locpsi$, which, for given $\MMloc$, has $n_+$-eigenvalues localized to an interval of length $\MMloc$ and an energy that satisfies the bound
\begin{align}\label{eq:psi loc bound}
\innerp{\aprioripsi}{(-\varepsilon_0\Delta_{u}^{\cN}+H_{B_\noell})\aprioripsi}\geq \innerp{\locpsi}{(-\varepsilon_0\Delta_{u}^{\cN}+H_{B_\noell})\locpsi}-C\MM^{-2}\of{\abs{d_1}+\abs{d_2}},
\end{align}	
where
\begin{align}
d_1:=\innerp{\aprioripsi}{\of{\cQ_1'+\cQ_1''+\cQ_3}\aprioripsi }\qquad\textrm{and}\qquad d_2:=\innerp{\aprioripsi}{\sum_{i<j}\left( Q_iQ_jw_B(x_i,x_j)P_jP_i+\textrm{h.c.}\right)\aprioripsi }.\label{def: d1 and d2}
\end{align}
Under the assumption in \eqref{energyassumption} we will estimate $\abs{d_1}$ and $\abs{d_2}$ in Lemma~\ref{lm:d_1 and d_2}. Adequate bounds on $\MMloc$ are collected in Condition~\ref{cond:4} below and properties of $\locpsi$ are listed in Theorem~\ref{Prop:locpsi}.
We will spend the rest of this section on establishing tools for obtaining a lower bound for $\langle \locpsi , ( -\varepsilon_0\Delta^\cN_{u}+ H_{B} )\locpsi \rangle$, which will be applied in the next section in our proof of Theorem~\ref{thm:main1}.
	\begin{lm}[Control of $d_1$ and $d_2$]~\label{lm:d_1 and d_2}\\
	Let $d_1,\,d_2$ as in \eqref{def: d1 and d2}. If $B$ is a large box and $\aprioripsi$ an $n$-particle state satisfying \eqref{energyassumption}, then
	\begin{align}
	\abs{d_1}+\abs{d_2}\leq C\rhozero ^2a\abs{B}\frac{a}{R}.
	\end{align}
\end{lm}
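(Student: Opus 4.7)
The plan is to split $d_1 = \langle \aprioripsi, (\cQ'_1 + \cQ''_1 + \cQ_3)\aprioripsi\rangle$ into its three constituents and to handle $d_2$ via a Cauchy--Schwarz splitting, bounding each piece with the operator estimates from Section~\ref{Energy in a Single Box} together with the a priori information just assembled. Throughout I will use $n \leq C\rhozero \abs{B}$ (from Lemma~\ref{lm: HQuadbig}), $\langle n_+\rangle \leq C\rhozero\abs{B}\sqrt{\rhozero a^3}\,\SFINAL$ (Lemma~\ref{lm: n+ bound Lemma}), $\abs{n-\rhozero\abs{B}} \leq C\rhozero\abs{B}(\rhozero a^3)^{1/4}\SFINAL^{1/2}$ (Lemma~\ref{lm: n- rho B}), $\cU_B \leq Ca/\abs{B}$ on the large box, and Condition~\ref{conditions start large box}, which converts any factor $\sqrt{\rhozero a^3}\,\SFINAL$ into a factor of $a/R$.

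For $\cQ'_1$, Lemma~\ref{lm:Q1} is in fact a two-sided estimate, so optimizing the free parameter at the level of expectations gives $\abs{\langle \cQ'_1\rangle} \leq C\abs{n-\rhozero\abs{B}}\sqrt{n\,\langle n_+\rangle}\,\cU_B$. Substituting the a priori bounds collapses this to $C\rhozero^2 a\abs{B}\sqrt{\rhozero a^3}\,\SFINAL \leq C\rhozero^2 a\abs{B}\,(a/R)$. For $\cQ''_1$, the analogous two-sided estimate together with the operator inequality $n_+ \leq n$ on the $n$-particle sector gives $\abs{\langle \cQ''_1\rangle} \leq Cn\langle n_+\rangle\cU_B$, which is of the same order by the same substitutions.

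The delicate term is $\cQ_3$. The two-sided form of \eqref{eq:Q3} reads $\abs{\cQ_3} \leq C\varepsilon_3^{-1}\,nn_+\,\cU_B + 2\varepsilon_3\cQ_4$. The first summand is acceptable already for $\varepsilon_3 \sim 1$ by the computation used for $\cQ''_1$, so everything hinges on estimating $\langle\cQ_4\rangle$. Here the naive bound $\cQ_4 \leq C(a/R^3)n_+^2$ is too weak because $\ell \gg R$ on the large box; instead I invoke Lemma~\ref{lm:Q3 bound}, which trades the saved kinetic energy $-\varepsilon_0\Delta_u^\cN$ together with the a priori control on $\langle n_+\rangle$ for the required bound on $\langle\cQ_4\rangle$. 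With that in hand, a judicious choice of $\varepsilon_3$ closes the estimate on $\cQ_3$ via Conditions~\ref{cond:epsilon3lower bound} and \ref{conditions start large box}.

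For $d_2$ I apply the operator Cauchy--Schwarz
\[
\abs{Q_iQ_jw_B P_iP_j + \mathrm{h.c.}} \leq \varepsilon_2^{-1}Q_iQ_jw_BQ_iQ_j + \varepsilon_2 P_iP_jw_BP_iP_j
\]
and sum over $i<j$ to obtain $\abs{d_2} \leq 2\varepsilon_2^{-1}\langle\cQ_4\rangle + \varepsilon_2 n(n-1)\cU_B$. The second term is at most $C\varepsilon_2\rhozero^2 a\abs{B}$, while the first is once again handled by Lemma~\ref{lm:Q3 bound}; optimizing $\varepsilon_2$ produces the claimed bound. The principal obstacle throughout is the control of $\langle\cQ_4\rangle$ on the large box, which is exactly why Lemma~\ref{lm:Q3 bound} is invoked.
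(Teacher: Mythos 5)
Your treatment of $\cQ_1'$ and $\cQ_1''$ matches the paper's (two-sided versions of Lemma~\ref{lm:Q1} with a suitable $\varepsilon_1'$), but both places where you lean on Lemma~\ref{lm:Q3 bound} are genuine gaps. That lemma requires an \emph{eigenvalue-wise} condition, namely that every $n_+$-eigenvalue in the support of the state satisfies $\varepsilon_0 n_+\geq C_0\varepsilon_3\frac{a}{R}n_+^2$, i.e.\ $n_+\lesssim \MM$ with the final parameter choices. The state $\aprioripsi$ in Lemma~\ref{lm:d_1 and d_2} only satisfies \eqref{energyassumption}, which yields an \emph{expectation} bound on $n_+$; its $n_+$-eigenvalues can be as large as $n\sim\rhozero\abs{B}$, so Lemma~\ref{lm:Q3 bound} is simply not applicable to it. Upgrading expectation control to eigenvalue localization is exactly the job of Theorem~\ref{thm:Localizing large matrices} together with Condition~\ref{cond:4}~(iii) and Condition~\ref{cond:epsilon 0 and 3}, and Lemma~\ref{lm:d_1 and d_2} is an input to that construction (it controls the error in \eqref{eq:psi loc bound}), so your argument is circular. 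The paper instead keeps $\varepsilon_3$ of order one at this stage and bounds $\langle\cQ_4\rangle\leq C\rhozero^2a\abs{B}\frac{a}{R}$ by contradiction: if it were larger, the estimates behind Lemma~\ref{lm:HNonQuad} would force $\langle H_B-H_{\rm Quad}\rangle\geq C\abs{n-\rhozero\abs{B}}^2\frac{a}{\abs{B}}+C\rhozero^2a\abs{B}\frac{a}{R}$, contradicting \eqref{energyassumption} in view of Lemma~\ref{lm: HQuadbig}.

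The $d_2$ estimate fails for a second, quantitative reason. A single configuration-space Cauchy--Schwarz between $\cQ_4$ and the condensate--condensate term gives $\abs{d_2}\leq C\varepsilon_2^{-1}\langle\cQ_4\rangle+C\varepsilon_2 n^2\,\cU_B$ with $n^2\cU_B\leq C\rhozero^2 a\abs{B}$; even granting $\langle\cQ_4\rangle\leq C\rhozero^2a\abs{B}\frac{a}{R}$, optimizing $\varepsilon_2$ only yields $C\rhozero^2a\abs{B}\sqrt{a/R}$, not $a/R$. To close the gap this way you would need $\langle\cQ_4\rangle\lesssim\rhozero^2a\abs{B}(a/R)^2$, which no energy argument can deliver: \eqref{energyassumption} leaves slack of LHY size $\rhozero^2a\abs{B}\sqrt{\rhozero a^3}$, which in the allowed regime is much larger than $\rhozero^2a\abs{B}(a/R)^2$. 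Your splitting is essentially the crude a priori bound \eqref{eq:Q2apriori}, which the paper uses only in Case II of Lemma~\ref{lm: napriori}. The paper's route is different: $d_2$ is identified with the $b_k^*b_{-k}^*+b_kb_{-k}$ part of $\cQ_2'$ in second quantization and bounded by the Bogolubov-type inequality \eqref{eq:quadratic term bound} with the $k$-dependent weights $\cA(k),\cB(k)$ of \eqref{def: A, B and kappa}; the commutator terms are controlled by \eqref{eq:bkbkcom} and reproduce the Bogolubov expression already estimated in Lemma~\ref{lm: HQuadbig}, the $\widehat{W}\,b^*b$ diagonal is handled via the $Z$-operator bound and $\langle nn_+\rangle$ as in \eqref{eq:second term of A}, and the $n^{-1}\tau_B\,b^*b$ diagonal via a further contradiction argument tied to the second Born approximation, \eqref{eq:first term of A}. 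The $k$-dependence of that splitting (paying with kinetic energy only where $\tau_B$ is large) is precisely what a single constant $\varepsilon_2$ cannot emulate.
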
	
	\begin{proof}
	First we estimate $\abs{d_1}$. From Lemma~\ref{lm:Q1}, equation~\eqref{eq:Q3}, which could have been stated as two-sided bounds, and that $n\leq C\rhozero \abs{B}$, we obtain by setting $\varepsilon_1'=c\frac{\abs{n-\rhozero  \abs{B}}}{\rhozero  \abs{B}}$ with $c$ sufficiently small and $\varepsilon_1''=1$
	\begin{align*}
	\abs{d_1}\leq {}& \of{n-\rhozero  \abs{B}}^2\frac{a}{\abs{B}}+C\innerpp{\aprioripsi}{\rhozero  a((1+\varepsilon_3\inv)n_++1)}{\aprioripsi}
	+\varepsilon_3\langle \aprioripsi \vert\summ{i<j}{}Q_jQ_iw_B(x_i,x_j)Q_iQ_j\vert \aprioripsi \rangle.
	\end{align*}
	We can use Lemmas~\ref{lm: n+ bound Lemma} and \ref{lm: n- rho B} together with Conditions~\ref{cond:1} and \ref{conditions start large box} to see that
	\begin{align*}
	\of{n-\rhozero  \abs{B}}^2\frac{a}{\abs{B}}+C\innerpp{\aprioripsi}{\rhozero  a((1+\varepsilon_3\inv)n_++1)}{\aprioripsi}\leq C\rhozero ^2\abs{B}a\frac{a}{R},
	\end{align*}
	as long as we choose $\varepsilon_3$ to be a constant. Further we have
	\begin{align*}
	\langle \aprioripsi \vert\summ{i<j}{}Q_jQ_iw_B(x_i,x_j)Q_iQ_j\vert \aprioripsi \rangle\leq C\rhozero ^2a\abs{B}\frac{a}{R},
	\end{align*}
	since otherwise we get by combining Lemma~\ref{lm: n+ bound Lemma} and Condition~\ref{conditions start large box} with the estimates in the proof of Lemma~\ref{lm:HNonQuad}
	\begin{align}
	\innerpp{\aprioripsi}{H_B-H_{\mathrm{Quad}}}{\aprioripsi}\geq C\abs{n-\rhozero \abs{B}}^2\frac{a}{\abs{B}}+C\rhozero ^2a\abs{B}\frac{a}{R},
	\end{align}
	which would contradict \eqref{energyassumption} in view of Lemma~\ref{lm: HQuadbig}.
	It follows that $\abs{d_1}\leq C\rhozero ^2a\abs{B}\frac{a}{R}.$\\
	\indent Now we estimate $\abs{d_2}$. Note that $d_2$ corresponds to the terms containing $b\ad b\ad$ and $bb$ appearing in the second quantization of $\cQ_2'$ in \eqref{eq:Q2'2nd}. Since $\cQ_2'$ is part of $H_{\mathrm{Quad}}$ we utilize equation \eqref{eq:quadratic term bound}\label{apply:quadratic term bound} with $\cA$ and $\cB$ as in \eqref{def: A, B and kappa} to bound $\abs{d_2}=\abs{\langle \Psi \vert \cB(b_k^\ast b_{-k}^\ast +b_kb_{-k})\vert \Psi \rangle}$. The contribution from the the commutator terms in \eqref{eq:quadratic term bound} is controlled using \eqref{eq:bkbkcom}. We have already seen in the proof of Theorem~\ref{lm: HQuadbig} that
	\begin{align*}
	&\langle \aprioripsi \vert \frac{1}{2}(2\pi)^{-3}\int \of{n\inv \tau_B(k^2)+\abs{B}\inv\widehat{W}(k)-\sqrt{n^{-2}\tau_B(k^2)+2n\inv \abs{B}\inv \tau_B(k^2)\widehat{W}(k)}}n_0\int\chi_B^2\vert \aprioripsi \rangle\\
	&\leq C\rhozero ^2a\abs{B}\frac{a}{R}.
	\end{align*}
	It is therefore left to show that
	\begin{align}\label{eq:first term of A}
	\innerpp{\aprioripsi}{\frac{1}{2}(2\pi)^{-3}n^{-1}\int \tau_B(k^2)b_k\ad b_k\dd k}{\aprioripsi}&\leq C\rhozero ^2a\abs{B}\frac{a}{R}
	\intertext{and that}
	\abs{B}\inv \innerpp{\aprioripsi}{\frac{1}{2}(2\pi)^{-3}\int \widehat{W}_B(k)b_k\ad b_k\dd k}{\aprioripsi}&\leq C\rhozero ^2a\abs{B}\frac{a}{R}.\label{eq:second term of A}
	\end{align}
	We show \eqref{eq:second term of A} first.
	As on page \pageref{nplus assumption} we may assume that $n_+\geq 1$.
	From \eqref{eq:QZQ} and Lemma~\ref{lm:Z estimate} we then see that
	\begin{align*}
	\abs{B}\inv \innerpp{\aprioripsi}{\int \widehat{W}_B(k)b_k\ad b_k\dd k}{\aprioripsi}\leq{}& \abs{B}\inv \innerpp{\aprioripsi}{\int \widehat{W}_B(k)a(Q(\chi_Be^{ikx}))^*na(Q(\chi_Be^{ikx}))\dd k}{\aprioripsi}\\
	\leq{}&C\abs{B}\inv a\innerpp{\aprioripsi}{nn_+}{\aprioripsi}\\
	\leq{} &C\rhozero^2a\abs{B}\sqrt{\rhozero a^3}\SFINAL\\
	\leq {}& C\rhozero^2 a\abs{B}\frac{a}{R},
	\end{align*}
	where we also used Lemma~\ref{lm: n+ bound Lemma}, that $n\leq C\rhozero\abs{B}$ and Condition~\ref{conditions start large box}.\\
	\indent We now show \eqref{eq:first term of A}.
	Repeating the estimate for the lower bound on $H_{\mathrm{Quad}}$ with only half 
	of the term in \eqref{eq:first term of A} included would again give a lower bound of order $\rhozero ^2\abs{B}a\frac{a}{R}$ because the second Born approximation to $a$ would be calculated wrong to this order. If \eqref{eq:first term of A} would not hold this would give that $\innerp{\aprioripsi}{H_{\mathrm{Quad}}\aprioripsi}\geq 0$, which contradicts the assumption in \eqref{energyassumption}.
\end{proof}
\begin{condition}\label{cond:4}
		We require
		\begin{align}
		\mathrm{(i)}\hspace{0.3 cm} \frac{a}{R}\MM^{-2}<C\sqrt{\rhozero a^3}\SFACTORED, \hspace{0.6 cm}
		\mathrm{(ii)}\hspace{0.3 cm} \frac{a}{R}\MM^{-2}<\delta \sqrt{\rho a^3},\hspace{0.6 cm}\mathrm{(iii)}\hspace{0.3 cm}\rhozero\abs{B}\sqrt{\rhozero a^3}\SFINAL \leq \MMloc<\delta \rho \abs{B}.\nn
		\end{align}
	\end{condition}
We note that (ii) actually is a stronger requirement than (i).\\
For convenience we collect properties of the $n_+$-localized state $\locpsi$ provided by Theorem~\ref{thm:Localizing large matrices}.
	\begin{thm}[Properties of a $n_+$-localized state $\locpsi$]~\label{Prop:locpsi}\\
If $\Psi$ is an $n$-particle state satisfying \eqref{energyassumption} and $\MMloc$ satisfies Condition~\ref{cond:4}, we obtain from Theorem~\ref{thm:Localizing large matrices} an $n$-particle state $\locpsi$ which satisfies:
		\begin{enumerate}
			\itemsep-0.5em
			\item[(i)] Condition~\ref{energy condition} is satisfied by $\locpsi$.
			\item[(ii)] $\innerp{\locpsi}{n_+\locpsi}\leq C\rhozero  \abs{B} \sqrt{\rhozero  a^3}\SFINAL$\qquad and\qquad$\innerp{\locpsi}{\abs{n-\rho\abs{B}}\locpsi}\leq C\rhozero  \abs{B}\of{\rhozero  a^3}^{\frac{1}{4}}{\SFINAL}^{\frac{1}{2}}$.
			\item[(iii)] The requirement $n\abs{B}^{-1}\leq c(aR^2)^{-1}$ in Lemma~\ref{lm:boglowerbound} holds.
			\item [(iv)] The $n_+$ eigenvalues of $\locpsi$ are localized to an interval of length $\MM$.
			\item[(v)] $\innerp{\locpsi}{n_+^2\locpsi}\leq C \MM\innerp{\locpsi}{n_+\locpsi}$.
			\item[(vi)] $\innerp{\aprioripsi}{\big ( -\varepsilon_0\Delta^\cN_{u}+ H_{B}\big )\aprioripsi}\geq \innerp{\locpsi}{\big ( -\varepsilon_0\Delta^\cN_{u}+ H_{B}\big )\locpsi}-C\delta\rho^2 a\sqrt{\rho a^3}\abs{B}$.			
		\end{enumerate}
	Here $\Delta^\cN_{u}$ denotes the Neumann Laplacian described on page~\pageref{def: Neumann Laplacian} and $\SFINAL$ is as defined in Lemma~\ref{lm: n+ bound Lemma}.
	\end{thm}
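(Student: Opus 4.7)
My plan is to deduce items (i)--(vi) in a mostly mechanical fashion from the localization estimate \eqref{eq:psi loc bound} combined with the bounds already established. The starting observation is that $H_B$ contains no more than two-body interactions, so in the expansion in $n_+$-eigenvectors only $d_0$, $d_1$ and $d_2$ are non-zero, with $d_1$ and $d_2$ controlled by Lemma~\ref{lm:d_1 and d_2} as $\abs{d_1}+\abs{d_2}\leq C\rhozero^2 a\abs{B}\frac{a}{R}$. Inserting this into \eqref{eq:psi loc bound} gives
\[
\innerp{\aprioripsi}{(-\varepsilon_0\Delta_u^\cN+H_B)\aprioripsi}\geq \innerp{\locpsi}{(-\varepsilon_0\Delta_u^\cN+H_B)\locpsi}-C\rhozero^2 a\abs{B}\frac{a}{R}\MM^{-2},
\]
and Condition~\ref{cond:4}(ii) upgrades the right-most term to $-C\delta\rhozero^2 a\sqrt{\rhozero a^3}\abs{B}$, yielding item (vi).

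Item (i) follows by discarding the non-negative operator $-\varepsilon_0\Delta_u^\cN$ from the left-hand side of the above display, substituting the hypothesis \eqref{energyassumption} on $\aprioripsi$, and applying Condition~\ref{cond:4}(i) together with the easy observation $\LHYConstant\leq C\SFACTORED$ (Condition~\ref{cond:1} forces $\sqrt{\rhozero a}\,ds\ell<\delta^2$, so $\SFACTORED>\delta^{-6}$ is bounded below by a large constant). Items (ii) and (iii) are then immediate consequences of (i): Lemma~\ref{lm: n+ bound Lemma} supplies the first bound in (ii) and Lemma~\ref{lm: n- rho B} the second, after noting that $\locpsi$ is still a pure $n$-particle state because the localization in Theorem~\ref{thm:Localizing large matrices} acts on the $n_+$-index inside the fixed $n$-particle sector. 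For (iii), Lemma~\ref{lm: HQuadbig} yields $n\leq C\rhozero\abs{B}$ and Assumption~\ref{R assumption} shows $\rhozero aR^2=(R/a)^2\rhozero a^3\to 0$, so $n\abs{B}^{-1}\leq C\rhozero\leq c(aR^2)^{-1}$ for $\rhozero$ small enough.

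Item (iv) is the direct output of Theorem~\ref{thm:Localizing large matrices} applied to the matrix $\cA_{m,m'}=\innerp{\aprioripsi_m}{(-\varepsilon_0\Delta_u^\cN+H_B)\aprioripsi_{m'}}$ with window length $\MM$. For (v), I expand $\locpsi=\sum_{m=m_0}^{m_0+\MM-1}c_m\aprioripsi_m$ in the $n_+$-eigenbasis provided by (iv); on the support of $\locpsi$ we have $n_+\leq m_0+\MM$, so $\innerp{\locpsi}{n_+^2\locpsi}\leq(m_0+\MM)\innerp{\locpsi}{n_+\locpsi}$, and since $m_0\leq \innerp{\locpsi}{n_+\locpsi}$ while Condition~\ref{cond:4}(iii) combined with part (ii) yields $\innerp{\locpsi}{n_+\locpsi}\leq C\MMloc$, we obtain $m_0+\MM\leq C'\MM$, from which (v) follows. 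The only genuinely careful step is to verify that the three parts of Condition~\ref{cond:4} match exactly the three comparisons that arise --- (i) for the error in Condition~\ref{energy condition}, (ii) for the error in item (vi), and (iii) for $\innerp{\locpsi}{n_+\locpsi}\leq C\MM$ in (v) --- but this parameter bookkeeping is precisely what Condition~\ref{cond:4} was designed to record, so the argument presents no real obstacle.
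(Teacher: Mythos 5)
Your argument is correct and follows essentially the same route as the paper: (vi) and (i) come from \eqref{eq:psi loc bound} together with Lemma~\ref{lm:d_1 and d_2} and Conditions~\ref{cond:1}, \ref{cond:4}, (ii) from Lemmas~\ref{lm: n+ bound Lemma} and \ref{lm: n- rho B} applied to $\locpsi$, (iv) from Theorem~\ref{thm:Localizing large matrices}, and (v) from the length-$\MM$ window plus Condition~\ref{cond:4}(iii), while getting (iii) via Lemma~\ref{lm: HQuadbig} instead of via (ii) is an equivalent variant, both resting on $n\leq C\rho\abs{B}$ and the first bound in \eqref{R conditions}. The only (harmless) slip is the phrase ``left-hand side'': the non-negative term $-\varepsilon_0\Delta_u^{\cN}$ must be discarded on the $\locpsi$ side of your display, i.e.\ its right-hand side, to obtain the bound on $\innerp{\locpsi}{H_B\locpsi}$ required by Condition~\ref{energy condition}.
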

	\begin{proof}
(i): This follows from equations \eqref{energyassumption}, \eqref{eq:psi loc bound}, Lemma~\ref{lm:d_1 and d_2} and Conditions~\ref{cond:1} and \ref{cond:4}~(i).
(ii): This follows from Lemma~\ref{lm: n+ bound Lemma} and Lemma~\ref{lm: n- rho B}, which by (i) apply to the state $\locpsi$.\\
(iii): This follows from (ii) and the upper bound on $R$ in \eqref{R conditions}.\\
(iv): This property of $\locpsi$ follows from Lemma~\ref{thm:Localizing large matrices}.\\
(v): By (ii) and Condition~\ref{cond:4}~(iii) the largest $n_+$-eigenvalue of $\locpsi$ is bounded by $C\MM$.
This yields the stated estimate.\\	
(vi): Apply Lemma~\ref{lm:d_1 and d_2}, with $H_B$ replaced by $-\varepsilon_0\Delta^\cN_{u}+ H_{B}$, and Condition~\ref{cond:4} (ii) to \eqref{eq:psi loc bound}.
\end{proof}

The following lemma will be used to control the expectation of the second term in $\cQ_3$, see \eqref{eq:Q3}, in the state $\locpsi$ when $\varepsilon_3\gg 1$.
\begin{lm}\label{lm:Q3 bound} With
	\begin{align}
	h:=\sum_{i=1}-\varepsilon_0\Delta^\cN_{u,i}-\varepsilon_3\sum_{i<j}Q_jQ_iw_B(x_i,x_j)Q_iQ_j
	\end{align}
	we have
	\begin{align}
	\innerppa{\psi}{h}{\psi}\geq 0,
	\end{align}
	provided \label{epsilon0 used:3}all $n_+$-eigenvalues of $\psi$ satisfy $\varepsilon_0n_+\geq C_0\varepsilon_3\frac{a}{R}n_+^2$ with $C_0$ sufficiently large and depending only on $v_1$.
\end{lm}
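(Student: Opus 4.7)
The plan is to reduce the lemma to a single-particle operator inequality of the form
\[
Q_u\, w_B(x,\cdot)\, Q_u \,\leq\, C\frac{a}{R}\,(-\Delta^\cN_u) \qquad \text{on } L^2(\R^3),
\]
uniformly in $x\in\R^3$, with $C$ depending only on $v_1$, and then to lift it to the $N$-body setting via the structure of the $Q_iQ_j$ projections.

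To prove the single-particle bound I would take any $\phi$ with $Q_u\phi=\phi$, so that $\phi$ is supported in $B(u)$ and orthogonal to constants there. H\"older gives $\int w_B(x,y)|\phi(y)|^2\,\mathrm{d}y\leq \|w_B(x,\cdot)\|_{L^{3/2}}\|\phi\|_{L^6}^2$. Since $\phi\perp$ constants on $B(u)$, the Poincar\'e inequality allows the full Sobolev norm on $B(u)$ to be controlled by the gradient alone, yielding $\|\phi\|_{L^6(B(u))}^2\leq C\|\nabla\phi\|_{L^2(B(u))}^2=C\langle\phi|-\Delta^\cN_u|\phi\rangle$ with $C$ independent of $\ell$ by scaling to a unit cube. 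Using $W\leq Ca/R^3$ supported in a ball of radius $R$ together with $\max\chi_B^2\leq C$ on the large box, one checks $\|w_B(x,\cdot)\|_{L^{3/2}}\leq Ca/R$.

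Applying this single-particle bound to the $j$-th variable at fixed $x=x_i$, and then sandwiching by $Q_i$ (which commutes with $-\Delta^\cN_{u,j}$ since they act on different particles, and satisfies $Q_i^2=Q_i$), yields
\[
Q_iQ_j\,w_B(x_i,x_j)\,Q_iQ_j \,\leq\, C\frac{a}{R}\,(-\Delta^\cN_{u,j})\,Q_i.
\]
Using symmetry $\sum_{i<j}=\tfrac12\sum_{i\ne j}$ together with $\sum_{i\ne j}Q_i\leq n_+$,
\[
\sum_{i<j}Q_iQ_j\,w_B\,Q_iQ_j \,\leq\, \frac{Ca}{2R}\,n_+\sum_j(-\Delta^\cN_{u,j}).
\]
Since constants on $B(u)$ lie in the kernel of $-\Delta^\cN_u$, every $Q_k$ commutes with every $-\Delta^\cN_{u,j}$, and, by inspection, with every $Q_iQ_jw_BQ_iQ_j$; hence $[n_+,h]=0$.

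Decomposing $\psi=\sum_\lambda c_\lambda\psi_\lambda$ into $n_+$-eigenvectors, it therefore suffices to prove $\langle\psi_\lambda|h|\psi_\lambda\rangle\geq 0$ whenever $c_\lambda\ne 0$. The case $\lambda=0$ is immediate because both terms of $h$ annihilate the condensate. For $\lambda>0$ the hypothesis reads $\varepsilon_0\geq C_0\varepsilon_3(a/R)\lambda$; choosing $C_0\geq C/2$ with the constant $C$ from the preceding display, the pair-interaction term is cleanly absorbed into $\varepsilon_0\sum_j(-\Delta^\cN_{u,j})$ on the $\lambda$-eigenspace. The only substantive point in the argument is the $\ell$-uniformity of the Sobolev–Poincar\'e constant in the single-particle bound, which is precisely what the $Q_u$ projection enables; the rest is commutator bookkeeping between $Q_i$, $n_+$, and $-\Delta^\cN_{u,j}$.
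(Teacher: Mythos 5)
Your proof is correct, and its analytic core coincides with the paper's: both arguments rest on the scale-invariant Sobolev--Poincar\'e inequality $\langle\phi,-\Delta^\cN_u\phi\rangle\geq C\|\phi\|_6^2$ for $\phi$ supported in $B(u)$ and orthogonal to constants, on H\"older with $\|w_B(x,\cdot)\|_{L^{3/2}}\leq C\frac{a}{R}$ (the scaling $\int v_R^{3/2}\leq C(a/R)^{3/2}$), and on the fact that $h$ commutes with $n_+$ so that it suffices to argue at fixed $n_+$. Where you genuinely differ is the organization: the paper identifies $\cF_B(L^2(B))\cong\cF_B(QL^2(B))\otimes\cF_B(PL^2(B))$, restricts to a fixed $n_+$ sector, and uses the bosonic symmetry of the wavefunction to reduce $\sum_{i<j}$ to the single pair $(x_1,x_2)$ before applying H\"older and Sobolev in $x_1$ and integrating out the remaining variables; you instead prove the one-body form inequality $Q_u w_B(x,\cdot)Q_u\leq C\frac{a}{R}(-\Delta^\cN_u)$ uniformly in $x$ (using that $\nabla(Q_u\phi)=\nabla\phi$ inside $B(u)$), lift it to $Q_iQ_jw_B(x_i,x_j)Q_iQ_j\leq C\frac{a}{R}(-\Delta^\cN_{u,j})Q_i$, and then sum via $\sum_{i\neq j}Q_i\leq n_+$, which requires the commutativity of $Q_j$ with $-\Delta^\cN_{u,j}$ (valid because the orthogonal complement of $\mathrm{Ran}\,Q_u$ is exactly $\ker(-\Delta^\cN_u)$, so $Q_u$ reduces the operator) together with positivity of commuting factors. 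Your route avoids the Fock-space factorization and never uses the symmetry of the state, at the cost of this extra commutator bookkeeping, and it yields the same constant up to the harmless dependence on $M$ through $\max\chi_B^2$ that the paper also absorbs into $C$. One tiny remark: in the $n_+=0$ case you do not need both terms of $h$ to annihilate the state; positivity of the Neumann term suffices once the interaction term vanishes.
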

\begin{proof}
	The operator $h$ acts in Fock space and commutes with
	$n_+$. In fact, $h$ only depends on the excited particles in the
	following sense. We can identify the Fock space as
	$\cF_B(L^2(B))=\cF_B(QL^2(B))\otimes\cF_B(PL^2(B))$.
	In this representation $h$ is an operator acting only on the first factor. In a 
	fixed $n_+$ subspace of this factor the operator has the form
	\begin{align*}
	h=\sum_{i=1}^{n_+}\big(-\varepsilon_0\Delta^\cN_{i}-\varepsilon_3\sum_{j=i+1}^{n_+}Q_iQ_jw_B(x_i,x_j)Q_iQ_j\big ).
	\end{align*}
	If $\psi$ is in this subspace, we have
	\begin{align*}
	\innerp{\psi}{h\psi}=\sum_{i=1}^{n_+}\Big(\innerp{\psi}{\--\varepsilon_0\Delta^\cN_{i}\psi}
	- \varepsilon_3\sum_{j=i+1}^{n_+} \innerp{\psi}{w_B(x_i,x_j)\psi}\Big).
	\end{align*}
	
	Note that a function $\phi$, orthogonal to constants, i.e.,  $\phi\in
	QL^2(B)$ satisfies the Sobolev inequality $\innerp{\phi}{-\Delta^\cN\phi}\geq C\|\phi\|_6^2$. This implies
	that if $\psi$ is normalized, then
	\begin{align*}
	&\summ{i=1}{n_+} \int \varepsilon_0 \abs{\nabla_i\psi(x_1,\dots,x_{n_+})}^2-\varepsilon_3\summ{j=i+1}{n_+}\omega_B(x_i,x_j)\abs{\psi(x_1,\dots, x_{n_+})}^2\dd x_1\cdots\dd  x_{n_+}\\
	={}&n_+\int \varepsilon_0\abs{\nabla_1\psi(x_1,\dots, x_{n_+})}^2-\varepsilon_3\mfr{(n_+-1)}{2}\omega_B(x_1,x_2)\abs{\psi(x_1,\dots, x_{n_+})}^2\dd x_1\cdots \dd x_{n_+}\\
	\geq {}&\int \of{C\varepsilon_0n_+ -\varepsilon_3\mfr{n_+^2}{2}\norm{\omega_B(\;\cdot\;,x_2)}_{\frac{3}{2}}}\of{\int\abs{\psi(x_1,\dots,x_{n_+})}^6\dd x_1}^{\frac{1}{3}}\dd x_2\cdots \dd x_{n_+}\\
	\geq {}&\of{C\varepsilon_0n_+ -C\varepsilon_3n_+^2\frac{a}{R}}\int\of{\int\abs{\psi(x_1,\dots,x_{n_+})}^6\dd x_1}^{\frac{1}{3}}\dd x_2\cdots \dd x_{n_+}\\
	\geq {}&0,
	\end{align*}
	where we have used that $\int\omega_B(x,y)^{\frac{3}{2}}\dd x\leq C\int W(x-y)^{\frac{3}{2}}\dd x\leq C\int v_R(x)^{\frac{3}{2}}\dd x\leq C\of{\frac{a}{R}}^{\frac{3}{2}}$. 
\end{proof}
\noindent By Condition~\ref{cond:4} (iii) Lemma~\ref{lm:Q3 bound} is applicable to $\locpsi$ if we introduce the below condition.
\begin{condition}\label{cond:epsilon 0 and 3}
	We require
	\begin{align*}
	\varepsilon_0\geq C\varepsilon_3\frac{a}{R}\MM.
	\end{align*}
\end{condition}
\section{Proof of Theorem~\ref{thm:main1}}\label{sec: proof of Theorem 2.1}
Our goal has been to prove Theorem~\ref{thm:main1} as we have already concluded on page~\pageref{proof of main result} that it implies the main result, Theorem~\ref{thm:main}. We now put the ingredients together to prove Theorem~\ref{thm:main1}. We start with the lower bound on $H_\rho$ in
Theorem~\ref{thm: Main localization} (here $H_B$ is $H_u$). We then focus on $n_+$-localized states, $\locpsi$, provided by Theorem~\ref{Prop:locpsi} and use the decomposition
\begin{align}
\langle \locpsi\vert -\varepsilon_0\Delta_{u}^{\cN}+H_u\vert \locpsi \rangle=\langle \locpsi\vert\Big (H_{\rm Quad}+\cQ_1'\Big )+\Big (-\varepsilon_0\Delta_{u}^{\cN}+ H_u-H_{\rm Quad}-\cQ_1'\Big )\vert \locpsi \rangle.\label{eq: final decomposition}
\end{align}
We use Lemma~\ref{lm:quadratic} and Lemma~\ref{lm:boglowerbound} to show that the first parenthesis of the r.h.s. of \eqref{eq: final decomposition} is consistent with \eqref{eq:energy} (up to a term that cancels with a corresponding positive term in the second parenthesis of \eqref{eq: final decomposition}). Using Lemma~\ref{lm:Q3 bound} we show that the remaining terms in \eqref{eq: final decomposition} are of higher order.\\ 
As a prelude to the proof of Theorem~\ref{thm:main1} we show that our parameters $d,s,\ell,\MMloc,\varepsilon_0,\varepsilon_T,\varepsilon_3$ and the integer $M$ can be chosen such that all imposed conditions are satisfied. 
We choose our parameters in such a way that the error terms appearing in the proof of Theorem~\ref{thm:main1} are easily seen to be of size $o((\rho a)^{\frac{5}{2}})$. However, our choice is certainly not optimal in that it does not give an optimal bound on our error terms. In particular, we decided not to determine an optimal value for $N$ below. Our choice for $\MM$ in Lemma~\ref{lm: Chosen parameters are allowed} and why we have $\Rexponentialendpoint<\Rexponentialendpointvalue$ in Assumption~\ref{R assumption} will be discussed on page~\pageref{discuss eta and MM}.

\begin{lm}\label{lm: Chosen parameters are allowed}For any $0<\delta<\delta_\Rexponentialendpoint<1$ Conditions~\ref{cond:1}, \ref{cond:2}, \ref{cond:epsilon3lower bound}, \ref{conditions start large box}, \ref{cond:4} and \ref{cond:epsilon 0 and 3} are satisfied if we take $N$ to be sufficiently large, $\rho$ sufficiently small, denote $X:=\max\{(\rho a^3)^{-\frac{3}{10}}\frac{a}{R},R\sqrt{\rho a}\}$ and choose	
\begin{align}
d=&\,X^\frac{4}{N},\qquad s= X^{\frac{3}{2N}},\qquad \ell= (\rho a)^{-\frac{1}{2}}X^{-\frac{2}{N}},\qquad \theepsilonTconst\varepsilon_T=(\sqrt{\rho a}d\ell)^{2}=X^{\frac{4}{N}},\\
\varepsilon_0=&\, C\varepsilon_3\frac{a}{R}\MM, \qquad \varepsilon_3=\SFINAL X^{-\frac{1}{N}}=X^{-\frac{31}{2N}},\qquad \MM=\of{\frac{R}{a}}^{\frac{1}{3}},\qquad M=20.\label{second line choices for parameters}
\end{align}
Here $M$ was introduced in \eqref{zeta new and eq:defchi} and $\SFINAL=\rhozero a\ell^2 \SFACTORED=X^{-\frac{29}{2N}}$ was defined in Lemma~\ref{lm: n+ bound Lemma}.
\end{lm}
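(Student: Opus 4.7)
The statement is a direct verification: for each of the six conditions, substitute the stated choices and check that the inequality holds for $N$ and $M$ sufficiently large, $\rho$ sufficiently small, and $\delta$ sufficiently small. The plan is to express every composite quantity appearing in the conditions as a product of an explicit power of $X$ with a power of $\rho a^3$ (possibly multiplied by logarithms), then exploit that $X\to 0$ as $\rho a^3\to 0$. First I would record the key identities that follow from plugging in the parameter choices: $\sqrt{\rho a}\,ds\ell = X^{7/(2N)}$, $\sqrt{\rho a}\,d\ell = X^{2/N}$, $\sqrt{\rho a}\,s\ell = X^{-1/(2N)}$, $\SFACTORED = X^{-21/(2N)}$, $\SFINAL = X^{-29/(2N)}$, and $\rho\abs{B} = (\rho a^3)^{-1/2}X^{-6/N}$. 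To see that $X\to 0$ at a polynomial rate, parameterize $R/a = (\rho a^3)^{-\beta}$; Assumption~\ref{R assumption} forces $\beta\in[\tfrac13-\Rexponentialendpoint,\tfrac12)$, and one computes $X = \max\{(\rho a^3)^{\beta-3/10},(\rho a^3)^{1/2-\beta}\}$, both exponents being strictly positive because $\Rexponentialendpoint<\Rexponentialendpointvalue$.

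\textbf{Routine verifications.} With these identifications, Condition~\ref{cond:1} reduces to checking that each listed l.h.s.\ is bounded by a positive power of $X$ or of $\rho a^3$, which is immediate. Condition~\ref{cond:2} reduces to $s^{M-4}d^{-2}$ and $s^{M-2}d^{-5}$ being small; with the chosen powers these have $X$-exponents $(3M-28)/(2N)$ and $(3M-46)/(2N)$, and the worst requires $M\geq 16$, so $M=20$ is ample. The first line of Condition~\ref{cond:epsilon3lower bound} is the definition of $\varepsilon_T$; the second reduces to $X^{1/(2N)}\log(1/X)<\delta$, true since $X\to 0$; the third follows by inserting $\varepsilon_0=C\varepsilon_3(a/R)\MMloc$ and the bound on $\varepsilon_3$. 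Condition~\ref{cond:epsilon 0 and 3} is the definition of $\varepsilon_0$, and Condition~\ref{conditions start large box} reduces to $(\rho a^3)^{1/6+\Rexponentialendpoint}X^{-29/(2N)}<\delta$, which holds for $N$ large.

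\textbf{The tight condition and main obstacle: \ref{cond:4}.} After inserting $\MMloc=(R/a)^{1/3}$, part~(ii)---which the paper notes is stronger than~(i)---becomes $(a/R)^{5/3}<\delta\sqrt{\rho a^3}$; in the $\beta$-parameterization this reads $5\beta/3>1/2$, i.e., $\beta>3/10$, equivalently $\Rexponentialendpoint<\Rexponentialendpointvalue$---\emph{exactly} the constraint in Assumption~\ref{R assumption}. This is the main obstacle: the choice $\MMloc=(R/a)^{1/3}$ is the unique power that balances $\tfrac{a}{R}\MMloc^{-2}=(a/R)^{5/3}$ against the LHY scale $\sqrt{\rho a^3}$, and the specific value $\Rexponentialendpointvalue$ drops out of this balance. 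The upper bound $\MMloc<\delta\rho\abs{B}$ in~(iii) follows from $\beta<1/2$; the lower bound $\rho\abs{B}\sqrt{\rho a^3}\SFINAL\leq\MMloc$ becomes $X^{-41/(2N)}\leq(\rho a^3)^{-\beta/3}$, which holds whenever $N$ is large enough that $41\gamma/(2N)\leq\beta/3$, where $\gamma>0$ is the polynomial rate at which $X\to 0$. Once $\MMloc$ and $\Rexponentialendpoint$ are fixed by (i)--(ii), all remaining inequalities can be satisfied by enlarging $N$ (and then $M$) until every residual positive-power error becomes dominated.
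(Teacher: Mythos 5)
Your overall route is the same as the paper's: substitute the choices, rewrite every quantity in the conditions as a power of $X$ (times powers of $\rho a^3$ and logarithms), and let $\rho\to0$; your identities ($\sqrt{\rho a}\,ds\ell=X^{7/(2N)}$, $\SFINAL=X^{-29/(2N)}$, the exponents $(3M-28)/(2N)$, $(3M-46)/(2N)$ for Condition~\ref{cond:2}) and your identification of Condition~\ref{cond:4}~(ii) as the constraint fixing $\MM=(R/a)^{1/3}$ and forcing $\Rexponentialendpoint<\Rexponentialendpointvalue$ are correct. The genuine gap is the assertion that $X\to0$ \emph{at a polynomial rate}. Assumption~\ref{R assumption} consists of two limits, not a power law: for instance $R/a=(\rho a^3)^{-1/2}/\ln(1/\rho a^3)$ is admissible, and then $X=R\sqrt{\rho a}=1/\ln(1/\rho a^3)$, so no $\gamma>0$ with $X\le(\rho a^3)^{\gamma}$ exists, and the parameterization by a fixed $\beta$ is an unwarranted restriction. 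What the assumption does give (for $\rho$ small) is one-sided: $a/R\le(\rho a^3)^{1/3-\Rexponentialendpoint}$, so the \emph{first} component of the max obeys $(\rho a^3)^{-3/10}a/R\le(\rho a^3)^{1/30-\Rexponentialendpoint}$ — this is where $\Rexponentialendpoint<\Rexponentialendpointvalue$ enters and is the origin of the paper's exponent $\alpha_\Rexponentialendpoint$ — while the second component $R\sqrt{\rho a}$ merely tends to $0$. This matters precisely for the logarithmic item of Condition~\ref{cond:epsilon3lower bound}, the one point the paper's proof treats in detail: your reduction to ``$X^{1/(2N)}\log(1/X)<\delta$'' silently uses $\ln(ds\ell/R)\lesssim\ln(1/X)$, which is not automatic. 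One needs the dichotomy: if $X=R\sqrt{\rho a}$, then $\ln(ds\ell/R)=(1-\tfrac{7}{2N})\ln(1/X)$ and the bound is immediate; if instead $X=(\rho a^3)^{-3/10}a/R$, then $ds\ell/R=X^{1+7/(2N)}(\rho a^3)^{-1/5}\le(\rho a^3)^{-1/5}$ and $\ln(1/\rho a^3)\le(1/30-\Rexponentialendpoint)^{-1}\ln(1/X)$, so $X^{1/(2N)}\ln(ds\ell/R)\le\alpha_\Rexponentialendpoint X^{1/(2N)}\ln(1/X)\to0$. This case analysis (and the attendant $\delta<\delta_\Rexponentialendpoint$) is the substance of the paper's proof and is missing from yours.

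Two further reductions inherit the problem or point the wrong way. Condition~\ref{conditions start large box} is equivalent to $R\sqrt{\rho a}\,X^{-29/(2N)}<\delta$, and Assumption~\ref{R assumption} makes $(\rho a^3)^{1/6+\Rexponentialendpoint}$ (up to a divergent factor) a \emph{lower} bound for $R\sqrt{\rho a}$; so your ``reduces to $(\rho a^3)^{1/6+\Rexponentialendpoint}X^{-29/(2N)}<\delta$'' only checks the most favourable case and proves nothing in general. The correct one-liner is $R\sqrt{\rho a}\le X$, giving $X^{1-29/(2N)}<\delta$ once $N\ge15$ and $\rho$ is small. Likewise the lower bound in Condition~\ref{cond:4}~(iii) should not invoke ``the polynomial rate $\gamma$'': simply $a/R\le(\rho a^3)^{3/10}X\le X$ yields $\MM^{-1}=(a/R)^{1/3}\le X^{1/3}\le X^{41/(2N)}$ for $N\ge62$. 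The lemma itself is true and your bookkeeping for Conditions~\ref{cond:1}, \ref{cond:2}, \ref{cond:epsilon 0 and 3} and the choice $M=20$ is fine; with the repairs above your argument coincides with the paper's, which is exactly this verification plus the careful handling of the single logarithmic condition.
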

\begin{proof}
Note that $\limit{\rho \to 0}X=0$ by Assumption~\ref{R assumption}. Since $4>\frac{3}{2}$ we have $d<\delta s$, as required in Condition~\ref{cond:1}, provided $\rho$ is sufficiently small.\\
The requirement $\delta<\delta_\Rexponentialendpoint $ is only relevant for the second item in Condition~\ref{cond:epsilon3lower bound}, which we check now.
We may assume that $\frac{R}{a}\leq (\rho a^3)^{-\frac{1}{5}}$ such that $X=(\rho a^3)^{-\frac{3}{10}}\frac{a}{R}$. By Assumption~\ref{R assumption} we have $(\rho a^3)^{-\frac{1}{5}}\leq X^{-\alpha_\Rexponentialendpoint}$ for some $\alpha_\Rexponentialendpoint>0$. Thus
\begin{align}
(\sqrt{\rho a}d\ell)^2\ln\of{\frac{ds\ell}{R}}(ds\ell \sqrt{\rho a})^{-1}=&
\sqrt{\rho a}ds^{-1}\ell\ln\of{\frac{ds\ell}{R}}\nn\\
=&X^{\frac{-2+4-\frac{3}{2}}{N}}\ln\of{X^{1+\frac{7}{2N}}(\rho a^3)^{\frac{3}{10}-\frac{1}{2}}}\nn\\
\leq&X^{\frac{1}{2N}}\ln\of{X^{1+\frac{7}{2N}-\alpha_\Rexponentialendpoint}}.\nn
\end{align}
The other conditions are checked in a similar but simpler manner.
\end{proof}
\begin{proof}[Proof of Theorem~\ref{thm:main1}]We choose our parameters as prescribed in Lemma~\ref{lm: Chosen parameters are allowed}.
From Theorem~\ref{thm: Main localization} we obtain with  $\Lambda':=\Lambda+[-\mfr{\ell}{2},\mfr{\ell}{2}]^3$ the following lower bound on the background Hamiltonian, $H_\rho$, which was defined via \eqref{background Hamiltonian}, in terms of the localized Hamiltonian, $H_u$, which was  defined in \eqref{eq:Hu},
\begin{align}H_{\rhozero}\geq \int_{\ell\inv \Lambda '} \Big(-\varepsilon_0\Delta_{u}^{\cN}+H_u
\Big)\dd u.
\end{align}
This inequality is to be understood in the sense of quadratic forms and since all operators commute with the particle number, $n$, it suffices to consider states with fixed particle number.
Since we in Theorem~\ref{thm:main1} are interested in the ground state energy density after passing to the thermodynamic limit it remains to show that
\begin{align}\label{eq: lower bound on PSI}
\langle \Psi , \Big(-\varepsilon_0\Delta_{u}^{\cN}+H_u
\Big) \Psi \rangle \ell^{-3}\geq 4\pi\rhozero ^2
\left(a_2+\frac{128}{15\sqrt{\pi}}a(\sqrt{\rho a^3}+o(\sqrt{\rho a^3}))\right),
\end{align}
for any state $\Psi$ on the box $B(u)$ with fixed particle number $n$. Note that in case $\Psi$ does not satisfy \eqref{eq: lower bound on PSI}, then $\Psi$ satisfies \eqref{energyassumption}.\\
Hence, provided $\MM$ satisfied Condition~\ref{cond:4}, we obtain from Theorem~\ref{Prop:locpsi} a $n_+$-localized state $\locpsi$ which, in particular, satisfies 
\begin{align}
\innerp{\locpsi}{n_+^2\locpsi}\leq C \MM\innerp{\locpsi}{n_+\locpsi}
\end{align}
and
\begin{align}
\innerp{\aprioripsi}{(-\varepsilon_0\Delta_{u}^{\cN}+H_{u})\aprioripsi}\geq \innerp{\locpsi}{(-\varepsilon_0\Delta_{u}^{\cN}+H_{u})\locpsi}+o(\rho^2 a\sqrt{\rho a^3}\abs{B}).
\end{align}
Thus it suffices to show that 
\begin{align}\label{eq: locpsi total bound}
\langle \locpsi , (-\varepsilon_0\Delta_{u}^{\cN}+H_u
) \locpsi \rangle \ell^{-3}\geq 4\pi\rhozero ^2
\left(a_2+\frac{128}{15\sqrt{\pi}}a(\sqrt{\rho a^3}+o(\sqrt{\rho a^3}))\right).
\end{align}
With the notation from Section~\ref{Energy in a Single Box} we decompose $H_u$ (see \eqref{eq: box decomposition}) as
\begin{align}
H_u=\Big (\sum_{i=1}^N\cT_{u,i}+\cQ'_1+\cQ'_2\Big )+\Big (\cQ_0+\cQ''_1+\cQ''_2+\cQ_3+\cQ_4\Big ).
\end{align}
Using this decomposition the estimate in \eqref{eq: locpsi total bound} is a consequence of Lemma~\ref{lm: locpsi main part}, Lemma~\ref{lm: h0 part} and Lemma~\ref{lm: locpsi other part} below.
\end{proof}
\begin{lm}\label{lm: locpsi main part}
Let $\locpsi$ be some $n$-particle state on $B(u)$ with properties as in Theorem~\ref{Prop:locpsi}, let $h_0$ be defined as in \eqref{def:h_sigma} and choose parameters as in Lemma~\ref{lm: Chosen parameters are allowed}. Then 
\begin{align}
\langle \locpsi , \Big (\sum_{i=1}^N\cT_{u,i}+\cQ'_1+\cQ'_2\Big ) \locpsi \rangle \ell^{-3}\geq \frac12(2\pi)^{-3}\int_{\R^3} h_{0}(k) \dd k
-\abs{n-\rho \ell^3}^2\frac{a}{\ell^6}+o((\rho a)^{\frac{5}{2}}).
\end{align}
\end{lm}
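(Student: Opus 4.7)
The plan is to apply Lemma~\ref{lm:quadratic} with the parameter $\sigma=1$, which absorbs the linear operator $\cQ'_1$ into the Bogolubov quadratic form $h_1(k)$, and then to invoke Bogolubov's method via Lemma~\ref{lm:boglowerbound} with the same $\sigma=1$. The whole point of this choice is that the Bogolubov c-number lower bound on $\int h_\sigma(k)\,\dd k$ differs from the one on $\int h_0(k)\,\dd k$ by exactly the penalty term $\sigma^2(n-\rho|B|)^2|B|^{-2}|\widehat\chi_B(k)|^2|\widehat W(k)|$, whose $k$-integral reproduces the $-|n-\rho\ell^3|^2 a/\ell^{6}$ correction appearing in the statement.

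First I drop the two positive pieces inside each $\cT_{u,i}$ (the $(d\ell)^{-2}$-Neumann term and the $b\ell^{-2}Q$-gap), reducing the LHS to a lower bound for $\langle\locpsi,(H_{\mathrm{Quad}}+\cQ'_1)\locpsi\rangle$. Since on the large box $|B|=\ell^3$, $\max\chi_B^2$ is a constant, and $R\ll\ell$ by Condition~\ref{cond:1} so $\min\{R^{-3},|B|^{-1}\}=\ell^{-3}$, Lemma~\ref{lm:quadratic} with $\sigma=1$ gives
\begin{equation*}
H_{\mathrm{Quad}}+\cQ'_1\;\geq\;\tfrac12(2\pi)^{-3}\!\int h_1(k)\,\dd k\;-\;Cn_+\,a\,\ell^{-3}.
\end{equation*}
Averaged in $\locpsi$ and divided by $\ell^3$, the error is controlled via Theorem~\ref{Prop:locpsi}(ii) by $C\rho a\sqrt{\rho a^3}\,\SFINAL\,\ell^{-3}$, which is $o((\rho a)^{5/2})$ by the parameter choices of Lemma~\ref{lm: Chosen parameters are allowed}.

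Next I apply Lemma~\ref{lm:boglowerbound} with $\sigma=1$; the hypothesis $n|B|^{-1}\leq c(aR^2)^{-1}$ is precisely property (iii) of Theorem~\ref{Prop:locpsi}. The resulting c-number lower bound on $\langle h_1(k)\rangle$ in the fixed-$n$ state $\locpsi$ splits additively as a $\sigma$-independent Bogolubov piece (this is what the notation $\tfrac12(2\pi)^{-3}\int h_0(k)\,\dd k$ in the statement refers to, and whose explicit large-box asymptotics are the content of the subsequent Lemma~\ref{lm: h0 part}) minus the penalty $(n-\rho|B|)^2|B|^{-2}|\widehat\chi_B(k)|^2|\widehat W(k)|$. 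The $k$-integral of the penalty is handled exactly as in \eqref{eq: sigmsa=1 term leading order}--\eqref{sigma=1 small error term}: replacing $|\widehat W|$ by $\widehat W$ produces the bulk $-(n-\rho\ell^3)^2\cU_B$, while the tail from $|k|\geq R^{-1}$ is bounded by $C(n-\rho\ell^3)^2|B|^{-1}a(R/\ell)^{2M}$, which for $M=20$ is negligible by Condition~\ref{cond:1}. Using $\cU_B=4\pi a_1/\ell^3$ on the large box and $a_1=a+O(a^2/R)$, and absorbing the constant $4\pi$ together with the $a_1\to a$ discrepancy into the error (controlled by combining the a priori bound $|n-\rho\ell^3|^2\leq C(\rho\ell^3)^2\sqrt{\rho a^3}\SFINAL$ of Theorem~\ref{Prop:locpsi}(ii) with Condition~\ref{conditions start large box}), division by $\ell^3$ yields the $-|n-\rho\ell^3|^2 a/\ell^6$ correction stated.

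The main obstacle is bookkeeping: each of the three error contributions — the residual $Cn_+ a\ell^{-3}$ from Lemma~\ref{lm:quadratic}, the high-frequency tail in the $|\widehat W|$-versus-$\widehat W$ swap, and the $a_1$-versus-$a$ discrepancy in the penalty coefficient — has to be shown to contribute $o((\rho a)^{5/2})$ in energy density. All three are settled by combining the a priori bounds of Theorem~\ref{Prop:locpsi}(ii) with the specific parameter scalings of Lemma~\ref{lm: Chosen parameters are allowed}, which were chosen precisely so that each such bookkeeping estimate succeeds with room to spare.
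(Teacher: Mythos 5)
Your architecture is the same as the paper's: reduce to $H_{\rm Quad}+\cQ_1'$, apply Lemma~\ref{lm:quadratic} with $\quadpar=1$, then Lemma~\ref{lm:boglowerbound} (justified by Theorem~\ref{Prop:locpsi}~(iii)), and handle the penalty term exactly as in \eqref{eq: sigmsa=1 term leading order}--\eqref{sigma=1 small error term}. The only benign deviation is at the start: the paper does not discard the $b\ell^{-2}Q_u$ gap term but keeps it as $+C\ell^{-5}n_+$ and uses Condition~\ref{cond:1} to absorb the $-Cn_+a\abs{B}^{-2}$ error from Lemma~\ref{lm:quadratic} into it, whereas you drop the gap and control that error with Theorem~\ref{Prop:locpsi}~(ii) and the parameter choices; both work. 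Your reading of the $\int h_0$ term is also compatible with how Lemma~\ref{lm: h0 part} is actually used.

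There is, however, one genuinely incorrect step: the claim that the discrepancy between the penalty coefficient $\cU_B\abs{B}=4\pi a_1$ and the scattering length $a$ can be "absorbed into the error" as $o((\rho a)^{5/2})$. The discarded quantity is $\abs{n-\rho\ell^3}^2(4\pi a_1-a)\ell^{-6}$, and $4\pi a_1-a$ is of order $a$ (the factor $4\pi$ is not close to $1$; only $a_1-a\sim a^2/R$ is small). Theorem~\ref{Prop:locpsi}~(ii) bounds this only by $C\rho^2a\sqrt{\rho a^3}\,\SFINAL$, i.e.\ by $C\delta\rho^2a\,\frac{a}{R}$ via Condition~\ref{conditions start large box}; since $\SFINAL\to\infty$ with the choices of Lemma~\ref{lm: Chosen parameters are allowed} (equivalently, since $\frac{a}{R}\gg\sqrt{\rho a^3}$ by Assumption~\ref{R assumption}), this exceeds the LHY order and is not $o((\rho a)^{\frac52})$. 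Moreover the direction is unfavorable: the bound actually produced is $-\abs{n-\rho\ell^3}^2\,4\pi a_1\,\ell^{-6}$, which lies below the claimed $-\abs{n-\rho\ell^3}^2\,a\,\ell^{-6}$, so there is nothing to absorb. The paper never performs this conversion: the penalty is kept as $-(n-\rho\abs{B})^2\,\cU_B$, precisely so that it cancels \emph{exactly} against the positive term $(n-\rho\abs{B})^2\,\cU_B$ in $\cQ_0$ (see \eqref{eq:noQ}, the remark following Lemma~\ref{lm:boglowerbound}, and Lemma~\ref{lm: locpsi other part}); per the Notation remark in Section~\ref{sect.2}, the symbol $a$ in the $\mp\abs{n-\rho\ell^3}^2 a/\ell^6$ terms of this lemma and of Lemma~\ref{lm: locpsi other part} stands for the same quantity. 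The fix is therefore to keep the coefficient as $\cU_B\ell^3$ (equivalently $4\pi a_1$) and rely on the exact cancellation, rather than to replace it by the scattering length at the cost of an error which is in fact not negligible at LHY precision.
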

\begin{lm}\label{lm: h0 part}
Let $\locpsi$ be some $n$-particle state on $B(u)$ with properties as in Theorem~\ref{Prop:locpsi}, let $h_0$ be defined as in \eqref{def:h_sigma} and choose parameters as in Lemma~\ref{lm: Chosen parameters are allowed}. Then
\begin{align}
\langle \locpsi ,\frac12(2\pi)^{-3}\int_{\R^3} h_{0}(k) \dd k\, \locpsi \rangle\ell^{-3} \geq 4\pi\rhozero ^2
\left(a_2+\frac{128}{15\sqrt{\pi}}a(\sqrt{\rho a^3}+o(\sqrt{\rho a^3}))\right).\label{eq: h_0 integral}
\end{align}
\end{lm}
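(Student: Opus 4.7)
The strategy is to invoke the pointwise Bogolubov lower bound of Lemma~\ref{lm:boglowerbound} with $\quadpar=0$ (valid on $\locpsi$ by Theorem~\ref{Prop:locpsi}(iii)), reduce to a classical Bogolubov integral, and then extract both the second Born contribution and the Lee--Huang--Yang correction via an integrand-level decomposition. Writing $F(A,B):=A+B-\sqrt{A^2+2AB}\geq 0$ and using $\int\chi_B^2=\ell^3$, $n_0\leq n$, and the homogeneity of $F$, integration of the pointwise bound gives
\begin{align*}
\frac{1}{2}(2\pi)^{-3}\int h_0(k)\dd k \geq -\frac{\ell^3}{2(2\pi)^3}\int F\!\left(\tau_B(k^2),\,n\ell^{-3}\widehat{W}(k)\right)\dd k.
\end{align*}
By Theorem~\ref{Prop:locpsi}(ii), $\langle\locpsi,|n-\rho\ell^3|\locpsi\rangle = o(\rho\ell^3)$ (using the parameter scheme of Lemma~\ref{lm: Chosen parameters are allowed}), and a simple Lipschitz argument in the second slot of $F$ permits the substitution $n\ell^{-3}\to\rho$ with overall error $o(\rho^2 a\sqrt{\rho a^3}\ell^3)$. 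It then suffices to show
\begin{align*}
-\frac{\ell^3}{2(2\pi)^3}\int F(\tau_B(k^2),\rho\widehat{W}(k))\dd k\geq 4\pi\rho^2 a_2\ell^3+\frac{128}{15\sqrt{\pi}}\cdot 4\pi\rho^2 a\sqrt{\rho a^3}\ell^3+o(\rho^2 a\sqrt{\rho a^3}\ell^3).
\end{align*}

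The crucial idea is the decomposition, \emph{at the level of the integrand},
\begin{align*}
F(\tau_B,\rho\widehat{W})=\frac{(\rho\widehat{W})^2}{2\tau_B}+G(\tau_B,\rho\widehat{W}),\qquad G := F-\frac{(\rho\widehat{W})^2}{2\tau_B}.
\end{align*}
Neither piece is absolutely integrable in 3D; only their sum is (for fixed $R$). For the first piece I use $\widehat{W}(k)=\widehat{v_R}(k)(1+O((R/\ell)^2))$ (from $W=v_R/(\chi*\chi)$) and $\tau_B(k^2)=k^2(1+O(\varepsilon_0+\varepsilon_T))$ above the kinetic cutoffs, together with the definition of $a_2$ in \eqref{a_1+a_2}, to obtain
\begin{align*}
-\frac{\ell^3}{2(2\pi)^3}\int\frac{(\rho\widehat{W}(k))^2}{2\tau_B(k^2)}\dd k = 4\pi\rho^2 a_2\ell^3+o(\rho^2 a\sqrt{\rho a^3}\ell^3),
\end{align*}
where the error bound uses precisely the smallness imposed by Condition~\ref{cond:epsilon3lower bound} on $\varepsilon_0 a/R$ and on $\varepsilon_T\ln(ds\ell/R)/(ds\ell)$.

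For the second piece, Taylor expansion yields $G(\tau_B,\rho\widehat{W}) = O((\rho\widehat{W})^3/\tau_B^2)$ at high momenta, so $\int G\dd k$ converges absolutely. On this convergent integral one is free to replace $\tau_B\to k^2$ and $\widehat{W}\to 8\pi a$ (note that $a$, the full scattering length, must appear here rather than $a_1$: the difference $a-a_1=O(a^2/R)$ contributes a relative error $O(a/R)$, which is $o(1)$ precisely because Assumption~\ref{R assumption} forces $R/a\to\infty$). The remaining integral is the regularized LHY integral
\begin{align*}
-\frac{\ell^3}{2(2\pi)^3}\int\!\left[k^2+8\pi a\rho-\sqrt{k^4+16\pi a\rho\,k^2}-\frac{(8\pi a\rho)^2}{2k^2}\right]\dd k,
\end{align*}
which by the substitution $k=\sqrt{16\pi\rho a}\,t$ and the identity $\int_0^\infty[t^3\sqrt{t^2+1}-t^4-t^2/2+1/8]\dd t = 2/15$ evaluates to exactly $\frac{128}{15\sqrt{\pi}}\cdot 4\pi\rho^2 a\sqrt{\rho a^3}\ell^3$.

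\textbf{Main obstacle.} The central difficulty is that the bare Bogolubov integrand $F(\tau_B,\rho\widehat{W})$ generates a UV-divergent integral in 3D (the second Born divergence). This forces the separation between the $a_2$ and LHY contributions to take place inside the integrand---not via any momentum cutoff---and the two pieces are then dominated by opposite regimes: the Born piece is UV-heavy (sensitive to the exact form of $\tau_B$ and producing the logarithmic correction $\ln(ds\ell/R)$ from the intermediate kinetic cutoffs), while the LHY piece is IR-heavy and requires $\widehat{W}\approx 8\pi a$, valid only because $R/a\to\infty$. The parameter hierarchy in Lemma~\ref{lm: Chosen parameters are allowed} is tailored so that both error budgets are simultaneously $o(\rho^2 a\sqrt{\rho a^3}\ell^3)$.
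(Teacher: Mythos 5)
Your overall strategy is the same as the paper's: the Bogolubov bound of Lemma~\ref{lm:boglowerbound} with $\quadpar=0$, replacement of $n_0$ by $n$ and of $n|B|^{-1}$ by $\rhozero$, an add/subtract of the second-Born counterterm $\frac{(\rhozero\widehat W)^2}{2\tau_B}$ identified with $4\pi\rhozero^2a_2$, and evaluation of the regularized remainder as the LHY integral (your substitution $k=\sqrt{16\pi\rhozero a}\,t$ and the value $2/15$ agree with the paper's identity $\int(-|k|^2-1+|k|^2\sqrt{1+2|k|^{-2}}+\tfrac12|k|^{-2})\dd k=\tfrac{32}{15}\pi\sqrt2$). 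The genuine gap is in your central claim that the split can be performed ``inside the integrand---not via any momentum cutoff''. For the actual dispersion $\tau_B$ of \eqref{eq:tauB1} this fails: $\tau_B(k^2)$ vanishes identically for $|k|\le\tfrac12(s\ell)^{-1}$ (so $G$ and the counterterm are not even defined there), and it has a double zero at the sphere $|k|=\tfrac12(s\ell)^{-1}$ while $\widehat W(k)\approx\widehat W(0)>0$ there, so $\int\widehat W(k)^2\tau_B(k^2)^{-1}\dd k$ diverges like $\int(r-\tfrac12(s\ell)^{-1})^{-2}\dd r$ near that sphere. Hence both of your displayed assertions are false as stated: the ``Born piece'' integral is $+\infty$, and $\int G\,\dd k$ is $-\infty$ rather than absolutely convergent. (Your diagnosis of where integrability fails is also reversed: for the scaled potential the UV Born integral converges, of size $\sim a^2/R$; it is the infrared structure of $\tau_B$ that obstructs a global split.) This is exactly why the paper does use momentum regions: the ball $|k|<(s\ell)^{-1}$, containing the zero set of $\tau_B$ and the bad shell, is handled by the crude bound \eqref{h_0estimateBIG} at cost $C\rhozero a(s\ell)^{-3}|B|=o(\rhozero^2a\sqrt{\rhozero a^3}|B|)$, and the counterterm is introduced only for $|k|>(s\ell)^{-1}$, where $\tau_B(k^2)\geq c|k|^2$ by \eqref{tau estimate}; see \eqref{error term R^2}--\eqref{error term epsilon_0}.

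A secondary point: the ``simple Lipschitz argument in the second slot of $F$'' is too crude if taken with the global Lipschitz constant $1$, since the resulting error is of order $|n\ell^{-3}-\rhozero|\int|\widehat W|\sim\rhozero(\rhozero a^3)^{\frac14}\SFINAL^{\frac12}\,a R^{-3}\ell^3$, which is not $o(\rhozero^2a\sqrt{\rhozero a^3}\,\ell^3)$ in the physically relevant regime $R\ll\rhozero^{-1/3}$. One needs the refined bound $\partial_BF\leq\min\{1,CB/A\}$, i.e., the replacement $n|B|^{-1}\to\rhozero$ must be done separately in the LHY region and inside the Born-type term, as the paper does (errors $C(\rhozero a)^{\frac52}(\rhozero a^3)^{\frac14}\SFINAL^{\frac12}$ in \eqref{comp2:LHY 1a} and $C\rhozero^2a(\rhozero a^3)^{\frac14}\SFINAL^{\frac12}\frac aR$ in \eqref{eq:Born term and error terms}). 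With these two repairs your argument reduces to the paper's proof.
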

\begin{lm}\label{lm: locpsi other part}
Let $\locpsi$ be some $n$-particle state on $B(u)$ with properties as in Theorem~\ref{Prop:locpsi} and choose parameters as in Lemma~\ref{lm: Chosen parameters are allowed}. Then
\begin{align}
\langle \locpsi , \Big (-\varepsilon_0\Delta_{u}^{\cN}+\cQ_0+\cQ''_1+\cQ''_2+\cQ_3+\cQ_4\Big ) \locpsi \rangle \ell^{-3}\geq \abs{n-\rho \ell^3}^2\frac{a}{\ell^6}+o((\rho a)^{\frac{5}{2}}).\label{eq: estimate on the small terms large box}
\end{align}
\end{lm}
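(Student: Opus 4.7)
The plan is to split the operator into pieces handled individually: $\cQ_0$ furnishes the advertised positive term $\abs{n-\rho\ell^3}^2 a/\ell^6$ via its diagonal piece, the remainders $\cQ''_1$, $\cQ''_2$, $\cQ_4$ are cheap to bound using the a priori controls on $n_+$ and $\abs{n-\rho\abs{B}}$ supplied by Theorem~\ref{Prop:locpsi}, and the single delicate term is $\cQ_3$, for which the saved kinetic term $-\varepsilon_0\Delta_u^\cN$ and Lemma~\ref{lm:Q3 bound} have been prepared. Throughout we work on the large box, so $\cU_B=4\pi a_1/\ell^3$, and the state $\locpsi$ carries the properties listed in Theorem~\ref{Prop:locpsi}.

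First I would decompose $\cQ_0$ via \eqref{eq:noQ}. Since $\locpsi$ has fixed particle number, $n-\rho\abs{B}$ acts as a scalar, and because $a_1=a+O(a^2/R)$ with $a/R<\delta$ we have $4\pi a_1\geq a$, so the leading piece $(n-\rho\abs{B})^2\cU_B$ gives $\geq\abs{n-\rho\ell^3}^2 a/\ell^3$, producing the required main term after division by $\ell^3$. The cross term is controlled by Theorem~\ref{Prop:locpsi}(ii):
\[
\abs{\langle 2(n-\rho\abs{B})n_+\rangle}\cU_B\leq 2\abs{n-\rho\abs{B}}\langle n_+\rangle\cU_B\leq C\rho^2\abs{B}^2(\rho a^3)^{3/4}\SFINAL^{3/2}\,a/\ell^3,
\]
whose energy density contribution is of order $(\rho a)^{5/2}(\rho a^3)^{1/4}\SFINAL^{3/2}=o((\rho a)^{5/2})$ for the parameters of Lemma~\ref{lm: Chosen parameters are allowed}. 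The $n_+^2\cU_B$ piece uses $\langle n_+^2\rangle\leq C\MM\langle n_+\rangle$ from Theorem~\ref{Prop:locpsi}(v) together with $\cU_B\leq Ca/\ell^3$, and the $-n_0\cU_B$ piece uses $n_0\leq n\leq C\rho\abs{B}$; both are easily verified negligible.

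For $\cQ''_1$ I would invoke Lemma~\ref{lm:Q1} with $\varepsilon_1''=1$, and for $\cQ''_2$ I would use \eqref{eq:Q2second}; every resulting summand is a constant times one of $(n_++1)n$, $n_+^2$, $\abs{n-\rho\abs{B}}n_+$ multiplied by $\cU_B\leq Ca/\ell^3$, all negligible by the same estimates. The term $\cQ_4\geq 0$ is simply dropped.

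The main obstacle is $\cQ_3$, for which \eqref{eq:Q3} gives
\[
\cQ_3\geq -C\varepsilon_3^{-1}n n_+\cU_B-\varepsilon_3\sum_{i<j}Q_iQ_jw_B(x_i,x_j)Q_iQ_j.
\]
With the large choice $\varepsilon_3=\SFINAL X^{-1/N}$ from Lemma~\ref{lm: Chosen parameters are allowed}, the bounds $\langle n_+\rangle\leq C\rho\abs{B}\sqrt{\rho a^3}\SFINAL$ and $n\leq C\rho\abs{B}$ make the first summand contribute a density of order $X^{1/N}(\rho a)^{5/2}=o((\rho a)^{5/2})$. For the quartic second summand the saved Neumann term $-\varepsilon_0\Delta_u^\cN$ is essential: Theorem~\ref{Prop:locpsi}(iv) bounds every $n_+$-eigenvalue of $\locpsi$ by $C\MM$, so Condition~\ref{cond:epsilon 0 and 3} (namely $\varepsilon_0\geq C\varepsilon_3(a/R)\MM$) ensures $\varepsilon_0 n_+\geq C_0\varepsilon_3(a/R)n_+^2$ on each $n_+$-component of $\locpsi$, and Lemma~\ref{lm:Q3 bound} yields
\[
\Big\langle \locpsi,\Big(-\varepsilon_0\Delta_u^\cN-\varepsilon_3\sum_{i<j}Q_iQ_jw_BQ_iQ_j\Big)\locpsi\Big\rangle\geq 0.
\]
The simultaneous requirements --- $\varepsilon_3$ large enough to kill the first part of $\cQ_3$ and $\varepsilon_0$ large enough compared to $\varepsilon_3(a/R)\MM$ --- are exactly what Condition~\ref{cond:epsilon 0 and 3} and Lemma~\ref{lm: Chosen parameters are allowed} deliver; assembling the four estimates completes the proof.
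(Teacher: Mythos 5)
Your overall route is the same as the paper's (same decomposition, main term from $\cQ_0$, the saved $-\varepsilon_0\Delta_u^{\cN}$ plus Lemma~\ref{lm:Q3 bound} and Condition~\ref{cond:epsilon 0 and 3} for $\cQ_3$, dropping $\cQ_4$), and most of your estimates are correct. However, there is a genuine gap in your treatment of $\cQ_1''$: you invoke Lemma~\ref{lm:Q1} with $\varepsilon_1''=1$ and claim the resulting term $(n_++1)n_0\,\cU_B$ is ``negligible by the same estimates.'' It is not. With $\cU_B\leq Ca/|B|$, $n_0\leq C\rhozero|B|$ and $\langle n_+\rangle\leq C\rhozero|B|\sqrt{\rhozero a^3}\,\SFINAL$ (Theorem~\ref{Prop:locpsi}(ii)), this term contributes an energy density of order $\rhozero^2a\sqrt{\rhozero a^3}\,\SFINAL$. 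With the parameters of Lemma~\ref{lm: Chosen parameters are allowed} one has $\SFINAL=X^{-\frac{29}{2N}}\to\infty$ as $\rhozero\to0$ (Condition~\ref{conditions start large box} only guarantees $\sqrt{\rhozero a^3}\,\SFINAL<\delta\, a/R$, not $\SFINAL\lesssim1$), so this error is not $o((\rhozero a)^{5/2})$ --- it even dominates the LHY term. Your implicit assumption that a term of size $\rhozero a\langle n_+\rangle/|B|$ is automatically negligible is exactly what fails here.

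The paper avoids this by choosing the small value $\varepsilon_1''=(\rhozero|B|)^{-\frac12}\MMloc^{\frac12}$, so that the $\varepsilon_1''(n_++1)n_0$ piece shrinks, while the conjugate $\varepsilon_1''^{-1}n_+^2$ piece is controlled by the $n_+$-localization bound $\langle n_+^2\rangle\leq C\MMloc\langle n_+\rangle$ (Theorem~\ref{Prop:locpsi}(v)) together with Condition~\ref{cond:4}(iii); both pieces then balance at $C\rhozero^2a\sqrt{\rhozero a^3}\,\SFINAL\MMloc^{\frac12}(\rhozero|B|)^{-\frac12}$, which is $o((\rhozero a)^{5/2})$. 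This is one of the places where the large-matrix localization of $n_+$ is indispensable, not merely convenient. The remainder of your argument (main term and cross term from $\cQ_0$, the pure $n_+^2$ terms, $\cQ_2''$, and the $\cQ_3$ step, where the bound on the largest $n_+$-eigenvalue uses (ii) and Condition~\ref{cond:4}(iii) in addition to (iv)) agrees with the paper and is sound.
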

\begin{proof}[Proof of Lemma~\ref{lm: locpsi main part}]
Recall that $\mathcal{T}_B$ is defined via \eqref{eq:H_B and T_B} and \eqref{eq:Tauhat} as $(1-\varepsilon_0)\widehat{\mathcal{T}}$. We note that the second term in $\widehat{\mathcal{T}}$ is positive and that $(1-\varepsilon_0)$ times the last term is part of what we in \eqref{eq:Hquad} denoted as the quadratic Hamiltonian, $H_{\textrm{Quad}}$, using the notation in \eqref{eq:tauB1}. 
Thus we have
\begin{align}
\langle \locpsi , \Big (\sum_{i=1}^N\cT_{B,i}+\cQ'_1+\cQ'_2\Big ) \locpsi \rangle \,\ell^{-3}\geq \langle \locpsi , \Big (H_{\rm Quad}+\cQ'_1\Big ) \locpsi \rangle\ell^{-3}+C\ell^{-5}n_+.
\end{align}
We proceed and apply Lemma~\ref{lm:quadratic} with $\sigma=1$\label{applying lambda=1} to obtain
\begin{equation}
\langle \locpsi , \Big (H_{\rm Quad}+\cQ'_1\Big ) \locpsi \rangle\, \ell^{-3}\geq\frac12(2\pi)^{-3}\int_{\R^3} h_{1}(k) \dd k\,\ell^{-3}
-Cn_+\frac{a}{\abs{B}^2},\label{eq: L4.4 applied}
\end{equation}
with $h_1$ as defined in \eqref{def:h_sigma}.
By Condition~\ref{cond:1} we have $C\ell^{-5}n_+-Cn_+\frac{a}{\abs{B}^2}\geq 0.$ 
We use Lemma~\ref{lm:boglowerbound}, which is applicable by Theorem~\ref{Prop:locpsi} (iii), to estimate the integral over $h_1$ as explained on page~\pageref{lambda=1} and get
\begin{align}
\frac12(2\pi)^{-3}\int_{\R^3} h_{1}(k) \dd k\, \ell^{-3}\geq - \abs{n-\rho\abs{B}}^2\frac{a}{\abs{B}^2}+\frac12(2\pi)^{-3}\int_{\R^3} h_{0}(k) \dd k\, \ell^{-3} + o((\rho a)^{\frac{5}{2}}).
\end{align}
\end{proof}
\begin{proof}[Proof of Lemma~\ref{lm: h0 part}]
The value $4\pi\frac{128}{15\sqrt{\pi}}$ for the LHY-term is obtained by integrating over values of $k$ close to $\sqrt{\rhozero a}$ after subtracting a part related to the (negative) second Born term. This term will again be added in \eqref{add second Born term again} and the complete Born term will be obtained in \eqref{error term epsilon_0}.\\
	In other words, we prove Lemma~\ref{lm: h0 part} by showing the two estimates
	\begin{align}
	&\frac{1}{2}(2\pi)^{-3}\intt{(s\ell)\inv<\abs{k}<(d\ell)\inv}{}h_0(k)+\frac{\abs{k}^2}{2}\frac{n^2\widehat{W}(k)^2}{\abs{B}
		\tau_B(k^2)^2}\dd k\abs{B}\inv\nn\\
	&\qquad\geq {}4\pi\rho^2 \frac{128}{15\sqrt{\pi}}a\sqrt{\rho a^3}+o((\rho a)^{\frac{5}{2}})\label{error term R^2}
	\end{align}
	and
	\begin{align}
	&\frac{1}{2}(2\pi)^{-3}\bigg (-\hspace{-0.8cm}\inttt{(s\ell)\inv<\abs{k}<(d\ell)\inv}{}\hspace{-0.8 cm}\frac{\abs{k}^2}{2}\frac{n^2\widehat{W}(k)^2}{\abs{B}
		\tau_B(k^2)^2}\dd k+\hspace{-0.5 cm}\inttt{\abs{k}<(s\ell)^{-1}}{}\hspace{-0.3 cm} h_0 \dk+\hspace{-0.5 cm}\inttt{\abs{k}>(d\ell)^{-1}}{}\hspace{-0.4 cm} h_0 \dk\bigg )\abs{B}\inv \label{add second Born term again}\\
	&\qquad\geq -\frac{1}{4}(2\pi)^{-3}\hspace{-0.4 cm}\inttt{\abs{k}>(s\ell)^{-1}}{}\hspace{-0.3 cm}\abs{k}^2\frac{n^2\widehat{W}(k)^2}{\abs{B}^2\tau_B(k^2)^2}+o((\rho a)^{\frac{5}{2}})\label{eq: outer part}\\
 &\qquad\geq-\frac{1}{4}(2\pi)^{-3}\rhozero ^2\frac{1}{R}\int \frac{\widehat{v}_1(k)^2}{\abs{k}^2}\dd k
	+o((\rho  a)^{\frac{5}{2}}).\label{error term epsilon_0}
	\end{align}
{\it Proof of \eqref{error term R^2}}. By Theorem~\ref{Prop:locpsi} (iii) we may lower bound the integral over $h_0$ using Lemma~\ref{lm:boglowerbound}. We get
	\begin{align}
	&\frac{1}{2}(2\pi)^{-3}\intt{(s\ell)\inv<\abs{k}<(d\ell)\inv}{}h_0(k)+\frac{\abs{k}^2}{2}\frac{n^2\widehat{W}(k)^2}{\abs{B}
		\tau_B(k^2)^2}\dd k\abs{B}\inv\label{comp:LHY 0b}\\
	\geq{}&\frac{1}{2}(2\pi)^{-3}\intt{(s\ell)\inv<\abs{k}<(d\ell)\inv}{}\abs{k}^2\of{\sqrt{1+2\frac{n\widehat{W}(k)}{\abs{B}\tau_B(k^2)}}-1-\frac{n\widehat{W}(k)}{\abs{B}\tau_B(k^2)}+\frac{1}{2}\frac{n^2\widehat{W}(k)^2}{\abs{B}^2\tau_B(k^2)^2}}\dd k\nn\\
	\geq{}&\frac{1}{2}(2\pi)^{-3}\intt{(s\ell)\inv<\abs{k}<(d\ell)\inv}{}\abs{k}^2\of{\sqrt{1+2\frac{n\widehat{W}(k)}{\abs{B}\abs{k}^2}}-1-\frac{n\widehat{W}(k)}{\abs{B}\abs{k}^2}+\frac{1}{2}\frac{n^2\widehat{W}(k)^2}{\abs{B}^2\abs{k}^4}}\dd k\nn\\
	\geq{}&\frac{1}{2}(2\pi)^{-3}\hspace{-0.8 cm}\inttt{(s\ell)\inv<\abs{k}<(d\ell)\inv}{}\hspace{-0.8 cm}\abs{k}^2\of{\sqrt{1+2 \frac{n}{\abs{B}}\widehat{v_R}(0)\abs{k}^{-2}}-1-\frac{n}{\abs{B}} \widehat{v_R}(0)\abs{k}^{-2}+\frac{1}{2}\frac{n^2}{\abs{B}^2}\widehat{v_R}(0)^2\abs{k}^{-4}}\dd k \nn\\
	&-C\rho^2 a\sqrt{\rho a^3}X^{2+\frac{2}{N}}.
	\label{error: Born term subtraction}
	\end{align}
	In the first inequality we used \eqref{sqrtbound3} to conclude that the lower bound on $h_0(k)$ in \eqref{appendix:h_0estimate} is negative and then replace $\tau_B(k^2)$ by the larger value $\abs{k}^2$.\\
	Then we used that if $f(x)=\sqrt{1+2x}-1-x+\frac{1}{2}x^2$, then $f'(x)=(1+2x)^{-\frac{1}{2}}-1+x$ and $0\leq\frac{1}{1+x}-1+x \leq f'(x)\leq \min\offf{x,\frac{3}{2}x^2}$ for $x\geq 0$.
	To obtain the error term in \eqref{error: Born term subtraction} we recall that if $\abs{k}<R\inv$, then $\widehat{v_R}(0)-\widehat{W}(k)\leq \mfr12(kR)^2\int v_R(x)\dd x$ by \eqref{eq: W_b estimate} and \eqref{W>0}. Then we estimate
	\begin{align*}
	&\intt{(s\ell)^{-1}<\abs{k}<(\rho a)^{\frac{1}{2}}}{}\abs{k}^2\frac{\rho a}{\abs{k}^2}\rho aR^2\dd k+\intt{(\rho a)^{\frac{1}{2}}<\abs{k}<(d\ell)^{-1}}{}\abs{k}^2\big (\frac{\rho a}{\abs{k}^2}\big )^2\rho aR^2\dd k\\
	&\leq C(\rho a)^{\frac{7}{2}}R^2+C(\rho a)^{3}R^2(d\ell)^{-1} \leq C(\rho a)^{3}R^2(d\ell)^{-1}\leq C\rho^2 a\sqrt{\rho a^3}X^{2+\frac{2}{N}}.
	\end{align*}
We recall that $\abs{\frac{n}{\abs{B}}-\rhozero  }\leq C\rhozero  \of{\rhozero  a^3}^{\frac{1}{4}}{\SFINAL}^{\frac{1}{2}}$ by Theorem~\ref{Prop:locpsi} (ii) and lower bound the integral appearing in \eqref{error: Born term subtraction} by
	\begin{align}
	&\frac{1}{2}(2\pi)^{-3}\hspace{-0.5 cm}\inttt{(s\ell)\inv<\abs{k}<(d\ell)\inv}{}\hspace{-0.5 cm}\abs{k}^2\of{\sqrt{1+2\rhozero \widehat{v_R}(0)\abs{k}^{-2}}-1-\rhozero \widehat{v_R}(0)\abs{k}^{-2}+\frac{1}{2}\rhozero ^2\widehat{v_R}(0)^2\abs{k}^{-4}}\dd k\nn\\
	&\qquad-C(\rho a)^{\frac{5}{2}}(\rho a^3)^{\frac{1}{4}}\SFINAL^{\frac{1}{2}}\nn\\
	&=\frac{1}{2}(2\pi)^{-3}(\rhozero \widehat{v_R}(0))^{\frac{5}{2}}\hspace{-2.2 cm}\inttt{(s\ell)\inv(\rhozero \widehat{v_R}(0))^{-\frac{1}{2}}<\abs{k}<(d\ell)\inv(\rhozero \widehat{v_R}(0))^{-\frac{1}{2}}}{}\hspace{-2.2 cm}-\abs{k}^2-1+\abs{k}^2\sqrt{1+2\abs{k}^{-2}}+\frac{1}{2}\abs{k}^{-2}\dd k-C(\rho a)^{\frac{5}{2}}(\rho a^3)^{\frac{1}{4}}\SFINAL^{\frac{1}{2}}\nn\\
	&\geq  4\pi \rhozero ^2a\frac{128}{15\sqrt{\pi}}\sqrt{\rhozero a^3}-C(\rhozero a)^{\frac{5}{2}}[(s\ell)\inv(\rhozero a)^{-\frac{1}{2}}+d\ell (\rhozero a)^{\frac{1}{2}}+(\rho a^3)^{\frac{1}{4}}\SFINAL^{\frac{1}{2}}]\nn\\
	&=4\pi \rhozero ^2a\frac{128}{15\sqrt{\pi}}\sqrt{\rhozero a^3}-C(\rhozero a)^{\frac{5}{2}}[X^{\frac{1}{2N}}+X^{\frac{2}{N}}+(\rho a^3)^{\frac{1}{4}}X^{-\frac{29}{4N}}]\label{comp2:LHY 1a}.
	\end{align}
	Here we used the identity $\int_{\R^3}-\abs{k}^2-1+\abs{k}^2\sqrt{1+2\abs{k}^{-2}}+\frac{1}{2}\abs{k}^{-2}\dd k=\frac{32}{15}\pi\sqrt{2}$, for which we note that the integrand is bounded by $\frac{1}{2}\abs{k}^{-2}$ for small $\abs{k}$ and $C\abs{k}^{-4}$ for large $\abs{k}$. \\\\
	{\it Proof of \eqref{error term epsilon_0}}. First we show how to arrive at \eqref{eq: outer part}.
	We use \eqref{sqrtbound2} for $(d\ell)^{-1}<\abs{k}<R^{-1}$, where $\widehat{W}(k)\geq 0$, and \eqref{sqrtbound3} for $\abs{k}\geq R^{-1}$ to obtain the estimate
	\begin{align}\label{sjdhgf}
	\frac{1}{2}(2\pi)^{-3}\hspace{-0.4 cm}\inttt{\abs{k}>(d\ell)^{-1}}{}\hspace{-0.4 cm} h_{0}(k) \dd k\abs{B}^{-1} &\geq-\frac14(2\pi)^{-3}\hspace{-0.4 cm}\inttt{\abs{k}>(d\ell)^{-1}}{}\hspace{-0.4 cm}\abs{k}^2\frac{n^2\widehat{W}^2(k)}{\abs{B}^2\tau_B(k^2)^2} \dd k-C\hspace{-0.3 cm}\inttt{\abs{k}>R^{-1}}{}\hspace{-0.2 cm}\frac{n^3\abs{\widehat{W}(k)}^3}{\abs{B}^3\abs{k}^{4}} \dd k.
	\end{align}
	By \eqref{W^3estimateBIG} and Theorem~\ref{Prop:locpsi} (ii) the second term in \eqref{sjdhgf} is bounded by $C(\rho a)^{3}R$.\\
	For $\abs{k}<(s\ell)^{-1}$ Theorem~\ref{Prop:locpsi} (ii) and \eqref{h_0estimateBIG} yield
	\begin{align*}
	\inttt{\abs{k}<(s\ell)\inv}{} h_0(k)\dd k\abs{B}^{-1}\geq -C\frac{n}{\abs{B}}a(s\ell)^{-3}\geq -C\rho a(s\ell)^{-3}=-C\rho^2 a\sqrt{\rho a^3}X^{\frac{3}{2N}}.
	\end{align*}
	To arrive at \eqref{error term epsilon_0} from \eqref{eq: outer part} we use \eqref{tau estimate}, \eqref{W^2estimateBIG} and Theorem~\ref{Prop:locpsi} (ii) and obtain
	\begin{align}
	&-\frac14(2\pi)^{-3}\intt{\abs{k}>(s\ell)^{-1}}{}\abs{k}^2\frac{n^2\widehat{W}^2(k)}{\abs{B}^2\tau_B(k^2)^2} \dd k\nn\\
	\geq& -\frac14(2\pi)^{-3}\rho^2\int \frac{\widehat{v_R}(k)^2}{\abs{k}^2}\dd k
	-C\rho^2a\varepsilon_0\frac{a}{R}
	-C\rho^2a \frac{a}{s\ell}\ln\of{\frac{s\ell}{R}}\nn\\
	&-C\rho^2a\varepsilon_T\frac{a}{ds\ell}\ln\of{\frac{ds\ell}{R}}-C\rho^2a(\rho a^3)^{\frac{1}{4}}\SFINAL^{\frac{1}{2}}\frac{a}{R}\label{eq:Born term and error terms}\\
	\geq & -\frac14(2\pi)^{-3}\rho^2\int \frac{\widehat{v_R}(k)^2}{\abs{k}^2}\dd k-C\rhozero^2a \sqrt{\rhozero a^3}[X^{\frac{5}{3}-\frac{31}{2N}}+X^{\frac{1}{2N}}\ln\of{X^{1-\frac{1}{2N}}(\rho a^3)^{-\frac{1}{5}}}]\nn\\
	&-C\rhozero^2a \sqrt{\rhozero a^3}[X^{\frac{1}{2N}}\ln\of{X^{1+\frac{7}{2N}}(\rho a^3)^{-\frac{1}{5}}}+(\rho a^3)^{\frac{1}{20}}X^{-\frac{29}{4N}}].\label{eq: log error terms}
	\end{align}
The error terms in \eqref{eq: log error terms} are small since we required $\Rexponentialendpoint<\Rexponentialendpointvalue$ in Assumption~\ref{R assumption}.
\end{proof}
\begin{proof}[Proof of Lemma~\ref{lm: locpsi other part}]
Recall that Theorem~\ref{Prop:locpsi} (ii) provides the bounds
\begin{align}
\langle \locpsi, n_+\locpsi \rangle \leq C\rhozero  \abs{B} \sqrt{\rhozero  a^3}\SFINAL\qquad \textrm{and}\qquad \langle \locpsi,\abs{n-\rhozero  \abs{B}}\locpsi \rangle\leq C\rhozero  \abs{B}\of{\rhozero  a^3}^{\frac{1}{4}}{\SFINAL}^{\frac{1}{2}}.
\end{align}
With $\varepsilon_1''=(\rhozero \abs{B})^{-\frac{1}{2}}\MMloc^{\frac{1}{2}}$ and Condition~\ref{cond:4} (iii) we obtain from \eqref{eq:noQ}, Lemma~\ref{lm:Q1} and \eqref{eq:Q2second}
\begin{align}
&\innerpp{\locpsi}{\cQ_0+\cQ_1''+\cQ_2''}{\locpsi}\abs{B}\inv\nn\\%\label{add n-rho B}\\
&\geq \innerpp{\locpsi}{\off{\abs{n-\rho\abs{B}}^2-C\abs{n-\rhozero \abs{B}}n_++n_+^2-n_0-\varepsilon_1''(n_++1)n_0-C{\varepsilon_1''}\inv n_+^2-Cn_+^2}\cU_B}{\locpsi}\abs{B}\inv\nn\\
&\geq  \abs{n-\rho\abs{B}}^2\frac{a}{\abs{B}^2}-C\rhozero ^2a\sqrt{\rhozero a^3}(\rhozero a^3)^{\frac{1}{4}}\SFINAL^{\frac{3}{2}}-C\rhozero^2a \sqrt{\rhozero a^3}\SFINAL(\rhozero \abs{B})^{-\frac{1}{2}}\MM^{\frac{1}{2}}-C\rho \frac{a}{\abs{B}}\nn\\
&\geq \abs{n-\rho\abs{B}}^2\frac{a}{\abs{B}^2}-C\rhozero^2a \sqrt{\rhozero a^3}\SFINAL(\rhozero \abs{B})^{-\frac{1}{2}}\MM^{\frac{1}{2}}-C\rho \frac{a}{\abs{B}}\nn\\
&\geq\abs{n-\rho\abs{B}}^2\frac{a}{\abs{B}^2}-C\rhozero^2a \sqrt{\rhozero a^3}[X^{-\frac{23}{2N}}(\rhozero a^3)^{\frac{1}{6}}+X^{\frac{6}{N}}].\label{final non-quadratic estimate}
\end{align}
By Theorem~\ref{Prop:locpsi} we have $\innerp{\locpsi}{n_+^2\locpsi}\leq C \MM\innerp{\locpsi}{n_+\locpsi}$. Thus Condition~\ref{cond:epsilon 0 and 3} ensures that Lemma~\ref{lm:Q3 bound} may be applied to the state $\locpsi$ and we obtain
	\begin{align}\label{eq:Q3 bound for large epsilon3}
\innerp{\locpsi}{(\cQ_3-\varepsilon_0\Delta_{u}^{\cN})\locpsi}\abs{B}\inv&\geq
\innerp{\locpsi}{\of{-C\varepsilon_3^{-1}nn_+\,\cU_B-\varepsilon_0\Delta_{u}^{\cN}-\varepsilon_3QQw_B(x,y)QQ}\locpsi}\abs{B}\inv\nn\\
&\geq -C\rhozero ^2a(\rhozero a^3)^{\frac{1}{2}}\SFINAL\varepsilon_3\inv=-C\rhozero ^2a(\rhozero a^3)^{\frac{1}{2}}X^{\frac{1}{N}}.
\end{align}
Finally we note that $\langle \locpsi , \cQ_4 \locpsi \rangle$ is defined to be positive in \eqref{eq:Q4}.\\
\end{proof}
Now we discuss why we have $\Rexponentialendpoint=\Rexponentialendpointvalue$ \label{discuss eta and MM}in Assumption~\ref{R assumption} and why we chose $\MM=\of{\frac{R}{a}}^{\frac{1}{3}}$ in Lemma~\ref{lm: Chosen parameters are allowed}. Combining \eqref{eq:psi loc bound} and Lemma~\ref{lm:d_1 and d_2}, respectively \eqref{eq:Born term and error terms} and \eqref{second line choices for parameters}, we obtain the error terms
\begin{align}
C\rho^2a\frac{a}{R}\MM^{-2}\qquad \textrm{and}\qquad C\rho^2a\varepsilon_0\frac{a}{R}\geq C\rho^2a\of{\frac{a}{R}}^2\MM.
\end{align}
Optimizing $C\rho^2 a\frac{a}{R}\MM^{-2}+C\rho^2 a\of{\frac{a}{R}}^2\MM$ gives $\MM=\of{\frac{R}{a}}^{\frac{1}{3}}$ and the estimate
\begin{align}
\of{C(\rho a^3)^{-\frac{1}{2}}\frac{a}{R}\MM^{-2}+C(\rho a^3)^{-\frac{1}{2}}\of{\frac{a}{R}}^2\MM}
\geq{}&C\off{\frac{a}{R}(\rho a^3)^{-\frac{3}{10}}}^{\frac{5}{3}}.\label{R lower scaling}
\end{align}
\appendix
\section{Appendix}
\renewcommand{\theequation}{\thesection.\arabic{equation}}
\setcounter{equation}{0}
Throughout this appendix we will assume that $n\geq 1$ and that the condition on $n\abs{B}\inv$ in Lemma~\ref{lm:boglowerbound} is satisfied, so that by Lemma~\ref{lm:quadratic} we have the lower bounds
\begin{align*}
H_{\rm	Quad}\geq\frac12(2\pi)^{-3}\int_{\R^3} h_{0}(k) \dd k
-Cn_+a\min\{R^{-3},|B|^{-1}\}\max\chi_B^2
\end{align*}
and
\begin{align}\label{appendix:h_0estimate}
h_0(k)\geq& -\left(n^{-1}\tau_B(k^2)+|B|^{-1}\widehat{W}
(k)
-\sqrt{n^{-2}\tau_B(k^2)^2
	+2n^{-1}|B|^{-1}\tau_B(k^2)\widehat{W}(k)}\right)n_0
\int\chi_B^2.
\end{align}
\subsection{Bounds for the Quadratic Part of $H_B$}
We estimate the operator valued integral $\frac{1}{2}(2\pi)^{-3}\int h_0(k)\dd k$ using the following facts.\\
Around $x=0$ we can write $\sqrt{1+x}=1+\frac{1}{2}x-\frac{1}{8}x^2+\frac{1}{16}x^3+O(x^4)$ yielding the bounds
\begin{align}
1+\frac{1}{2}x-Cx^2\leq \sqrt{1+x}&\leq 1+\frac{1}{2}x-\frac{1}{8}x^2+C\abs{x}^3\label{sqrtbound1}\\
\sqrt{1+x}&\geq 1+\frac{1}{2}x-\frac{1}{8}x^2\qquad (\textrm{if and only if }x\geq 0)\label{sqrtbound2}\\
1+\frac{1}{2}x-\frac{1}{8}x^2-C\abs{x}^3\leq \sqrt{1+x}&\leq 1+\frac{1}{2}x.\label{sqrtbound3}
\end{align}
If $B$ is either a small or a large box, then, for all $k\in \R^3$, the parentheses in \eqref{appendix:h_0estimate} is positive.\\
This is easy to see. If $B$ is a small box and $\abs{k}\leq(ds\ell)\inv$, then $\tau_B(k^2)=0$ and $\widehat{W}(k)>0$ by \eqref{W>0} since $(ds\ell)\inv<R\inv$. For $\abs{k}>(ds\ell)\inv$ we have $\tau_B(k^2)>0$ and apply \eqref{sqrtbound3}. For the large box the claim is proven analogously. We may therefore replace $n_0$ by $n$ in \eqref{appendix:h_0estimate} when bounding $h_0$ from below such that
\begin{align}
h_0(k)\geq& -\left(n^{-1}\tau_B(k^2)+|B|^{-1}\widehat{W}
(k)
-\sqrt{n^{-2}\tau_B(k^2)^2
	+2n^{-1}|B|^{-1}\tau_B(k^2)\widehat{W}(k)}\right)n
\int\chi_B^2.
\end{align}Provided $\tau_B(k^2)>0$ we write
\begin{align}
h_0(k)\geq& -\left(\tau_B(k^2)+n|B|^{-1}\widehat{W}
(k)
-\tau_B(k^2)\sqrt{1
	+\frac{2n\widehat{W}(k)}{|B|\tau_B(k^2)}}\right)
\int\chi_B^2.
\end{align}
\subsubsection{Estimates on the Small Box}$\tau_B(k^2)=0$ if $\abs{k}<(ds\ell)\inv$ while $\widehat{W}(k)>0$ if $\abs{k}<R\inv$. Since $\sqrt{1+x}\geq 1$ if $x\geq 0$, we have
\begin{align}
\inttt{\abs{k}<2(ds\ell)\inv}{} h_0(k)\dd k
\geq& -\inttt{\abs{k}<2(ds\ell)\inv}{}\frac{n_0}{\abs{B}}\widehat{W}(k)\dd k\int\chi_B^2
\geq -C\frac{n}{\abs{B}}a(ds\ell)^{-3}\int\chi_B^2.\label{h_0estimate1}
\end{align}
Using \eqref{sqrtbound2} for $2(ds\ell)\inv<\abs{k}<R\inv$ and \eqref{sqrtbound3} for $\abs{k}>R\inv$ gives
\begin{align}
\frac{1}{2}(2\pi)^{-3}\hspace{-0.6 cm}\inttt{\abs{k}>2(ds\ell)\inv}{} \hspace{-0.5 cm}h_0(k)\dd k
\geq\frac{1}{2}(2\pi)^{-3}\hspace{-0.6 cm}\inttt{\abs{k}>2(ds\ell)\inv}{}\hspace{-0.5 cm}-\frac{1}{2}\frac{n^2}{\abs{B}^2}\frac{\widehat{W}(k)^2}{\tau_B(k^2)}\dd k \int \chi_B^2-C\hspace{-0.3 cm}\inttt{\abs{k}>R\inv}{}\hspace{-0.2 cm}\frac{n^3}{\abs{B}^3}\frac{\abs{\widehat{W}(k)}^3}{\tau_B(k^2)^2}\dd k \int \chi_B^2.\label{W^3error}
\end{align}
Since $\tau_B=(1-\varepsilon_0)[\abs{k}-(ds\ell)\inv]_+^2\geq C\abs{k}^2$ if $\abs{k}\geq 2(ds\ell)\inv$ and $\abs{\widehat{W}(k)}\leq\widehat{W}(0)\leq  Ca$, it is easy to estimate the last term in \eqref{W^3error}
\begin{align}
&\inttt{\abs{k}>R\inv}{}\frac{n^3}{\abs{B}^3}\frac{\abs{\widehat{W}(k)}^3}{\tau_B(k^2)^2}\dd k\int \chi_B^2\leq C\inttt{\abs{k}>R\inv}{}\frac{n^3}{\abs{B}^3}\frac{a^3}{\abs{k}^4}\dd k \int \chi_B^2
\leq  C\frac{n^3}{\abs{B}^3}a^3R \int \chi_B^2.\label{W^3estimate}
\end{align}
The integral $\inttt{\abs{k}>2(ds\ell)\inv}{}\frac{\widehat{W}(k)^2}{\tau_B(k^2)}\dd k$ is related to the second Born term and we have to estimate it carefully. Recall that on the small box we $v_R(x)\leq W(x)\leq v_R(x)(1+C(\frac{R}{d\ell})^2)$ by \eqref{eq: W_s estimate}. Thus we have
\begin{align}
\norm{v_R-W}_{\frac{6}{5}}\leq C(\mfr{R}{d\ell})^2\norm{v_R}_{\frac{6}{5}}\qquad \textrm{and}\qquad \norm{v_R}_{\frac{6}{5}}= R^{-\frac{1}{2}}\norm{v_{1}}_{\frac{6}{5}}.\label{VW}
\end{align}
Since $v_R(x)=\frac{1}{R^3} v_{1}(\frac{x}{R})$, we have $\widehat{v_R}(k)=\widehat{v_{1}}(Rk)$ and that 
\begin{align}
\intt{\abs{k}\geq 2(ds\ell)\inv}{}\frac{\widehat{v_R}(k)^2}{\abs{k}^2}\dd k =\frac{1}{R}\intt{\abs{k}\geq 2(ds\ell)\inv R}{}\frac{\widehat{v}_{1}(k)^2}{\abs{k}^2}\dd k. \label{v_r k^{-1}}
\end{align}
For $f\in L^{\frac{6}{5}}(\R^{3})$ a real space representation (see \cite{lieb2001analysis} Cor 5.10) followed by an application of the Hardy-Littlewood-Sobolev inequality gives
\begin{align}
\int \abs{\widehat{f}(k)}^2\abs{k}^{-2}\dd k\leq C \norm{f}_{\frac{6}{5}}^2.\label{eq:6/5 norm}
\end{align}
On the small box we have for $\abs{k}>2(ds\ell)\inv$
\begin{align}
\tau_B(k^2)\inv \leq (1+C\varepsilon_0)\abs{k}^{-2}+C(ds\ell)\inv\abs{k}^{-3}.\label{tausmallinverse}
\end{align}
We use the Cauchy-Schwarz inequality \eqref{VW} and \eqref{eq:6/5 norm}
to obtain the estimate
\begin{align}
\inttt{\abs{k}>2(ds\ell)\inv }{}\hspace{-0.3 cm}\frac{\widehat{W}(k)^2}{\abs{k}^2}\dd k
\leq {}& \intt{\abs{k}>2(ds\ell)\inv}{}\hspace{-0.3 cm}\frac{\widehat{v}_R(k)^2}{\abs{k}^2}\dd k\nn\\
&+2 \of{\intt{\abs{k}>2(ds\ell)\inv }{}\hspace{-0.3 cm}\frac{\abs{\widehat{W}(k)-\widehat{v}_R(k)}^2}{\abs{k}^2}\dd k}^{\frac{1}{2}}\of{\intt{\abs{k}>2(ds\ell)\inv }{}\hspace{-0.3 cm}\frac{\widehat{v}_R(k)^2}{\abs{k}^2}\dd k}^{\frac{1}{2}}\nn\\
&+ \intt{\abs{k}>2(ds\ell)\inv }{}\frac{\abs{\widehat{W}(k)-\widehat{v}_R(k)}^2}{\abs{k}^2}\dd k\nn\\
\leq {}& \frac{1}{R}\intt{\abs{k}>2(ds\ell)\inv R}{}\frac{\widehat{v}_{1}(k)^2}{\abs{k}^2}\dd k+C\norm{W-v_R}_{\frac{6}{5}}\norm{v_R}_\frac{6}{5}+ C\norm{W-v_R}_{\frac{6}{5}}^2\nn\\
\leq {}& \frac{1}{R}\intt{\abs{k}>2(ds\ell)\inv R}{}\hspace{-0.3 cm}\frac{\widehat{v}_{1}(k)^2}{\abs{k}^2}\dd k+ C\of{\frac{R}{d\ell}}^2\frac{a^2}{R}.\label{apr: sec Born term2}
\end{align}Using $\abs{\widehat{W}(k)}\leq Ca$, gives
\begin{align}
\hspace{-1.5 cm}\intt{\abs{k}>2(ds\ell)\inv }{}\widehat{W}\of{k}^2\abs{k}^{-3} \dd k ={}&\intt{2(ds\ell)\inv < \abs{k} < R\inv }{}\widehat{W}(k)^2\abs{k}^{-3}\dd k+ \intt{ \abs{k}>R\inv }{}\widehat{W}\of{k}^2\abs{k}^{-3} \dd k \nn\\
\leq {}& C\intt{2(ds\ell)\inv R< \abs{k} < 1 }{}\widehat{W}(0)^2\abs{k}^{-3}\dd k+ R\intt{\abs{k}>2(ds\ell)\inv }{}\widehat{W}\of{k}^2\abs{k}^{-2} \dd k \nn\\
\leq {}&Ca^2\ln \labs{ \frac{ds\ell}{R}}.\label{W^correction}
\end{align}

Combining \eqref{tausmallinverse}, \eqref{apr: sec Born term2} and \eqref{W^correction}, we arrive at
\begin{align}
\hspace{-2 cm}\intt{\abs{k}>2(ds\ell)\inv}{}\frac{\widehat{W}(k)^2}{\tau_B(k^2)}\dd k
\leq {}& (1+C\varepsilon_0)\intt{\abs{k}>2(ds\ell)\inv }{}\frac{\widehat{W}(k)^2}{\abs{k}^2}\dd k+C(ds\ell)\inv \intt{\abs{k}>2(ds\ell)\inv }{}\frac{\widehat{W}\of{k}^2}{\abs{k}^{3}} \dd k \nn\\
\leq {}&(1+C\varepsilon_0)\frac{1}{R}\intt{}{}\frac{\widehat{v}_{1}(k)^2}{\abs{k}^2}\dd k
+Ca\frac{a}{ds\ell}\ln\labs{\frac{ds\ell}{R}}.\label{W^2estimate}
\end{align}
In the last inequality we used that the second term in \eqref{apr: sec Born term2} and the positive term\\
\mbox{$(1+C\varepsilon_0)\frac{1}{R}\intt{\abs{k}<2(ds\ell)\inv R}{}\frac{\widehat{v}_{1}(k)^2}{\abs{k}^2}\dd k$} by Condition~\ref{cond:1} are bounded by the last term in \eqref{W^2estimate}.
\subsubsection{Estimates on the Large Box}
Recall that ${\tau_B(k^2)=(1-\varepsilon_0)(1-\varepsilon_{T})\positivepart{|k|-\mfr12
		(s\ell )^{-1}}+(1-\varepsilon_0)\varepsilon_{T}\hspace{-0.05 cm}\positivepart{|k|-\mfr12
		(ds\ell )^{-1}}}$ and\\
${\int\chi_B^2=\abs{B}}$ on the large box. Hence, as in \eqref{h_0estimate1},
\begin{align}\label{h_0estimateBIG}
\inttt{\abs{k}<(s\ell)\inv}{} h_0(k)\dd k
\geq& -\inttt{\abs{k}<(s\ell)\inv}{}\frac{n_0}{\abs{B}}\widehat{W}(k)\dd k\int\chi_B^2
\geq -C\frac{n}{\abs{B}}a(s\ell)^{-3}\bigbox.
\end{align}
Analogously to \eqref{W^3error} we have
\begin{align}\label{W^3errorBIG}
\frac{1}{2}(2\pi)^{-3}\hspace{-0.3 cm}\inttt{\abs{k}>(s\ell)\inv}{}\hspace{-0.3 cm} h_0(k)\dd k
\geq\frac{1}{2}(2\pi)^{-3}\hspace{-0.3 cm}\inttt{\abs{k}>(s\ell)\inv}{}\hspace{-0.3 cm}-\frac{1}{2}\frac{n^2}{\abs{B}^2}\frac{\widehat{W}(k)^2}{\tau_B(k^2)}\dd k \bigbox-C\hspace{-0.2 cm}\inttt{\abs{k}>R\inv}{}\hspace{-0.1 cm}\frac{n^3}{\abs{B}^3}\frac{\abs{\widehat{W}(k)}^3}{\tau_B(k^2)^2}\dd k \bigbox.
\end{align}
The last term in \eqref{W^3errorBIG} is estimated in the same way as the last term in \eqref{W^3error}
\begin{align}
&\inttt{\abs{k}>R\inv}{}\frac{n^3}{\abs{B}^3}\frac{\abs{\widehat{W}(k)}^3}{\tau_B(k^2)^2}\dd k\int \chi_B^2\leq C\inttt{\abs{k}>R\inv}{}\frac{n^3}{\abs{B}^3}\frac{a^3}{\abs{k}^4}\dd k \bigbox
\leq  C\frac{n^3}{\abs{B}^3}a^3R \bigbox.\label{W^3estimateBIG}
\end{align}
By \eqref{eq: W_b estimate} we have $v_R(x)\leq W(x)\leq v_R(x)(1+C(\frac{R}{\ell})^2)$ and hence, similarly to \eqref{apr: sec Born term2},
\begin{align}
\intt{\abs{k}>(s\ell)\inv }{}\frac{\widehat{W}(k)^2}{\abs{k}^2}\dd k
\leq & \intt{\abs{k}>(s\ell)\inv}{}\frac{\widehat{v}_{R}(k)^2}{\abs{k}^2}\dd k+
C\frac{a^2R}{\ell^2}.\label{apr: sec Born term2BIG}
\end{align}
Similarly to \eqref{W^correction} we have
\begin{align}
\intt{\abs{k}>(s\ell)\inv }{}\widehat{W}\of{k}^2\abs{k}^{-3} \dd k \leq Ca^2\ln \labs{ \frac{s\ell}{R}}.\label{W^correctionBIG}
\end{align}
Since 
\begin{align}\label{tau estimate}
\tau_B(k^2)\inv\leq\begin{cases}
\of{1+C\varepsilon_0+C\varepsilon_T}\abs{k}^{-2}+C(s\ell)\inv\abs{k}^{-3} & \textrm{for }(s\ell)\inv<\abs{k}<(ds\ell)\inv\\
\of{1+C\varepsilon_0}\abs{k}^{-2}+C\of{(s\ell)\inv+\varepsilon_T(ds\ell)\inv}\abs{k}^{-3} & \textrm{for }\abs{k}\geq(ds\ell)\inv,
\end{cases}
\end{align}
we have, similar to \eqref{W^2estimate},
\begin{align}
\inttt{\abs{k}>(s\ell)\inv}{}\frac{\widehat{W}(k)^2}{\tau_B(k^2)}\dd k
\leq{}&(1+C\varepsilon_0+C\varepsilon_T)\intt{(s\ell)\inv<\abs{k}<(ds\ell)\inv }{}\hspace{-0.3cm }\frac{\widehat{W}(k)^2}{\abs{k}^2}+C(s\ell)\inv\frac{\widehat{W}(k)^2}{\abs{k}^3}\dd k\nn\\
&+\of{1+C\varepsilon_0}\intt{\abs{k}>(ds\ell)\inv}{}\hspace{-0.3 cm}\frac{\widehat{W}(k)^2}{\abs{k}^2}+C\of{(s\ell)\inv+\varepsilon_T(ds\ell)\inv} \frac{\widehat{W}(k)^2}{\abs{k}^3}\dd k\nn\\
\leq {}& \intt{\abs{k}>(s\ell)\inv}{}\frac{\widehat{v_R}(k)^2}{\abs{k}^2}\dd k +C\varepsilon_0a\frac{a}{R}+ C\frac{a^2R}{\ell^2}+C\epsilon_Ta^2(ds\ell)\inv\nn\\
& +C(s\ell)\inv a^2\ln\of{\frac{s\ell}{R}}
+C\varepsilon_T(ds\ell)\inv a^2\ln\of{\frac{ds\ell}{R}}\nn\\
\leq {}& \intt{}{}\frac{\widehat{v_R}(k)^2}{\abs{k}^2}\dd k +C\varepsilon_0a\frac{a}{R}
+C(s\ell)\inv a^2\ln\of{\frac{s\ell}{R}}\nn\\
&+C\varepsilon_T(ds\ell)\inv a^2\ln\of{\frac{ds\ell}{R}}.\label{W^2estimateBIG}
\end{align}
In the last inequality we used that $C\frac{aR}{\ell^2}a$ and the positive term $\intt{\abs{k}<(s\ell)\inv }{}\frac{\widehat{v}_{R}(k)^2}{\abs{k}^2}\dd k$ by Condition~\ref{cond:1} are bounded by the second to last term in \eqref{W^2estimateBIG}.
\subsection*{Acknowledgments}%
This project started out many years ago from a fruitful discussion with E. H. Lieb, who the authors are very grateful to. The authors acknowledge support by ERC Advanced grant 321029 and by VILLUM
FONDEN via the QMATH Centre of Excellence (Grant No. 10059). 

\end{document}